\newcommand{\rnum}[1]{\uppercase\expandafter{\romannumeral #1\relax}}
\newcommand{\fer}[1]{\psi_{#1}}
\newcommand{\afer}[1]{\overline{\psi}_{#1}}
\newcommand{\gau}[1]{\lambda_{#1}}
\newcommand{\yuk}[2]{\ensuremath{\smash[t]{\Upsilon_{#1}^{\protect\phantom{#1}#2}}}\xspace}
\newcommand{\yuks}[2]{\ensuremath{\Upsilon_{#1}^{\phantom{#1}#2}}{}^*\xspace}
\newcommand{\yukw}[2]{\ensuremath{\widetilde{\Upsilon}_{#1}^{\protect\phantom{#1}#2}}\xspace}
\newcommand{\yukws}[2]{\ensuremath{\widetilde{\Upsilon}_{#1}^{\phantom{#1}#2}}{}^*\xspace}
\newcommand{\Om}[1]{\Omega_{#1}}
\newcommand{\Cw}[1]{\widetilde C_{#1}}
\def\bp{\beta'}
\def\bps{\beta'^*}
\newcommand{\yukp}[2]{\Upsilon'^{#2}_{#1}}
\newcommand{\yukps}[2]{\Upsilon'^{#2}_{#1}{}^*}
\def\sfer{\widetilde{\psi}}
\def\asfer{\overline{\widetilde{\psi}}}
\newcommand{\BBBB}[1]{\ensuremath{\mathcal{B}_{\textrm{maj}}}\xspace} 
\newcommand{\BBB}[1]{\ensuremath{\mathcal{B}_{#1}}\xspace} 
\newcommand{\BB}[1]{\ensuremath{\mathcal{B}_{#1}}\xspace} 
\newcommand{\B}[1]{\ensuremath{\mathcal{B}_{#1}}\xspace}
\newcommand{\Bc}[2]{\ensuremath{\mathcal{B}_{#1}^{#2}}\xspace}
\def\bas#1\eas{\begin{align*}#1\end{align*}}
\def\ba#1\ea{\begin{align}#1\end{align}}
\def\bi#1\ei{\begin{itemize}#1\end{itemize}}
\def\be#1\ee{\begin{enumerate}#1\end{enumerate}}
\def\nn{\nonumber}
\newcommand{\eR}[0]{\ensuremath{\epsilon_{\scalebox{.6}{R}}}\xspace}
\newcommand{\eL}[0]{\ensuremath{\epsilon_{\scalebox{.6}{L}}}\xspace}
\newcommand{\eLR}[0]{\ensuremath{\epsilon_{\scalebox{.6}{L,R}}}\xspace}
\def\szeta{\widetilde{\zeta}}
\def\maj{\Upsilon_{\mathrm{m}}}
\def\A{\mathcal{A}}
\def\E{\mathcal{E}}
\def\H{\mathcal{H}}
\def\K{\mathcal{K}}
\def\n{\mathcal{N}}
\def\P{\mathcal{P}}
\def\Q{\mathcal{Q}}
\def\com{\mathbb{C}}
\def\cS{\mathcal{S}}
\DeclareMathOperator{\End}{End}
\DeclareMathOperator{\diag}{diag}
\DeclareMathOperator{\tr}{tr}
\DeclareMathOperator{\id}{id}
\DeclareMathOperator{\ad}{ad}
\newcommand\act[1]{S_{#1}}
\newcommand{\rep}[2]{\ensuremath{\mathbf{N}_{#1} \otimes \mathbf{N}_{#2}^{o}}\xspace}
\newcommand{\repl}[2]{\ensuremath{\mathbf{#1} \otimes \mathbf{#2}^o}\xspace}
\newcommand{\srep}[1]{\ensuremath{\mathbf{N}_{#1}}\xspace}
\newcommand{\srepo}[1]{\ensuremath{\mathbf{N}_{#1}^o}\xspace}
\def\can{\ensuremath{\slashed{\partial}}\xspace}
\def\dirac{\ensuremath{\slashed{\partial}_M}\xspace}
\newcommand\inpr[2]{\langle #1, #2 \rangle}
\newcommand\rinpr[2]{( #1, #2 )}
\newcommand{\D}[2]{D_{#1}^{\phantom{#1}#2}}
\newcommand{\w}[1]{\ensuremath{\omega_{#1}}}
\def\sgnc{\epsilon}
\theoremstyle{plain}
\newtheorem{theorem}{Theorem}
\newtheorem{lem}[theorem]{Lemma}
\newtheorem{prop}[theorem]{Proposition}
\newtheorem{defin}[theorem]{Definition}
\newtheorem{rmk}[theorem]{Remark}
\newtheorem{exmpl}[theorem]{Example}
\newtheorem{cor}[theorem]{Corollary}
\newtoks\svgpath
\newcommand{\includesvg}[1]{
\subimport{\the\svgpath}{#1.pdf_tex}
}
\author[a]{Wim Beenakker%
  \thanks{Electronic address: \texttt{W.Beenakker@science.ru.nl}}}
\author[a,b]{Thijs van den Broek%
  \thanks{Electronic address: \texttt{T.vandenBroek@science.ru.nl}}}
\author[a]{Walter D.~van Suijlekom%
  \thanks{Electronic address: \texttt{waltervs@math.ru.nl} (corresponding author)}}
\affil[a]{Radboud University Nijmegen, Institute for Mathematics, Astrophysics and Particle Physics,
Faculty of Science, PO Box 9010, 6500 GL, Nijmegen, The Netherlands}
\affil[b]{Nikhef, Science Park Amsterdam 105, 1098 XG Amsterdam}
\begin{document}

\title{Supersymmetry and noncommutative geometry}
\subtitle{Part \rnum{1}: Supersymmetric almost-commutative geometries.}

\maketitle

\begin{abstract}
	Noncommutative geometry has seen remarkable applications for high energy physics, viz.~the geometrical interpretation of the Standard Model. The question whether it also allows for supersymmetric theories has so far not been answered in a conclusive way. In this first of three papers we do a systematic analysis of the possibilities for almost-commutative geometries on a $4$-dimensional, flat background to exhibit not only a particle content that is eligible for supersymmetry but also have a supersymmetric action. We come up with an approach in which we identify the basic `building blocks' of potentially supersymmetric theories and the demands for their action to be supersymmetric. Examples that satisfy these demands turn out to be sparse.
\end{abstract}

\tableofcontents

\section{Introduction}

The Standard Model of elementary particles (SM) is one of the most successful and best tested theories ever created. Yet, only few truly believe that with the SM we have reached the end of the story. Many will point at the possibility of enormous corrections (e.g.~\cite[\S 1.2]{DGR04}) that the Higgs boson mass receives from loop contributions, or at the existence of dark matter (DM, \cite{Colafrancesco2010}). Some of the more mathematically inclined even feel uneasy with quantum field theory itself.\\

The prime application of the framework of noncommutative geometry (NCG, \cite{C94}), put forward by Connes and others, is the interpretation of the SM as a geometrical theory, regarding it in some sense as a generalization of Einsteins theory of General Relativity. This line of thought, that started with the Connes-Lott model \cite{CL89}, culminated in \cite{CCM07} with the full SM, including a prediction of the Higgs boson mass. One of its essential features is that a natural notion of an action functional is associated with something that is called a \emph{noncommutative geometry}, describing a physical theory. This not only allows one to come up with a geometrical derivation of the SM particle content and action, but also to make a prediction \cite{CCM07, DS12} for the Higgs mass.\footnote{This prediction was seen to be different from the afterwards observed \cite{Consonni2013} value, but the main point here is that the approach allows one to come up with a prediction for the Higgs mass in the first place, and that the specific value depends on the particle content, as illustrated by \cite{CC12}.} This makes it a promising candidate not only for model building, but also to gain deeper insights into the realm of high energy physics. \\

The physics community, on the other hand, has made an overwhelming collective effort both in finding possible extensions of the SM, and entirely new paradigms in which the SM should appear as a low energy approximation. One of the best known and studied extensions is called \emph{supersymmetry} (e.g.~\cite{wessbagger1992}). Its key feature is that particles are accompanied by one or more superpartners; particles that carry the same quantum numbers, but differ in spin by $\tfrac{1}{2}$. Theories with such a characteristic then can have an action that is invariant under transformations that link the various particles to their superpartners. A direct application of such a theory to the SM is called the Minimal Supersymmetric Standard Model (MSSM, \cite{DGR04}), in which each of the SM particles has one ---yet unseen--- superpartner.\footnote{To be more precise: strictly speaking, the MSSM predicts a doubling of the Higgs degrees of freedom as compared to the SM.} Supersymmetry was originally devised to fully exploit the symmetries of space-time, but the MSSM has turned out to both protect the Higgs boson mass against loop corrections and provide us with a dark matter candidate. It is believed and hoped for that the MSSM can be either confirmed or convincingly falsified in the current generation of particle detectors. \\

With NCG providing us with a successful geometrical derivation of the SM Lagrangian on the one hand, and the MSSM being a potentially successful extension of the SM on the other, it is interesting to see to what extent the two can be combined. Does noncommutative geometry allow for supersymmetric theories, in particular a supersymmetric extension of the SM? Although this question has been around for some time, no one has ever come up with a conclusive answer.\\

The (predictive) power of the noncommutative method relies heavily on the principle of the spectral action that provides the link between a noncommutative geometry and its associated action \cite{CC97}. Because of this success, we ask ourselves: ``for what noncommutative geometries is the action supersymmetric?'', or ``what are supersymmetric noncommutative geometries?''. This is in contrast to the question ``what actions are supersymmetric?'' that one typically tries to answer in supersymmetry using the superfield formalism \cite{SS74}. Note the crucial difference here; the intimate connection between the noncommutative geometry and the action forbids us to manually add terms to the latter.\\

We will identify so-called \emph{building blocks}; parts of an almost-commutative geometry (an example of a noncommutative geometry that is used in obtaining particle models) that yield a supersymmetric particle content. In total five such building blocks exist, four of which require others to be defined first. In obtaining these results we translate the unimodularity condition that is commonly used to reduce the bosonic degrees of freedom to be applicable to the fermions too. Along the way a number of obstructions to a supersymmetric theory are found, that are often caused by kinetic terms of $u(1)$-particles appearing where they should not, or vice versa. These dictate the form of the finite algebra. Other obstructions lie in the impossibility of the action functional to be rewritten off shell. We set up a list of sufficient demands on the contents of an almost-commutative geometry for having a supersymmetric action. \\

The building blocks are found to be very much analogous to the ingredients of the common superfield method for supersymmetry. One of the most striking differences is that the action that corresponds to a single chiral superfield cannot be obtained in this context. This is due to the fact that noncommutative geometry describes gauge theories by nature.\\
   
This paper is organised as follows. In the upcoming section we cover those parts of noncommutative geometry that we will need later on. In Section \ref{sec:r-parity} we introduce the concept of $R$-parity ---a key notion in supersymmetry--- to NCG. In Section \ref{sec:susy-st} we give in full detail a classification of all geometries that have a supersymmetric particle content. In Section \ref{sec:4s-aux} we combine the demands that we have encountered along the way for almost-commutative geometries to also have a supersymmetric action.\\ 

We must add that this is a fairly general account. The full MSSM, with three generations of particles and all other bells and whistles will be covered in an upcoming paper.

\subsection{Noncommutative geometry and the spectral triple}\label{ch:prel}

The basic device in noncommutative geometry \cite{C94} is a {\it spectral triple} $(\A,\H,D)$ consisting of a $*$-algebra $\A$ of bounded operators on a Hilbert space $\H$, and an unbounded self-adjoint operator $D$ on $\H$, such that
\begin{enumerate}
\item the commutator $[D,a]$ is a bounded operator for all $a \in \A$;
\item the resolvent $(i+D)^{-1}$ of $D$ is a compact operator.
\end{enumerate}
One may further enrich this set of data by a \emph{grading} and a \emph{real structure}. The first is a self-adjoint operator $\gamma$ on $\H$ that commutes with all elements of $\A$, anticommutes with $D$ and is such that $\gamma^2=1$. The second is an anti-unitary operator $J$ on $\H$ implementing a right action of $\A$ on $\H$ via $J a^* J^*$, $a \in \A$. It should be such that the compatibility conditions 
\begin{align}
 &[[D, a], J b J^{-1}] = 0 \qquad\forall\ a,b \in \A, \label{eq:order_one} \intertext{and} &[a, J b J^{-1}] =0;\label{eq:left-right} \qquad\forall\ a,b \in \A
\end{align}
are satisfied. These conditions are called the {\it first-order condition} and the {\it commutant property}, respectively. The $\pm$-signs as in Table \ref{tab:ko_dimensions} for the commutation relations between $J$, $\gamma$ and $D$ determine the so-called KO-dimension of a spectral triple. A spectral triple that has a grading $\gamma$ defined on it, receives the adjective \emph{even}, whereas one on which a $J$ is defined, is called \emph{real}. We will simply write $(\A, \H, D; J, \gamma)$ for a real and even spectral triple.\\

\begin{table}[h!]
\begin{tabularx}{\textwidth}{X cccc X}
  \toprule 
  & KO-dimension & $J^2 = \epsilon$ & $JD = \epsilon' DJ$ & $J\gamma = \epsilon''\gamma J$ & 
\\
         \midrule
        & 0 & + & + & +&\\ 
        & 2 & $-$ & + & $-$ &\\
        & 4 & $-$ & + & + &\\
        & 6 & + & + & $-$&\\
        \bottomrule
\end{tabularx}
\caption{The signs of $\epsilon$, $\epsilon'$ and $\epsilon''$ for the even KO-dimensions \cite[\S 9.5]{GVF00}.}
\label{tab:ko_dimensions}
\end{table} 

The notion of a spectral triple generalizes Riemannian spin geometry to the noncommutative world, in the following way. 
\begin{exmpl}{(Canonical spectral triple \cite[Ch 6]{C94})}\label{ex:canon}
	The triple 
	\begin{align}
		(\A, \H, D) = (C^{\infty}(M), L^2(M, S), \dirac := i\slashed{\nabla}^S)\label{eq:canon}
	\end{align}
	serves as the motivating example of a spectral triple. Here $M$ is a compact Riemannian manifold that has a spin structure, $C^{\infty}(M)$ is the (commutative) algebra of smooth, complex-valued functions on $M$ and $L^2(M, S)$ denotes the square-integrable sections of the corresponding spinor bundle $S \to M$. The operator \dirac comes from the unique spin connection which in turn is derived from the Levi-Civita connection on $M$. 
This spectral triple can be dressed with a real structure $J_M$ (`charge conjugation') and ---when $\dim M$ is even--- a grading $\gamma_M \equiv \gamma^{\dim M + 1}$ (`chirality'). The KO-dimension of a canonical spectral triple is equal to the dimension of $M$.\\

In the physics parlance the canonical spectral triple roughly determines a physical \emph{system}: the algebra encodes space(-time), the Hilbert space contains spinors `living' on that space(-time) and \dirac determines how the corresponding fermions propagate.\\
\end{exmpl}

A second important example is that of a \emph{finite spectral triple}:

\begin{exmpl}{(Finite spectral triple \cite{PS96, KR97})}\label{ex:finite}
	For a finite-dimensional algebra $\A_F$, a finite-dimensional left module $\H_F$ of $\A_F$ and a symmetric matrix $D_F : \H_F \to \H_F$, we call $(\A_F, \H_F, D_F)$ a \emph{finite spectral triple}. 
\end{exmpl}

As in the general case a finite spectral triple is called real and/or even if there exists a $J_F$ (implementing a bimodule structure of $\H_F$) and/or $\gamma_F$ (that acts as a grading on $\H_F$) respectively. We will go into more detail on finite spectral triples in Section \ref{sec:finite_krajewski}.\\

We can combine Examples \ref{ex:canon} and \ref{ex:finite} to construct another important class of spectral triples:

\begin{defin}[Real, even almost-commutative geometry \cite{ISS03}]\label{def:acg}
Taking the tensor product of a real, even canonical spectral triple and a real, even finite spectral triple yields another spectral triple, called a (real, even) \emph{almost-commutative geometry}:
\begin{align*}
	(C^{\infty}(M, \A_F), L^2(M, S \otimes \H_F), \dirac + \gamma_M \otimes D_F; J_{\otimes}, \gamma_M \otimes \gamma_F),
\end{align*}
with 
\begin{align*}
	J_{\otimes} =
	\begin{cases} 
		J_M\gamma_M\otimes J_F & \text{ if } n = n_1 + n_2 \in \{1,5\}\\
		J_M \otimes J_F\gamma_F &\text{ if } n_1 \in \{2,6\} \text{ and } n_2 \text{ even}\\
		J_M \otimes J_F &\text{ otherwise}.
	\end{cases}
\end{align*}
see \cite{Dabrowski2010, Vanhecke2007}
\end{defin}
\emph{All spectral triples considered in this article will be of the above form, where the freedom lies in varying the finite spectral triple that is part of it. It is seen to describe the internal structure of the various fermion fields.}\\

\begin{defin}[Unitary equivalence of spectral triples (cf.~\cite{V06}, \S 7.1)]
  Two real and even spectral triples $(\A, \H, D; J, \gamma)$ and $(\A, \H, D'; J', \gamma')$ are said to be \emph{unitarily equivalent} if there exists a unitary operator $U : \H \to \H$ such that
\begin{align*}
    D' &= UDU^*,&
    J' &= UJU^*,&
    \gamma' &= U\gamma U^*,&
    U\pi(a)U^* &= \pi(\sigma(a))\ \forall\ a \in \A.  
\end{align*}
 Here, with $\pi$ we explicated the representation of $\A$ on $\H$ and $\sigma : \A \to \A$ is an automorphism of $\A$. 
\end{defin}

\begin{exmpl}\label{ex:gauge_group}
As an important example of such a unitary equivalence, we can form the \emph{gauge group}
\begin{align*}
	U(\A) := \{u \in \A, uu^* = u^*u = 1\}
\end{align*}
and ---in the case of a real spectral triple--- take $U := uJuJ^*$ for $u \in U(\A)$, i.e.~$\H \ni \psi \to U\psi = u\psi u^*$. Using \eqref{eq:left-right} it is then seen that $J' = J$, $\gamma' = \gamma$, 
\begin{align}
	 \sigma(a) &= uau^*\quad \forall\ u \in U(\A)\label{eq:algebra_transf},& \intertext{and}
D' &= D + u[D, u^*] + J(u[D, u^*])^*J^*.
\end{align}
\end{exmpl}

In the presence of a determinant on $\H$ we can restrict $U(\A)$ to 
\begin{align}
	SU(\A) := \{ u \in U(\A), \det{}_{\H}(u) = 1\}.\label{eq:gauge_group}
\end{align}

\subsection{Gauge fields as inner fluctuations}\label{sec:infs}

Rather than isomorphisms of algebras, a natural notion of equivalence for noncommutative\linebreak ($C^*$-)algebras is Morita equivalence \cite{Rie74}. Given a spectral triple $(\A,\H,D)$ and an algebra $\mathcal{B}$ that is Morita equivalent to $\A$, one can define \cite{C96}, \cite[\S \MakeUppercase{\romannumeral 11}]{C00} a spectral triple $(\mathcal{B},\H',D')$ with $\mathcal{B}$. This is found to be of the form 
\bas
	(\B,\ \E \otimes_{\A} \H, \nabla \otimes 1 + 1\otimes D),
\eas
where $\E$ is the $\mathcal{B}-\A$ bimodule implementing the Morita equivalence of the algebras and $\nabla$ is a connection $\nabla : \E \to \E \otimes \Omega^1_{D}(\A)$, with
\begin{align}
 \Omega^1_D(\A)  :=\big\{ \sum_i a_i[D, b_i]: a_i, b_i \in \A \big\}\label{eq:innerfluctuationform}.
\end{align}
Interestingly, upon taking $\mathcal{B}$ to be $\A$, also $\E$ is equal to $\A$, $\H' = \A \otimes_{\A} \H \simeq \H$ and $D' = D + \nabla(1)$, where the latter term means $\nabla$ acting on the identity of the algebra $\A$. This leads to a whole family of Morita equivalent spectral triples $(\A, \H, D_A)$ where $ D_A := D + A$ with self-adjoint $A \in \Omega_D^1(\A)$. The bounded operators $A$ are generally referred to as the \emph{inner fluctuations} of $D$.\\

When considering a real spectral triple $(\A, \H, D; J)$, we have the additional restriction that the real structure $J'$ of the spectral triple $(\A, \H', D'; J')$ on the Morita equivalent algebra should be compatible with the relation $J'D' = \epsilon'D'J'$. Upon taking $\mathcal{B}$ to be $\A$ again in such a case, the resulting spectral triple is of the form
$
  (\A, \H, D_A; J)
$, but now with
\begin{align}
D_A := D + A + \epsilon' JAJ^*,\qquad A \in \Omega^1_D(\A)\label{eq:inner_flucts}
\end{align}
For a real canonical spectral triple with $J_M\dirac = \dirac J_M$ these inner fluctuations vanish, due to the commutativity of the algebra.\\

The action of the gauge group (Example \ref{ex:gauge_group}) on $D_A \mapsto UD_AU^*$ induces one on the inner fluctuations: 
\begin{align}
	A \mapsto A^u := uAu^* + u[D, u^*]\label{eq:A_gauge_trans},
\end{align}
an expression that is reminiscent of the way gauge fields transform in quantum field theory.\\

Both components $\dirac$ and $D_F$ of the Dirac operator of an almost-commutative geometry (Definition \ref{def:acg}) generate inner fluctuations. 
For these we will write 
\ba\label{eq:fluctDfull}
	D_A &:= \can_A + \gamma_M \otimes \Phi,
\ea	
where $\can_A = i\gamma^\mu D_\mu$, $D_\mu = (\nabla^S + \mathbb{A})_\mu$, with 
\ba
	\mathbb{A}_\mu &= \sum_n \Big(a_n[\partial_\mu, b_n] -\epsilon' J a_n[\partial_\mu, b_n] J^*\Big)\qquad a_n, b_n \in C^{\infty}(M, \A_F),\label{eq:param_A}
\intertext{skew-Hermitian and}
	 \Phi &= D_F + \sum_{n}\Big( a_n [D_F, b_n] + \epsilon' Ja_n[D_F, b_n]J^*\Big),\qquad a_n, b_n \in C^{\infty}(M, \A_F). \nn
\ea
The relative minus sign between the two terms in $\mathbb{A}_\mu$ comes from the identity $J_M \gamma^\mu J_M^* = - \gamma^\mu$. The terms will later be seen to contain all gauge fields of the theory. The inner fluctuations of the finite Dirac operator $D_F$ (see also \eqref{eq:F-innerfl}) are seen to parametrize all scalar fields.

\subsection{The spectral action}
The above suggests that a (real) spectral triple defines a gauge theory, with the gauge fields arising as the inner fluctuations of the Dirac operator and with the gauge group given by the unitary elements in the algebra. One seeks for gauge invariant functionals of $A \in \Omega^1_D(\A)$. The so-called spectral action \cite{CC97} is the most natural one. \\

Let $(\A,\H,D; J, \gamma)$ be a real, even spectral triple. Given the operator $D_A$ of \eqref{eq:inner_flucts}, a \emph{cut-off scale} $\Lambda$ and some positive, even function $f$ one can define (cf.~\cite{C96,CC97}) the gauge invariant \emph{spectral action}:
\begin{align}
S_b[A]&:= \tr f(D_A/\Lambda),\qquad A \in \Omega^1_D(\A).\label{eq:spectral_action}
\intertext{The cut-off parameter $\Lambda$ is used to obtain an asymptotic series for the spectral action (see below); the physically relevant terms then appear with a positive power of $\Lambda$ as a coefficient. Besides this bosonic action, one can define a fermionic action:}
S_f[\zeta, A]&:= \frac{1}{2}\langle J\zeta, D_A \zeta \rangle,\qquad \zeta \in \frac{1}{2}(1 + \gamma)\H \equiv \H^+,\ A \in \Omega^1_D(\A) \label{eq:ferm_action}.
\end{align}
Using that $J^2 = \epsilon$, $DJ = \epsilon' JD$ this expression is seen to satisfy
\begin{align}
	\langle J\xi, D_A \zeta\rangle = \epsilon\epsilon' \langle J\zeta, D_A \xi\rangle\qquad\forall\ \xi, \zeta \in \H,\label{eq:symmInnerProd}
\end{align}
i.e.~it is either symmetric or antisymmetric. In its original form, the expression for the fermionic action did not feature the real structure (nor the factor $\tfrac{1}{2}$) and did not have elements of only $\H^+$ as input. It was shown \cite{CC97} that for a suitable choice of a spectral triple it does yield the full fermionic part of the Standard Model Lagrangian, including the Yukawa interactions, but suffered from the fact that the fermionic degrees of freedom were twice what they should be, as pointed out in \cite{LMMS97}. Furthermore it does not allow a theory with massive right-handed neutrinos. Adding $J$ to the expression for the fermionic action and requiring $\{J, \gamma\} = 0$ allows restricting its input to $\H^+$ without vanishing altogether. The expression \eqref{eq:ferm_action} is seen to solve both problems at the same time \cite{CCM07} (see also \cite{CM07}). We will not further go into details but refer to the mentioned literature instead. \\

The full action is then given by the sum of \eqref{eq:spectral_action} and \eqref{eq:ferm_action}:
\begin{align*}
	S[\zeta, A] &= S_f[\zeta, A] + S_b[A].
\end{align*}
For an almost-commutative geometry we will write this in particular as
\bas
	S[\zeta, \mathbb{A}, \szeta] &= S_f[\zeta, \mathbb{A}, \szeta] + S_b[\mathbb{A}, \szeta],
\eas
where $\szeta$ is the generic notation for all scalar fields in the theory, all captured by $\Phi$.\\

In order to compare the spectral action with the actions of the physics that we know, the former is approximated by a \emph{heat kernel expansion} \cite{Gil84}. Let $V$ be a vector bundle on a compact Riemannian manifold $(M, g)$. For a second-order elliptic differential operator $P : C^{\infty}(V) \to C^{\infty}(V)$ of the form
\begin{equation}
 P = - \big(g^{\mu\nu}\partial_{\mu}\partial_{\nu} + K^{\mu}\partial_{\mu} + L)\label{eq:elliptic} 
\end{equation}
with $K^{\mu}, L \in \Gamma(\End(V))$, we can expand
\begin{equation}
\tr\,e^{-tP} \sim \sum_{n \geq 0}t^{(n-m)/2}a_n(P),\qquad a_n(P) := \int_{M}a_n(x, P)\sqrt{g}d^m x,\quad \text{as } t \to 0^+\label{eq:gilkey},
\end{equation}
where $m$ is the dimension of $M$, $\sqrt{g}\mathrm{d}^mx$ (with $g \equiv \det g$) its \emph{volume form} and the coefficients $a_n(x, P)$ are called the \emph{Seeley--DeWitt coefficients} \cite[\S 11.2]{Gil84}. For an almost-commutative geometry $D_A^2$ is of the form \eqref{eq:elliptic} and one finds (for $\dim M = 4$): 
  \begin{align}
    \tr f(D_A/\Lambda) &= 2\Lambda^4 f_4 a_0(D_A^2) + 2\Lambda^2 f_2 a_2(D_A^2) + a_4(D_A^2)f(0) + \mathcal{O}(\Lambda^{-2}),\label{eq:expansion_action_functional}
  \end{align}
where the $f_k$ are moments of the function $f$,
\begin{align*}
	f_{k} := \int_{0}^{\infty} f(w)w^{k-1}dw \qquad (k>0) \nonumber. 
\end{align*}
\emph{In all cases that we will consider, the manifold will be taken four-dimensional, flat and without boundary (so that all boundary terms vanish by Stokes' Theorem).} The expansion \eqref{eq:expansion_action_functional} of the spectral action is then seen to be
\begin{align}
\tr f\bigg(\frac{D_A}{\Lambda}\bigg) &\sim \int_M \bigg[\frac{f(0)}{8\pi^2}\Big( - \frac{1}{3}\tr_F\mathbb{F}_{\mu\nu}\mathbb{F}^{\mu\nu} + \tr_F \Phi^4 + \tr_F [D_\mu, \Phi]^2\Big) \nn	\\
		&\qquad\qquad + \frac{1}{2\pi^2}\Lambda^4 f_4\tr_{F}\id - \frac{1}{2\pi^2}\Lambda^2 f_2\tr_F\Phi^2\bigg]  + \mathcal{O}(\Lambda^{-2}),\label{eq:spectral_action_acg_flat}
\end{align}
where $\tr_F$ denotes the trace over the finite Hilbert space and $\mathbb{F}_{\mu\nu}$ is the (skew-Hermitian) field strength (or curvature) of $\mathbb{A}_\mu$, i.e.
\ba\label{eq:gauge_field_strength}
	\mathbb{F}_{\mu\nu} = [(\nabla^S + \mathbb{A})_\mu, (\nabla^S + \mathbb{A})_\nu].
\ea

%
An additional constraint is imposed on the spectral triple, namely the demand that the gauge fields be traceless, as is expressed by
	\ba
		\tr_F A_\mu  = 0\label{eq:unimod},
	\ea
where with $A_\mu$ we have denoted the first term on the RHS of \eqref{eq:param_A}. This is called the \emph{unimodularity condition} and applying it removes a $u(1)$ gauge field. In fact, it turns out to be closely related the demand $\det(u) = 1$ for the gauge group (cf.~\eqref{eq:gauge_group}). Applying it in the derivation of the Standard Model one both obtains the right gauge degrees of freedom and ensures that the quarks have the correct electromagnetic interaction \cite[\S 3.5]{CCM07}. 

\subsection{Finite spectral triples and Krajewski diagrams}\label{sec:finite_krajewski}

Since we will be using real finite spectral triples (Example \ref{ex:finite}) extensively later on, we cover them in more detail. They are characterized by the following properties:
\begin{itemize}
	\item The finite-dimensional algebra is (by Wedderburn's Theorem) a direct sum of matrix algebras:
	\begin{align}
		\A_F = \bigoplus_{i}^K M_{N_i}(\mathbb{F}_i)\qquad \mathbb{F}_i = \mathbb{R}, \mathbb{C}, \mathbb{H}\label{eq:finite_algebra}.
	\end{align}	
	
	\item The finite Hilbert space is an $\A_F^\com$-bimodule. More specifically, it is a direct sum of tensor products of irreducible representations $\srep{i} \equiv \com^{N_i}$ of $M_{N_i}(\mathbb{F}_i)$, for $\mathbb{F}_i = \com, \mathbb{R}$ and\footnote{For the case $\mathbb{F}_i = \mathbb{H}$, the irreducible representation of $M_{N_i}(\mathbb{F}_i)^\com$ is $\com^{2N_i}$.} a contragredient representation \srepo{j}. 
The latter can be identified with the dual of \srep{j} (by using the canonical inner product on the latter). Thus $\H_F$ is generically of the form
	\begin{align}
		\H_F = \bigoplus_{i \leq j\leq K} \big(\rep{i}{j}\big)^{\oplus M_{N_iN_j}} \oplus \big(\rep{j}{i}\big)^{\oplus M_{N_jN_i}} \oplus \big(\rep{i}{i}\big)^{M_{N_iN_i}}
\label{eq:Hilbertspace}.
	\end{align}
The non-negative integers $M_{N_iN_j}$ denote the \emph{multiplicity} of the representation \rep{i}{j}. When various multiplicities all have one particular value $M$, we speak of ($M$) \emph{generations} that are part of a \emph{family}. 

In the rest of this paper we will not consider representations such as the last part of \eqref{eq:Hilbertspace}, since these are incompatible with $J_F\gamma_F = - \gamma_FJ_F$, necessary for avoiding the fermion doubling problem.

	\item The right $\A_F$-module structure is implemented by a real structure \ba\label{eq:fin_real} J_F :\rep{i}{j} \to \rep{j}{i}\ea that takes the adjoint: $J_F(\eta \otimes \bar\zeta) = \zeta \otimes \bar\eta$, for $\eta \in \srep{i}$ and $\zeta \in \srep{j}$. To be explicit: let $a := (a_1, \ldots, a_K) \in \A_F$ and $\eta \otimes \bar\zeta \in \rep{i}{j}$, then
\ba\label{eq:def_right_mult}
	a^o := J_Fa^*J^*_F (\eta \otimes \bar\zeta) = J_Fa^*\zeta \otimes \bar\eta = J_F(a^*_j\zeta \otimes \bar\eta) = \eta \otimes \overline{a_j^*\zeta} \equiv \eta \otimes \bar\zeta a_j.
\ea
From this it is clear that \eqref{eq:left-right} entails the compatibility of the left and right action. For the Hilbert space the existence of a real structure \eqref{eq:fin_real} implies that $M_{N_iN_j} = M_{N_jN_i}$.

	\item For each component of the algebra for which $\mathbb{F}_i = \mathbb{C}$ we will a priori allow both the (complex) linear representation \srep{i} and the anti-linear representation $\overline{\mathbf{N}}_i$, given by: 
	\begin{align*}
		\pi(m)v &:= \overline{m}v,\qquad m \in M_{N_i}(\com), v \in \mathbb{C}^{N_i}.
	\end{align*}
\item The finite Dirac operator $D_F$ consists of components
\ba
	\D{ij}{kl} : \rep{k}{l} \to \rep{i}{j}\label{eq:order_one_finite}.
\ea
The first order condition \eqref{eq:order_one} implies that any component is either left- or right-linear with respect to the algebra \cite{KR97}. This means that $i = k$ or $j = l$.\footnote{An exception to this rule is when one component of the algebra acts in the same way on more than one different representations in $\H_F$.} In both cases it is parametrized by a matrix; in the first case it constitutes of right multiplication with some $\eta_{lj} \in \rep{l}{j}$, in the second case of left multiplication with some $\eta_{ik}\in \rep{i}{k}$. 
\end{itemize}

There exists a very useful graphical representation for finite spectral triples, called \emph{Krajewski diagrams} \cite{KR97}. Such a diagram consists of a two-dimensional grid, labeled by the various $N_i$ and $N_i^o$, representing (the irreducible representations of) the algebra. Any representation \rep{i}{j} that occurs in $\H_F$ then can be represented as a \emph{vertex} on the point $(i, j)$ in this grid. If the finite spectral triple is even, each such representation has a value $\pm$ for the grading $\gamma_F$. We represent it by putting the sign in the corresponding vertex. For real spectral triples, a diagram has to be symmetric with respect to reflection around the diagonal from the upper left to the lower right corner. This is due to the role of $J_F$. The reflection of a particular vertex has the same or an opposite value for the grading, depending on whether $J_F$ commutes or anticommutes with $\gamma_F$. \\

We can represent the component $\D{ij}{kl}$ of the Dirac operator in a Krajewski diagram by an \emph{edge} from $(k,l)$ to $(i, j)$. Since the Dirac operator is self-adjoint, this means that there is also an edge from $(i, j)$ to $(k,l)$ and since it (anti)commutes with $J_F$, this means that there must also be an edge from $(l, k)$ to $(j, i)$. From the first order condition it follows \cite{KR97} that these lines can only be horizontal or vertical. We provide a particularly simple example of a Krajewski diagram in Figure~\ref{fig:kraj}, in which there are two vertices (and their conjugates) between which there is an edge.\\

\begin{figure}[ht]
\begin{center}
	\def\svgwidth{.4\textwidth}
		\includesvg{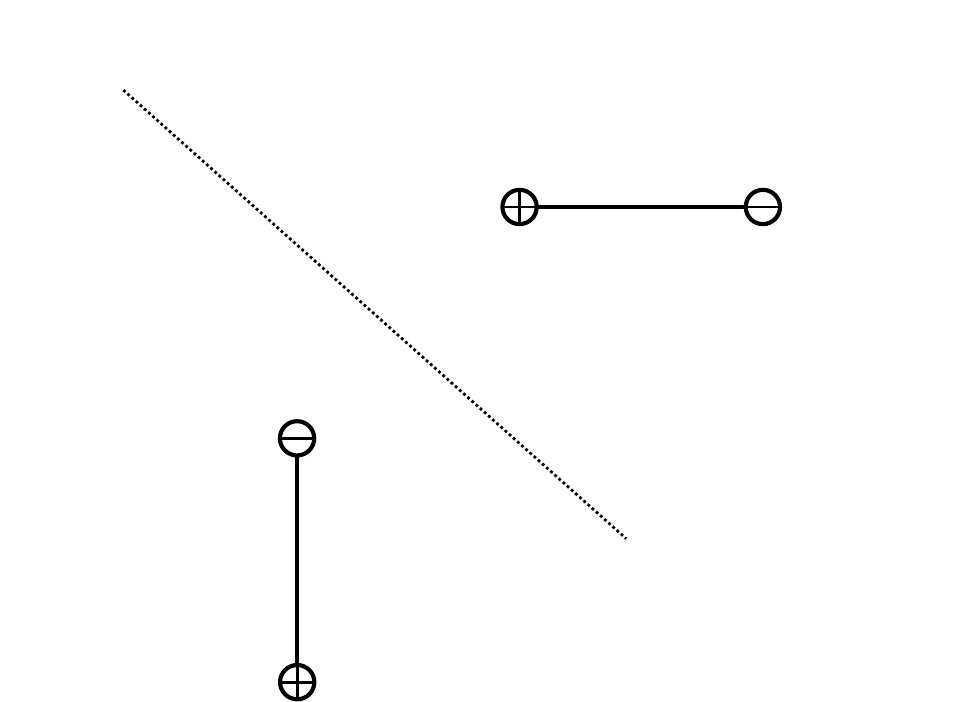}
		\caption{An example of a Krajewski diagram. Each circle in the grid stands for a representation in $\H_F$. A solid line represents a component of the Dirac operator. As can be seen from the signs, $\{J_F, \gamma_F\} = 0$ here.}
		\label{fig:kraj}
\end{center}
\end{figure}

Both as an example of the power of Krajewski diagrams and for future reference Figure \ref{fig:KrajSM} shows the diagram that fully determines (the internal structure of) the Standard Model. There, the finite algebra is taken to be 
\bas
	\A_{SM} = \com \oplus \mathbb{H} \oplus M_3(\com)
\eas 
for which we consider the representations $\mathbf{1}$, $\overline{\mathbf{1}}$, $\mathbf{2}$ and $\mathbf{3}$, determining the grid in Figure \ref{fig:KrajSM}. The particles that the SM contains are then represented as the vertices in the grid. On each point there are in fact three vertices, corresponding to the three generations of particles. Employing all demands on the finite Dirac operator it is seen \cite[\S 2.6]{CCM07} to be parametrized by the fermion mass mixing matrices $\Upsilon_{\nu,e,u,d} \in M_{3}(\com)$. Their inner fluctuations generate scalars that are interpreted as the Higgs boson doublet (solid lines), connecting the left- and right-handed representations. Furthermore we have the possibility of adding a Majorana mass $\Upsilon_R$ for the right handed neutrino (dotted line). Note that there are in principle extra components of $D_F$ possible (e.g.~from $\repl{\bar 1}{1}$ to $\repl{3}{1}$) but they are all forbidden by the additional demand
\bas
	[D_F, (\lambda, \diag(\lambda, \bar\lambda), 0)] = 0\qquad \forall\ (\lambda, \diag(\lambda, \bar\lambda), 0) \in \A_F, \lambda \in \com,
\eas
required to keep the photon massless \cite[\S 2.6]{CCM07}. \\

\begin{figure}
	\begin{center}
		\def\svgwidth{.4\textwidth}
		\includesvg{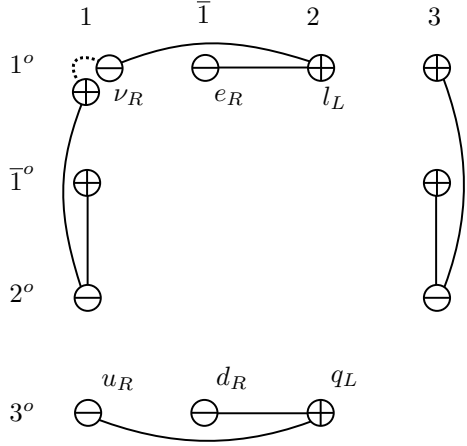}
		\caption{The Krajewski diagram representing the Standard Model.}
		\label{fig:KrajSM}	
	\end{center}
\end{figure}

The important result of \cite{KR97} is that all properties of a finite spectral triple can be read off from a Krajewski diagram. Although Krajewski diagrams were thus developed as a tool to characterize or classify finite spectral triples, they have turned out to have an applicability beyond that, e.g.~\cite{S12}. Here, we will use them also to determine the value of the trace of the second and fourth powers of the finite Dirac operator $D_F$ (or $\Phi$ after fluctuations), appearing in the action functional \eqref{eq:spectral_action_acg_flat}. We notice \cite[\S 5.4]{KR97} that 
\begin{itemize}
\item all contributions to the trace of the $n$th power of $D_F$ are given by continuous, closed paths that are comprised of $n$ edges in the Krajewski diagram. 
\item such paths can go back and forth along an edge.
\item a step in the horizontal direction corresponds to a component $\D{ij}{kl}$ of $D_F$ acting on the left of the bimodule $\H_F$, whereas a vertical step corresponds to a component $\D{ij}{kl}$ acting on the right via $J(\D{ij}{kl})^*J^*$. Due to the tensor product structure, the trace that corresponds to a certain closed path is therefore the product of the horizontal and vertical contributions.
\item if a closed path extends in only one direction, this means that the operator acts trivially on either the right or the left of the representation \rep{i}{j} at which the path started. The trace then yields an extra factor $N_i$ or $N_j$, depending on the direction of the path.
\end{itemize}
As an example we have depicted in Figure \ref{fig:KrajPaths} all possible contributions to the trace of the fourth power of a $D_F$. This is the highest power that we shall encounter, as we are interested in the action \eqref{eq:spectral_action_acg_flat}. We introduce the notation $|X|^2 := \tr_N X^*X$, for $X^*X \in M_{N}(\com)$. As an illustration of the factors appearing; in the second case a path can start at any of the three vertices, but when it starts in the middle one, it can either go first to the left or to the right. In addition, for a real spectral triple, each path appears in the same way in both directions, giving an extra factor $2$. This last argument does not hold for the last case when $k = i$ and $l = j$, however.\\

\begin{figure}
\centering
	\def\svgwidth{\textwidth}
		\includesvg{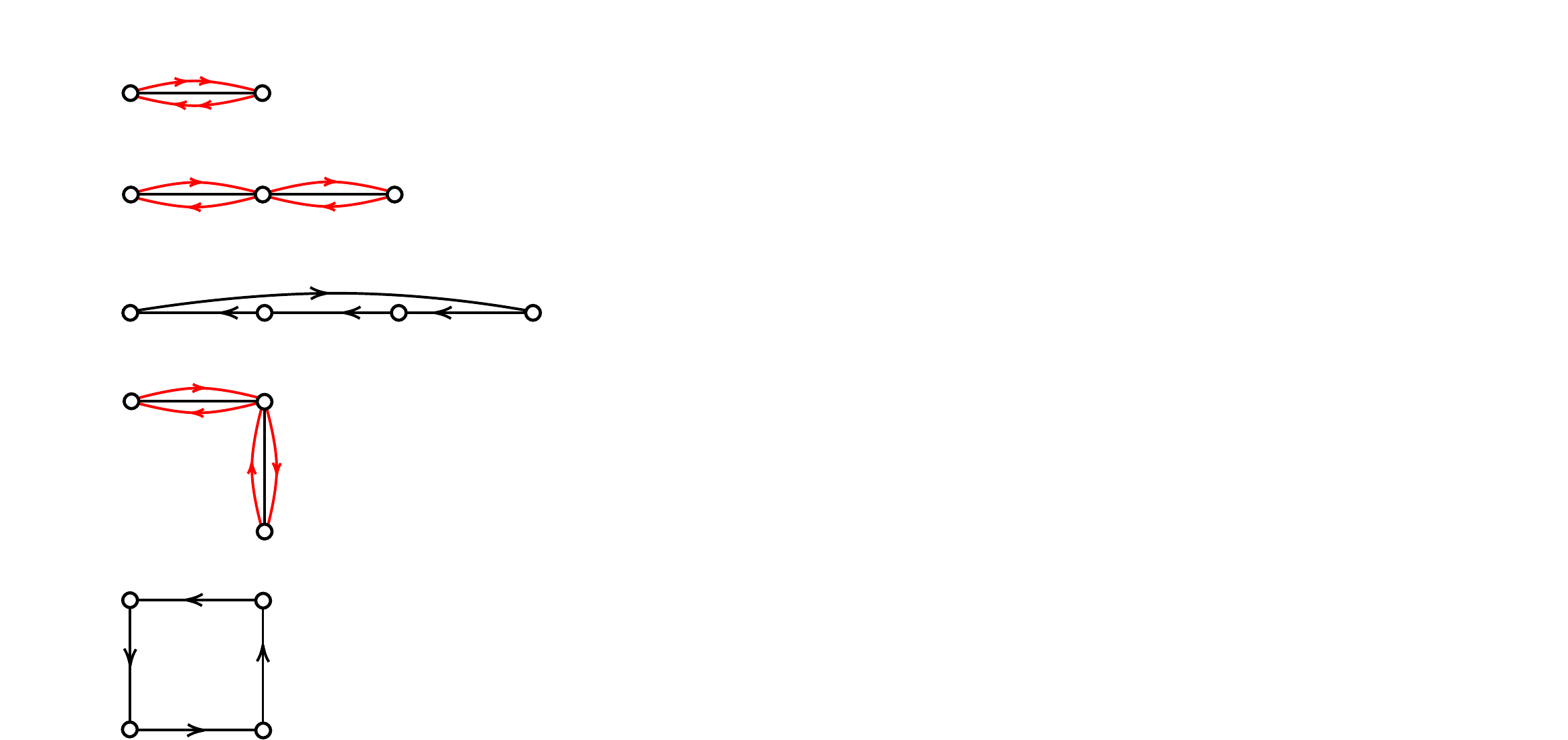}
		\caption{All types of paths contributing to the fourth power of a finite Dirac operator. The last two only occur when it is part of a real spectral triple, and we have used the notation \eqref{eq:def_right_mult} and that left-linear and right-linear components of the finite Dirac operator commute with each other. The Hermitian conjugates come from traversing a certain path in the opposite direction.}
		\label{fig:KrajPaths}	

\end{figure}

A component $\D{ij}{kj}$ of the finite Dirac operator will develop inner fluctuations \eqref{eq:inner_flucts} that are of the form
\begin{align}
	\D{ij}{kj} &\to \D{ij}{kj} + \sum_n a_n[\D{ij}{kj}, b_n],\qquad a_n, b_n \in \A\nonumber\\
							&=  \D{ij}{kj} + \sum_n (a_n)_i(\D{ij}{kj}(b_n)_k - (b_n)_i\D{ij}{kj}),\label{eq:F-innerfl}
\end{align}
where $(a_n)_i$ denotes the $i$th component of the algebra element $a_n$. It describes a scalar $\Phi_{ik}$ in the representation \rep{i}{k}. Note that the gauge group (Example \ref{ex:gauge_group}) acts on such a component in the following way:
\bas
	\D{ij}{kj} \to uJuJ^* \D{ij}{kj} u^*Ju^*J^* = u_iu_j^{*o}\D{ij}{kj}u_k^*u_j^o = u_i\D{ij}{kj}u_k^*,
\eas
whereas on an element $\psi_{ij} \in \rep{i}{j} \subset \H_F$ it acts as
\ba
	\psi_{ij} \to uJuJ^*\psi_{ij} = u_i\psi_{ij} u_j^*.\label{eq:fermion_transf}
\ea
Finally, we find for the commutator of $D_\mu$ with a component $\D{ij}{ik}$ (appearing in the action \eqref{eq:spectral_action_acg_flat}), by applying it to an element $\zeta_{kj} \in L^2(M, S\otimes \rep{k}{j})$ that:
\begin{align}
	[D_\mu, \D{ij}{kj}] \zeta_{kj} &= \partial_\mu(\Phi_{ik}\zeta_{kj}) -i g_iA_{i \mu} \Phi_{ik}\zeta_{kj} + ig_j\Phi_{ik}\zeta_{kj}A_{j \mu} -\Phi_{ik}\partial_\mu(\zeta_{kj}) \nn\\
		&\qquad +i g_k \Phi_{ik}A_{k \mu}\zeta_{kj} - ig_j\Phi_{ik}\zeta_{kj}A_{j \mu}\nonumber\\
	&= \big( \partial_\mu(\Phi_{ik}) -i g_i A_{i\mu} \Phi_{ik}  + i g_k \Phi_{ik}A_{k\mu} \big)\zeta_{kj}\nonumber\\
	&\equiv D_\mu(\Phi_{ik})\zeta_{kj}.\label{eq:commutatorExpr}
\end{align}
Here we have preliminarily introduced coupling constants $g_{i,k} \in \mathbb{R}$ and wrote $\mathbb{A}_\mu = - i g_iA_{i\mu} + ig_k A_{k\mu}^o$ (with $A_{i\mu}, A_{k\mu}$ Hermitian) to connect with the physics notation.

\subsection{NCG and R-parity}\label{sec:r-parity}

One of the key features of many supersymmetric theories is the notion of \emph{$R$-parity}; particles and their superpartners are not only characterized by the fact that they are in the same representation of the gauge group and differ in spin by $\tfrac{1}{2}$, but in addition they have opposite $R$-parity values (cf.~\cite[\S 4.5]{DGR04}). As an illustration of this fact for the MSSM, see Table \ref{tab:rpar}.\\

\begin{table}[h!]
\begin{tabularx}{\textwidth}{X lllll X}
  \toprule 
& \textbf{Fermions}& \textbf{R-parity} & \textbf{Bosons}& \textbf{R-parity} & \textbf{Multiplicity} &\\
  \midrule
& gauginos			& $-1$ 							& gauge bosons 	& $+1$ & 1 &\\
& SM fermions 	& $+1$							& sfermions 		& $-1$ & 3 &\\
& higgsinos 		& $-1$							&	Higgs(es) 		& $+1$ & 1 &\\
    \bottomrule
\end{tabularx}
\caption{The $R$-parity values for the various particles in the MSSM. In the left column are the fermions, in the right column the bosons. The SM fermions and their superpartners come in three generations each, whereas there is only one copy of the other particles. This statement presupposes that we view the up- and downtype Higgses and higgsinos as being distinct.}
\label{tab:rpar}
\end{table} 

In this section we try to mimic such properties, providing an implementation of this concept in the language of noncommutative geometry: 

\begin{defin}
	\emph{An $R$-extended, real, even spectral triple} is a real and even spectral triple $(\A, \H, D; \gamma, J)$ that is dressed with a grading 
$R : \H \to \H$ satisfying
\begin{align*}
	[R, \gamma] = [R, J] = [R, a] = 0\ \forall\ a \in \A.
\end{align*}
We will simply write $(\A, \H, D; \gamma, J, R)$ for such an $R$-extended spectral triple.
\end{defin}
Note that, as with any grading, $R$ allows us to split the Hilbert space into an \emph{$R$-even} and \emph{$R$-odd} part:
\begin{align*}
	\H = \H_{R = +} \oplus \H_{R = -}, \qquad \H_{R = \pm} = \frac{1}{2}(1 \pm R)\H.
\end{align*}
Consequently the Dirac operator splits in parts that (anti-)commute with $R$: $D = D_+ + D_-$ with $\{D_-, R\} = [D_+, R] = 0$. We anticipate what is coming in the next section by mentioning that in applying this notion to (the Hilbert space of) the MSSM, elements of $\H_{R = +}$ should coincide with the SM particles and those of $\H_{R=-1}$ with the gauginos and higgsinos. 


\begin{rmk}
	In Krajewski diagrams we will distinguish between objects on which $R = 1$ and on which $R = -1$ in the following way:
	\begin{itemize}
		\item Representations in $\H_F$ on which $R = -1$ get a black fill, whereas those on which $R = +1$ get a white fill with a black stroke.
		\item Scalars (i.e.~components of the Dirac operator) that commute with $R$ are represented by a dashed line, whereas scalars that anti-commute with $R$ get a solid line.	\end{itemize} 
\end{rmk}

We immediately use the $R$-parity operator to make a refinement to the unimodularity condition \eqref{eq:unimod}. Instead of taking the trace over the full (finite) Hilbert space, we only take it over the part on which $R$ equals 1, i.e.~it now reads
\ba
	\tr_{\H_{R =+}}A_\mu &= 0.\label{eq:unimod_new}
\ea 
Analogously, the definition \eqref{eq:gauge_group} of the gauge group must then be modified to 
\begin{align}
	SU(\A) := \{ u \in U(\A), \det{}_{\H_{R = +}}(u) = 1\}.\label{eq:gauge_group_new}
\end{align}
We will justify this choice later, after Lemma \ref{lem:gobinogo}.

Note that adjusting the unimodularity condition has no effect when applying it to the case of the NCSM, since all SM-fermions have $R$-parity $+1$ (Table \ref{tab:rpar}).

\section{Supersymmetric spectral triples}\label{sec:susy-st}

This section forms the heart of the paper. We give a classification of all almost-commutative geometries whose particle content are supersymmetric. The canonical part \eqref{ex:canon} of the almost-commutative geometries is sometimes only implicitly there. Throughout this section we characterize the finite spectral triples / almost-commutative geometries by their Krajewski diagrams as presented in Section \ref{sec:finite_krajewski}. Since gravity is known to break global supersymmetry, we shall from the outset restrict ourselves to a canonical spectral triple on a flat background, i.e.~all Christoffel symbols and consequently the Riemann tensor vanish. Unless stated otherwise we will restrict ourselves to finite algebras $\A_F$ whose components are matrix algebras over $\com$:
	\begin{align}
		\A_F = \bigoplus_i^K M_{N_i}(\com).\label{eq:alg}
	\end{align}		

\emph{For a given algebra of this form}, we look for supersymmetric `building blocks' ---made out of representations \rep{i}{j} ($i, j \in \{1, \ldots, K\}$) in the Hilbert space (fermions) and components of the finite Dirac operator (scalars)--- that give a particle content and interactions eligible for supersymmetry. In particular, these building blocks should be `irreducible'; they are the smallest extensions to a spectral triple that are necessary to retain a supersymmetric action. We underline that we do not require that the extra action associated to a building block is supersymmetric in itself. Rather, the building blocks will be defined such that the total action can remain supersymmetric, or can become it again.\\

We will start by considering all possibilities for a finite algebra consisting of one component.

\subsection{First building block: the adjoint representation}\label{sec:bb1}

For a finite algebra $\A_F = M_{N_j}(\com)$ that consists of one component, the finite Hilbert space can be taken to be $\rep{j}{j} \simeq M_{N_j}(\com)$, the bimodule of the component $M_{N_j}(\com)$ of the algebra. In order to reduce the fermionic degrees of freedom in the same way as in the NCSM, we need a finite spectral triple of KO-dimension $6$, i.e.~one that satisfies $\{J, \gamma\} = 0$. This requires at least two copies of this bimodule, both having a different value of the finite grading\footnote{We will distinguish the copies by giving them subscripts $L$ and $R$.} and a finite real structure $J_F$ that interchanges these copies (and simultaneously takes their adjoint):
\begin{align*}
	J_F(m, n) := (n^*, m^*).
\end{align*}
We call this 
\begin{defin}\label{def:bb1}
A \emph{building block of the first type} \B{j} ($j \in \{1, \ldots, K\}$) consists of two copies of an adjoint representation $M_{N_j}(\com)$ in the finite Hilbert space, having opposite values for the grading. It is denoted by
\begin{align*}
	\mathcal{B}_j &= { (m, m', 0) \in M_{N_j}(\com)_L \oplus M_{N_j}(\com)_R \oplus \End(\H_F) } \subset H_F \oplus \End(\H_F). 
\end{align*}
\end{defin}

As for the $R$-parity operator, we put $R|_{M_{N_j}(\com)} = - 1$. Since $D_A$ maps between $R = -1$ representations the gauge field has $R = 1$, indeed opposite to the fermions. The Krajewski diagram that corresponds to this spectral triple is depicted in Figure \ref{fig:bb1}.

	\begin{figure}[ht]
		\begin{center}
		\def\svgwidth{.3\textwidth}
		\includesvg{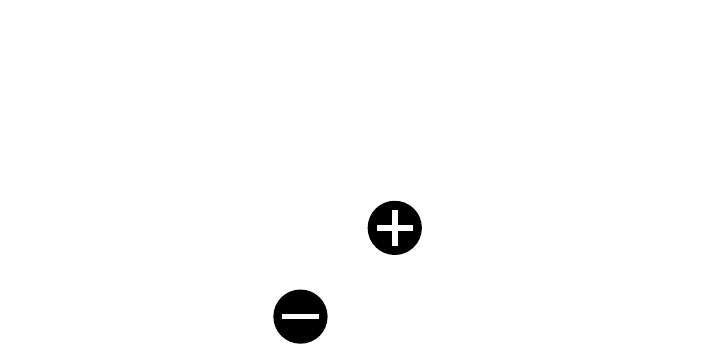}
	\captionsetup{width=.7\textwidth}
		\caption{The first building block consists of two copies in the adjoint representation $M_{N_j}(\com)$, having opposite grading. The solid fill means that they have $R = -1$.}
		\label{fig:bb1}
		\end{center}
	\end{figure}

Via the inner fluctuations \eqref{eq:inner_flucts} of the canonical Dirac operator \dirac \eqref{eq:param_A} we obtain gauge fields that act on the $M_{N_j}(\com)$ in the adjoint representation. If we write 
\begin{align*}
	(\gau{jL}', \gau{jR}') \in \H^+ = L^2(S_+ \otimes M_{N_j}(\com)_L) \oplus  L^2(S_- \otimes M_{N_j}(\com)_R)
\end{align*}
for the elements of the Hilbert space as they would appear in the inner product, we find for the fluctuated canonical Dirac operator \eqref{eq:param_A} that:
\begin{align*}
	\can_A(\gau{jL}', \gau{jR}') = i \gamma^\mu (\partial_\mu + \mathbb{A}_\mu)(\gau{jL}', \gau{jR}'),
\end{align*}
with $\mathbb{A}_\mu = - i g_j \ad A_{\mu j}'$. Here we have written $\ad(A_{\mu j}')\gau{L,R}' := A_{\mu j}'\gau{L,R}' - \gau{L,R}'A_{\mu j}'$ with $A_{\mu j}' \in \End(\Gamma(\cS) \otimes u(N_j))$ self-adjoint and we have introduced a coupling constant $g_j$.

\subsubsection{Matching degrees of freedom}\label{sec:equalizing}

In order for the gauginos to have the same number of finite degrees of freedom as the gauge bosons ---an absolute necessity for supersymmetry--- we can simply reduce their finite part $\gau{jL,R}'$ to $u(N_j)$, as described in \cite[\S 4]{BS10}. However, as is also explained in loc.~cit., even though the finite part of the gauge field $A_{\mu j}'$ is initially also in $u(N_j)$, the trace part is invisible in the action since it acts on the fermions in the adjoint representation. To be explicit, writing $A_{\mu j}' = A_{\mu j} + \tfrac{1}{N_j}B_{\mu j}\id_{N_j}$, with $A_{\mu j}(x) \in su(N_j)$, $B_{\mu j}(x) \in u(1)$ (for conciseness we have left out coupling constants for the moment), we have
\bas
	\ad(A_{\mu j}') = \ad(A_{\mu j}).
\eas
This fact spoils the equality between the number of fermionic and bosonic degrees of freedom again. We observe however that upon splitting the fermions into a traceless and trace part, i.e.~
\ba
	\gau{jL,R}' = \gau{jL,R} + \gau{jL,R}^0 \id_{N_j},\label{eq:bb1-gaugino}
\ea
the latter part is seen to fully decouple from the rest in the fermionic action \eqref{eq:ferm_action}:
\bas
	\inpr{J_M\gau{jL}'}{D_A\gau{jR}'} = \inpr{J_M\gau{jL}}{\can_A\gau{jR}} + \inpr{J_M\gau{jL}^0}{\dirac \gau{jR}^0}.
\eas
We discard the trace part from the theory.

\begin{rmk}\label{rmk:bb1-obstr}
	In particular, a building block of the first type with $N_j = 1$ does not yield an action since the bosonic interactions automatically vanish and all fermionic ones are discarded. This is remedied again in a set up such as in the next section. 
\end{rmk} 

Note that applying the unimodularity condition \eqref{eq:unimod_new} does not teach us anything here, for $\H_{R = +}$ is trivial.

One last aspect is hampering a theory with equal fermionic and bosonic degrees of freedom. There is a mismatch between the number of degrees of freedom for the theory \emph{off shell}; the equations of motion for the gauge field and gaugino constrain a different number of degrees of freedom. This is a common issue in supersymmetry and is fixed by means of a non-propagating \emph{auxiliary field}. We mimic this procedure by introducing a variable $G_j := G^a_j T^a_j \in C^{\infty}(M, su(N_j))$ ---with $T^a_j$ the generators of $su(N_j)$--- which appears in the action via:\footnote{This auxiliary field is commonly denoted by $D$. Since this letter already appears frequently in NCG, we instead take $G$ to avoid confusion.}
\begin{align}
- \frac{1}{2n_j}\int_M \tr_{N_j} G^2_j \sqrt{g}\mathrm{d}^4x.\label{eq:auxfieldG}
\end{align}
The factor $n_j$ stems from the normalization of the $T^a_j$, $\tr T^a_jT^b_j = n_j\delta^{ab}$, and is introduced so that in the action $(G^a)^2$ has coefficient $1/2$, as is customary. Typically $n_j = \frac{1}{2}$. Using the Euler-Lagrange equations we obtain $G_j = 0$, i.e.~the auxiliary field does not propagate. This means that on shell the action corresponds to what the spectral action yields us. In proving the supersymmetry of the action, however, we will work with the off shell counterpart of the spectral action.

The action of the spectral triple associated to \B{j} has been determined before (e.g.~\cite{CC96}, \cite{CC97}, \cite{Cha94}) and is given by
	\begin{align}\label{eq:SYM-0}
				\act{j}[\gau{}, \mathbb{A}] := \inpr{J_M\gau{jR}'}{\can_A \gau{jL}'} - \frac{f(0)}{24\pi^2}\int_M \tr_{\H_F} \mathbb{F}^j_{\mu\nu} \mathbb{F}^{j,\mu\nu}
 + \mathcal{O}(\Lambda^{-2}),
	\end{align}
where we have written the fermionic terms as they would appear in the path integral (cf.~\cite[\S 16.3]{CM07}).\footnote{It might seem that there are too many independent spinor degrees of freedom, but this is a characteristic feature for a theory on a Euclidean background, see e.g.~\cite{OS1,OS2,NW96} for details.} Using the notation introduced in \eqref{eq:def_right_mult} we write $\mathbb{A}_\mu = - ig_j (A_{\mu j} - A_{\mu j}^o)$ and find for the corresponding field strength \eqref{eq:gauge_field_strength}
\bas
	\mathbb{F}_{\mu\nu} &= - i g_j \big( F_{\mu\nu}^j - (F_{\mu\nu}^j)^o\big),\\
	&\qquad \text{with}\quad F_{\mu\nu}^j = \partial_\mu(A_{\nu j}) - \partial_\nu (A_{\mu j}) - ig_j [A_{\mu j}, A_{\nu j}]
\eas
Hermitian. Consequently we have in the action
\begin{align}
- \frac{f(0)}{24\pi^2}\int_M \tr_{\H_F} \mathbb{F}^j_{\mu\nu} \mathbb{F}^{j,\mu\nu} &=  \frac{1}{4}\frac{\mathcal{K}_j}{n_j}\int_M \tr_{N_j} F^j_{\mu\nu} F^{j,\mu\nu}, \nn\\
		&\qquad\qquad \text{with } \mathcal{K}_j = \frac{f(0)}{3\pi^2}n_jg_j^2 (2N_j)\label{eq:expressionK1},
\end{align}
Here we have used that for $X \in M_{N_j}(\com)$ traceless, $\tr_{M_{N_j}(\com)}(X - X^o)^2 = 2N_j\tr_{N_j}X^2$ and there is an additional factor $2$ since there are two copies of $M_{N_j}(\com)$ in $\H_F$. The expression for $\K_j$ gets a contribution from each representation on which the gauge field $A_{\mu j}$ acts, see Remark \ref{rmk:bb2-rmk} ahead. The factor $n_j^{-1}$ in front of the gauge bosons' kinetic term anticipates the same factor arising when performing the trace over the generators of the gauge group. The same thing happens for the gauginos and since we want $\gau{j}^{a}$, rather than $\gau{j}$, to have a normalized kinetic term, we scale these according to
\ba\label{eq:bb1-scaling}
	\gau{j} \to \frac{1}{\sqrt{n_{j}}}\gau{j}, \quad\text{where } \tr T^a_{j}T^b_{j} = n_{j}\delta_{ab}.
\ea
Discarding the trace part of the fermion, scaling the gauginos, introducing the auxiliary field $G_j$ and working out the second term of \eqref{eq:SYM-0} then gives us for the action
\begin{align}
	\act{j}[\gau{}, \mathbb{A}, G_j] := \frac{1}{n_j}\inpr{J_M\gau{jL}}{\can_A\gau{jR}} + \frac{1}{4}\frac{\K_j}{n_j}\int_M \tr_{N_j} F^j_{\mu\nu} F^{j,\mu\nu} - \frac{1}{2n_j}\int_M \tr_{N_j} G^2_j \label{eq:SYM}
\end{align}
with $\gau{jL,R} \in L^2(M, S_{\pm} \otimes su(N_j)_{L,R})$, $A_j \in \End(\Gamma(S) \otimes su(N_j))$ and $G_j \in C^{\infty}(M, su(N_j))$.\\ 

For this action we have:
\begin{theorem}\label{thm:bb1}
	The action \eqref{eq:SYM} of an $R$-extended almost-commutative geometry that consists of a building block \B{j} of the first type (Definition \ref{def:bb1}, with $N_j \geq 2$) is supersymmetric under the transformations 
\begin{subequations}\label{eq:bb1-transforms}
\begin{align}
	\delta A_j &= c_{j}\gamma^\mu \big[(J_M\eR, \gamma_\mu\gau{jL})_\cS + (J_M\eL, \gamma_\mu\gau{jR})_\cS\big],  \label{eq:bb1-transforms1}\\
	\delta\gau{jL,R} &= c_{j}'\gamma^\mu\gamma^\nu F^j_{\mu\nu}\eLR + c_{G_j}'G_j\eLR, \label{eq:bb1-transforms2}\\
	\delta G_j &= c_{G_j}\big[(J_M\eR, \can_A\gau{jL})_{\cS} + (J_M\eL, \can_A\gau{jR})_{\cS}\big]\label{eq:bb1-transforms3}
\end{align}
\end{subequations}
	with $c_{j}, c_{j}', c_{G_j}, c_{G_j}' \in \mathbb{C}$ iff
\ba
			2ic_{j}' &= - c_j\K_j, & c_{G_j} = - c_{G_j}'.\label{eq:bb1-constr-final}
\ea
\end{theorem}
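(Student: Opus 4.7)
The plan is to compute the total variation $\delta \act{j}$ by applying \eqref{eq:bb1-transforms1}--\eqref{eq:bb1-transforms3} to each of the three terms in \eqref{eq:SYM}, organize the result according to the type of field combination produced (namely $F\,\epsilon\,\gau{}$, $G\,\epsilon\,\gau{}$, and cubic $\epsilon\,\gau{}\,\gau{}\,\gau{}$ terms), and demand that all such structures cancel separately. Because $F_{\mu\nu}^j$, $G_j$ and the cubic fermion combinations are functionally independent, the coefficients in front of each structure yield independent conditions on $c_j, c_j', c_{G_j}, c_{G_j}'$, which should produce exactly the two relations \eqref{eq:bb1-constr-final} and prove both directions of the ``iff'' simultaneously.

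First I would vary the fermionic term $\tfrac{1}{n_j}\langle J_M\gau{jL}, \can_A\gau{jR}\rangle$. The variation splits into three pieces: $\delta\gau{jL}$ and $\delta\gau{jR}$ contribute $F\epsilon\gau{}$ and $G\epsilon\gau{}$ terms (after using $\can_A\gau{} = i\gamma^\mu D_\mu\gau{}$ and integrating by parts), while $\delta A_j$ inside $\can_A$ produces a cubic fermion term. Next, the variation of the Yang--Mills term gives $\tfrac{1}{2}\tfrac{\K_j}{n_j}\int \tr_{N_j} F^{j,\mu\nu} D_\mu(\delta A_{\nu j})$, which after integration by parts becomes a term proportional to $(D_\mu F^{j,\mu\nu})\cdot\epsilon\gamma_\nu\gau{}$ plus another $F\epsilon\gau{}$ piece obtained by the standard gamma-matrix identity $\gamma^\mu\gamma^\nu\gamma^\rho = g^{\mu\nu}\gamma^\rho - g^{\mu\rho}\gamma^\nu + g^{\nu\rho}\gamma^\mu - i\varepsilon^{\mu\nu\rho\sigma}\gamma_\sigma\gamma_5$, so that the derivative piece cancels against $\delta\gau{jL,R}$ terms in the Dirac action and only genuine $F\epsilon\gau{}$ contributions remain. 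Finally, varying the auxiliary-field term $-\tfrac{1}{2n_j}\int \tr_{N_j}G_j^2$ produces $-\tfrac{1}{n_j}\int \tr_{N_j} G_j\,\delta G_j$, which combines with the $G_j$ piece from $\delta\gau{jL,R}$ in the fermionic action to force $c_{G_j} = -c_{G_j}'$. Matching the $F\epsilon\gau{}$ sector analogously delivers $2ic_j' = -c_j\K_j$.

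The main obstacle will be showing that the purely fermionic cubic piece arising from $\delta A_j$ in the Dirac action vanishes identically, independently of the coefficients. This requires a Fierz rearrangement of the expression $(J_M\gau{jL}, \gamma^\mu[(J_M\epsilon_R,\gamma_\mu\gau{jL})_\cS\,,\gau{jR}])$, using the symmetry properties of the spinor pairing $(\cdot,\cdot)_\cS$ dictated by $\epsilon,\epsilon'$ in Table \ref{tab:ko_dimensions} (KO-dimension $4$ for the spacetime factor) together with the cyclicity of the trace over $su(N_j)$ applied to the commutator $\ad(A_j)\gau{} = [A_j,\gau{}]$ coming from the adjoint action. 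This is the standard supersymmetric Yang--Mills Fierz identity, and care is needed with the Euclidean conventions and with the scaling \eqref{eq:bb1-scaling} so that all factors of $n_j$ cancel consistently. Once this cubic cancellation is established, the $F$- and $G$-sector equations above give the claimed ``iff'' conditions, completing the proof.
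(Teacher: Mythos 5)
Your proposal follows essentially the same route as the paper's proof in Appendix B.1: the variation is sorted into three independent sectors (quartic-fermion, $F\epsilon\lambda$, and $G\epsilon\lambda$), the quartic term is killed by a Fierz rearrangement combined with the antisymmetry of the $su(N_j)$ structure constants, the $F$-sector is matched using the identity $\gamma^\mu\gamma^\nu\gamma^\lambda = g^{\mu\nu}\gamma^\lambda + g^{\nu\lambda}\gamma^\mu - g^{\mu\lambda}\gamma^\nu + \epsilon^{\sigma\mu\nu\lambda}\gamma^5\gamma_\sigma$ together with the Bianchi identity to yield $2ic_j' = -c_j\K_j$, and the $G$-sector yields $c_{G_j} = -c_{G_j}'$. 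The only cosmetic difference is that the paper applies the gamma-matrix identity to $\can_A F\eR$ on the fermionic side rather than integrating the Yang--Mills variation by parts, but the two bookkeepings are equivalent.
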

\begin{proof}
	The entire proof, together with the explanation of the notation, is given in the Appendix \ref{sec:SYM}.
\end{proof}

We have now established that the building block of Definition \ref{def:bb1} 
 gives the super Yang-Mills action, which is supersymmetric under the transformations \eqref{eq:bb1-transforms}.\footnote{A similar result, without taking two copies of the adjoint representation, was obtained in \cite{BS10}.} This building block is the NCG-analogue of a single vector superfield in the superfield formalism.

Note that we cannot define multiple copies of the same building block of the first type without explicitly breaking supersymmetry, since this would add new fermionic degrees of freedom but not bosonic ones. This exhausts all possibilities for a finite algebra that consists of one component. 

\subsection{Second building block: adding non-adjoint representations}\label{sec:bb2}

If the algebra \eqref{eq:alg} contains two summands, we can first of all have two \emph{different} building blocks of the first type and find that the action is simply the sum of actions of the form \eqref{eq:SYM} and thus still supersymmetric.

 We have a second go at supersymmetry by adding the representation \rep{i}{j} to the finite Hilbert space, corresponding to an off-diagonal vertex in a Krajewski diagram. This introduces non-gaugino fermions to the theory. A real spectral triple then requires us to also add its conjugate \rep{j}{i}. To keep the spectral triple of KO-dimension $6$, both representations should have opposite values of the finite grading $\gamma_F$. For concreteness we choose \rep{i}{j} to have value $+$ in this section, but the opposite sign works equally well with only minor changes in the various expressions. With only this content, the action corresponding to this spectral triple can never be supersymmetric for two reasons. First, it lacks the degrees of freedom of a bosonic (scalar) superpartner. Second, it exhibits interactions with gauge fields (via the inner fluctuations of \dirac) without having the necessary gaugino degrees to make the particle content supersymmetric. However, if we also add the building blocks \B{i} and \B{j} of the first type to the spectral triple, both the gauginos are present and a finite Dirac operator is possible, that might remedy this.

\begin{lem}\label{lem:bb2-components}
	For a finite Hilbert space consisting of two building blocks \B{i} and \B{j} together with the representation \rep{i}{j} and its conjugate the most general finite Dirac operator on the basis 
		\begin{align}
			\rep{i}{j} \oplus M_{N_i}(\com)_L \oplus M_{N_i}(\com)_R \oplus M_{N_j}(\com)_L \oplus M_{N_j}(\com)_R \oplus \rep{j}{i}\label{eq:bb2-basis}. 
	\end{align}
is given by
	\begin{align}
	D_F &= 
	\begin{pmatrix}
			0 & 0 & A & 0 & B & 0\\
				0 & 0 & M_i & 0 & 0 & JA^*J^*\\
			A^* & M^*_i & 0 & 0 & 0 & 0 \\
				0 & 0 & 0 & 0 & M_j & JB^*J^* \\
			B^* & 0 & 0 & M_j^* & 0 & 0 \\
				0 & JAJ^* & 0 & JBJ^* & 0 & 0
		\end{pmatrix}\label{eq:bb2-DF}
	\end{align}
with $A : M_{N_i}(\com)_R \to \rep{i}{j}$ and $B : M_{N_j}(\com)_R \to \rep{i}{j}$.	
\end{lem}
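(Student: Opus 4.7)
The plan is to enumerate all possible block components of $D_F$ with respect to the basis (\ref{eq:bb2-basis}) and successively impose the four defining constraints on a real, even finite Dirac operator: self-adjointness, the grading condition $\{D_F,\gamma_F\}=0$, the first-order condition (\ref{eq:order_one}), and the real-structure condition $J_F D_F = D_F J_F$ (the latter because we are in KO-dimension $6$, so $\epsilon'=+1$). Writing $D_F$ as a $6\times 6$ block matrix with entries $D_{ab}$ mapping $e_b\to e_a$, self-adjointness immediately reduces the task to determining the ten upper-triangular blocks.

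First I would read off the values of $\gamma_F$ on the basis: $+$ on $e_1, e_2, e_4$ and $-$ on $e_3, e_5, e_6$. Anticommutation with $\gamma_F$ forces every block joining equally-graded basis vectors to vanish, eliminating $D_{12}, D_{14}, D_{24}, D_{35}, D_{36}, D_{56}$ and leaving only the candidates $D_{13}, D_{15}, D_{16}, D_{23}, D_{26}, D_{45}, D_{46}$.

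Next I would apply the Krajewski-diagram interpretation of the first-order condition: a component is allowed only when it is a purely horizontal or purely vertical edge on the grid. Placing $e_1$ at $(i,j)$, $e_2,e_3$ at $(i,i)$, $e_4,e_5$ at $(j,j)$, and $e_6$ at $(j,i)$, the block $D_{16}$ would require changing both indices simultaneously and so must vanish. The surviving entries are $A := D_{13}$ (vertical, $(i,i)\to(i,j)$), $B := D_{15}$ (horizontal, $(j,j)\to(i,j)$), $M_i := D_{23}$ and $M_j := D_{45}$ (internal to a single bimodule), and $D_{26}$, $D_{46}$ (horizontal and vertical respectively).

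Finally I would use the identity $J_F D_F = D_F J_F$ to tie the $e_6$-related blocks to $A$ and $B$. Since $J_F$ swaps $e_1\leftrightarrow e_6$, $e_2\leftrightarrow e_3$, $e_4\leftrightarrow e_5$, combining this identity with self-adjointness gives $D_{26} = J_F A^* J_F^*$ and $D_{46} = J_F B^* J_F^*$, together with the lower-triangular images $D_{62} = J_F A J_F^*$ and $D_{64} = J_F B J_F^*$, while the vanishing of $D_{16}$ implies $D_{61}=0$. Filling in the remaining lower-triangular entries by self-adjointness then reproduces exactly the matrix in (\ref{eq:bb2-DF}). I expect the main subtlety to be the internal ``mass'' blocks $M_i, M_j$: they live inside a single bimodule, so a priori both summands of $\A_F$ could act non-trivially. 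I would handle this by noting that the non-local summand acts trivially on each diagonal vertex, so the first-order condition on $M_i$ (respectively $M_j$) reduces to a constraint involving only $M_{N_i}(\com)$ (respectively $M_{N_j}(\com)$) and imposes no restriction beyond a $J_F$-consistent identification with its adjoint partner, which is exactly what the displayed matrix encodes.
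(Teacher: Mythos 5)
Your strategy---imposing the grading condition, the first-order condition, self-adjointness, and $J_F$-commutation in turn---is the same as the paper's, which runs the identical argument as a pure count of surviving components. However, your enumeration after the grading step is incomplete, and this leaves a concrete hole. With the gradings you correctly identify ($+$ on $e_1,e_2,e_4$ and $-$ on $e_3,e_5,e_6$), nine upper-triangular blocks survive the anticommutation with $\gamma_F$, not seven: besides the ones you list, the blocks $D_{25}: M_{N_j}(\com)_R \to M_{N_i}(\com)_L$ and $D_{34}: M_{N_j}(\com)_L \to M_{N_i}(\com)_R$ also connect oppositely graded summands, so the grading does \emph{not} eliminate them. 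As written, your proof never disposes of these two components, so you have not yet shown that the displayed matrix is the most general one.

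The repair is immediate and uses exactly the criterion you already invoke for $D_{16}$: both $D_{25}$ and $D_{34}$ map $\rep{j}{j}$ to $\rep{i}{i}$, changing both the row and the column index of the Krajewski grid, so the first-order condition forbids them. (This is what the paper's proof means when it says the first-order condition kills the entire upper-right to lower-left anti-diagonal of \eqref{eq:bb2-DF}; that anti-diagonal consists precisely of $D_{16}$, $D_{25}$, $D_{34}$ and their adjoints.) Once these are eliminated, the remainder of your argument---the $J_F$-identifications $D_{26}=J_FA^*J_F^*$, $D_{46}=J_FB^*J_F^*$ together with their lower-triangular images, and the observation that the gaugino-internal blocks $M_i$, $M_j$ are only tied to their own adjoints (with an additional $J_F$-compatibility constraint that the paper likewise leaves implicit)---is correct and reproduces \eqref{eq:bb2-DF} with the four independent components $A$, $B$, $M_i$, $M_j$.
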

\begin{proof}
	We start with a general $6\times 6$ matrix for $D_F$. Demanding that $\{D_F, \gamma_F\} = 0$ already sets half of its components to zero, leaving $18$ to fill. The first order condition \eqref{eq:order_one} requires all components on the upper-right to lower-left diagonal of \eqref{eq:bb2-DF} to be zero, so $12$ components are left. Furthermore, $D_F$ must be self-adjoint, reducing the degrees of freedom by a factor two. The last demand $J_FD_F = D_FJ_F$ links the remaining half components to the other half, but not for the components that map between the gauginos: because of the particular set up they were already linked via the demand of self-adjointness. This leaves the four independent components $A$, $B$, $M_i$ and $M_j$.
\end{proof}
In this paper we will set $M_i = M_j = 0$ since these components describe supersymmetry breaking gaugino masses. This will be the subject of a forthcoming paper. 

\begin{lem}\label{lem:samescalar}
	If the components $A$ and $B$ of \eqref{eq:bb2-DF} differ by only a complex number, then they generate a scalar field $\sfer_{ij}$ in the same representation of the gauge group as the fermion.
\end{lem}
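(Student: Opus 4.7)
My plan is to apply the inner‑fluctuation formula \eqref{eq:inner_flucts} to the two blocks $A$ and $B$ of \eqref{eq:bb2-DF} and check that the resulting scalars live in $\rep{i}{j}$ and transform there in the same way as the fermion. The first step is to identify the parametrization of each component. Since $A:\rep{i}{i}_R\to\rep{i}{j}$ connects two representations sharing the \emph{first} algebra label, the first‑order condition \eqref{eq:order_one} forces $A$ to be given by right multiplication with some $\eta_A\in\rep{i}{j}$ (the ``first case'' of the discussion surrounding \eqref{eq:order_one_finite}). Analogously, $B:\rep{j}{j}_R\to\rep{i}{j}$ shares the \emph{second} label and is given by left multiplication with some $\eta_B\in\rep{i}{j}$ (the ``second case''). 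Using the transformation law $\D{ij}{kj}\mapsto u_i\D{ij}{kj}u_k^*$ recorded just before \eqref{eq:fermion_transf}, both parameters are seen to transform as $\eta\mapsto u_i\eta u_j^*$, which is precisely \eqref{eq:fermion_transf} for an element of $\rep{i}{j}$.

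Next I would carry out the fluctuation. For $B$ the identity \eqref{eq:F-innerfl} directly applies and replaces
\bas
  \eta_B\ \longmapsto\ \Phi_B \;=\; \eta_B + \sum_n\big((a_n)_i\,\eta_B\,(b_n)_j - (a_n)_i(b_n)_i\,\eta_B\big)\;\in\;\rep{i}{j}.
\eas
The block $A$, by contrast, is left‑$\A$‑linear, so the commutator $[A,b_n]$ vanishes identically and the first summand $\sum_n a_n[D_F,b_n]$ of \eqref{eq:inner_flucts} contributes nothing to the $A$‑slot. The fluctuation of $\eta_A$ therefore has to come from the companion term $\epsilon'J\omega J^*$: the $J$‑conjugate entry $JAJ^*$ is itself left multiplication by $\eta_A^*\in\rep{j}{i}$, its left‑algebra fluctuation is nontrivial by the same argument used for $B$, and the outer $J$'s transport the result back to the $A$‑slot. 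A short bookkeeping computation then yields $\eta_A\mapsto\Phi_A\in\rep{i}{j}$, of exactly the same representation‑theoretic type as $\Phi_B$.

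Finally I would invoke the hypothesis. If $\eta_A$ and $\eta_B$ differ only by an overall complex factor, $\eta_A=c\,\eta_B$ with $c\in\com$, then $\Phi_A$ and $\Phi_B$ are both linear in one and the same matrix in $\rep{i}{j}$, so the whole family of inner fluctuations of the $A$‑ and $B$‑blocks is encoded by a single $\rep{i}{j}$‑valued scalar $\sfer_{ij}$, with $\Phi_B=\sfer_{ij}$ and $\Phi_A=c\,\sfer_{ij}$. By the transformation rule established in the first step, $\sfer_{ij}$ transforms under $U(\A)$ as $\sfer_{ij}\mapsto u_i\sfer_{ij}u_j^*$, i.e.~in exactly the same representation as a fermion in $\rep{i}{j}$. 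The main obstacle I foresee is the careful treatment of the right‑multiplicative block $A$: because its naive commutator with the algebra vanishes, one has to go through the conjugate $JAJ^*$ and unwind $J\omega J^*$ in order to verify that the fluctuation of $\eta_A$ is governed by the \emph{same} element of $\rep{i}{j}$ that fluctuates $\eta_B$, so that the proportionality $\eta_A=c\,\eta_B$ is preserved and a single field $\sfer_{ij}$ actually emerges rather than two independent ones.
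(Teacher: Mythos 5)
Your proposal is correct and follows essentially the same route as the paper: fluctuate the left-linear component $B$ directly via \eqref{eq:F-innerfl}, obtain the fluctuation of the right-linear $A$ through its $J$-conjugate partner, and conclude with the gauge-transformation law $\D{ij}{kj}\mapsto u_i\D{ij}{kj}u_k^*$. The only (cosmetic) difference is that the paper sidesteps your ``short bookkeeping computation'' by applying \eqref{eq:F-innerfl} directly to the left-linear partner $\D{ii}{ji}=J(\D{ij}{ii})^*J^*$ and letting self-adjointness and $J$-reality of the fluctuated operator fix the $A$-slot, rather than unwinding $\epsilon' J\omega J^*$ on the $A$-slot itself (where the exact match of the two parametrizations quietly uses the self-adjointness of the perturbation $\omega$).
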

\begin{proof}
We write $\D{ij}{ii} \equiv A$ and $\D{ij}{jj} \equiv B$ in the notation of \eqref{eq:order_one_finite}. First of all, recall that 
$
	\D{ij}{jj} : M_{N_j}(\com) \to \rep{i}{j}
$ 
is given by \emph{left} multiplication with an element $C_{ijj}\, \eta_{ij}$, where $\eta_{ij} \in \rep{i}{j}$ and $C_{ijj} \in \com$. Similarly,
$
	\D{ij}{ii} : M_{N_i}(\com) \to \rep{i}{j}
$ 
is given by \emph{right} multiplication with an element in \rep{i}{j}. If this differs from $\D{ij}{jj}$ by only a complex factor, it is of the form $C_{iij}\eta_{ij}$, with $C_{iij} \in \com$.

Then the inner fluctuations \eqref{eq:F-innerfl} that $\D{ij}{jj}$ develops, are of the form
\begin{align}
	\D{ij}{jj}&\to \D{ij}{jj}	+ \sum_n (a_n)_i\big(\D{ij}{jj}(b_n)_j	- (b_n)_i	\D{ij}{jj}\big) \equiv C_{ijj}\sfer_{ij}\label{eq:bb2-innfs2},
\end{align}
with which we mean left multiplication by the element 
\begin{align*}
\sfer_{ij} \equiv \eta_{ij} + \sum_n (a_n)_i[\eta_{ij} (b_n)_j - (b_n)_i \eta_{ij}] 
\end{align*}
 times the coupling constant $C_{ijj}$. The demand $JD_F = D_FJ$ (cf.~Table \ref{tab:ko_dimensions}) on $D_F$ means that $\D{ki}{ji} = J\D{ik}{ij}J^* = J(\D{ij}{ik})^*J^*$, from which we infer that the component $\D{ii}{ji}$ constitutes of left multiplication with $C_{iij}\eta_{ij}$. Its inner fluctuations are of the form
\begin{align*}
	\D{ii}{ji}&\to \D{ii}{ji}	+ \sum_n (a_n)_i\big(\D{ii}{ji}(b_n)_j	- (b_n)_i	\D{ii}{ji}\big) \equiv C_{iij}\sfer_{ij},
\end{align*}
which coincides with \eqref{eq:bb2-innfs2}. Furthermore, for $U = uJuJ^*$ with $u \in U(\A)$ we find for these components (together with the inner fluctuations) that
\begin{align*}
	U\D{ij}{ii}U &= u_i\D{ij}{ii}u_j^*, & U\D{ij}{jj}U &= u_i\D{ij}{jj}u_j^*,
\end{align*} 
establishing the result.
\end{proof}

Since the diagonal vertices have an $R$-value of $-1$, the scalar field $\sfer_{ij}$ generated by $D_F$ will always have an eigenvalue of $R$ opposite to that of the representation $\rep{i}{j} \in \H_F$. This makes the off-diagonal vertices and these scalars indeed each other's superpartners, hence allowing us to call $\sfer_{ij}$ a sfermion. The Dirac operator \eqref{eq:bb2-DF} (together with the finite Hilbert space) is visualized by means of a Krajewski diagram in Figure \ref{fig:bb2}. Note that we can easily find explicit constructions for $R \in \A_F \otimes \A^o_F$. Requiring that the diagonal representations have an $R$-value of $-1$, we have the implementations $(1_{N_i}, - 1_{N_j}) \otimes (- 1_{N_i}, 1_{N_j})^o$ and $(1_{N_i}, 1_{N_j}) \otimes (-1_{N_i}, - 1_{N_j})^o \in \A_F \otimes \A_F^o$, corresponding to the two possibilities of Figure \ref{fig:bb2}.

\begin{figure}[ht]
		\captionsetup{width=.9\textwidth}
	\centering
	\begin{subfigure}{.35\textwidth}
		\centering
		\def\svgwidth{\textwidth}
		\includesvg{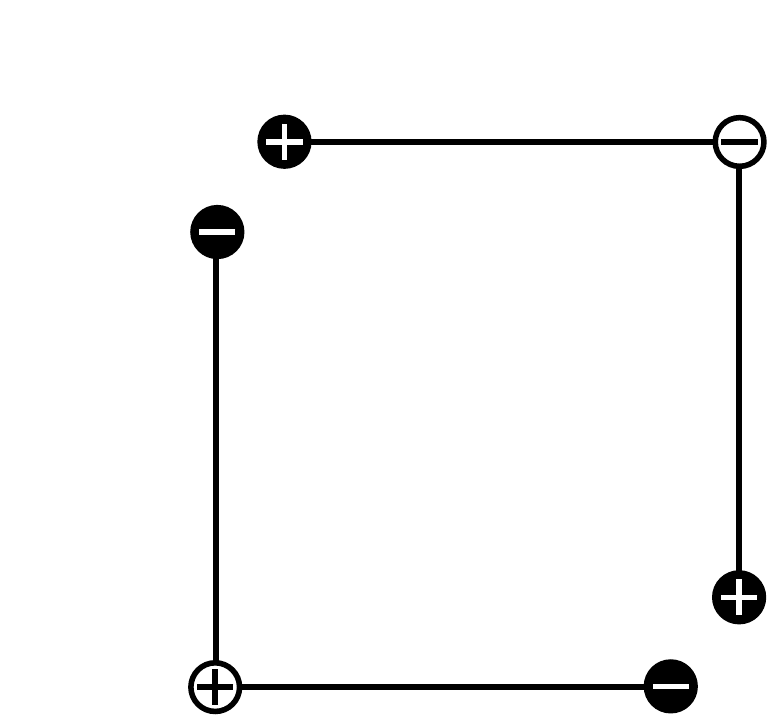}
		\caption{The case of an off-diagonal representation with $R = 1$.}
		\label{fig:bb2_Rplus}
	\end{subfigure}
	\hspace{30pt}
	\begin{subfigure}{.35\textwidth}
		\centering
		\def\svgwidth{\textwidth}
		\includesvg{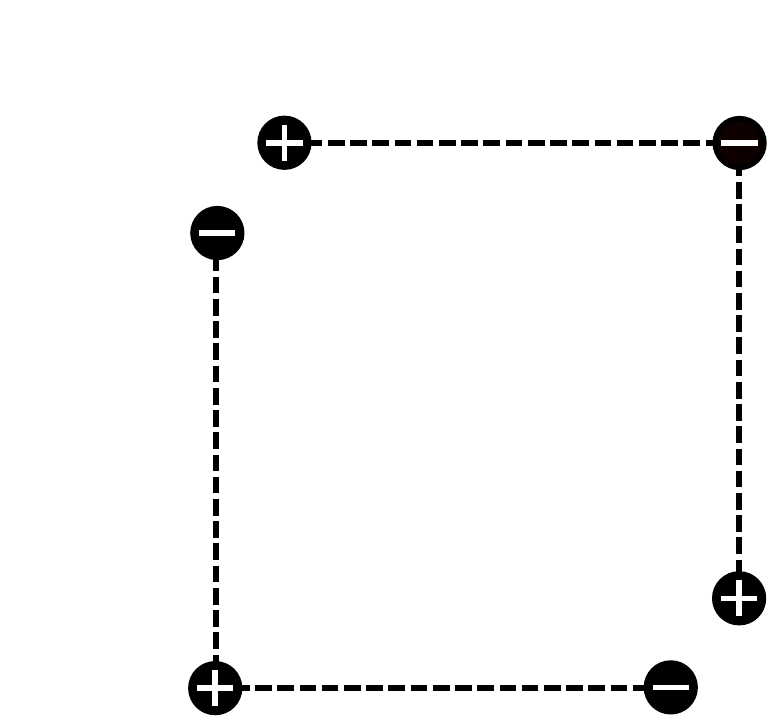}	
		\caption{The case of an off-diagonal representation with $R = -1$.}
		\label{fig:bb2_Rminus}
	\end{subfigure}
\caption{After allowing for \emph{off diagonal} representations we need a finite Dirac operator in order to have a chance at supersymmetry. The component $A$ of \eqref{eq:bb2-DF} corresponds to the upper and left lines, whereas the component $B$ corresponds to the lower and right lines. The off-diagonal vertex can have either $R = 1$ (left image) or $R = -1$ (right image). The $R$-value of the components of the finite Dirac operator changes accordingly, as is represented by the (solid/dashed) stroke of the edges.}
\label{fig:bb2}
\end{figure}

We capture this set up with the following definition:
\begin{defin}\label{def:bb2}
The \emph{building block of the second type} \Bc{ij}{\pm} consists of adding the representation \rep{i}{j} (having $\gamma_F$-eigenvalue $\pm$) and its conjugate to a finite Hilbert space containing \B{i} and \B{j}, together with maps between the representations \rep{i}{j} and \rep{j}{i} and the adjoint representations that satisfy the prerequisites of Lemma \ref{lem:samescalar}. Symbolically it is denoted by 
\begin{align*}
	\Bc{ij}{\pm} = (e_i\otimes \bar e_j, e_j'\otimes \bar e_i', \D{ii}{ji} + \D{ij}{jj}) &\in \rep{i}{j} \oplus \rep{j}{i} \oplus \End(\H_F) \\	
		&\qquad \subset \H_F \oplus \End(\H_F) 
\end{align*}
\end{defin}
When necessary, we will denote the chirality of the representation \rep{i}{j} with a subscript $L,R$. Note that such a building block is always characterized by two indices and it can only be defined when \B{i} and \B{j} have previously been defined. In analogy with the building blocks of the first type and with the Higgses/higgsinos of the MSSM in the back of our minds we will require building blocks of the second type whose off-diagonal representation in $\H_F$ has $R = -1$ to have a maximal multiplicity of $1$. In contrast, when the off-diagonal representation in the Hilbert space has $R = 1$ we can take multiple copies (`generations') of the same representation in $\H_F$, all having the \emph{same} value of the grading $\gamma_F$. This also gives rise to an equal number of sfermions, keeping the number of fermionic and scalar degrees of freedom the same, which effectively entails giving the fermion/sfermion-pair a family structure. The $C_{iij}$ and $C_{ijj}$ are then promoted to $M\times M$ matrices acting on these copies. This situation is depicted in Figure \ref{fig:bb2-gen}. We will always allow such a family structure when the fermion has $R = 1$, unless explicitly stated otherwise. There can also be two copies of a building block \B{ij} that have \emph{opposite} values for the grading. We come back to this situation in Section \ref{sec:bb5}.

\begin{figure}[ht]
\centering
	\def\svgwidth{.4\textwidth}
		\includesvg{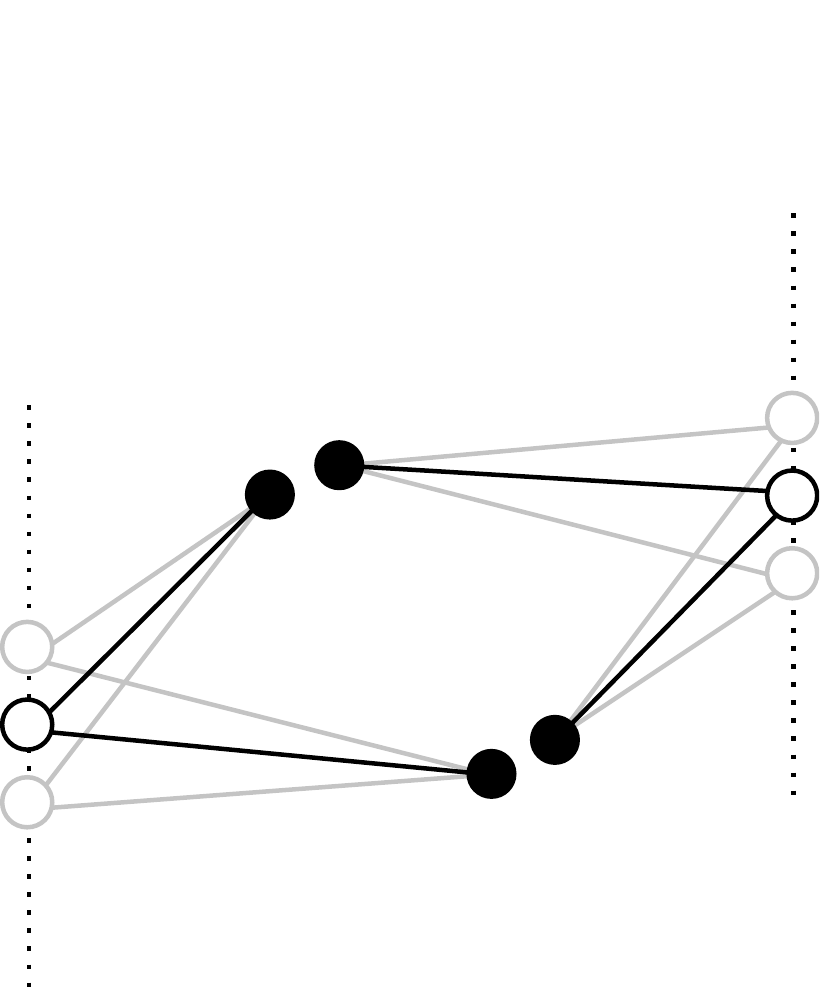}
\captionsetup{width=.9\textwidth}
\caption{An example of a building block of the second type for which the fermion has $R = 1$ and multiple generations.}
\label{fig:bb2-gen}
\end{figure}

Next, we compute the action corresponding to \B{ij}. For a generic element $\zeta$ on the finite basis \eqref{eq:bb2-basis} we will write
\begin{align*}
	\zeta = (\fer{ijL}, \gau{iL}', \gau{iR}', \gau{jL}', \gau{jR}', \afer{ijR}) \in \H^+,
\end{align*}
where the prime on the gauginos suggests that they still contain a trace-part (cf.~\eqref{eq:bb1-gaugino}). To avoid notational clutter, we will write $\fer{L} \equiv \fer{ijL}$, $\afer{R} \equiv \afer{ijR}$ and $\sfer\equiv \sfer_{ijL}$ throughout the rest of this section. The \emph{extra} action as a result of adding a building block \Bc{ij}{+} of the second type (i.e.~additional to that of \eqref{eq:SYM-0} for \B{i} and \B{j}) is given by
\begin{align}
\act{ij}[\gau{i}', \gau{j}', \fer{L}, \afer{R}, \mathbb{A}_i, \mathbb{A}_j, \sfer_{}, \asfer_{}] \equiv \act{ij}[\zeta,\mathbb{A}, \szeta] = \act{f, ij}[\zeta, \mathbb{A}, \szeta] + \act{b, ij}[\mathbb{A}, \szeta]\label{eq:bb2-action}.
\end{align}
The fermionic part of this action reads
\begin{align}
	\act{f, ij}[\zeta, \mathbb{A}, \szeta] &= \tfrac{1}{2}\inpr{J(\fer{L}, \afer{R})}{\can_A(\fer{L}, \afer{R})} \nonumber\\
	&\qquad + \tfrac{1}{2}\inpr{J(\fer{L}, \gau{iL}', \gau{iR}', \gau{jL}', \gau{jR}', \afer{R})}{\gamma^5\Phi(\fer{L}, \gau{i,L}', \gau{iR}', \gau{jL}', \gau{jR}', \afer{R})}\nn\\
&= \inpr{J_M\afer{R}}{D_A\fer{L}} + \inpr{J_M\afer{R}}{\gamma^5\gau{iR}'C_{iij}\sfer} + \inpr{J_M\afer{R}}{\gamma^5C_{ijj}\sfer\gau{jR}'}\nonumber \\
	&\qquad + \inpr{J_M\fer{L}}{\gamma^5\asfer C_{iij}^*\gau{iL}'} + \inpr{J_M\fer{L}}{\gamma^5\gau{jL}'\asfer C_{ijj}^*)},\label{eq:bb2-action-ferm}
\end{align}
prior to scaling the gauginos according to \eqref{eq:bb1-scaling}. Here we have employed \eqref{eq:bb2-innfs2} and the property \eqref{eq:symmInnerProd} of the inner product. The bosonic part of \eqref{eq:bb2-action} is given by 
\begin{align}
	\act{b, ij}[\mathbb{A}, \szeta] &= \int_M  |\n_{ij} D_\mu \sfer|^2 + \mathcal{M}_{ij}(\sfer, \asfer)\label{eq:bb2-action-term2}
\end{align}
(cf.~\eqref{eq:spectral_action_acg_flat}) with $\n_{ij} = \n_{ij}^*$ the square root of the positive semi-definite $M \times M$--matrix 
\begin{align}
\mathcal{N}_{ij}^2 &= \frac{f(0)}{2\pi^2}(N_i C_{iij}^*C_{iij} + N_jC_{ijj}^* C_{ijj}) \label{eq:exprN},\\
\intertext{where $M$ is the number of particle generations, and} 
\mathcal{M}_{ij}(\sfer, \asfer) &= \frac{f(0)}{2\pi^2}\Big[N_i|C_{iij}\sfer\asfer C_{iij}^*|^2 + N_j|\asfer C_{ijj}^*C_{ijj}\sfer|^2 + 2|C_{iij}\sfer|^2|C_{ijj}\sfer|^2\Big].\label{eq:exprM1}
\end{align}
The first term of this last equation corresponds to paths in the Krajewski diagram such as in the first example of Figure \ref{fig:KrajPaths}, involving the vertex at $(i, i)$. The second term corresponds to the same type of path but involving $(j, j)$ and the third term consists of paths going in two directions such as the fourth example of Figure \ref{fig:KrajPaths}.

\subsubsection{Matching degrees of freedom}

As far as the gauginos are concerned, there is a difference compared to the previous section; there the trace parts of the action fully decoupled from the rest of the action, but here this is not the case due to the fermion-sfermion-gaugino interactions in \eqref{eq:bb2-action}. At the same time, the gauge fields $A_{\mu i}'$ and $A_{\mu j}'$ do not act on \rep{i}{j} and \rep{j}{i} in the adjoint representation, causing their trace parts not to vanish either. We thus have fermionic and bosonic $u(1)$ fields, that are each other's potential superpartners.

We distinguish between two cases:
\begin{itemize}
\item In the left image of Figure \ref{fig:bb2} $\H_{R = +} = \rep{i}{j} \oplus \rep{j}{i}$ and thus we can employ the unimodularity condition \eqref{eq:unimod_new}. This yields\footnote{When having multiple copies of the representations \rep{i}{j} and \rep{j}{i} all expressions will be multiplied by the number of copies, since the gauge bosons act on each copy in the same way. This leaves the results unaffected, however.}

\bas
	0 &= \tr_{\rep{i}{j}}g_i'A_{i\mu}' + \tr_{\rep{j}{i}}g_j'A_{j\mu}' \\
		&= N_jg_{B_i}B_{i\mu} + N_ig_{B_j}B_{j\mu} \quad\Longrightarrow \quad B_{j\mu} = - (N_jg_{B_i}/N_ig_{B_j}) B_{i\mu}, 
\eas
	where we have first identified the independent gauge fields before introducing the coupling constants $g_{i,j}$, $g_{B_{i,j}}$ (cf.~\cite[\S 3.5.2]{CCM07}). Consequently the covariant derivative acting on the fermion $\fer{}$ and scalar $\sfer$ and their conjugates is equal to 
$\can_A = i\gamma^\mu D_\mu$ with
\bas
	 D_\mu &= \nabla^S_\mu - i\Big( g_iA_{i\mu} + \frac{g_{B_i}}{N_i}B_i\Big)  + i \Big(g_jA_{j\mu} + \frac{g_{B_j}}{N_j}B_{j}\Big)^o\\
	  	&= \nabla^S_\mu - ig_i A_{i\mu} + ig_j A_{j\mu}^o - 2ig_{B_i}\frac{B_i}{N_i}.
\eas
This also means that the kinetic terms of the $u(1)$ gauge field now appear in the action. After applying the unimodularity condition, the kinetic terms of the gauge bosons, as acting on \rep{i}{j}, are given by
\ba
	& - \tr_{\rep{i}{j}} \mathbb{F}_{\mu\nu}'\mathbb{F}'^{\mu\nu} \nn\\
	&\qquad = \tr_{\rep{i}{j}}\Big(g_iF_{\mu\nu}^i - g_jF_{\mu\nu}^{j\,o} + g_{B_i}\frac{2}{N_i}B^{i}_{\mu\nu}\Big)\Big(g_iF^{\mu\nu}_i - g_jF^{\mu\nu\,o}_{j} + g_{B_i}\frac{2}{N_i}B_{i}^{\mu\nu}\Big)\nn\\
			&\qquad = N_jg_i^2\tr_{N_i} F_{\mu\nu}^iF^{\mu\nu}_i + N_ig_j^2\tr_{N_j} F_{\mu\nu}^j F^{\mu\nu}_j + 4\frac{N_j}{N_i}g_{B_i}^2B_{\mu\nu}^iB^{\mu\nu}_i,\label{eq:bb2-gaugekinterms}
\ea
with $B_{i\mu\nu} = \partial_{[\mu} B_{i\nu]}$. The contribution from \rep{j}{i} is the same and those from \rep{i}{i} and \rep{j}{j} have been given in the previous section.

We can use the supersymmetry transformations to also reduce the fermionic degrees of freedom:
\begin{lem}\label{lem:gobinogo}
	Requiring the unimodularity condition \eqref{eq:unimod_new} also for the supersymmetry transformations of the gauge fields, makes the traces of the gauginos proportional to each other.
\end{lem}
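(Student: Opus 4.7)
The plan is to mimic for the supersymmetry variations the exact manipulation that was just carried out on the bosonic level. The unimodularity condition~\eqref{eq:unimod_new}, when applied to $A_\mu$, isolates the combination $N_j g_{B_i}B_{i\mu} + N_i g_{B_j}B_{j\mu} = 0$ of the trace parts of $A'_{i\mu}$ and $A'_{j\mu}$. If this constraint is to be consistent with supersymmetry, then $\tr_{\H_{R=+}}\delta A_\mu$ must vanish identically, which will translate into an algebraic relation between the trace parts of the gauginos $\gau{iL,R}'$ and $\gau{jL,R}'$.

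First I would write $\delta A_k = c_k\gamma^\mu\bigl[(J_M\eR,\gamma_\mu \gau{kL})_\cS + (J_M\eL,\gamma_\mu \gau{kR})_\cS\bigr]$ for $k = i,j$, as given in~\eqref{eq:bb1-transforms1}, and decompose each gaugino as $\gau{k}' = \gau{k} + \gau{k}^0\,\id_{N_k}$ following~\eqref{eq:bb1-gaugino}. Since $\tr_{\rep{i}{j}}$ and $\tr_{\rep{j}{i}}$ only see the $u(1)$ part of an element of $u(N_i)\oplus u(N_j)$, the traceless parts $\gau{i}$ and $\gau{j}$ drop out of $\tr_{\H_{R=+}}\delta A_\mu$. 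The remaining contribution from $\rep{i}{j}$ yields a factor $N_j$ (the dimension of the silent index), and analogously $N_i$ from $\rep{j}{i}$.

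Collecting terms and using the linear independence of $\eL$ and $\eR$, the two chirality sectors decouple, so that $\tr_{\H_{R=+}}\delta A_\mu = 0$ produces separately
\[
N_j g_{B_i} c_i\,\gau{iL}^0 + N_i g_{B_j} c_j\,\gau{jL}^0 = 0,\qquad
N_j g_{B_i} c_i\,\gau{iR}^0 + N_i g_{B_j} c_j\,\gau{jR}^0 = 0,
\]
which is precisely the statement that the trace parts $\gau{iL,R}^0$ and $\gau{jL,R}^0$ (i.e.~the traces of the original gauginos over $M_{N_i}(\com)$ and $M_{N_j}(\com)$) are proportional, with the proportionality coefficient fixed by the same ratio $N_jg_{B_i}/N_ig_{B_j}$ that already determined the bosonic relation $B_{j\mu} = -(N_jg_{B_i}/N_ig_{B_j})B_{i\mu}$.

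The main obstacle I anticipate is bookkeeping of the numerical coefficients: one has to keep track of which factors of $N_{i,j}$ come from tracing the silent index of a bimodule representation, which factors of $g_{B_{i,j}}$ come from the normalisation~\eqref{eq:bb1-scaling} of the $u(1)$ pieces, and to verify that the $R$-valued trace genuinely restricts to the off-diagonal representations $\rep{i}{j} \oplus \rep{j}{i}$ so that the adjoint copies $M_{N_{i,j}}(\com)$ in $\mathcal{B}_{i,j}$ do not contribute spurious terms. Once the coefficients match, the conclusion is immediate from the linearity of the variation in the gauginos.
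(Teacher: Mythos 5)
Your proposal is correct and follows essentially the same route as the paper: apply the unimodularity condition to $\delta A_\mu$, observe that the $su(N_{i,j})$ parts are traceless so only the $u(1)$ trace parts $\gau{i,j}^0$ survive, and then decouple the two chirality sectors (the paper does this by choosing covariantly constant spinors $(0,\eR)$ and $(\eL,0)$ separately, which is the precise version of your ``linear independence'' step) to conclude $N_jg_{B_i}\gau{iL,R}^0 \propto N_ig_{B_j}\gau{jL,R}^0$. The only cosmetic difference is that you retain the transformation constants $c_{i,j}$ in the final relation, which the paper suppresses; this does not affect the proportionality statement.
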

\begin{proof}
	We introduce the notation $\gau{iL,R} = \gau{iL,R}^a \otimes T^a_i$, where $T^a_i$, $a = 0, 1, \ldots, N_i^2 - 1$ are the generators of $u(N_i) \simeq u(1) \oplus su(N_i)$. Writing out the unimodularity condition \eqref{eq:unimod_new} for the transformation \eqref{eq:bb1-transforms1} of the gauge field reads in this case
	\bas
		0 &=  N_j (g_i\tr \delta A_{i\mu} + g_{B_i}\delta B_{i\mu}) + N_i (g_j\tr \delta A_{j\mu} + g_{B_j}\delta B_{j\mu})
	\eas
	Putting in the expressions for the transformations and using that the $su(N_{i,j})$-parts of the gauginos are automatically traceless, we only retain the trace parts:
	\ba
	 0 &= N_jg_{B_i}\big[(J_M\eR, \gamma_\mu\gau{iL}^0) + (J_M\eL, \gamma_\mu\gau{iR}^0)\big] + N_ig_{B_j}\big[(J_M\eR, \gamma_\mu\gau{jL}^0) + (J_M\eL, \gamma_\mu\gau{jR}^0)\big]\nn\\
			&=\big(J_M\eR, \gamma_\mu(N_jg_{B_i}\gau{iL}^0 + N_ig_{B_j}\gau{jL}^0)\big) + (L \leftrightarrow R)\label{eq:bb2-terms-indep},
	\ea	
where with `$(L\leftrightarrow R)$' we mean the expression preceding it, but everywhere with $L$ and $R$ interchanged. Since $\epsilon = (\eL, \eR)$ can be any covariantly vanishing spinor, $(0, \eR)$ with $\nabla^S\eR = 0$ and $(\eL, 0)$ with $\nabla^S\eL = 0$ are valid solutions for which one of the terms in \eqref{eq:bb2-terms-indep} vanishes, but the other does not. The term with left-handed gauginos is thus independent from that of the right-handed gauginos. Hence, for any $\eR$, 
\bas
\big(J_M\eR, \gamma_\mu(N_jg_{B_i}\gau{iL}^0 + N_ig_{B_j}\gau{jL}^0)\big)
\eas
must vanish, establishing the result.
\end{proof}
Via the transformation \eqref{eq:bb1-transforms2} for the gaugino, we can also reduce one of the $u(1)$ parts of $G_{i,j}' = G_{i,j}^aT^a_{i,j} + H_{i,j} \in C^{\infty}(M, u(N_{i,j}))$.\\

This provides us a justification for the choice to take the trace in \eqref{eq:unimod_new} only over $\H_F$. For if we had not, we would have been in a bootstrap-like situation in which the gaugino degrees of freedom would have contributed to the relation that we have employed to reduce them by. 

\item In the right image of Figure \ref{fig:bb2} no constraint occurs due to the unimodularity condition because $\H_{R = +} = 0$ and the kinetic terms of the gauge bosons are given by:
\ba
&	- \tr_{\rep{i}{j}} \mathbb{F}_{\mu\nu}'\mathbb{F}'^{\mu\nu}\nn\\ 
&= \tr_{\rep{i}{j}}\Big(g_iF_{\mu\nu}^i - g_jF_{\mu\nu}^{j\,o} + \frac{g_{B_i}}{N_i}B_{i\mu\nu} - \frac{g_{B_j}}{N_j} B_{j\mu\nu}\Big)^2\nn\\
&= N_jg_i^2\tr_{N_i} F_{\mu\nu}^iF^{\mu\nu}_i + N_ig_j^2\tr_{N_j} F_{\mu\nu}^j F^{\mu\nu}_j + N_iN_j\Big(\frac{g_{B_i}B_i}{N_i} - \frac{g_{B_j}B_j}{N_j}\Big)_{\mu\nu}\Big(\frac{g_{B_i}B_i}{N_i} - \frac{g_{B_j}B_j}{N_j}\Big)^{\mu\nu}.\label{eq:bb2-gaugekinterms_right}
\ea
\end{itemize}

Here for the second time we stumble upon problems with the fact that the spectral action gives us an on shell action only. The problem is twofold.  First, there is ---as in the case of \B{i} and \B{j}--- a mismatch in the degrees of freedom off shell between $\fer{} \equiv \fer{ij}$ and $\sfer \equiv \sfer_{ij}$. We compensate for this by introducing a bosonic auxiliary field $F_{ij} \in C^{\infty}(M, \rep{i}{j})$ and its conjugate. They appear in the action via
\begin{align}\label{eq:bb2-auxfields}
	S[F_{ij}, F^*_{ij}] &= - \int_M \tr_{N_j} F_{ij}^* F_{ij}\sqrt{g}\mathrm{d}^4x.
\end{align}
From the Euler-Lagrange equations, it follows that $F_{ij} = F_{ij}^* = 0$, i.e.~$F_{ij}$ and its conjugate only have degrees of freedom off shell. Secondly, the four-scalar self-interaction of $\sfer$ poses an obstacle for a supersymmetric action; regardless of its specific form, a supersymmetry transformation of such a term must involve three scalars and one fermion, a term that cannot be canceled by any other. The standard solution is to rewrite these terms using the auxiliary fields $G_i'$, $G_j'$ that the building blocks of the first type provide us, such that we recover \eqref{eq:bb2-action-term2} on shell. The next lemma tells us that we can do this.

\begin{lem}\label{lem:bb2-offshell}
If $\H_{F,R = +} \ne 0$ then the four-scalar terms \eqref{eq:exprM1} of an almost-commutative geometry that consists of a single building block \B{ij} of the second type can be written in terms of auxiliary fields $G_{i,j} \in C^{\infty}(M, su(N_{i,j}))$ and $H \in C^{\infty}(M, u(1))$, as follows:
\ba
	\mathcal{L}(G_{i,j}, H, \sfer, \asfer) &= -\frac{1}{2n_i}\tr G_i^2 -\frac{1}{2n_j}\tr G_j^2 - \frac{1}{2}H^2 - \tr G_i\P_i' \sfer\asfer - \tr G_j\asfer\P_j'\sfer -  H\tr\Q'\sfer\asfer,\label{eq:bb2-auxterms}
\ea
where in the terms featuring $G_{i,j}$ the trace is over the $N_{i,j}\times N_{i,j}$-matrices and with
\begin{align}
	\P_i' &= \sqrt{\frac{f(0)}{\pi^2n_i} N_i} C_{iij}^*C_{iij},&
	\P_j' &= \sqrt{\frac{f(0)}{\pi^2n_j} N_j} C_{ijj}^*C_{ijj},&
	\Q' &= \sqrt{\frac{f(0)}{\pi^2}} (C_{iij}^*C_{iij} + C_{ijj}^*C_{ijj})\label{eq:solP}
\end{align}
matrices on $M$-dimensional family space.
\end{lem}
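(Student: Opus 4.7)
The plan is to treat $G_i$, $G_j$, $H$ as purely algebraic auxiliary fields: solve the Euler--Lagrange equations that \eqref{eq:bb2-auxterms} gives for each of them, substitute the solutions back, and check by direct expansion that the resulting on-shell expression is exactly $\mathcal{M}_{ij}(\sfer,\asfer)$ from \eqref{eq:exprM1}. First, since $G_i\in su(N_i)$ is traceless, varying \eqref{eq:bb2-auxterms} gives $G_i = -n_i\,[\P_i'\sfer\asfer]_0$, where $[X]_0 := X-\tfrac{1}{N_i}(\tr X)\id_{N_i}$ is the traceless projection; analogously $G_j = -n_j\,[\asfer\P_j'\sfer]_0$ (with the $N_j$-traceless projection), while the unconstrained $H\in u(1)$ yields $H = -\tr(\Q'\sfer\asfer)$. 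Using the elementary identity $\tr([X]_0 X) = \tr[X]_0^2 = \tr X^2 - \tfrac{1}{N}(\tr X)^2$, back-substitution collapses \eqref{eq:bb2-auxterms} to
\bas
\mathcal{L}\big|_{\text{on shell}} = \tfrac{n_i}{2}\tr[\P_i'\sfer\asfer]_0^2 + \tfrac{n_j}{2}\tr[\asfer\P_j'\sfer]_0^2 + \tfrac{1}{2}(\tr\Q'\sfer\asfer)^2.
\eas

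Next I would expand each traceless-squared term via $\tr[X]_0^2 = \tr X^2 - \tfrac{1}{N}(\tr X)^2$ and use cyclicity of the trace together with the Hermiticity of $\sfer\asfer$ and $\asfer\sfer$ to identify
\bas
\tr(C_{iij}^*C_{iij}\sfer\asfer)^2 &= |C_{iij}\sfer\asfer C_{iij}^*|^2, & \tr(C_{iij}^*C_{iij}\sfer\asfer) &= |C_{iij}\sfer|^2,
\eas
and the same identities with $i\leftrightarrow j$. With the values of $\P_i',\P_j',\Q'$ stated in \eqref{eq:solP}, the factors $n_{i,j}$ and $N_{i,j}$ sitting in the definitions of $\P_{i,j}'$ cancel precisely against the external $n_{i,j}/2$ prefactors, so the two ``$\tr X^2$'' pieces immediately reproduce the first two summands of $\mathcal{M}_{ij}$ with the required coefficients $f(0)N_{i,j}/(2\pi^2)$.

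The delicate step --- and the only one that really requires care --- is the combination of trace-correction terms. The pieces $-\tfrac{1}{N_i}(\tr\P'_i\sfer\asfer)^2$ and $-\tfrac{1}{N_j}(\tr\asfer\P'_j\sfer)^2$ coming from the $[\,\cdot\,]_0$ projections, together with the expansion of $\tfrac{1}{2}(\tr\Q'\sfer\asfer)^2$, must conspire to produce exactly the cross term $\tfrac{f(0)}{\pi^2}|C_{iij}\sfer|^2|C_{ijj}\sfer|^2$ of \eqref{eq:exprM1}. Because $\Q'$ is defined as a \emph{sum} $C_{iij}^*C_{iij}+C_{ijj}^*C_{ijj}$, squaring it produces diagonal pieces $|C_{iij}|^4$ and $|C_{ijj}|^4$ that must cancel exactly against the two trace corrections, leaving behind only the desired $2|C_{iij}|^2|C_{ijj}|^2$ cross term; the specific normalisations in \eqref{eq:solP} are designed precisely to make this cancellation exact. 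Generalisation from the scalar case to arbitrary family number $M$ is then automatic once $C_{iij},C_{ijj}$ are read as $M\times M$ matrices acting on family space, since every manipulation above is trace-cyclic. The hypothesis $\H_{F,R=+}\ne 0$ enters only to guarantee that the unimodularity discussion preceding the lemma leaves behind a genuine $u(1)$ auxiliary field $H$ to absorb the trace pieces; without it the third term of \eqref{eq:bb2-auxterms} is simply absent and the cross-term cancellation above would fail.
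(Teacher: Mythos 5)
Your proof is correct and is essentially the paper's own argument: the Euler--Lagrange solution $G_i=-n_i[\P_i'\sfer\asfer]_0$ is exactly the paper's $G_i^a=-\tr T_i^a\P_i'\sfer\asfer$ repackaged via the completeness identity \eqref{eq:idn-sun-gens}, and the subsequent cancellation of the $|C_{iij}\sfer|^4$, $|C_{ijj}\sfer|^4$ pieces from $(\tr\Q'\sfer\asfer)^2$ against the $-\tfrac{1}{N_{i,j}}(\tr\,\cdot)^2$ trace corrections, leaving $\tfrac{f(0)}{\pi^2}|C_{iij}\sfer|^2|C_{ijj}\sfer|^2$, is precisely the content of the choice \eqref{eq:solP}. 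One small inaccuracy in your closing aside: if $\H_{R=+}=0$ the unimodularity condition simply does not apply, so one is left with \emph{two} $u(1)$ auxiliary fields $H_i,H_j$ rather than none, and the off-shell rewriting still goes through with $\Q_i'=\Q_j'$ carrying a factor $\sqrt{f(0)/2\pi^2}$ (this is the Corollary immediately following the Lemma); the cancellation does not fail in that case.
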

\begin{proof}
	Required for any building block \B{ij} of the second type are the building blocks \B{i} and \B{j} of the first type, initially providing auxiliary fields $G_{i,j} \equiv G^{a}_{i,j} T^a_{i,j} \in C^{\infty}(M, su(N_{i,j}))$ and $H_{i,j} \in C^{\infty}(M, u(1))$. Here the $T^a_{i,j}$ denote the generators of $su(N_{i,j})$ in the fundamental (defining) representation and are normalized according to $\tr T^a_{i,j}T^b_{i,j} = n_{i,j}\delta_{ab}$, where $n_{i,j}$ is the \emph{constant of the representation}. After applying the unimodularity condition \eqref{eq:unimod_new} in the case that $\H_{R = +} \ne 0$ (the left image of Figure \ref{fig:bb2}) for the gauge field and its transformation, only one $u(1)$ auxiliary field $H$ remains. We thus consider the Lagrangian \eqref{eq:bb2-auxterms} with $\P_{i,j}', \Q'$ self-adjoint. (These coefficients are written inside the trace since they may have family indices. However, the combinations $\P_i'\sfer_{}\asfer{}$ and $\asfer_{}\P_j'\sfer_{}$ cannot have family-indices anymore, since $G_i$ and $G_j$ do not.) Applying the Euler-Lagrange equations to this Lagrangian yields
\bas
	G_i^a &= - \tr T^a_i\P_i'\sfer\asfer, &
	G_j^a &= - \tr T^a_j\asfer\P_j'\sfer, &
	H &= - \tr\Q'\sfer\asfer
\eas
and consequently \eqref{eq:bb2-auxterms} equals \emph{on shell} 
\bas
\mathcal{L}(G_{i,j}, H, \sfer, \asfer) &= \frac{1}{2}\tr (T_i^a\P_i'\sfer_{ij}\asfer_{ij})^2 + \frac{1}{2}\tr (T_j^a\asfer_{ij}\P_j'\sfer_{ij})^2 + \frac{1}{2}\tr (\Q'\sfer_{ij}\asfer_{ij})^2\nn\\
&= \frac{n_i}{2}\Big(|\P_i'\sfer\asfer|^2 - \frac{1}{N_i}|\P'^{1/2}_i\sfer|^4\Big) + \frac{n_j}{2}\Big(|\asfer\P_j'\sfer|^2 - \frac{1}{N_j}|\P_j'^{1/2}\sfer|^4\Big) + \frac{1}{2}|\Q'^{1/2}\sfer|^4.
\eas
Here we have employed the identity
\ba\label{eq:idn-sun-gens}
	(T^a_{i,j})_{mn} (T^a_{i,j})_{kl} = n_{i,j}\Big(\delta_{ml}\delta_{kn} - \frac{1}{N_{i,j}}\delta_{mn}\delta_{kl}\Big).
\ea
With the choices \eqref{eq:solP}
we indeed recover the four-scalar terms \eqref{eq:exprM1} of the spectral action. 
\end{proof}

Even though in the case that $\H_{F,R = +} = 0$ (the right image of Figure \ref{fig:bb2}) the unimodularity condition cannot be used to relate the $u(1)$ fields $H_i$ and $H_j$ to each other, a similar solution is possible:
\begin{cor}
If $\H_{R = +} = 0$ then the four-scalar terms \eqref{eq:exprM1} of a building block \B{ij} of the second type can be written off shell using the Lagrangian
\ba
	\mathcal{L}(G_{i,j}, H_{i,j}, \sfer, \asfer) &= -\frac{1}{2n_i}\tr G_i^2 -\frac{1}{2n_j}\tr G_j^2 - \frac{1}{2}H_i^2 - \frac{1}{2}H_j^2 -  \tr G_i\P_i'\sfer\asfer - \tr G_j\asfer \P_j'\sfer \nn\\&\qquad- H_i \tr\Q'_i\sfer\asfer - H_j\tr\Q'_j\sfer\asfer,\label{eq:bb2-auxterms2}
\ea
with
\begin{align*}
	\P'_i &= \sqrt{\frac{f(0)}{\pi^2n_i} N_i} C_{iij}^*C_{iij},&
	\P'_j &= \sqrt{\frac{f(0)}{\pi^2n_j} N_j} C_{ijj}^*C_{ijj},&
	\Q'_i &= \Q'_j = \sqrt{\frac{f(0)}{2\pi^2}}(C_{iij}^*C_{iij} + C_{ijj}^*C_{ijj}),
\end{align*}
not carrying a family-index.
\end{cor}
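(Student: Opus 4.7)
The plan is to mirror the proof of Lemma~\ref{lem:bb2-offshell} and deviate only in the $u(1)$ sector. Since $\H_{R=+}=0$ the unimodularity condition \eqref{eq:unimod_new} is vacuous, so the identification between the two $u(1)$ auxiliary fields that Lemma~\ref{lem:gobinogo} and Lemma~\ref{lem:bb2-offshell} relied on is no longer available. Consequently both $H_i$ and $H_j$ appear as independent auxiliary fields in the ansatz \eqref{eq:bb2-auxterms2}, and the task is to show that with the coefficients displayed in the statement their on-shell elimination reproduces \eqref{eq:exprM1}.

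First, apply the Euler--Lagrange equations to \eqref{eq:bb2-auxterms2}. All four equations are purely algebraic and give
\bas
G_i^a &= -\tr T^a_i\P'_i\sfer\asfer,& G_j^a &= -\tr T^a_j\asfer\P'_j\sfer, \\
H_i &= -\tr \Q'_i\sfer\asfer, & H_j &= -\tr \Q'_j\sfer\asfer.
\eas
Resubstituting the $G_{i,j}$ and applying the Fierz-type identity \eqref{eq:idn-sun-gens} for the $su(N_{i,j})$-generators reproduces verbatim the $G$-contributions obtained in Lemma~\ref{lem:bb2-offshell}, namely $\tfrac{n_i}{2}\bigl(|\P'_i\sfer\asfer|^2-\tfrac{1}{N_i}|\P'^{1/2}_i\sfer|^4\bigr)$ and its $i\leftrightarrow j$ counterpart. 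With the prescribed $\P'_{i,j}$ these match the $N_i|C_{iij}\sfer\asfer C^*_{iij}|^2$ and $N_j|\asfer C^*_{ijj}C_{ijj}\sfer|^2$ contributions to \eqref{eq:exprM1} and leave behind the two $1/N_{i,j}$-remainders of the form $|\P'^{1/2}_{i,j}\sfer|^4$.

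The only genuinely new computation is the $u(1)$ sector. Substituting $H_i$ and $H_j$ back into \eqref{eq:bb2-auxterms2} produces $\tfrac{1}{2}(\tr\Q'_i\sfer\asfer)^2+\tfrac{1}{2}(\tr\Q'_j\sfer\asfer)^2$. Because the statement fixes $\Q'_i=\Q'_j$, this collapses to $|\Q'^{1/2}_i\sfer|^4$. The stated coefficient $\sqrt{f(0)/(2\pi^2)}$ is precisely $1/\sqrt{2}$ times the value of $\Q'$ appearing in Lemma~\ref{lem:bb2-offshell}, so $|\Q'^{1/2}_i\sfer|^4=\tfrac{1}{2}|\Q'^{1/2}\sfer|^4$, matching the single $H$-contribution of Lemma~\ref{lem:bb2-offshell}. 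Adding everything up, the on-shell Lagrangian \eqref{eq:bb2-auxterms2} equals the on-shell Lagrangian \eqref{eq:bb2-auxterms} of Lemma~\ref{lem:bb2-offshell}, which was already shown to coincide with \eqref{eq:exprM1}.

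There is no real conceptual obstacle; the only subtlety is the combinatorial/normalization check that splitting one coefficient $\Q'$ into two equal pieces $\Q'_i=\Q'_j=\Q'/\sqrt{2}$ exactly restores the symmetric cross-term $2|C_{iij}\sfer|^2|C_{ijj}\sfer|^2$ of \eqref{eq:exprM1}. Once this is noted, the corollary reduces to the proof of Lemma~\ref{lem:bb2-offshell} essentially unchanged.
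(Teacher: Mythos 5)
Your proposal is correct and follows exactly the route the paper intends: the corollary is stated without proof precisely because it is the computation of Lemma \ref{lem:bb2-offshell} repeated with two independent $u(1)$ auxiliary fields, and your normalization check that $\Q'_i=\Q'_j=\Q'/\sqrt{2}$ makes $\tfrac{1}{2}(\tr\Q'_i\sfer\asfer)^2+\tfrac{1}{2}(\tr\Q'_j\sfer\asfer)^2$ reproduce the single-$H$ contribution $\tfrac{1}{2}|\Q'^{1/2}\sfer|^4$ is the only point that needed verifying. Nothing is missing.
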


In both cases we have obtained a system that has equal bosonic and fermionic degrees of freedom, both on shell and off shell.

\subsubsection{The final action and supersymmetry}

 We first turn to the case that $\H_{R = + }\ne 0$. Reducing the degrees of freedom by identifying half of the $u(1)$ fields with the other half and rewriting \eqref{eq:bb2-action} to an off shell action 
we find the \emph{extra} contributions
\begin{align*}
	&\inpr{J_M\afer{R}}{D_A\fer{L}} + \inpr{J_M\afer{R}}{\gamma^5(\gau{iR}'C_{iij}\sfer + C_{ijj}\sfer\gau{jR}')}\nonumber \\
	&\qquad + \inpr{J_M\fer{L}}{\gamma^5(\asfer C_{iij}^*\gau{iL}' + \gau{jL}'\asfer C_{ijj}^*)}\nonumber\\
	 & \qquad + \int_M\Big[|\n_{ij}D_\mu \sfer|^2 - \tr_{N_i}\big( \P_i'\sfer\asfer G_i\big) - \tr_{N_j}\big(\asfer \P_j'\sfer G_j\big) - H\tr_{N_i} \Q'\sfer\asfer - \tr_{N_j^{\oplus M}}F_{ij}^*F_{ij}\Big]
\end{align*}
to the total action, with 
\bas
	\gau{i}' &= \gau{i} + \gau{i}^0 \id_{N_i},&
	\gau{j}' &= \gau{j} - N_{j/i} \gau{i}^0 \id_{N_j}
\eas
and $G_{i,j} \in C^{\infty}(M, su(N_{i,j}))$, $H \in C^{\infty}(M, u(1))$. For notational convenience we will suppress the subscripts in the traces when no confusion is likely to arise. In addition, adding a building block \B{ij} slightly changes the expressions for the pre-factors of the kinetic terms of $A_{i\mu}$ and $A_{j\mu}$ (cf.~Remark \ref{rmk:bb2-rmk} below). \\

As a final step we scale the sfermion $\sfer_{ij}$ according to 
\ba\label{eq:bb3-scalingfields}
	\sfer_{ij}&\to \n_{ij}^{-1} \sfer_{ij},&
	\asfer_{ij}&\to  \asfer_{ij}\n_{ij}^{-1},
\ea
and the gauginos according to \eqref{eq:bb1-scaling} to give us the correctly normalized kinetic terms for both:
\begin{align}
	&\inpr{J_M\afer{R}}{D_A\fer{L}} + \inpr{J_M\afer{R}}{\gamma^5[\gau{iR}'\Cw{i,j}\sfer + \Cw{j,i}\sfer\gau{jR}']}\nonumber \\
	&\qquad + \inpr{J_M\fer{L}}{\gamma^5[\asfer \Cw{i,j}^*\gau{iL}' + \gau{jL}'\asfer \Cw{j,i}^*]}\nonumber\\
	 & \qquad + \int_M\Big[|D_\mu \sfer|^2 - \tr\big( \P_i\sfer\asfer G_i\big) - \tr\big(\asfer \P_j\sfer G_j\big) - \tr HQ\sfer\asfer - \tr_{N_j^{\oplus M}}F_{ij}^*F_{ij}\Big]\label{eq:bb2-action-offshell}.
\end{align}
Here we have written
\ba	\label{eq:bb2-scalingG}
\Cw{i,j} &:= \frac{C_{iij}}{\sqrt{n_i}}\n_{ij}^{-1},& \Cw{j,i} &:= \frac{C_{ijj}}{\sqrt{n_j}}\n_{ij}^{-1}, &
	\P_{i,j} &:= \n_{ij}^{-1} \P_{i,j}' \n_{ij}^{-1} &
	\Q &:= \n_{ij}^{-1} \Q' \n_{ij}^{-1}
\ea	
for the scaled versions of the parameters. For this action we have:

\begin{theorem}\label{prop:bb2}
		The total action that is associated to $\B{i} \oplus \B{j} \oplus \B{ij}$, given by \eqref{eq:SYM} and \eqref{eq:bb2-action-offshell}, is supersymmetric under the transformations \eqref{eq:bb1-transforms},
\begin{subequations}\label{eq:susytransforms4}
\begin{align}
				\delta \sfer &= c_{ij}(J_M\epsilon_L, \gamma^5 \fer{L})_{\cS},  &\delta \asfer &=  c_{ij}^*(J_M\epsilon_R, \gamma^5 \afer{R})_{\cS}\label{eq:transforms4.1},\\
				\delta \fer{L} &= c_{ij}' \gamma^5 [\can_A, \sfer]\epsilon_R + d_{ij}'F_{ij}\epsilon_L,  & \delta \afer{R} &=  c_{ij}'^*\gamma^5 [\can_A, \asfer]\epsilon_L + d_{ij}'^* F_{ij}^*\epsilon_R\label{eq:transforms4.2}
\
\end{align}
\end{subequations}
and
\begin{subequations}\label{eq:susytransforms5}
\begin{align}
	\delta F_{ij} &= d_{ij}(J_M\eR, \can_A\fer{L})_{\cS} + d_{ij,i}(J_M\eR, \gamma^5\gau{iR}\sfer)_{\cS} - d_{ij,j}(J_M\eR, \gamma^5\sfer\gau{jR})_{\cS},\label{eq:transforms4.4a}\\
	 \delta F_{ij}^* &= d_{ij}^*(J_M\eL, \can_A\afer{R})_{\cS} + d_{ij,i}^*(J_M\eL, \gamma^5\asfer\gau{iL})_{\cS} - d_{ij,j}^*(J_M\eL, \gamma^5\gau{jL}\asfer)_{\cS}\label{eq:transforms4.4b},
\end{align}
\end{subequations}
with $c_{ij}, c_{ij}', d_{ij}, d_{ij}', d_{ij,i}$ and $d_{ij,j}$ complex numbers, if and only if
\ba
	\Cw{i,j} &= \sgnc_{i,j}\sqrt{\frac{2}{\K_i}}g_i\id_M,& \Cw{j,i} &=  \sgnc_{j,i}\sqrt{\frac{2}{\K_j}}g_j\id_M,& 
\P_{i}^2 &= \frac{g_{i}^2}{\K_{i}}\id_M, & \P_{j}^2 &= \frac{g_{j}^2}{\K_{j}}\id_M,\label{eq:bb2-resultCiij}
\ea
for the unknown parameters of the finite Dirac operator (where $\id_M$ is the identity on family-space, which equals unity of $\fer{ij}$ has no family index) and 
	\bas
		c_{ij}' &=  c_{ij}^* = \sgnc_{i,j} \sqrt{2\K_i}c_i = - \sgnc_{j,i} \sqrt{2\K_j}c_j,\nn\\ 
		 d_{ij} &= d_{ij}'^* = \sgnc_{i,j} \sqrt{\frac{\K_i}{2}} \frac{d_{ij,i}}{g_i}  = - \sgnc_{j,i} \sqrt{\frac{\K_j}{2}} \frac{d_{ij,j}}{g_j},&
		c_{G_i} &= \sgnc_i \sqrt{\K_i} c_i,
	\eas	
	 with $\sgnc_{i}, \sgnc_{i,j}, \sgnc_{j,i} \in \{\pm 1\}$ for the transformation constants.
	\end{theorem}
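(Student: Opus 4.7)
The plan is to vary the total action $\act{i} + \act{j} + \act{ij}$ under the combined transformations \eqref{eq:bb1-transforms}, \eqref{eq:susytransforms4}, \eqref{eq:susytransforms5} and then organize the resulting terms by their field content so that each class must cancel separately. Since Theorem \ref{thm:bb1} already ensures that the pure Yang--Mills pieces $\act{i}, \act{j}$ vary into zero whenever $2i c_i' = -c_i \K_i$, $c_{G_i} = -c_{G_i}'$ (and similarly for $j$), the task reduces to collecting the \emph{cross terms} between sectors: (i) variations of $\can_A$ inside $\act{j}$'s gauge kinetic term and gaugino kinetic term that now also see the matter representation $\rep{i}{j}$, and (ii) variations of every term in \eqref{eq:bb2-action-offshell}. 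After all cancellations are imposed, only the proportionalities stated in \eqref{eq:bb2-resultCiij} and the coefficient identifications following it remain as free parameters.

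First I would group the contributions by structure. Varying $\delta\fer{L}, \delta\afer{R}$ in the Yukawa couplings $\inpr{J_M\afer{R}}{\gamma^5\gau{iR}'\Cw{i,j}\sfer}$ and its $j$-counterpart produces terms of the form $\gau{}\sfer[\can_A,\sfer]\epsilon$ and $F_{ij}\gau{}\sfer\epsilon$. The former must be cancelled by $\delta A_{i,j}$ in $\inpr{J_M\afer{R}}{D_A\fer{L}}$ together with $\delta\sfer$ in $|D_\mu\sfer|^2$; comparing the resulting coefficients relates $\Cw{i,j}$ and $\Cw{j,i}$ to $g_i/\sqrt{\K_i}$ and $g_j/\sqrt{\K_j}$ respectively, producing the first two equalities in \eqref{eq:bb2-resultCiij} up to signs $\sgnc_{i,j}, \sgnc_{j,i}$, and simultaneously fixing $c_{ij}' = \sgnc_{i,j}\sqrt{2\K_i}\,c_i = -\sgnc_{j,i}\sqrt{2\K_j}\,c_j$. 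The $F_{ij}$-linear cross terms dictate the relation between $d_{ij}$ and $d_{ij,i}, d_{ij,j}$ through the same mechanism applied to \eqref{eq:transforms4.4a}--\eqref{eq:transforms4.4b}.

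Next, the four-scalar and two-scalar-two-fermion terms have to cancel. Varying $\delta\sfer$ and $\delta G_i$ in $\tr(\P_i\sfer\asfer G_i)$, and similarly for $\P_j$ and $Q$, produces precisely the Yukawa-squared structures one expects; demanding these to pair with the variations of $\inpr{J_M\fer{L}}{\gamma^5\asfer \Cw{i,j}^*\gau{iL}'}$ under $\delta\gau{iL}'\supset c_i'G_i\epsilon_L$ yields $\P_i^2 = g_i^2/\K_i$, $\P_j^2 = g_j^2/\K_j$, and forces $c_{G_i} = \sgnc_i\sqrt{\K_i}\,c_i$. Because of Lemma~\ref{lem:gobinogo} (or the corollary in the $\H_{R=+}=0$ case), the would-be independent $u(1)$ pieces are already tied together, and the $H$-auxiliary variation closes consistently with $Q$ expressed through the same constants, leaving no extra constraint beyond those already listed.

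The main obstacle is the bookkeeping of signs, traces and family-space identities: one has to use the Fierz-type identity \eqref{eq:idn-sun-gens} to rewrite $(T^a_{i,j})$-sums coming from $\delta G$ variations into the adjoint structures generated by $\delta\sfer$ in $|D_\mu\sfer|^2$, and to verify that the family matrices $\Cw{i,j}, \Cw{j,i}$ must in fact be proportional to $\id_M$ (rather than arbitrary $M\times M$ matrices) in order for the $\sfer\asfer\gau{}$-terms produced by two \emph{different} variations to cancel pointwise in family space. The final ``only if'' part follows by noting that all the constraints derived along the way are independent: each cancellation uses a distinct combination of spinor bilinears $(J_M\eLR,\,\cdot\,)_\cS$ and a distinct tensor structure in gauge/family indices, so none of the equations in \eqref{eq:bb2-resultCiij} or in the coefficient list can be dropped. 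The details, being parallel to those of Appendix~\ref{sec:SYM} and quite long, are routine once the matching described above has been set up.
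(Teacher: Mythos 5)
Your overall strategy---vary the action under \eqref{eq:bb1-transforms}, \eqref{eq:susytransforms4}, \eqref{eq:susytransforms5}, sort the resulting terms into groups by field content, and demand that each group cancel separately---is exactly the strategy of the paper's proof in Appendix \ref{sec:bb2-proof}. But the way you pair variations into cancelling groups is misassigned, and the two genuinely non-routine ingredients are absent. The term $c_{ij}'^*\gamma^5[\can_A,\asfer]\eL$ arising from $\delta\afer{R}$ in the Yukawa coupling (one gaugino, two sfermions) is \emph{not} cancelled by ``$\delta A_{i,j}$ in $\inpr{J_M\afer{R}}{D_A\fer{L}}$ together with $\delta\sfer$ in $|D_\mu\sfer|^2$'': those two variations produce, respectively, a four-fermion term and a fermion--sfermion--two-derivative term, which belong to different groups. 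The gaugino--two-sfermion group is actually closed by $\delta\mathbb{A}$ acting inside $|D_\mu\sfer|^2$ together with $\delta G_i$ in $\tr\P_i\sfer\asfer G_i$ (Lemma \ref{lem:bb2-group2}), giving $\tfrac12 c_{ij}'^*\Cw{i,j}=-g_ic_i=\P_i c_{G_i}$. The four-fermion group ($\delta\sfer$ in the Yukawa term against $\delta\mathbb{A}$ in the fermion kinetic term) cancels only after a \emph{spinor} Fierz rearrangement (Lemma \ref{lem:bb2-group1}, Appendix \ref{sec:fierz}), which is what yields the independent relation $\tfrac12\Cw{i,j}c_{ij}=-c_ig_i$. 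Your only mention of a ``Fierz-type identity'' points to \eqref{eq:idn-sun-gens}, the $su(N)$ generator completeness relation, which plays no role in the variation argument (it enters only when rewriting the four-scalar terms off shell in Lemma \ref{lem:bb2-offshell}). Without the spinor Fierz step you lose one of the two equations whose combination, together with $2ic_i'=-c_i\K_i$ from Theorem \ref{thm:bb1}, pins down $\Cw{i,j}=\sgnc_{i,j}\sqrt{2/\K_i}\,g_i\id_M$.

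Likewise, the fermion--sfermion--two-derivative group---from $\delta\fer{L}\supset c_{ij}'\gamma^5[\can_A,\sfer]\eR$ in the kinetic term, $\delta\sfer$ in $|D_\mu\sfer|^2$, and $\delta\gau{}\supset c_i'\gamma^\mu\gamma^\nu F^i_{\mu\nu}\eR$ in the Yukawa term---cancels only by virtue of the flat-space identity $\can_A^2+D_\mu D^\mu=-\tfrac12\gamma^\mu\gamma^\nu\mathbb{F}_{\mu\nu}$ (Lemma \ref{lem:Dsquared}); this is what converts the mismatch of second-order operators into precisely the curvature term that $\delta\gau{}$ supplies, and it is the source of $c_{ij}^*=c_{ij}'=-2ic_i'\Cw{i,j}g_i^{-1}=2ic_j'\Cw{j,i}g_j^{-1}$ (Lemma \ref{lem:bb2-group3}). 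These two mechanisms, not bookkeeping, are where the specific numerical factors in \eqref{eq:bb2-resultCiij} come from, including $\P_i^2=g_i^2/\K_i$, which follows only after combining the $G_i$-group constraint $c_{ij}^*\P_i=c_{G_i}'\Cw{i,j}$ with the others. As written, your sketch asserts the correct conclusions but does not contain the arguments that produce them, and the cancellations it does describe would not in fact occur between the terms you name.
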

	\begin{proof}
			Since the action \eqref{eq:SYM} is already supersymmetric by virtue of Theorem \ref{thm:bb1}, we only have to prove that the same holds for the contribution \eqref{eq:bb2-action-offshell} to the action from \B{ij}. The detailed proof of this fact can be found in Appendix \ref{sec:bb2-proof}.
	\end{proof}

Then for $C_{iij}$ and $\P_{i,j}$ that satisfy these relations (setting $\K_{i,j} = 1$), the supersymmetric action (but omitting the $u(1)$-terms for conciseness now) reads:
\begin{align}
	&\inpr{J_M\afer{R}}{\can_A\fer{L}} + \sqrt{2}\inpr{J_M\afer{R}}{\gamma^5(\sgnc_{i,j} g_i\gau{iR}\sfer + \sgnc_{j,i} \sfer g_j\gau{jR})}\nonumber \\
	&\qquad + \sqrt{2}\inpr{J_M\fer{L}}{\gamma^5(\sgnc_{i,j} \asfer g_i\gau{iL} + \sgnc_{j,i} g_j\gau{jL}\asfer )}\nonumber\\
	 & \qquad + \int_M\Big[|D_\mu \sfer|^2 - g_i\tr_{N_i}\big(\sfer\asfer G_i\big) - g_j\tr_{N_j}\big(\asfer \sfer G_j\big) - \tr_{N_j^{\oplus M}} F_{ij}^*F_{ij}\Big]\label{eq:bb2-action-susy},
\end{align}
i.e.
we recover the pre-factors for the fermion-sfermion-gaugino and four-scalar interactions that are familiar for supersymmetry. The signs $\sgnc_{i,j}$ and $\sgnc_{j,i}$ above can be chosen freely.

\begin{rmk}\label{rmk:bb2-obstr}In the case that $\H_{R = +} = 0$, there is an interaction 
\ba
	\propto \int_M B_{i\mu\nu}B^{\mu\nu}_j\label{eq:u1cross}
\ea
present (see the last term of \eqref{eq:bb2-gaugekinterms_right}). Transforming the gauge fields appearing in that interaction shows that the supersymmetry of the total action requires an interaction 
\bas
	\propto \inpr{J_M\gau{i}^0}{\dirac \gau{j}^0}, 
\eas
a term that the fermionic action does not provide. Thus, a situation in which there are two different $u(1)$ fields that both act on the same representation \rep{i}{j} is an obstruction for supersymmetry. This is also the reason that a supersymmetric action with gauge groups $U(N_{i,j})$ is not possible in the presence of a representation \rep{i}{j}, since 
\bas
- \tr_{\rep{i}{j}}\mathbb{F}_{\mu\nu} \mathbb{F}^{\mu\nu}	&= \tr_{\rep{i}{j}}(g_iF^i_{\mu\nu} - g_jF^{j\,o}_{\mu\nu})(g_iF_i^{\mu\nu} - g_jF_{j}^{\mu\nu\,o}) \\
			&= N_jg_i^2\tr F^i_{\mu\nu}F_i^{\mu\nu} + N_ig_j^2\tr F^j_{\mu\nu}F_j^{\mu\nu} - 2g_ig_j\tr F^i_{\mu\nu}\tr F_j^{\mu\nu}, 
\eas
of which the last term spoils supersymmetry. Averting a theory in which two independent $u(1)$ gauge fields act on the same representation will be seen to put an important constraint on realistic supersymmetric models from noncommutative geometry.
\end{rmk}

Note that it is not per se the presence of an $R = -1$ off-diagonal fermion in the first place that is causing this; in a spectral triple that contains at least one $R = +1$ fermion the interaction \eqref{eq:u1cross} vanishes due to the unimodularity condition \eqref{eq:unimod_new}.

\begin{rmk}\label{rmk:bb2-rmk}
	In the previous section we have compactly written
	\begin{align*}
		\K_i = \frac{f(0)}{3\pi^2}2N_ig_i^2n_i
	\end{align*}
	only partly for notational convenience. There are two other reasons. The first is that since the kinetic terms for the gauge bosons are normalized to $-1/4$, $\mathcal{K}_i$ must in the end have the value of $1$. This puts a relation between $f(0)$ and $g_i$. This is the same as in the Standard Model \cite[\S 17.1]{CCM07}. Secondly, the expression for $\mathcal{K}_i$ depends on the contents of the spectral triple. As \eqref{eq:bb2-gaugekinterms} shows, when the Hilbert space is extended with \rep{i}{j} and its opposite (both having $R = 1$), then \eqref{eq:expressionK1} changes to
\begin{align}
	\K_i &= \frac{f(0)}{3\pi^2}g_i^2n_i(2N_i + MN_j), & \K_j &= \frac{f(0)}{3\pi^2}g_j^2n_j(MN_i + 2N_j), & \K_{B} &= \frac{4f(0)}{3\pi^2}\frac{N_j}{N_i}M g_B^2.\label{eq:expressionK2}
\end{align}
Here $M$ denotes the number of generations that the fermion--sfermion pair comes in. In fact, the relation between the coupling constant(s) $g_i$ and the function $f$ should be evaluated only for the full spectral triple. In this case however, setting all three terms equal to one, implies the GUT-like relation 
		\bas
			n_i(2N_i + MN_j) g_i^2 = n_j(2N_j + MN_i) g_j^2 = 4\frac{N_j}{N_i}M g_B^2.
		\eas 
\end{rmk}

What remains, is to check whether there exist solutions for $C_{iij}$ and $C_{ijj}$ that satisfy the supersymmetry constraints \eqref{eq:bb2-resultCiij}. 
\begin{prop}\label{lem:bb2-nosol}
	Consider an almost-commutative geometry whose finite algebra is of the form $M_{N_i}(\com) \oplus M_{N_j}(\com)$. The particle content and action associated to this almost-commutative geometry are both supersymmetric off shell if and only if it consists of two disjoint building blocks \B{i,j} of the first type, for which $N_{i}, N_{j} > 1$.
\end{prop}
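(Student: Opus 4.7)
The plan is to dispose of the ``if'' direction quickly and then use Theorem \ref{prop:bb2} together with a short algebraic count to rule out the ``only if'' case.

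For the sufficiency direction, suppose no building block of the second type is present. Then the total action is simply the sum of two copies of the super-Yang-Mills action \eqref{eq:SYM}, one for $\B{i}$ and one for $\B{j}$, which decouple completely. Each summand is supersymmetric by Theorem \ref{thm:bb1} provided the relevant $N$ exceeds one (Remark \ref{rmk:bb1-obstr}), giving the desired implication.

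For the necessity direction I would argue by contradiction: assume a second-type block $\B{ij}$ is present and derive inconsistency of the constraints \eqref{eq:bb2-resultCiij} supplied by Theorem \ref{prop:bb2}. First I would observe that demanding $\Cw{i,j}$ and $\Cw{j,i}$ to be proportional to $\id_M$, together with the definitions \eqref{eq:bb2-scalingG} and \eqref{eq:exprN} of $\n_{ij}$, forces $C_{iij}^*C_{iij}$ and $C_{ijj}^*C_{ijj}$ to both be proportional to the identity on family space; write $C_{iij}^*C_{iij}=x\id_M$ and $C_{ijj}^*C_{ijj}=y\id_M$. Substituting \eqref{eq:expressionK2} for $\K_i,\K_j$, the four constraints in \eqref{eq:bb2-resultCiij} reduce to the two ``$\Cw{}$-relations''
\[
 x(MN_j-N_i)=3N_j\, y,\qquad y(MN_i-N_j)=3N_i\, x,
\]
together with the two ``$\P^2$-relations''
\[
 4N_i x^2 (2N_i+MN_j)=3(N_ix+N_jy)^2,\qquad 4N_j y^2 (2N_j+MN_i)=3(N_ix+N_jy)^2.
\]

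The main (and only nontrivial) step is to combine these. Using the first $\Cw{}$-relation to eliminate $(N_ix+N_jy)=x(2N_i+MN_j)/3$ in the first $\P^2$-relation collapses it to $MN_j=10N_i$; symmetrically $MN_i=10N_j$. Multiplying gives $M=10$ and $N_i=N_j$, but then the two $\Cw{}$-relations themselves become $y=3x$ and $x=3y$, forcing $x=y=0$ and hence a trivial, noninteracting theory. Thus $\B{ij}$ cannot be added consistently, and only a pair of disjoint first-type blocks with $N_i,N_j>1$ survives. The only delicate point is the bookkeeping of the normalisations $n_{i,j}$, $\K_{i,j}$ and $\n_{ij}$ and the reduction from matrix-valued parameters to scalars on family space; once those are handled, the remaining over-determination is a short linear algebra computation.
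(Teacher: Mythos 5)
Your proof is correct and rests on the same ingredients as the paper's own argument --- the supersymmetry constraints \eqref{eq:bb2-resultCiij} from Theorem \ref{prop:bb2} combined with the expressions \eqref{eq:exprN}, \eqref{eq:solP} and \eqref{eq:expressionK2} --- and differs only in the elimination order: the paper first imposes the two $\Cw{}$-consistency conditions (finding $M=4$, $N_i=N_j$) and then shows $\P_i^2$ comes out as $2g_i^2/\K_i$ rather than $g_i^2/\K_i$, whereas you pair each $\Cw{}$-relation with the corresponding $\P^2$-relation (finding $MN_j=10N_i$, $MN_i=10N_j$) and then let the $\Cw{}$-relations force $x=y=0$; both computations merely exhibit the over-determination of the same four equations. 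One small imprecision: the constraint $\Cw{i,j}\propto\id_M$ forces $C_{iij}^*C_{iij}$ to be proportional to $\n_{ij}^2$, not a priori to $\id_M$, but since $\n_{ij}^2$ is itself a positive combination of $C_{iij}^*C_{iij}$ and $C_{ijj}^*C_{ijj}$ every relation in your system reduces to a scalar equation for the proportionality constants (exactly as in the paper's proof), so your bookkeeping goes through unchanged.
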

\begin{proof}
We will prove this by showing that the action of a single building block \B{ij} of the second type is not supersymmetric, falling back to Theorem \ref{thm:bb1} for a positive result. For the action of a \B{ij} of the second type to be supersymmetric requires the existence of parameters $C_{iij}$ and $C_{ijj}$ that ---after scaling according to \eqref{eq:bb2-scalingG}--- satisfy \eqref{eq:bb2-resultCiij} both directly and indirectly via $\P_{i,j}$ of the form \eqref{eq:solP}. To check whether they directly satisfy \eqref{eq:bb2-resultCiij} we note that the pre-factor $\n_{ij}^2$ for the kinetic term of the sfermion $\sfer_{ij}$ appearing in \eqref{eq:bb2-scalingG} itself is an expression in terms of $C_{iij}$ and $C_{ijj}$. We multiply the first relation of \eqref{eq:bb2-resultCiij} with its conjugate and multiply with $\n_{ij}$ on both sides to get
\bas
	C_{iij}^*C_{iij} = \frac{2}{\K_i}n_ig_i^2 \n_{ij}^2.
\eas
Inserting the expression \eqref{eq:exprN} for $\n_{ij}^2$, we obtain
\bas
	C_{iij}^*C_{iij} &= g_i^2n_i\frac{f(0)}{\pi^2}\frac{1}{\K_i}\Big[N_iC_{iij}^*C_{iij} + N_jC_{ijj}^*C_{ijj}\Big].
\eas
From \eqref{eq:bb2-scalingG} and \eqref{eq:bb2-resultCiij} we infer that $C_{ijj}^*C_{ijj} = (n_jg_j^2/n_ig_i^2)C_{iij}^*C_{iij}$, i.e.~we require:
\bas
	\K_i &= \frac{f(0)}{\pi^2}\Big[g_i^2n_iN_i + n_jg_j^2N_j\Big].
\eas
If we use the expressions \eqref{eq:expressionK2} for the pre-factors of the gauge bosons' kinetic terms to express the combinations $f(0)n_{i,j}g_{i,j}^2/\pi^2$ in terms of $N_{i,j}$ and $M$, the requirement for consistency reads
\bas
		1 &= \bigg(\frac{3N_i}{2N_i + MN_j} + \frac{3N_j}{MN_i + 2N_j}\bigg).
\eas
The only solutions to this equation are given by $M = 4$ and $N_i = N_j$. However, inserting the solution \eqref{eq:bb2-resultCiij} for $C_{iij}^*C_{iij}$ into the expression \eqref{eq:solP} for $\P_{i}, \P_j$ (necessary to write the action off shell) gives
\bas
	\P_{i}^2 &= 4\frac{f(0)}{\pi^2}N_{i}g_{i}^4\frac{n_{i}}{\K_i^2},&
	\P_{j}^2 &= 4\frac{f(0)}{\pi^2}N_{j}g_{j}^4\frac{n_{j}}{\K_j^2},
\eas
with an $\id_M$ where appropriate. We again use Remark \ref{rmk:bb2-rmk} to replace $f(0)g_i^2/(\pi^2\K_i)$ by an expression featuring $N_{i,j}$, $M$ and $n_{i,j}$. This yields
\bas
	\P_{i}^2 &= \frac{12N_i}{2N_i + MN_j}\frac{g_i^2}{\K_i} = 2\frac{g_i^2}{\K_i},&
	\P_{j}^2 &= \frac{12N_j}{2N_j + MN_i}\frac{g_j^2}{\K_j} = 2\frac{g_j^2}{\K_j}
\eas
for the values $M = 4$, $N_i = N_j$ that gave the correct fermion-sfermion-gaugino interactions. We thus have a contradiction with the demand on $\P_{i,j}^2$ from \eqref{eq:bb2-resultCiij}, necessary for supersymmetry.
\end{proof}

We shortly pay attention to a case that is of similar nature but lies outside the scope of the above Proposition. 

\begin{rmk}
	For $\A_F = \com \oplus \com$, a building block \B{ij} of the second type does not have a supersymmetric action either. In this case there are only $u(1)$ fields present in the theory and $G_i$, $G_j$ and $H$ are seen to coincide. It is possible to rewrite the four-scalar interaction of the spectral action off shell, but this set up also suffers from a similar problem as in Proposition \ref{lem:bb2-nosol}.
\end{rmk}

We can extend the result of Proposition \ref{lem:bb2-nosol} to components of the finite algebra that are defined over other fields than $\com$. For this, we first need the following lemma.

\begin{lem}\label{lem:bb2-otherfields}
	The inner fluctuations \eqref{eq:inner_flucts} of $\dirac$ caused by a component of the finite algebra that is defined over $\mathbb{R}$ or $\mathbb{H}$, are traceless.
\end{lem}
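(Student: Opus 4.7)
The plan is to observe that the inner fluctuation $A_\mu = \sum_n a_n[\partial_\mu, b_n]$ coming from a single component $M_{N_i}(\mathbb{F}_i)$ of the finite algebra is skew-Hermitian by construction (this is the statement immediately following \eqref{eq:param_A}). Hence it takes values in the Lie algebra of the unitary group of that component, i.e.\ in $o(N_i)$ when $\mathbb{F}_i=\mathbb{R}$ and in $sp(N_i)$ when $\mathbb{F}_i=\mathbb{H}$. I would then verify in each case that the corresponding skew-Hermitian matrices are automatically traceless once one uses the embedding through which the component actually acts on $\H_F$.

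First I would dispense with the real case: a skew-Hermitian real $N_i\times N_i$ matrix is antisymmetric, so $(A_\mu)_{kk} = -(A_\mu)_{kk}$ for each $k$, which forces every diagonal entry to vanish and gives $\tr A_\mu = 0$ immediately.

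For the quaternionic case I would recall that a quaternionic matrix acts on the complex Hilbert space through its standard embedding $M_{N_i}(\mathbb{H})\hookrightarrow M_{2N_i}(\mathbb{C})$, which sends a diagonal quaternion $q = a+bi+cj+dk$ to the $2\times 2$ block
\begin{align*}
\begin{pmatrix} a+bi & -c+di \\ c+di & a-bi \end{pmatrix},
\end{align*}
whose complex trace is $2a = 2\,\Real q$. Skew-Hermiticity of $A_\mu$ means $(A_\mu)_{kk}^{\,*} = -(A_\mu)_{kk}$ as a quaternion, so each diagonal quaternion has zero real part. Hence every diagonal block of the complex embedding is traceless, and therefore so is the whole matrix.

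The only real subtlety is making sure that the notion of ``trace'' used in \eqref{eq:unimod} is the one induced by the action of $\A_F$ on $\H_F$, so that in the quaternionic case the relevant trace is the complex trace on $M_{2N_i}(\mathbb{C})$ rather than the reduced quaternionic trace; once this convention is fixed, both cases reduce to elementary observations about the diagonal entries of skew-Hermitian matrices.
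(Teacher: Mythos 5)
Your proposal is correct and follows essentially the same route as the paper's proof: skew-Hermiticity of $A_\mu$ forces the diagonal entries to vanish in the real case and to have vanishing real part in the quaternionic case, and the complex trace of the $2\times 2$ block embedding of a quaternion is twice its real part (the paper phrases this as the purely imaginary diagonal entries $\alpha$, $\bar\alpha$ cancelling pairwise, which is the same observation). Your closing remark about using the complex trace on $M_{2N_i}(\mathbb{C})$ rather than a reduced quaternionic trace is consistent with the paper's convention that the component acts on $\H_F$ via $\com^{2N_i}$.
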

\begin{proof}
	The inner fluctuations are of the form
	\bas
		i \gamma^\mu A_\mu^{\mathbb{F}},\quad  A_\mu^{\mathbb{F}} = \sum_i a_i \partial_\mu (b_i),\qquad \text{with}\quad a_i, b_i \in C^{\infty}(M, M_{N}(\mathbb{F})),\quad \mathbb{F} = \mathbb{R}, \mathbb{H}.
	\eas
	This implies that $A_\mu^\mathbb{F}$ is itself an $M_{N}(\mathbb{F})$-valued function. For the inner fluctuations to be self-adjoint, $A_\mu^\mathbb{F}$ must be skew-Hermitian. In the case that $\mathbb{F} = \mathbb{R}$ this implies that all components on the diagonal vanish and consequently so does the trace. In the case that $\mathbb{F} = \mathbb{H}$, all elements on the diagonal must themselves be skew-Hermitian. Since all quaternions are of the form
	\bas
	\begin{pmatrix}
		\alpha & \beta\\
		- \bar \beta & \bar \alpha
	\end{pmatrix} \quad \alpha, \beta \in \com,
	\eas
	this means that the diagonal of $A_\mu^\mathbb{H}$ consists of purely imaginary numbers that vanish pairwise. Its trace is thus also $0$.
\end{proof}

Then we have 
\begin{theorem}
	Consider an almost-commutative geometry whose finite algebra is of the form $M_{N_i}(\mathbb{F}_i) \oplus M_{N_j}(\mathbb{F}_j)$ with $\mathbb{F}_i, \mathbb{F}_j = \mathbb{R}, \com, \mathbb{H}$. If the particle content and action associated to this almost-commutative geometry are both supersymmetric off shell, then it consists of two disjoint building blocks \B{i,j} of the first type, for which $N_{i}, N_{j} > 1$.
\end{theorem}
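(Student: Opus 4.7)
The plan is to reduce the theorem to Proposition \ref{lem:bb2-nosol} by leveraging Lemma \ref{lem:bb2-otherfields}. In the two-summand setting, the only way to depart from two disjoint first-type blocks $\mathcal{B}_i \oplus \mathcal{B}_j$ is by introducing a second-type block $\mathcal{B}_{ij}$, so it suffices to show that $\mathcal{B}_{ij}$ never yields a supersymmetric off-shell action whenever at least one of $\mathbb{F}_i,\mathbb{F}_j$ differs from $\com$; the case $(\com,\com)$ is Proposition \ref{lem:bb2-nosol} itself.

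First, I would use Lemma \ref{lem:bb2-otherfields} to remove the $u(1)$-gauge degrees of freedom on any side whose algebra is over $\mathbb{R}$ or $\mathbb{H}$. The associated abelian gauge boson and its supersymmetric partner auxiliary field $H$ are then simply absent from the start, so the unimodularity analysis of Section \ref{sec:bb2} becomes strictly simpler; fewer fields have to be eliminated or related.

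Next, I would rerun the off-shell rewriting of Lemma \ref{lem:bb2-offshell} and the matching argument of Theorem \ref{prop:bb2} in this restricted setting. The supersymmetry constraints on $\Cw{i,j},\Cw{j,i}$ and $\P_{i},\P_{j}$ retain the form of \eqref{eq:bb2-resultCiij}, while the expressions \eqref{eq:exprN} and \eqref{eq:expressionK2} for $\n_{ij}$ and $\K_{i,j}$ pick up modified dimension factors: an irreducible $M_{N}(\mathbb{H})^{\com}$-module has complex dimension $2N$, and an $M_{N}(\mathbb{R})^{\com}$-module has complex dimension $N$ but contributes no $u(1)$ kinetic term. The scaling \eqref{eq:bb3-scalingfields} still yields a relation between $C_{iij}^{*}C_{iij}$ and $\n_{ij}^{2}$ that, after insertion of the modified $\n_{ij}^{2}$, produces a numerical identity of precisely the same shape as
\[
    1 = \frac{3N_i}{2N_i + MN_j} + \frac{3N_j}{MN_i + 2N_j}
\]
but with dimensions rescaled by the appropriate $\mathbb{H}$- and $\mathbb{R}$-factors.

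Finally, for each such identity I would check that the solutions (if any) are incompatible with the parallel constraint $\P_{i,j}^{2}=g_{i,j}^{2}/\K_{i,j}$, exactly as in the proof of Proposition \ref{lem:bb2-nosol}. This contradiction rules out $\mathcal{B}_{ij}$ in every case, leaving only $\mathcal{B}_i \oplus \mathcal{B}_j$ with $N_i,N_j>1$ (the lower bound being enforced by Remark \ref{rmk:bb1-obstr}). The main obstacle is the meticulous bookkeeping of dimensional factors coming from the complexification of real and quaternionic matrix algebras: the algebraic structure of the argument is robust, but a misplaced factor of $2$ could conceivably produce a spurious solution, so the case analysis must be executed with care.
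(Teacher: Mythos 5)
Your overall architecture (case analysis on $\mathbb{F}_i,\mathbb{F}_j$, Lemma \ref{lem:bb2-otherfields} to remove the abelian degrees of freedom on real or quaternionic summands, reduction of what remains to the matching argument of Proposition \ref{lem:bb2-nosol}) matches the paper's strategy in outline, but the mechanism you propose for the $\mathbb{R}/\mathbb{H}$ cases is not the one that closes them, and as written your key step would fail. Once the $u(1)$ gauge field --- and with it, via Lemma \ref{lem:gobinogo}, the $u(1)$ auxiliary field $H$ --- is absent, you cannot ``rerun the off-shell rewriting of Lemma \ref{lem:bb2-offshell}'' at all: that lemma needs $H$ precisely to account for the positive four-scalar term $2|C_{iij}\sfer|^2|C_{ijj}\sfer|^2$ of \eqref{eq:exprM1}, since the $su(N)$ auxiliary fields $G_{i,j}$ contribute to the $|\sfer_{ij}|^4$ coefficient only with a negative sign (the $-1/N_{i,j}$ terms coming from \eqref{eq:idn-sun-gens}). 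So in these cases there is no ``numerical identity of precisely the same shape'' with rescaled dimension factors to analyse; the obstruction is structural --- the four-scalar part of the spectral action simply has no off-shell counterpart --- and that is the argument the paper uses. Your rescaled-identity computation is the correct mechanism only in the one sub-case where exactly one summand is over $\mathbb{R}$ or $\mathbb{H}$ and the off-diagonal representation has $R=-1$, so that the unimodularity condition does not apply, one $u(1)$ field survives, and the computation of Proposition \ref{lem:bb2-nosol} carries through essentially unchanged.

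Two further omissions. First, you do not treat the $R$-parity assignments, whereas the paper's proof branches on them: if the diagonal (gaugino) representations carry $R=+1$ the particle content is already non-supersymmetric, and the sign of $R$ on the off-diagonal representation determines whether the unimodularity condition \eqref{eq:unimod_new} applies at all, which in turn decides which obstruction is operative. Second, for $\mathbb{F}_i=\mathbb{F}_j=\com$ with the off-diagonal representation at $R=-1$, the relevant obstruction is not Proposition \ref{lem:bb2-nosol} but the cross term $\propto \int_M B_{i\mu\nu}B_j^{\mu\nu}$ of Remark \ref{rmk:bb2-obstr}, which has no fermionic counterpart. The case analysis therefore has to be organised around three distinct obstructions (missing $H$; the numerical incompatibility; the $u(1)$--$u(1)$ cross term), not around a single family of rescaled identities.
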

\begin{proof}
Not only do we have different possibilities for the fields $\mathbb{F}_{i,j}$ over which the components are defined, but we can also have various combinations for the values of the $R$-parity. We cover all possible cases one by one.\\
 
If $R = +1$ on the representations in the finite Hilbert space that describe the gauginos, then the gauginos and gauge bosons have the same $R$-parity and the particle content is not supersymmetric. \\

If $R = -1$ for these representations, and $R = +1$ on the off-diagonal representations, suppose at least one of the $\mathbb{F}_i$, $\mathbb{F}_j$ is equal to $\mathbb{R}$ or $\mathbb{H}$. Then using Lemma \ref{lem:bb2-otherfields} we see that after application of the unimodularity condition \eqref{eq:unimod_new} there is no $u(1)$-valued gauge field left. Lemma \ref{lem:gobinogo} then also causes the absence of a $u(1)$-auxiliary field that is needed to write the four-scalar action off shell as in Lemma \ref{lem:bb2-offshell}. If both $\mathbb{F}_i$ and $\mathbb{F}_j$ are equal to $\com$ we revert to Proposition \ref{lem:bb2-nosol} to show that there is no supersymmetric solution for $M$ and $N_{i,j}$ that satisfies the demands for $\Cw{i,j}$, $\Cw{j,i}$ and $\P_{i,j}$ from supersymmetry.\\

In the third case $R = -1$ on the off-diagonal representations in $\H_F$. If both $\mathbb{F}_{i,j}$ are equal to $\mathbb{R}$ or $\mathbb{H}$ then there is no $u(1)$ gauge field and thus the spectral action cannot be written off shell. If either $\mathbb{F}_i$ or $\mathbb{F}_j$ equals $\mathbb{R}$ or $\mathbb{H}$, then there is one $u(1)$-field, but the calculation for the action carries through as in Proposition \ref{lem:bb2-nosol} and there is no supersymmetric solution for $M$ and $N_{i,j}$. Finally, if both $\mathbb{F}_{i,j}$ are equal to $\com$, there are two $u(1)$-fields and the cross term as in Remark \ref{rmk:bb2-obstr} spoils supersymmetry. \\

Thus, all almost-commutative geometries for which $\A_F = M_{N_i}(\mathbb{F}_i) \oplus M_{N_j}(\mathbb{F}_j)$ and that have off-diagonal representations fail to be supersymmetric off shell.
\end{proof}

The set up described in this section has the same particle content as the supersymmetric version of a single ($R = +1$) particle--antiparticle pair and corresponds in that respect to a single chiral superfield in the superfield formalism \cite[4.3]{DGR04}. In constrast, its action is not fully supersymmetric. 
We stress however, that the scope of Proposition \ref{lem:bb2-nosol} is that of a \emph{single} building block of the second type. As was mentioned before, the expressions for many of the coefficients typically vary with the contents of the finite spectral triple and they should only be assessed for the full model. \\

Another interesting difference with the superfield formalism is that a building block of the second type really requires two building blocks of the first type, describing gauginos and gauge bosons. In the superfield formalism a theory consisting of only a chiral multiplet, not having gauge interactions, is in many textbooks the first model to be considered. This underlines that noncommutative geometry inherently describes gauge theories.\\

There are ways to extend almost-commutative geometries by introducing new types of building blocks ---giving new possibilities for supersymmetry--- or by combining ones that we have already defined. In the next section we will cover an example of the latter situation, in which there arise interactions between two or more building blocks of the second type.

\subsubsection{Interaction between building blocks of the second type}\label{sec:2bb2}

In the previous section we have fully exploited the options that a finite algebra with two components over the complex numbers gave us. If we want to extend our theory, the finite algebra \eqref{eq:alg} needs to have a third summand --- say $M_{N_k}(\com)$. A building block of the first type (cf.~Section \ref{sec:bb1}) can easily be added, but then we already stumble upon severe problems:
\begin{prop}\label{prop:2bb2-obstr}
	The action \eqref{eq:spectral_action_acg_flat} of an almost-commutative geometry whose finite algebra consists of three summands $M_{N_{i,j,k}}(\com)$ over $\com$ and whose finite Hilbert space features building blocks \Bc{ij}{\pm} and \Bc{ik}{\pm} is not supersymmetric.
\end{prop}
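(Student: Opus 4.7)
The plan is to mimic the strategy used in Proposition \ref{lem:bb2-nosol}, exploiting the fact that the two building blocks share the common component $\mathcal{B}_i$ and hence the common auxiliary field $G_i$. First I would write out the extra contributions to the spectral action \eqref{eq:spectral_action_acg_flat} coming from the new building block \Bc{ik}{\pm}, on top of those already computed in \eqref{eq:bb2-action-offshell} for \Bc{ij}{\pm}. The new fermion--sfermion--gaugino terms and the new four--scalar self--interactions of $\sfer_{ik}$ are structurally identical to those of \Bc{ij}{\pm}. The crucial new ingredient is the \emph{cross} four-scalar term $\propto |\sfer_{ij}|^2|\sfer_{ik}|^2$, which according to the Krajewski-diagram rules of Figure~\ref{fig:KrajPaths} arises from the closed path $(i,j)\to(i,i)\to(i,k)\to(i,i)\to(i,j)$ passing twice through the shared diagonal vertex. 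Its coefficient is fixed in terms of $C_{iij}^{*}C_{iij}$ and $C_{iik}^{*}C_{iik}$ and does not involve any free parameters.

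Second, I would attempt the off-shell rewriting of the four-scalar terms along the lines of Lemma~\ref{lem:bb2-offshell}. Since there is only one auxiliary field $G_i\in C^{\infty}(M,su(N_i))$ associated to \B{i}, it must couple to both sfermion bilinears simultaneously, via a Lagrangian of the form
\begin{align*}
-\tfrac{1}{2n_i}\tr G_i^{2} \;-\; \tr G_i\bigl(\P_i^{(ij)}\sfer_{ij}\asfer_{ij}+\P_i^{(ik)}\sfer_{ik}\asfer_{ik}\bigr),
\end{align*}
with the analogous story for $G_j$, $G_k$ and the remaining $u(1)$ auxiliary field $H$. On shell, integrating out $G_i$ produces a cross term $\propto \tr\bigl(\P_i^{(ij)}\sfer_{ij}\asfer_{ij}\,\P_i^{(ik)}\sfer_{ik}\asfer_{ik}\bigr)$ whose coefficient is rigidly determined by the supersymmetry conditions \eqref{eq:bb2-resultCiij} applied separately to each building block (so in particular $\P_i^{(ij)},\P_i^{(ik)}\propto g_i/\sqrt{\K_i}$ after rescaling). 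I would then compare this to the cross term actually delivered by the spectral action, after applying the rescalings \eqref{eq:bb3-scalingfields} on each of $\sfer_{ij}$ and $\sfer_{ik}$ separately (with the two different $\n_{ij},\n_{ik}$ now depending on both building blocks, cf.\ Remark~\ref{rmk:bb2-rmk}). The two coefficients will fail to match, since the spectral--action coefficient factorises as a product over the \emph{independent} pairs $(C_{iij},C_{iik})$ while the auxiliary-field coefficient is set by the single combination that was already fixed (and, moreover, was already inconsistent at the level of one \B{ij} as per Proposition~\ref{lem:bb2-nosol}).

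Alternatively, and perhaps more transparently, one can perform the supersymmetry variation of the spectral action directly. Transforming $\sfer_{ij}$ or $\asfer_{ij}$ via \eqref{eq:transforms4.1} in the cross term $|\sfer_{ij}|^2|\sfer_{ik}|^2$ produces a fermionic monomial of the form $(\overline{J_M\epsilon_L}\,\fer{ij})\,\asfer_{ij}\,|\sfer_{ik}|^2$ (plus its conjugate). For this to cancel one would need a fermion--sfermion--gaugino coupling whose transformation under \eqref{eq:bb1-transforms2} produces precisely such a term; the only candidates are the Yukawa couplings in \eqref{eq:bb2-action-offshell} with the gaugino replaced by its $G_i$-part, but the coefficients set by Theorem~\ref{prop:bb2} for \Bc{ij}{\pm} and \Bc{ik}{\pm} individually do not conspire to cancel the variation simultaneously. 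In addition, the obstruction of Remark~\ref{rmk:bb2-obstr} now appears in an aggravated form, because the $u(1)$ parts of $A_{\mu i}$, $A_{\mu j}$, $A_{\mu k}$ all act on both \rep{i}{j} and \rep{i}{k}, leaving uncancelled cross kinetic terms of the form $\int_M B_{\mu\nu}^{a}B^{b\,\mu\nu}$ whose supersymmetric partners $\inpr{J_M\gau{a}^{0}}{\dirac\gau{b}^{0}}$ are absent from the fermionic action.

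The main obstacle will be the bookkeeping of the various prefactors $\K_i,\K_j,\K_k$, $\n_{ij},\n_{ik}$ and $n_{i,j,k}$ after rescaling, since each of them receives contributions from the additional representations present in the extended Hilbert space (cf.\ Remark~\ref{rmk:bb2-rmk}). I expect that the cleanest path is the first one: isolate the single cross term, compute its coefficient on both the spectral-action side and the auxiliary-field side, and exhibit their numerical mismatch. Theorem~\ref{prop:bb2} already supplies the fixed values of $C_{iij},C_{iik},\P_i^{(ij)},\P_i^{(ik)}$ needed for the comparison, so the argument reduces to a single algebraic inequality.
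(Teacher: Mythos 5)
Your main line of attack --- exhibiting a numerical mismatch in the four-scalar cross term between the spectral action and the $G_i$/$H$ auxiliary-field rewriting --- has a genuine gap, in two respects. First, the mismatch is asserted rather than computed: you propose to import the inconsistency from Proposition \ref{lem:bb2-nosol}, but that computation was done for a two-summand algebra, and the paper repeatedly stresses (Remark \ref{rmk:bb2-rmk}, and the discussion after Proposition \ref{lem:bb2-nosol}) that the normalisation constants $\K_i$ and the pre-factors $\n_{ij}$, $\n_{ik}$ all change when the Hilbert space is enlarged, so the earlier inequality cannot simply be carried over. Second, and more fundamentally, the four-scalar cross terms are not the decisive obstruction: Section \ref{sec:2bb2} shows that the discrepancy between the cross terms the spectral action provides (\eqref{eq:2bb2s-same} or \eqref{eq:2bb2s-different}) and those the auxiliary fields require (\eqref{eq:2bb2-different-aux}) is precisely what a building block of the third type is designed to repair. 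An argument resting on that sector therefore does not establish the proposition, whose force is that \emph{no} extension of the finite Dirac operator can save supersymmetry as long as the algebra has three summands over $\com$.

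The paper's actual proof is the observation you relegate to an aside at the end of your ``alternative'' paragraph, and it is essentially the whole argument: three summands $M_{N_{i,j,k}}(\com)$ yield three independent $u(1)$ gauge fields $B_{i}$, $B_{j}$, $B_{k}$; the unimodularity condition \eqref{eq:unimod_new} imposes only a single linear relation among them (and none at all if both off-diagonal representations have $R=-1$), so at least two genuinely independent $u(1)$ fields survive and both act on one of the representations $\rep{i}{j}$ or $\rep{i}{k}$. The gauge kinetic term on that representation then contains a cross term $\propto \int_M B_{a\mu\nu}B_b^{\mu\nu}$ of \emph{distinct} $u(1)$ field strengths, and by Remark \ref{rmk:bb2-obstr} its supersymmetry variation demands a gaugino mixing term $\inpr{J_M\gau{a}^0}{\dirac\gau{b}^0}$ that the fermionic action cannot supply. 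Note also that your phrasing ``the $u(1)$ parts of $A_{\mu i}$, $A_{\mu j}$, $A_{\mu k}$ all act on both $\rep{i}{j}$ and $\rep{i}{k}$'' is not accurate ($B_k$ does not act on $\rep{i}{j}$); the point is only that after unimodularity two independent abelian fields still meet on a single representation. If you promote that counting argument to the centrepiece and drop the four-scalar analysis, you recover the paper's proof.
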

\begin{proof}
	The inner fluctuations of the canonical Dirac operator on \rep{i}{j} and \rep{i}{k} read:
	\bas
		& \dirac + g_iA_i - g_jA_j^o + \frac{g_{B_i}}{N_i}B_i - \frac{g_{B_j}}{N_j}B_j, &
		& \dirac + g_iA_i - g_kA_k^o + \frac{g_{B_i}}{N_i}B_i - \frac{g_{B_k}}{N_k}B_k,
	\eas	
	where $A_{i,j,k} = \gamma_\mu A^{\mu}_{i, j, k}$, with $A^{\mu}_{i,j, k}(x) \in su(N_{i,j, k})$ and similarly $B^{\mu}_{i,j, k}(x) \in u(1)$. The unimodularity condition will, in the case that the representation of at least one of the two building blocks has $R = +1$, leave two of the three independent $u(1)$ fields ---say--- $B_i$ and $B_j$. The kinetic terms of the gauge bosons on both representations will then feature a cross term \eqref{eq:u1cross} of different $u(1)$ field strengths, an obstruction for supersymmetry.
\end{proof}

To resolve this, we allow ---inspired by the NCSM--- for one or more copies of the quaternions $\mathbb{H}$ in the finite algebra. If we define a building block of the first type over such a component (with the finite Hilbert space $M_2(\com)$ as a bimodule of the complexification $M_1(\mathbb{H})^\com = M_2(\com)$ of the algebra, instead of $\mathbb{H}$ itself, cf.~\cite[\S 4.1]{Bhowmick2011}, \cite{CC08}), the self-adjoint inner fluctuations of the canonical Dirac operator are already seen to be in $su(2)$ (e.g.~traceless) prior to applying the unimodularity condition. On a representation \rep{i}{j} (from a building block \Bc{ij}{\pm} of the second type), of which one of the indices comes from a component $\mathbb{H}$, only one $u(1)$ field will act.

\emph{From here on, using three or more components in the algebra, we will always assume at most two to be of the form $M_N(\com)$ and all others to be equal to $\mathbb{H}$}.

The action of an almost commutative geometry whose finite spectral triple features two building blocks of the second type sharing one of their indices (i.e.~that are in the same row or column in a Krajewski diagram) contains extra four-scalar contributions. The specific form of these terms depends on the value of the grading and of the indices appearing. When the first indices of two building blocks are the same, and they have the same grading (e.g.~\Bc{ji}{+} and \Bc{jk}{+}, cf.~Figure \ref{fig:2bb2s-different}) the resulting extra interactions are given by
\ba
	S_{ij,jk}[\sfer_{ij}, \sfer_{jk}] &= \frac{f(0)}{\pi^2} N_j \int_M |C_{ijj}\sfer_{ij}C_{jjk}\sfer_{jk}|^2 \sqrt{g}\mathrm{d}^4x.\label{eq:2bb2s-same}
\ea 
In the other case (cf.~Figure \ref{fig:2bb2s-same}) it is given by
\ba
S_{ij,jk}[\sfer_{ij}, \sfer_{jk}] &=	\frac{f(0)}{\pi^2} \int_M |C_{ijj}\sfer_{ij}|^2|C_{jjk}\sfer_{jk}|^2\sqrt{g}\mathrm{d}^4x.\label{eq:2bb2s-different}
\ea 
The paths corresponding to these contributions are depicted in Figure \ref{fig:2bb2s}.

\begin{figure}[ht]
		\captionsetup{width=.9\textwidth}
	\centering
	\begin{subfigure}{.4\textwidth}
		\centering
		\def\svgwidth{\textwidth}
		\includesvg{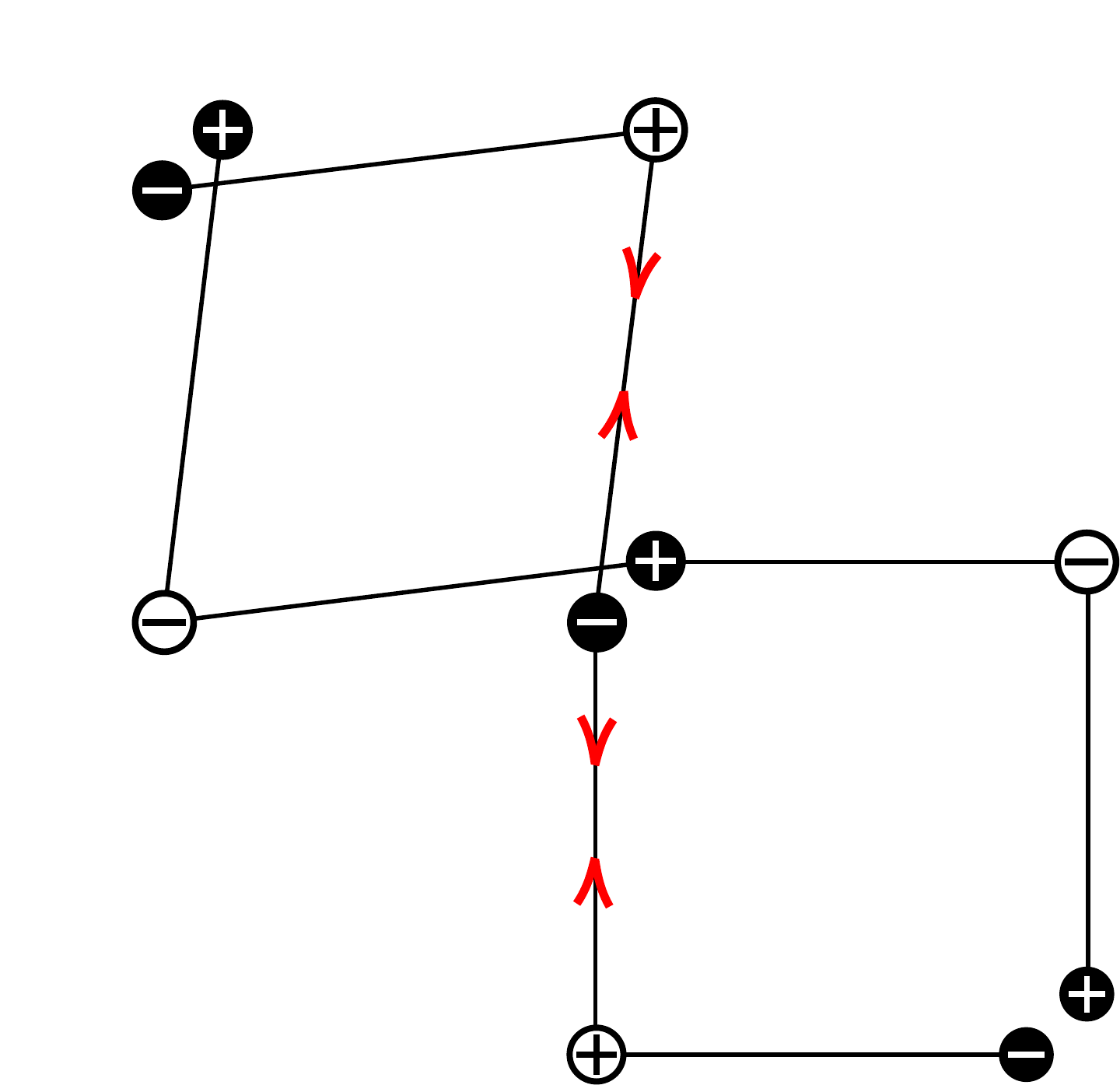}	
		\caption{Contributions when the gradings of the building block are different.}
		\label{fig:2bb2s-different}
	\end{subfigure}
	\hspace{30pt}
	\begin{subfigure}{.4\textwidth}
		\centering
		\def\svgwidth{\textwidth}
		\includesvg{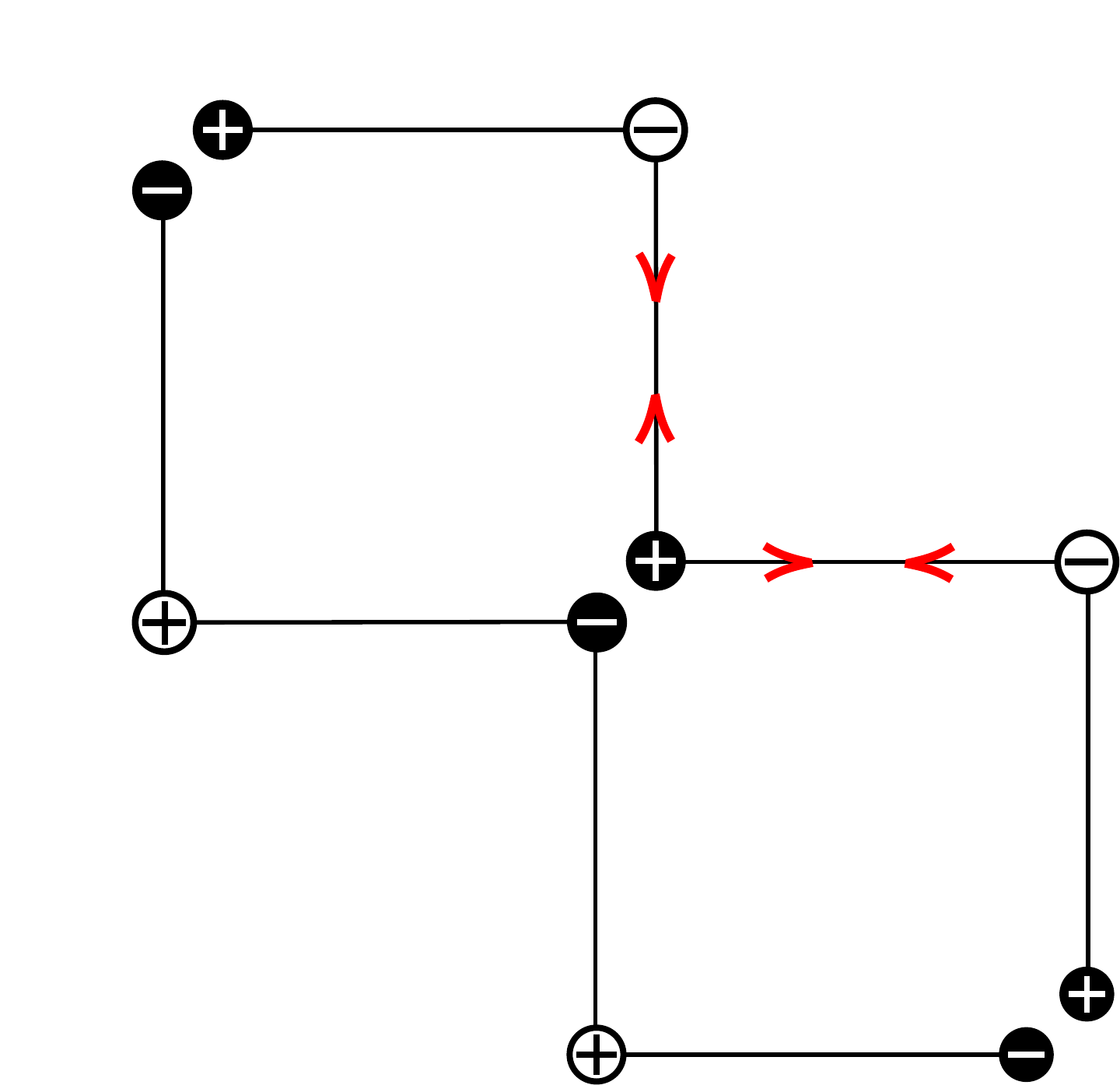}
		\caption{Contributions when the gradings of the building blocks are the same.}
		\label{fig:2bb2s-same}
	\end{subfigure}
\caption{In the case that there are two building blocks of the second type sharing one of their indices, there are extra interactions in the action.}
\label{fig:2bb2s}
\end{figure}

However, to write all four-scalar interactions from the spectral action off shell in terms of the auxiliary fields $G_{i,j,k}$, one requires interactions of the form of both \eqref{eq:2bb2s-same} and \eqref{eq:2bb2s-different} to be present. The reason for this is the following. Upon writing the four-scalar part of the action of the building blocks \B{ij} and \B{jk} in terms of the auxiliary fields as in Lemma \ref{lem:bb2-offshell}, we find for the terms with $G_j$ in particular:
\bas
	- \frac{1}{2n_j}\tr_{N_j} G_j^2 - \tr_{N_j} G_j\Big(\asfer_{ij}\P_{j,i}'\sfer_{ij}\Big) - \tr_{N_j} G_j\Big(\P_{j,k}'\sfer_{jk}\asfer_{jk}\Big).
\eas
On shell, the cross terms of this expression then give the additional four-scalar interaction
\ba\label{eq:2bb2-different-aux}
	&  n_j|\P_{j,i}'^{1/2}\sfer_{ij}\P_{j,k}'^{1/2}\sfer_{jk}|^2 - \frac{n_j}{N_j}|\P_{j,i}'^{1/2}\sfer_{ij}|^2|\P_{j,i}'^{1/2}\sfer_{jk}|^2.
\ea
When the scaled counterparts \eqref{eq:bb2-scalingG} of $\P_{j,i}'$ and $\P_{j,k}'$ satisfy the constraints \eqref{eq:bb2-resultCiij} for supersymmetry, this interaction reads
\bas
	&n_jg_j^2\Big( |\sfer_{ij}\sfer_{jk}|^2 - \frac{1}{N_j}|\sfer_{ij}|^2|\sfer_{jk}|^2\Big)
\eas
after scaling the fields. When having two or more building blocks of the second type that share one of their indices, we have either \eqref{eq:2bb2s-same} or \eqref{eq:2bb2s-different} in the spectral action, while we need \eqref{eq:2bb2-different-aux} for a supersymmetric action. To possibly restore supersymmetry we need additional interactions, such as those of the next section.

\subsection{Third building block: extra interactions}\label{sec:bb3}

\begin{figure}
	\begin{center}
		\def\svgwidth{.4\textwidth}
		\includesvg{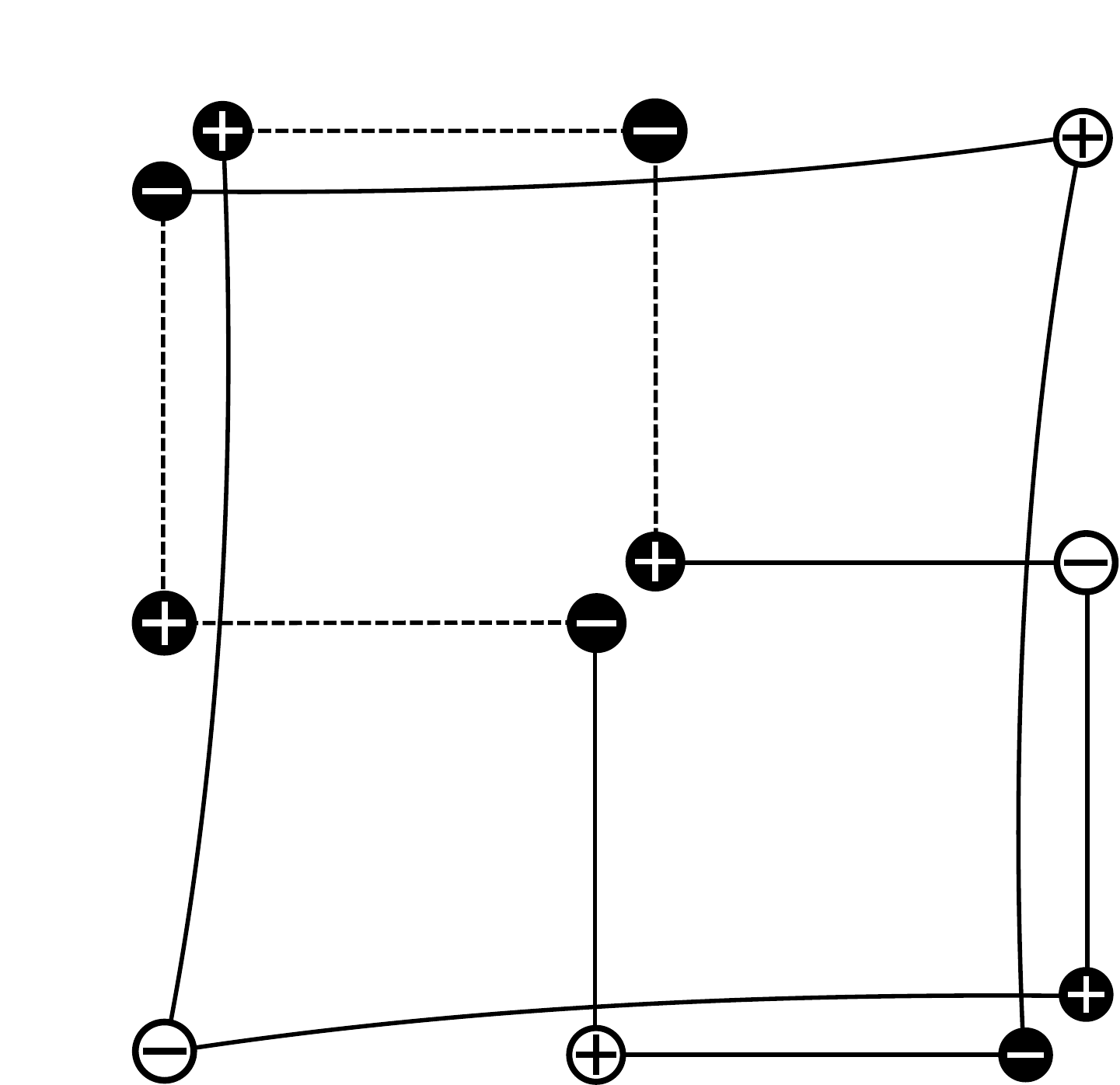}	
		\captionsetup{width=.9\textwidth}
	\caption{A situation in which all three building blocks of the second type are present whose two indices are either $i$, $j$ or $k$.} 
	\label{fig:3bb2}	
\end{center}
\end{figure}	

In a situation in which the finite algebra has three components and there are two adjacent building blocks of the second type, as depicted in Figure \ref{fig:2bb2s-same}, there is allowed a component 
		\begin{align}\label{eq:bb3-possible-comp}
		 	D_{ij}^{\phantom{ij}kj} : \rep{k}{j} \to \rep{i}{j}
		\end{align}
of the finite Dirac operator. We parametrize it with $\yuks{i}{k}$, that acts (non-trivially) on family space. Such a component satisfies the first order condition and its inner fluctuations 
	\bas
		 \sum_n a_n [\D{ij}{kj}, b_n] = \sum_n (a_i)_{n}\Big(\yuks{i}{k} (b_k)_{n} - (b_i)_{n}\yuks{i}{k}\Big)
	\eas
generate a scalar $\sfer_{ik} \in \rep{i}{k}$. Since there is no corresponding fermion $\fer{ik}$ present, a necessary condition for restoring supersymmetry is the existence of a building block \Bc{ik}{\pm} of the second type. The component \eqref{eq:bb3-possible-comp} then gives ---amongst others--- an extra fermionic contribution 
\begin{align*}
	\inpr{J_M\afer{ij}}{\gamma^5 \yuks{i}{k}\sfer_{ik}\afer{jk}}
\end{align*}
to the action. 
Using the transformations \eqref{eq:susytransforms4} and \eqref{eq:susytransforms5}, under which a building block of the second type is supersymmetric, 
we infer that this new term spoils supersymmetry. To overcome this, we need to add two extra components
\begin{align*}
\D{jk}{ik} : \rep{i}{k} & \to \rep{j}{k}, &\D{ij}{ik} : \rep{i}{k} &\to \rep{i}{j}
\end{align*}
to the finite Dirac operator, as well as their adjoints and the components that can be obtained by demanding that $[D_F,J_F] = 0$. We parametrize these two components with $\yuks{i}{j}$ and $\yuks{j}{k}$ respectively. They give extra contributions to the fermionic action that are of the form
\begin{align*}
	 \inpr{J_M\afer{jk}}{\gamma^5 \asfer_{ij}\yuks{i}{j}\fer{ik}} + \inpr{J_M\afer{ij}}{\gamma^5 \fer{ik}\asfer_{jk}\yuks{j}{k}}.
\end{align*}
Both components require the representation \rep{i}{k} to have an eigenvalue of $\gamma_F$ that is opposite to those of \rep{i}{j} and \rep{j}{k}. This is the situation as is depicted in Figure \ref{fig:3bb2}.

This brings us to the following definition:
\begin{defin}\label{def:bb3}
For an almost-commutative geometry in which \Bc{ij}{\pm}, \Bc{ik}{\mp} and \Bc{jk}{\pm} are present, a \emph{building block of the third type} \B{ijk} is the collection of all allowed components of the Dirac operator, mapping between the three representations \rep{i}{j}, \rep{i}{k} and \rep{k}{j} and their conjugates. Symbolically it is denoted by
\begin{align}\label{eq:bb3-comps}
	\BBB{ijk} = (0, \D{ij}{kj} + \D{jk}{ik} + \D{ij}{ik}) \in \H_F \oplus \End(\H_F).
\end{align}
\end{defin}
The Krajewski diagram corresponding to \B{ijk} is depicted in Figure \ref{fig:bb3}.

The parameters of \eqref{eq:bb3-comps} are chosen such that the sfermions $\sfer_{ij}$ and $\sfer_{jk}$ are generated by the inner fluctuations of \yuk{i}{j} and \yuk{j}{k} respectively, whereas $\sfer_{ik}$ is generated by $\yuks{i}{k}$. This is because $\sfer_{ik}$ crosses the particle/antiparticle-diagonal in the Krajewski diagram. Note that $i$, $j$, $k$ are labels, not matrix indices.

\begin{figure}[ht]
\centering
\begin{subfigure}{.4\textwidth}
\centering
		\def\svgwidth{\textwidth}
		\includesvg{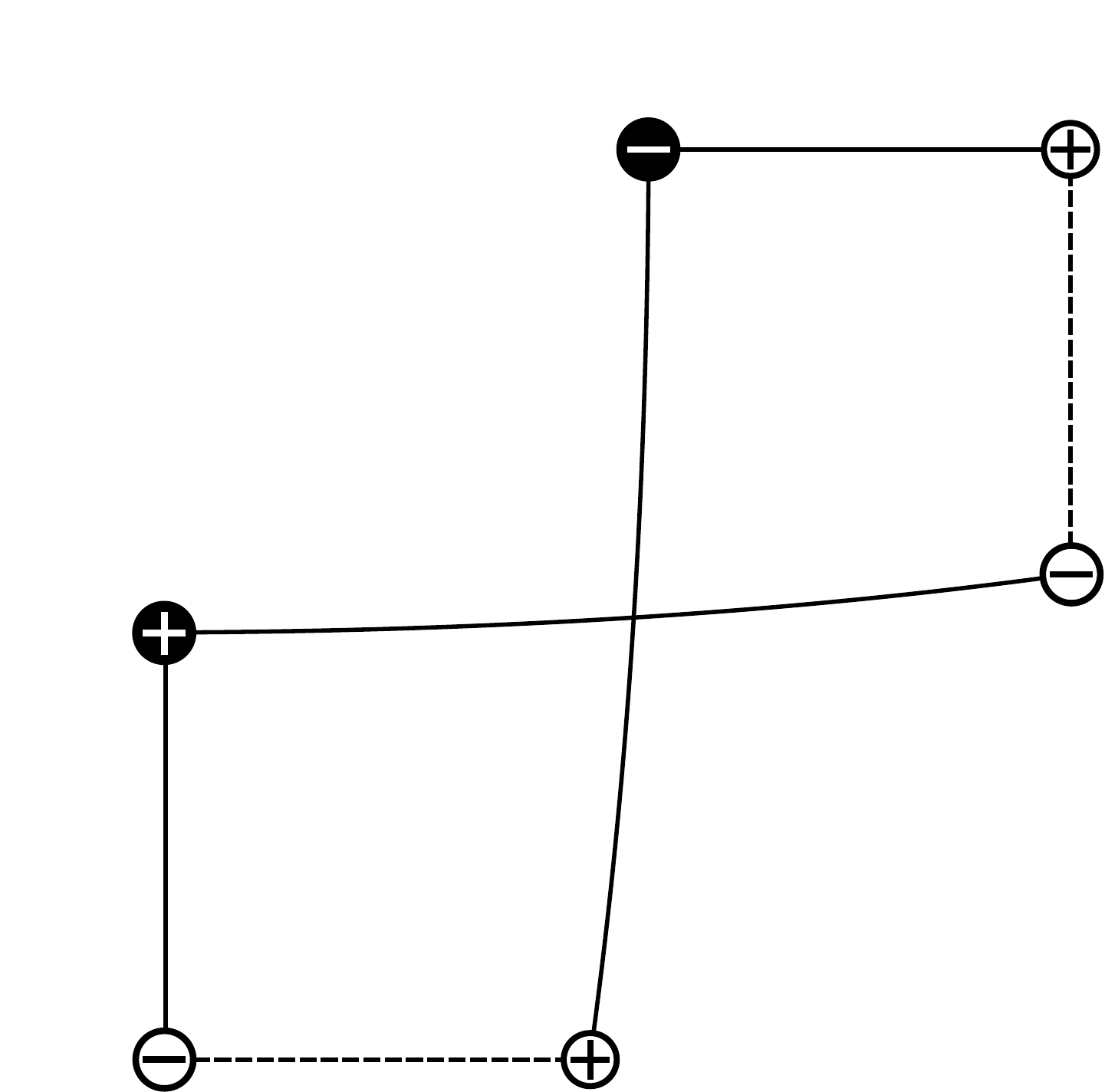}	
	\caption{For clarity we have omitted here the edges and vertices that stem from the building blocks of the first and second type.} 
	\label{fig:bb3-kraj}
\end{subfigure}
\hspace{30pt}
\begin{subfigure}{.4\textwidth}
		\centering
		\def\svgwidth{\textwidth}
		\includesvg{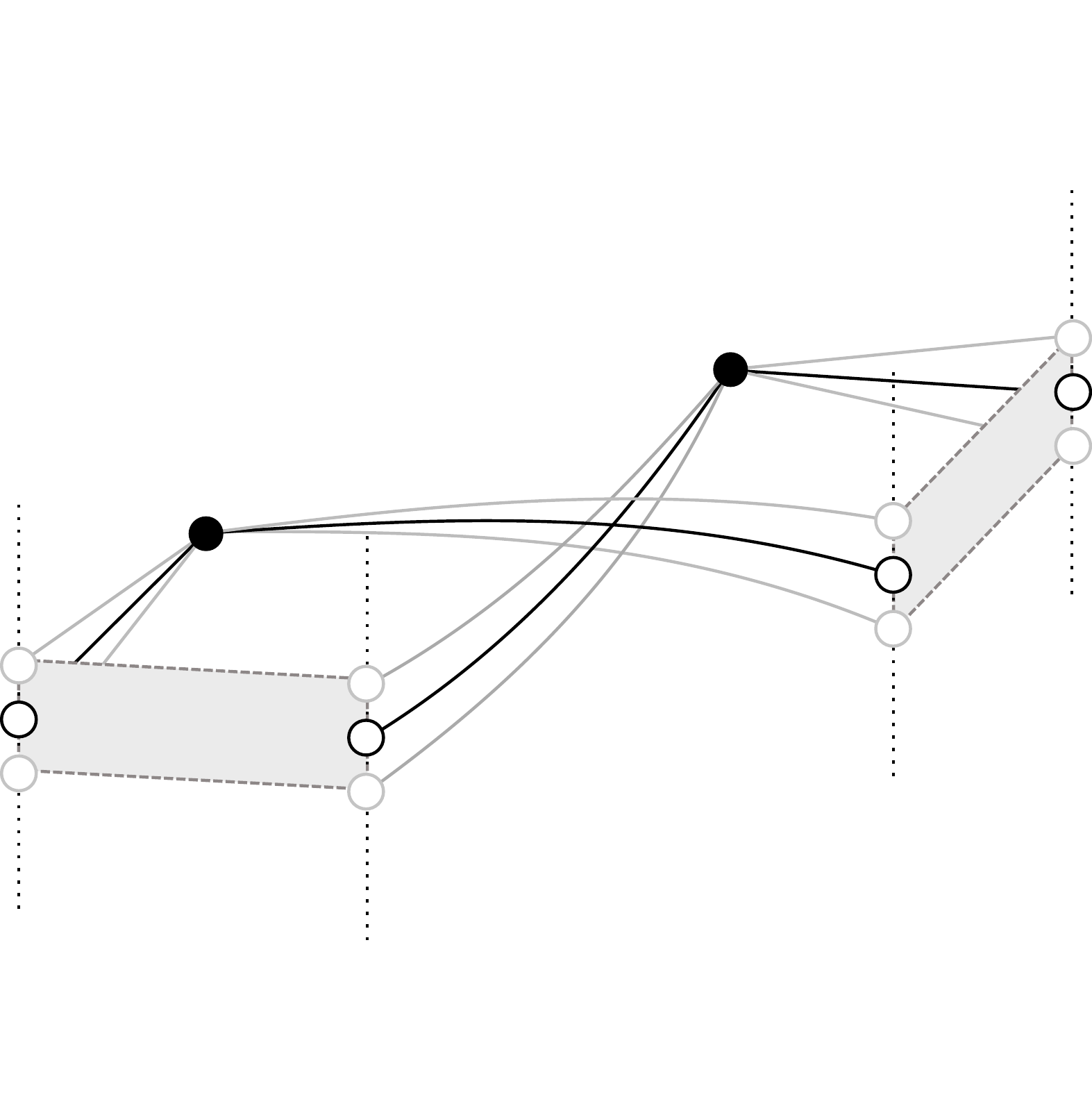}
		\caption{The same building block as shown on the left side but with the possible family structure of the two scalar fields with $R = 1$ being visualized.}
		\label{fig:bb3-gen}
\end{subfigure}
	\caption{A building block \B{ijk} of the third type in the language of Krajewski diagrams.}
	\label{fig:bb3}
\end{figure}	

There are several possible values of $R$ that the vertices and edges can have. Requiring a grading that yields $-1$ on each of the diagonal vertices, all possibilities for an explicit construction of $R \in \A_F \otimes \A_F^o$ are given by $R = - P\otimes P^o$, $P = (\pm 1, \pm 1, \pm 1) \in \A_F$ where each of the three signs can vary independently. This yields 8 possibilities, but each of them appears in fact twice. Of the effectively four remaining combinations, three have one off-diagonal vertex that has $R = -1$ and in the other combination all three off-diagonal vertices have $R = -1$. These four possibilities are depicted in Figure \ref{fig:bb3Pos}. We will typically work in the case of the first image of Figure \ref{fig:bb3Pos}, as is visualised in Figure \ref{fig:bb3-gen}, and will indicate where changes might occur when working in one of the other possibilities. If in this context the $R = 1$ representations in $\H_F$ come in $M$ copies (`generations'), all components of the finite Dirac operator are in general acting non-trivially on these $M$ copies, except $C_{iij}$ and $C_{ijj}$, since they parametrize components of the finite Dirac operator mapping between $R = -1$ representations. 

\begin{figure}[ht]
	\begin{center}
		\captionsetup{width=.9\textwidth}
		\includegraphics[width=.8\textwidth]{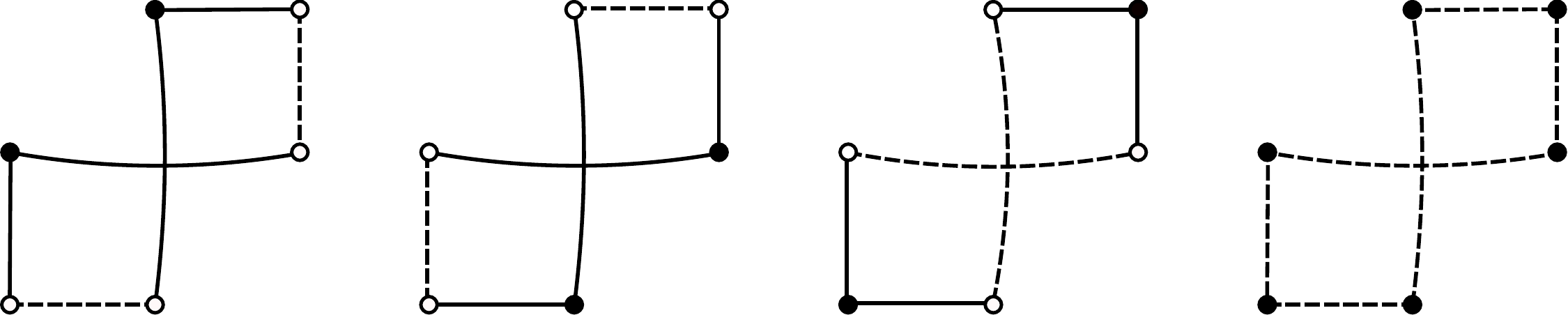}
	\caption{All possible combinations of values for the $R$-parity operator in a building block of the third type. Three of those possibilities have one representation on which $R = -1$, in the other possibility all three of them have $R = -1$. This last option essentially entails having no family structure.} 
	\label{fig:bb3Pos}	
\end{center}
\end{figure}

Note that in the action the expressions \eqref{eq:exprN} for the pre-factors $\n_{ij}^2$, $\n_{ik}^2$ and $\n_{jk}^2$ of the sfermion kinetic terms all get an extra contribution from the new edges of the Krajewski diagram of Figure~\ref{fig:bb3}. The first of these becomes
\ba
	\n_{ij}^2 & \to\frac{f(0)}{2\pi^2}(N_iC_{iij}^*C_{iij} + N_jC_{ijj}^*C_{ijj} + N_k\yuks{i}{j}\yuk{i}{j})\label{eq:kin_term_bb3}.
\ea
The other two can be obtained replacing $N_i$, $C_{iij}$, $C_{ijj}$ and $\yuk{i}{j}$ by their respective analogues.

The presence of a building block of the third type allows us to take a specific parametrization of the $C_{iij}$ in terms of $\yuk{i}{j}$. To this end, we introduce the shorthand notations
\ba
	q_i &:= \frac{f(0)}{\pi^2}g_i^2, & r_i &:= q_in_i, & \w{ij} &:= 1 - r_iN_i - r_jN_j,\label{eq:kintermnorm}
\ea
where we can infer from the normalization of the kinetic terms of the gauge bosons (i.e.~setting $\K_i = 1$) that $q_i$ must be rational. Then, similarly as in Proposition \ref{lem:bb2-nosol}, we write out $C_{iij}^*C_{iij}$, with $C_{iij}$ satisfying \eqref{eq:bb2-resultCiij} from supersymmetry, and insert the pre-factor \eqref{eq:kin_term_bb3} of the kinetic term. This reads 
\bas
C_{iij}^*C_{iij} &= r_i\big(N_i C_{iij}^*C_{iij} + N_jC_{ijj}^*C_{ijj} + N_k\yuks{i}{j}\yuk{i}{j}\big).
\eas
Using $r_i C_{ijj}^*C_{ijj} = r_jC_{iij}^*C_{iij}$, which can be directly obtained from the result \eqref{eq:bb2-resultCiij}, we obtain
\ba
	C_{iij}^*C_{iij} &= \frac{r_i}{\w{ij}}N_k\yuks{i}{j}\yuk{i}{j}\label{eq:bb3-expressionCs}
\ea
for the parametrization of $C_{iij}$ that satisfies \eqref{eq:bb2-resultCiij}. For future convenience we will take 
\ba\label{eq:bb3-expressionCs-sqrt}
	C_{iij} = \sgnc_{i,j} \sqrt{\frac{r_i}{\w{ij}}}\big(N_k \yuks{i}{j}\yuk{i}{j}\big)^{1/2},
\ea
with $\sgnc_{i,j} \in \{\pm\}$ the sign introduced in Theorem \ref{prop:bb2}. The other parameter, $C_{ijj}$, can be obtained by $r_i \to r_j$, $\sgnc_{i,j} \to \sgnc_{j,i}$. This yields for the pre-factor \eqref{eq:kin_term_bb3} of the kinetic term of $\sfer_{ij}$:
\ba
\n_{ij}^2&= \frac{f(0)}{2\pi^2}\bigg(N_i\frac{r_i}{\w{ij}} + N_j\frac{r_j}{\w{ij}} + 1\bigg) N_k\yuks{i}{j}\yuk{i}{j} = \frac{f(0)}{2\pi^2}\frac{1}{\w{ij}}N_k\yuks{i}{j}\yuk{i}{j}. \label{eq:bb3-exprN}
\ea
prior to the scaling \eqref{eq:bb3-scalingfields}. When $\sfer_{ij}$ has $R = 1$ and therefore does not carry a family structure (as in Figure \eqref{fig:bb3-gen}) then the trace over the representations where $\sfer_{ij}\asfer_{ij}$ and $\asfer_{ij}\sfer_{ij}$ are in, decouples from that over $M_{M}(\com)$. Consequently, the third term in \eqref{eq:kin_term_bb3} and the right hand sides of the solutions \eqref{eq:bb3-expressionCs-sqrt} and \eqref{eq:bb3-exprN} receive additional traces over family indices, i.e.~$N_k \yuks{i}{j}\yuk{i}{j} \to N_k \tr_M \yuks{i}{j}\yuk{i}{j}$. The strategy to write $C_{iij}$ in terms of parameters of building blocks of the third type works equally well when the kinetic term of $\sfer_{ij}$ gets contributions from multiple building blocks of the third type. In that case $N_k\yuks{i}{j}\yuk{i}{j}$ must be replaced by a sum of all such terms: $\sum_l N_l\yuks{i,l}{j}\yuk{i,l}{j}$ (see e.g.~Section \ref{sec:2bb3}), where the label $l$ is used to distinguish the building blocks \B{ijl} that all give a contribution to the kinetic term of $\sfer_{ij}$. 

There are several contributions to the action as a result of adding a building block of the third type. 
The action is given by
\begin{align}
	S_{ijk}[\zeta, \szeta] = S_{f, ijk}[\zeta, \szeta] + S_{b,ijk}[\szeta], \label{eq:bb3action}
\end{align}
with its fermionic part $S_{f, ijk}[\zeta, \szeta]$ reading
\begin{align}
	S_{f, ijk}[\zeta, \szeta] &= 
\inpr{J_M \afer{ij}}{\gamma^5\fer{ik}\asfer_{jk}\yuks{j}{k}} + 
\inpr{J_M \afer{ij}}{\gamma^5\yuks{i}{k}\sfer_{ik}\afer{jk}} + 
\inpr{J_M\afer{jk}}{\gamma^5\asfer_{ij}\yuks{i}{j}\fer{ik}} \nn\\ 
&\qquad + \inpr{J_M\afer{ik}}{\gamma^5\yuk{i}{j}\sfer_{ij}\fer{jk}} + 
\inpr{J_M\afer{ik}}{\gamma^5\fer{ij}\yuk{j}{k}\sfer_{jk}} + 
\inpr{J_M\fer{jk}}{\gamma^5\asfer_{ik}\yuk{i}{k}\fer{ij}}.\label{eq:bb3-action-ferm}
\end{align}
 The bosonic part of the action is given by:
\ba
	S_{b,ijk}[\szeta]	&=	\frac{f(0)}{2\pi^2} \Big[N_i|\yuk{j}{k}\sfer_{jk}\asfer_{jk}\yuks{j}{k}|^2 + N_j|\yuks{i}{k}\sfer_{ik}\asfer_{ik}\yuk{i}{k}|^2 +  N_k\tr_M(\yuks{i}{j}\yuk{i}{j})^2|\sfer_{ij}\asfer_{ij}|^2\Big] \nn\\
			&\qquad + S_{b,ij,jk}[\szeta] + S_{b,ik,jk}[\szeta] + S_{b,ij,ik}[\szeta], \label{eq:bb3-boson-action}
\ea
with
\ba
		S_{b,ij,jk}[\szeta]	&=	\frac{f(0)}{\pi^2} \Big[N_i |C_{iij}\sfer_{ij}\yuk{j}{k}\sfer_{jk}|^2 + N_k |\yuk{i}{j}\sfer_{ij}C_{jkk}\sfer_{jk}|^2 + |\sfer_{ij}|^2|\yuks{i}{j}\yuk{j}{k}\sfer_{jk}|^2\nn\\
		&\qquad	+ \Big(\tr \asfer_{jk}\yuks{j}{k}(\asfer_{ij}C_{iij}^*)^o(C_{iij}\sfer_{ij})^o \yuk{j}{k}\sfer_{jk} + \tr\asfer_{jk}C_{jjk}^*(\asfer_{ij}\yuks{i}{j})^o(\yuk{i}{j}\sfer_{ij})^oC_{jjk}\sfer_{jk}\nn\\
	&\qquad\qquad + \tr\asfer_{jk}\yuks{j}{k}(\asfer_{ij}C_{ijj}^*)^o(\yuk{i}{j}\sfer_{ij})^oC_{jjk}\sfer_{jk} + h.c.\Big)\Big],\label{eq:bb3-boson-action-part}
\ea
where the traces above are over $(\rep{k}{i}{})^{\oplus M}$. The fact that in this context $\sfer_{ij}$ has $R = 1$ makes it possible to separate the trace over the family-index in the last term of the first line of \eqref{eq:bb3-boson-action}. A more detailed derivation of the four-scalar action that corresponds to a building block of the third type, including the expressions for $S_{b,ik,jk}[\szeta]$ and $S_{b,ij,ik}[\szeta]$, is given in Appendix \ref{sec:bb3-calc-action}.\\ 

The expression \eqref{eq:bb3-boson-action} contains interactions that in form we either have seen earlier (cf.~\eqref{eq:exprM1}, \eqref{eq:2bb2s-same}) or that we needed but were lacking in a set up consisting only of building blocks of the second type (cf.~\eqref{eq:2bb2s-different}, see also the discussion in Section \ref{sec:2bb2}). In addition, it features terms that we need in order to have a supersymmetric action. \\

We can deduce from the transformations \eqref{eq:susytransforms4} that, for the expression \eqref{eq:bb3-action-ferm} (i.e.~the fermionic action that we have) to be part of a supersymmetric action, the bosonic action must involve terms with the auxiliary fields $F_{ij}$, $F_{ik}$ and $F_{jk}$ (that are available to us from the respective building blocks of the second type), coupled to two scalar fields. We will therefore formulate the most general action featuring these auxiliary fields and constrain its coefficients by demanding it to be supersymmetric in combination with \eqref{eq:bb3-action-ferm}. Subsequently, we will check if and when the spectral action \eqref{eq:bb3-boson-action} (after subtracting the terms that are needed for \eqref{eq:2bb2s-different}) is of the correct form to be written off shell in such a general form. This will be done for the general case in Section \ref{sec:4s-aux}.\\

The most general Lagrangian featuring the auxiliary fields $F_{ij}$, $F_{ik}$, $F_{jk}$ that can yield four-scalar terms is
\begin{align}
	S_{b, ijk, \mathrm{off}}[F_{ij}, F_{ik}, F_{jk}, \szeta] &= \int_M \mathcal{L}_{b, ijk, \mathrm{off}}(F_{ij}, F_{ik}, F_{jk}, \szeta)\sqrt{g}\mathrm{d}^4x, \label{eq:bb3-auxfields}
\end{align}
with
\bas
	\mathcal{L}_{b, ijk, \mathrm{off}}(F_{ij}, F_{ik}, F_{jk}, \szeta) &= - \tr F_{ij}^*F_{ij} + \big(\tr F_{ij}^*\beta_{ij,k}\sfer_{ik}\asfer_{jk} + h.c.\big) \nn\\
	&\qquad\qquad - \tr F_{ik}^*F_{ik} + \big(\tr F_{ik}^*\beta_{ik,j}^*\sfer_{ij}\sfer_{jk} + h.c.\big)\nn\\
			&\qquad\qquad\qquad - \tr F_{jk}^*F_{jk} + \big(\tr F_{jk}^*\beta_{jk,i}\asfer_{ij}\sfer_{ik} + h.c.\big)\nn.
\eas
Here $\beta_{ij,k}$, $\beta_{ik,j}$ and $\beta_{jk,i}$ are matrices acting on the generations and consequently the traces are performed over $\srep{j}^{\oplus M}$ (the first two terms) and $\srep{k}^{\oplus M}$ (the last four terms) respectively. Using the Euler-Lagrange equations the on shell counterpart of \eqref{eq:bb3-auxfields} is seen to be
	\begin{align*}
		S_{b, ijk, \mathrm{on}}[\szeta] = \int_M \sqrt{g}\mathrm{d}^4x\Big(|\beta_{ij,k}\sfer_{ik}\asfer_{jk}|^2 + |\beta_{ik,j}^*\sfer_{ij}\sfer_{jk}|^2 + |\beta_{jk,i}\asfer_{ij}\sfer_{ik}|^2\Big)
	\end{align*} 
	cf.~the second and third terms of \eqref{eq:bb3-boson-action}. We have the following result:


\begin{theorem}\label{thm:bb3}
	The action consisting of the sum of \eqref{eq:bb3-action-ferm} and \eqref{eq:bb3-auxfields} is supersymmetric under the transformations \eqref{eq:susytransforms4} and \eqref{eq:susytransforms5} if and only if the parameters of the finite Dirac operator are related via
\begin{align}
\yuk{j}{k}C_{jkk}^{-1} &= - (C_{ikk}^*)^{-1}\yuk{i}{k}, &
(C_{iik}^*)^{-1}\yuk{i}{k} &= - \yuk{i}{j}C_{iij}^{-1}, &
\yuk{i}{j}C_{ijj}^{-1} &= - \yuk{j}{k}C_{jjk}^{-1}.\label{eq:improvedUpsilons}
\end{align}
and
\ba\label{eq:bb3-susy-demand2}
\bps_{ij,k}\bp_{ij,k} &= \yukp{j}{k}\yukps{j}{k}= \yukp{i}{k}\yukps{i}{k}, &
\bps_{ik,j}\bp_{ik,j} &= \yukp{i}{j}\yukps{i}{j}= \yukp{j}{k}\yukps{j}{k}, \nn\\
\bps_{jk,i}\bp_{jk,i} &= \yukp{i}{k}\yukps{i}{k}= \yukp{i}{j}\yukps{i}{j}, &
\ea
where 
\bas
 \bp_{ij,k} &:= \n_{jk}^{-1}\beta_{ij,k}\n_{ik}^{-1},& \bp_{ik,j} &:= \n_{jk}^{-1} \beta_{ik,j} \n_{ij}^{-1},& \bp_{jk,i} &:= \n_{ij}^{-1}\beta_{jk,i}\n_{ik}^{-1}
\eas
and
\ba\label{eq:def-yukp}
	\yukp{i}{j} &:= \yuk{i}{j}\n_{ij}^{-1},& 	
	\yukp{i}{k} &:= \n_{ik}^{-1}\yuk{i}{k},& 	
	\yukp{j}{k} &:= \yuk{j}{k}\n_{jk}^{-1}, 	
\ea
denote the scaled versions of the $\beta_{ij,k}$'s and the $\yuk{i}{j}$'s respectively.
\end{theorem}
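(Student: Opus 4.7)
The strategy is the standard one for verifying off-shell supersymmetry: apply the transformations \eqref{eq:susytransforms4} and \eqref{eq:susytransforms5} to the sum $S_{f,ijk} + S_{b,ijk,\mathrm{off}}$, classify the resulting terms by their field content, and demand that each class cancels. The calculations fall naturally into three groups, corresponding to (i) terms that are trilinear in scalars and linear in fermions, (ii) terms involving $\can_A$ acting on a fermion and a scalar, and (iii) terms linear in an auxiliary field $F_{\bullet}$ and quadratic in scalars or linear in a fermion-gaugino pair. The constants $c_{\bullet}$, $c_{\bullet}'$, $d_{\bullet}$, $d_{\bullet,\bullet}$ are already fixed by Theorem \ref{prop:bb2} for each of the three building blocks \Bc{ij}{\pm}, \Bc{ik}{\mp}, \Bc{jk}{\pm} individually, so the only freedom left is in the new parameters $\yuk{i}{j}$, $\yuk{i}{k}$, $\yuk{j}{k}$, $\beta_{ij,k}$, $\beta_{ik,j}$, $\beta_{jk,i}$.

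First I would substitute $\delta \fer{ij}$, $\delta\afer{ij}$ (and their $ik,jk$ analogues) from \eqref{eq:transforms4.2} into $S_{f,ijk}$. The $c'_{\bullet}\gamma^5[\can_A,\sfer]\epsilon$ pieces produce terms of the form $\langle J_M\eR, \can_A(\sfer\,\fer{})\rangle$ type. After integrating by parts one moves $\can_A$ off the scalars, obtaining expressions quadratic in sfermions multiplied by covariant derivatives of fermions; these are exactly the shape generated by $\delta F_{\bullet}$ in \eqref{eq:transforms4.4a}--\eqref{eq:transforms4.4b} when plugged into $S_{b,ijk,\mathrm{off}}$. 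Matching the coefficients of each such independent structure produces linear relations between the $\yuk{}{}$'s and the $C_{\bullet}$'s; using the solutions \eqref{eq:bb2-resultCiij} for $C_{iij},C_{ijj}$ etc.\ and the rescalings \eqref{eq:def-yukp}, these relations reorganize into the three identities \eqref{eq:improvedUpsilons}.

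Next I would handle the contributions of $\delta\sfer_{ij}$, $\delta\sfer_{ik}$, $\delta\sfer_{jk}$ (and their conjugates) from \eqref{eq:transforms4.1}, which enter both $S_{f,ijk}$ (through the Yukawa-like couplings \eqref{eq:bb3-action-ferm}) and $S_{b,ijk,\mathrm{off}}$ (through the $\beta_{\bullet,\bullet}$ terms). These variations produce terms trilinear in scalars times a fermion, multiplied by two different parameter combinations: one built from $\yuk{}{}$'s (from the fermionic side) and one built from $\beta_{\bullet,\bullet}$'s (from the bosonic side). Pairwise cancellation, for each of the three off-diagonal scalars, yields the three equalities \eqref{eq:bb3-susy-demand2}. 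The $F_{\bullet}\gau{}\sfer$ pieces of \eqref{eq:transforms4.4a}--\eqref{eq:transforms4.4b} have to be checked to cancel against the variations $\delta\fer{}$ entering the gaugino-fermion-sfermion vertices already present in $S_f$; this is where Theorem \ref{prop:bb2} is used, as the gaugino-fermion-scalar coefficients were fixed there and carry the signs $\sgnc_{i,j}$ that must be consistent with the signs appearing in the transformations.

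The hard part is not any single identity but the bookkeeping: one must verify that \emph{no} extra structure is generated by the variations (in particular that all $\gamma^5$-insertions, $J_M$ conjugations, and the symmetry properties \eqref{eq:symmInnerProd} of $\langle J_M\cdot,\cdot\rangle$ conspire to combine terms into the few independent tensor structures above), and that the constants $c_{\bullet}, c'_{\bullet}, d_{\bullet}, d_{\bullet,\bullet}$ already determined in Theorem \ref{prop:bb2} are mutually consistent across the three building blocks so that the cancellations are automatic once \eqref{eq:improvedUpsilons} and \eqref{eq:bb3-susy-demand2} are imposed. Conversely, choosing bases and specializing spinors shows these relations are also necessary. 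The explicit but lengthy verification is deferred to an appendix, as in the proofs of Theorems \ref{thm:bb1} and \ref{prop:bb2}.
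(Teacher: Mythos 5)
Your overall strategy (vary the fields, sort the resulting terms by field content, and demand that each class cancels separately) is the same as the paper's, and your treatment of the auxiliary-field and $\can_A$ classes roughly matches the groups \eqref{eq:bb3-group2} and \eqref{eq:bb3-group4}. But there is a genuine gap: your classification omits the four-fermion terms. Varying a sfermion inside a Yukawa coupling of \eqref{eq:bb3-action-ferm}, e.g.\ $\asfer_{jk}\to c_{jk}^*(J_M\eR,\gamma^5\afer{jk})$ in $\inpr{J_M\afer{ij}}{\gamma^5\fer{ik}\asfer_{jk}\yuks{j}{k}}$, produces terms containing three dynamical fermions together with the supersymmetry parameter. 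These are collected in \eqref{eq:bb3-group1}; they have no counterpart anywhere in the bosonic action and must therefore cancel among themselves, which can only be shown by a Fierz rearrangement (Lemma \ref{lem:bb3-lem1}). The resulting constraint \eqref{eq:bb3-constr1} is precisely the source of \eqref{eq:improvedUpsilons} once the expressions for the $C_{iij}$ from Theorem \ref{prop:bb2} are inserted. You instead attribute \eqref{eq:improvedUpsilons} to the $\can_A$ class; those terms (cf.\ \eqref{eq:bb3-constr4}) do yield proportionalities among the scaled $\yuk{}{}$'s, but with different coefficients (the primed constants and the $d$'s rather than the $c$'s), and it is only by comparing them with the auxiliary-field constraints \eqref{eq:bb3-constr2} that one obtains $|c_{ij}|=|c_{ik}|=|c_{jk}|$ and hence \eqref{eq:bb3-susy-demand2}. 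Without the Fierz step your argument neither establishes the cancellation of the four-fermion terms nor actually derives \eqref{eq:improvedUpsilons}.

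A secondary inaccuracy: the one-gaugino terms generated by $\delta F_{ij}^*$ and its analogues (the $d_{ij,i}(J_M\epsilon,\gamma^5\gau{i}\sfer)$ pieces multiplying $\bp_{ij,k}\sfer_{ik}\asfer_{jk}$) are gaugino-times-three-sfermion terms; they must cancel internally among the contributions of the three auxiliary fields, yielding \eqref{eq:bb3-constr3}. They cannot cancel against variations of gaugino--fermion--sfermion vertices, since those vertices belong to the building blocks of the second type and are not part of the action considered in this theorem.
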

\begin{proof}
See Appendix \ref{sec:bb3-proof}.
\end{proof}

For future use we rewrite \eqref{eq:improvedUpsilons} using the parametrization \eqref{eq:bb3-expressionCs-sqrt} for the $C_{iij}$, giving 
\ba
	\sgnc_{i,j} \sqrt{\w{ij}}\,\yukw{i}{j} &= - \sgnc_{i,k}\sqrt{\w{ik}}\,\yukw{i}{k}, &
	\sgnc_{j,i} \sqrt{\w{ij}}\,\yukw{i}{j} &= - \sgnc_{j,k}\sqrt{\w{jk}}\,\yukw{j}{k}, \nn\\
	\sgnc_{k,i} \sqrt{\w{ik}}\,\yukw{i}{k} &= - \sgnc_{k,j}\sqrt{\w{jk}}\,\yukw{j}{k}, &&
\label{eq:improvedUpsilons1}
\ea
where we have written
\ba\label{eq:def-yukw}
	\yukw{i}{j} &:= \yuk{i}{j}(N_k\tr\yuks{i}{j}\yuk{i}{j})^{-1/2}, & \yukw{i}{k} &:= (N_j\yuk{i}{k}\yuks{i}{k})^{-1/2}\yuk{i}{k},& \yukw{j}{k} &:= \yuk{j}{k}(N_i\yuks{j}{k}\yuk{j}{k})^{-1/2}.
\ea
There is a trace over the generations in the first term because the corresponding sfermion $\sfer_{ij}$ has $R = 1$ and consequently no family-index. Using these demands on the parameters, the (spectral) action from a building block of the third type becomes much more succinct. First of all it allows us to reduce all three parameters of the finite Dirac operator of Definition \ref{def:bb3} to only one, e.g.~$\yuk{}{} \equiv \yuk{i}{j}$. Second, upon using \eqref{eq:improvedUpsilons} the second and third lines of \eqref{eq:bb3-boson-action-part} are seen to cancel.\footnote{More generally, this also happens for the other combinations: the four-scalar interactions of \eqref{eq:bb3-action-2} are seen to cancel those of \eqref{eq:bb3-action-5}}
If the demands \eqref{eq:improvedUpsilons} and \eqref{eq:bb3-susy-demand2} are met, the on shell action \eqref{eq:bb3action} that arises from a building block \B{ijk} of the third type reads
\ba
S_{ijk}[\zeta, \szeta, \mathbb{A}]
 &= g_m\sqrt{\frac{2\w{1}}{q_m}}\Big[\inpr{J_M\afer{2}}{\gamma^5\yukw{}{}\sfer_{1}\fer{3}} + \kappa_{j}
\inpr{J_M\afer{2}}{\gamma^5\fer{1}\yukw{}{}\sfer_{3}} + \kappa_{i}
\inpr{J_M\fer{3}}{\gamma^5\asfer_{2}\yukw{}{}\fer{1}} + h.c.\Big]\nn\\
	&\qquad
		+ g_m^2\frac{4\w{1}}{q_m}\Big[(1 - \w{2}) |\yukw{}{}\sfer_{1}\sfer_{3}|^2 
		+ (1 - \w{1}) |\yukw{}{}\sfer_{3}\asfer_{2}|^2 
		+ (1 - \w{3}) |\yukw{}{}\asfer_{2}\sfer_{1}|^2\Big].\label{eq:bb3-action-final}
\ea
Here we used the shorthand notations $ij \to 1$, $ik \to 2$, $jk \to 3$ and $\kappa_{j} = \sgnc_{j,i}\sgnc_{j,k}$, $\kappa_{i} = \sgnc_{i,j}\sgnc_{i,k}$ to avoid notational clutter as much as possible and where we have written everything in terms of $\yukw{}{} \equiv \yukw{i}{j}$ (as defined above), the parameter that corresponds to the sfermion having $R = 1$ (and consequently also multiplicity $1$). The index $m$ in $g_m$ and $q_m$ can take any of the values that appear in the model, e.g.~$i$, $j$ or $k$. As with a building block of the second type there is a sign ambiguity that stems from those of the $C_{iij}$. In addition, the terms that are not listed here but are in \eqref{eq:bb3action} give contributions to terms that already appeared in the action from building blocks of the second type. See Section \ref{sec:4s-aux} for details on this.\\

For notational convenience we have used two different notations for scaled variables: $\yukw{i}{j}$ from \eqref{eq:def-yukw} and $\yukp{i}{j}$ from \eqref{eq:def-yukp}. Using the expression \eqref{eq:bb3-exprN} for $\n_{ij}$ in terms of $\yuk{i}{j}$ these are related via
\ba\label{eq:yukwyukp}
	\yukp{i}{k} \equiv \n_{ik}^{-1}\yuk{i}{k} = \sqrt{\frac{2\pi^2}{f(0)}\w{ik}}\big(N_j \yuk{i}{k}\yuks{i}{k}\big)^{-1/2}\yuk{i}{k} \equiv g_l\sqrt{\frac{2\w{ik}}{q_l}}\yukw{i}{k},
\ea
assuming that $\sfer_{ik}$ has $R = -1$. The other two scaled variables give analogous expressions but the order of $\yuk{}{}$ and $\yuks{}{}$ is reversed and the sfermion with $R = 1$ gets an additional trace over family indices.

\begin{rmk}\label{rmk:bb3-relativesigns}
Note that we can use this result to say something about the signs of the $C_{iij}$ appearing in a building block of the third type. We first combine all three equations of \eqref{eq:improvedUpsilons} into one,
\bas
	\yuk{j}{k} = (-1)^3 (C_{iik}C_{ikk}^{-1})^{*} \yuk{j}{k}(C_{jjk}^{-1}C_{jkk})(C_{ijj}C_{iij}^{-1}),
\eas
when it is $C_{iij}$ and $C_{ijj}$ that do not have a family structure. All these parameters are only determined up to a sign. We will write 
\bas
	C_{iij}C_{ijj}^{-1} = s_{ij} \sqrt{\frac{n_i\K_j}{n_j\K_i}} \frac{g_i}{g_j},\qquad\text{with } s_{ij} := \sgnc_{i,j}\sgnc_{j,i} = \pm 1,
\eas
cf.~\eqref{eq:bb2-resultCiij}, etc.~which gives  
$
	\yuk{j}{k} = - s_{ij}s_{jk}s_{ki} \yuk{j}{k}
$
 for the relation above. So for consistency either one, or all three combinations of $C_{iij}$ and $C_{ijj}$ associated to a building block \B{ij} that is part of a \B{ijk} must be of opposite sign.
\end{rmk}

\begin{rmk}\label{rmk:bb3-R=1}
	If instead of $\sfer_{ij}$ it is $\sfer_{ik}$ or $\sfer_{jk}$ that has $R = 1$ (see Figure \ref{fig:bb3Pos}) the demand on the parameters $\yuk{i}{j}$, $\yuk{i}{k}$ and $\yuk{j}{k}$ is a slightly modified version of \eqref{eq:improvedUpsilons}: 
\ba
(\yuk{j}{k}C_{jkk}^{-1})^t &= - (C_{ikk}^*)^{-1}\yuk{i}{k}, &
(C_{iik}^*)^{-1}\yuk{i}{k} &= - \yuk{i}{j}C_{iij}^{-1}, &
\yuk{i}{j}C_{ijj}^{-1} &= - (\yuk{j}{k}C_{jjk}^{-1})^t,\label{eq:improvedUpsilons2}
\ea
where $A^t$ denotes the transpose of the matrix $A$. This result can be verified by considering Lemma \ref{lem:bb3-lem1} for these cases.
\end{rmk}

By introducing a building block of the third type we generated the interactions that we lacked in a situation with multiple building blocks of the second type. The wish for supersymmetry thus forces us to extend any model given by Figure \ref{fig:3bb2} with a building block of the third type.\\

If we again seek the analogy with the superfield formalism, then a building block of the third type is a Euclidean analogy of an action on a Minkowskian background that comes from a superpotential term
\ba\label{eq:bb3-superpot}
	\int \Big(\mathcal{W}(\{\Phi_m\})\Big|_F + h.c.\Big)\mathrm{d}^4 x,\quad \text{with}\quad \mathcal{W}(\{\Phi_m\}) = f_{mnp} \Phi_m\Phi_n \Phi_p,
\ea
where $\Phi_{m,n,p}$ are chiral superfields, $f_{mnp}$ is symmetric in its indices \cite[\S 5.1]{DGR04} and with $|_{F}$ we mean multiplying by $\bar\theta\bar\theta$ and integrating over superspace $\int \mathrm{d}^2\theta\mathrm{d}^2\bar\theta$. To specify this statement, we write $\Phi_{ij} = \phi_{ij} + \sqrt{2}\theta \psi_{ij} + \theta\theta F_{ij}$ for a chiral superfield. Similarly, we introduce $\Phi_{jk}$ and $\Phi_{ki}$. We then have that
\bas
	\int_M \Big[\Phi_{ij} \Phi_{jk}\Phi_{ki}  \Big]_F + h.c. 
&= \int_M  -  \psi_{ij}\phi_{jk}\psi_{ki} -  \psi_{ij}\psi_{jk}\phi_{ki}-  \phi_{ij}\psi_{jk}\psi_{ki}\nn\\
	&\qquad\qquad + F_{ij}\phi_{jk}\phi_{ki} + \phi_{ij}\phi_{jk}F_{ki} + \phi_{ij}F_{jk}\phi_{ki} + h.c. 
\eas
This gives on shell the following contribution\footnote{On a Minkowskian background the product of a superfield and its conjugate appears in the action as $F_{ij}^*F_{ij}$, i.e.~with pre-factor $+1$ \cite[\S 4.3]{DGR04}, in contrast to \eqref{eq:bb2-auxfields}.}:
\bas
& - \int_M \Big(\psi_{ij}\phi_{jk}\psi_{ki} +  \psi_{ij}\psi_{jk}\phi_{ki}+  \phi_{ij}\psi_{jk}\psi_{ki} + \frac{1}{2}|\phi_{jk}\phi_{ki}|^2 + \frac{1}{2}|\phi_{ij}\phi_{jk}|^2 + \frac{1}{2}|\phi_{ki}\phi_{ij}|^2 + h.c.\Big), 
\eas
to be compared with \eqref{eq:bb3-action-final}. In a set up similar to that of Figure \ref{fig:3bb2}, but with the chirality of one or two of the building blocks \B{ij}, \B{jk} and \B{ik} being flipped, not all three components of $D_F$ such as in Definition \ref{def:bb3} can still be defined, see Figure \ref{fig:bb3_opposite_grading}. Interestingly, one can check that in such a case the resulting action corresponds to a superpotential that is not holomorphic, but e.g.~of the form $\Phi_{ij} \Phi_{ik}\Phi_{jk}^\dagger$ instead. To see this, we calculate the action \eqref{eq:bb3-superpot} in this case, giving 
\bas
	\int_M \Big[\Phi_{ij} \Phi_{jk}^\dagger  \Phi_{ki}\Big]_F + h.c. 
&=  \int_M -  \psi_{ij}\phi_{jk}^*\psi_{ki} + F_{ij}\phi_{jk}^*\phi_{ki} + \phi_{ij}\phi_{jk}^*F_{ki} + h.c., 
\eas
which on shell equals
\bas
- \int_M \psi_{ij} \phi_{jk}^* \psi_{ki} + \frac{1}{2}|\phi_{jk}^*\phi_{ki}|^2 + \frac{1}{2}|\phi_{ij}\phi_{jk}^*|^2 + h.c.
\eas
This is indeed analogous to the interactions that the spectral triple depicted in Figure \ref{fig:bb3_opposite_grading} (still) gives rise to.

\begin{figure}[ht]
\centering
	\def\svgwidth{.4\textwidth}
	\includesvg{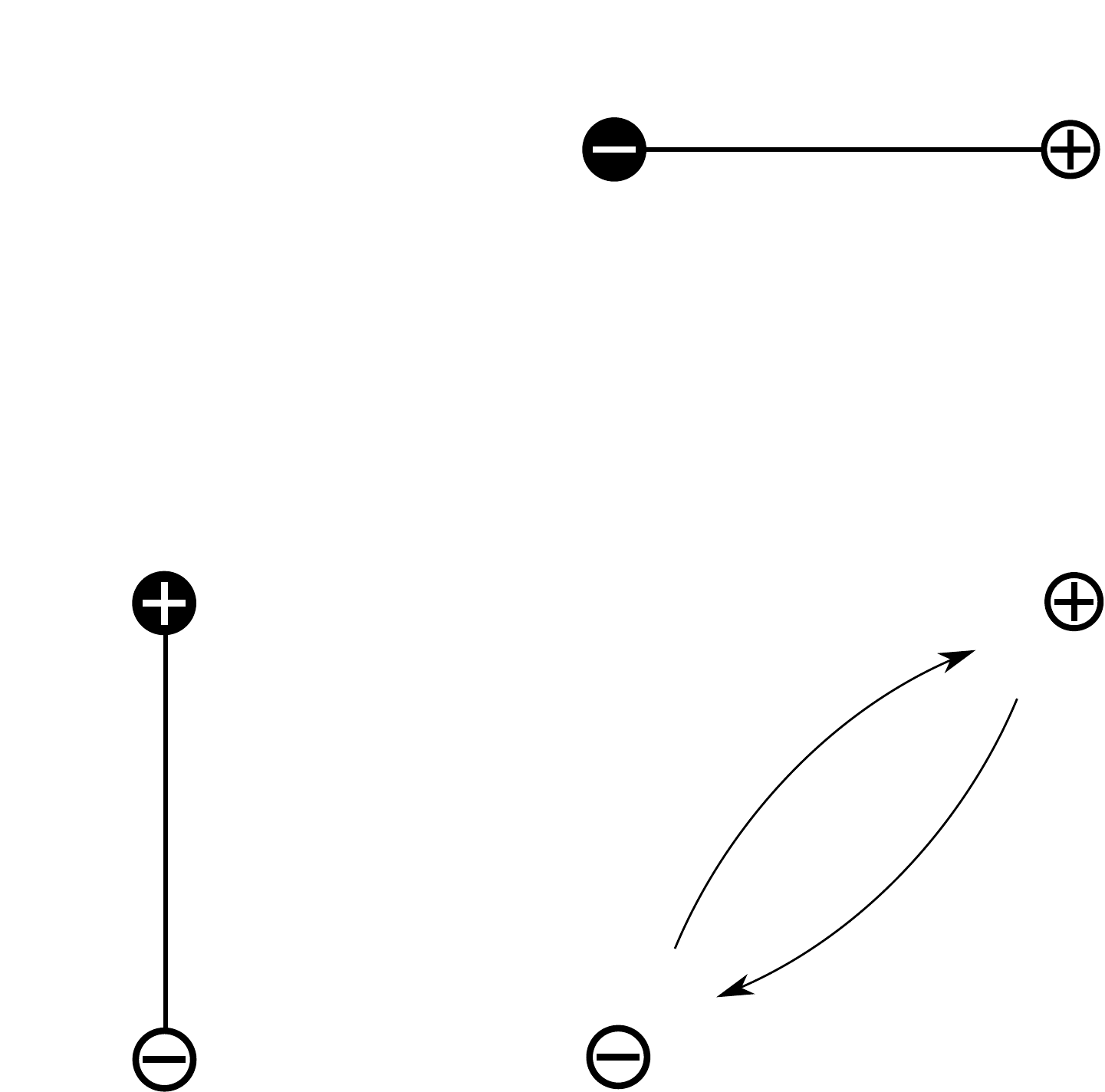}	
	\captionsetup{width=.6\textwidth}
	\caption{A set up similar to that of Figure \ref{fig:bb3}, but with the values of the grading reversed for $\rep{j}{k}$ and its opposite. Consequently, only one of the three components that characterize a building block of the first type can now be defined.}
	\label{fig:bb3_opposite_grading}
\end{figure}


\subsubsection{Interaction between building blocks of the third type}\label{sec:2bb3}
Suppose we have two building blocks \B{ijk} and \B{ijl} of the third type that share two of their indices, as is depicted in Figure \ref{fig:2bb3s}. This situation gives rise to the following extra terms in the action:
\ba
	&\frac{f(0)}{\pi^2}\Big[N_j|\asfer_{jk}C_{jjk}^*C_{jjl}\sfer_{jl}|^2 + N_i|\asfer_{jk}\yuks{j}{k}\yuk{j}{l}\sfer_{jl}|^2 + |\yuk{j}{k}\sfer_{jk}|^2|\yuks{i}{l}\sfer_{il}|^2\Big] + (i \leftrightarrow j)\nn\\
	&\quad + \frac{f(0)}{\pi^2}\Big(N_i  \tr C_{iik}\sfer_{ik}\asfer_{jk}\yuks{j}{k}\yuk{j}{l}\sfer_{jl}\asfer_{il}C_{iil}^* 
	 +  N_j\tr \yuks{i}{k}\sfer_{ik}\asfer_{jk}C_{jjk}^*C_{jjl}\sfer_{jl}\asfer_{il}\yuk{i}{l} + h.c.\Big),\label{eq:2bb3-action}
\ea
where with `$(i\leftrightarrow j)$' we mean the expression preceding it, but everywhere with $i$ and $j$ interchanged. The first line of \eqref{eq:2bb3-action} corresponds to paths within the two building blocks \B{ijk} and \B{ijl} (such as the ones depicted in Figure \ref{fig:2bb3s-inner}) and the second line corresponds to paths of which two of the edges come from the building blocks of the second type that were needed in order to define the building blocks of the third type (Figure \ref{fig:2bb3s-outer}).

If we scale the fields appearing in this expression according to \eqref{eq:bb3-scalingfields} and use the identity \eqref{eq:improvedUpsilons} for the parameters of a building block of the third type, we can write \eqref{eq:2bb3-action} more compactly as
\ba
&4n_jr_jN_jg_j^2|\asfer_{jk}\sfer_{jl}|^2 + 4\frac{g_m^2}{q_m}\w{ij}^2N_i|\asfer_{jk}\yukws{k}{}\yukw{l}{}\sfer_{jl}|^2 + 4\frac{g_m^2}{q_m}\w{ij}^2|\yukw{k}{}\sfer_{jk}|^2|\yukws{l}{}\sfer_{il}|^2 + (i \leftrightarrow j, \yuk{}{} \leftrightarrow \yuks{}{})\nn\\
	&\qquad + \kappa_{k}\kappa_{l}\, 4\frac{g_m^2}{q_m}(1 - \w{ij})\w{ij}\tr \yukw{l}{}\yukws{k}{}\sfer_{ik}\asfer_{jk}\sfer_{jl}\asfer_{il} + h.c.,\label{eq:2bb3-action-scaled}
\ea
where $\kappa_{k} = \sgnc_{k,i}\sgnc_{k,j}$, $\kappa_{l} = \sgnc_{l,i}\sgnc_{l,j} \in \{\pm 1\}$, $\yukw{k}{} \equiv \yukw{i,k}{j}$ of \B{ijk} and $\yukw{l}{} \equiv \yukw{i, l}{j}$ of \B{ijl}, as defined in \eqref{eq:def-yukw} but with contributions from two building blocks of the third type: 
\begin{subequations}
\ba
	\yukw{i,k}{j} &= \yuk{i,k}{j} (N_k \tr \yuks{i,k}{j}\yuk{i,k}{j} + N_l\tr\yuks{i,l}{j}\yuk{i,l}{j})^{-1/2}, \\
	\yukw{i,l}{j} &= \yuk{i,l}{j} (N_k \tr \yuks{i,k}{j}\yuk{i,k}{j} + N_l\tr\yuks{i,l}{j}\yuk{i,l}{j})^{-1/2}.\label{eq:def-yukw-mult}
\ea
\end{subequations}
This expression can be generalized to any number of building blocks of the third type. In addition, we have assumed that $s_{ik}s_{il} = s_{jk}s_{jl}$ for the products of the relative signs between the parameters $C_{iik}$ and $C_{ikk}$ etc.~(cf.~Remark \ref{rmk:bb3-relativesigns}).

\begin{figure}
	\centering
	\begin{subfigure}{.4\textwidth}
		\centering
		\def\svgwidth{\textwidth}
		\includesvg{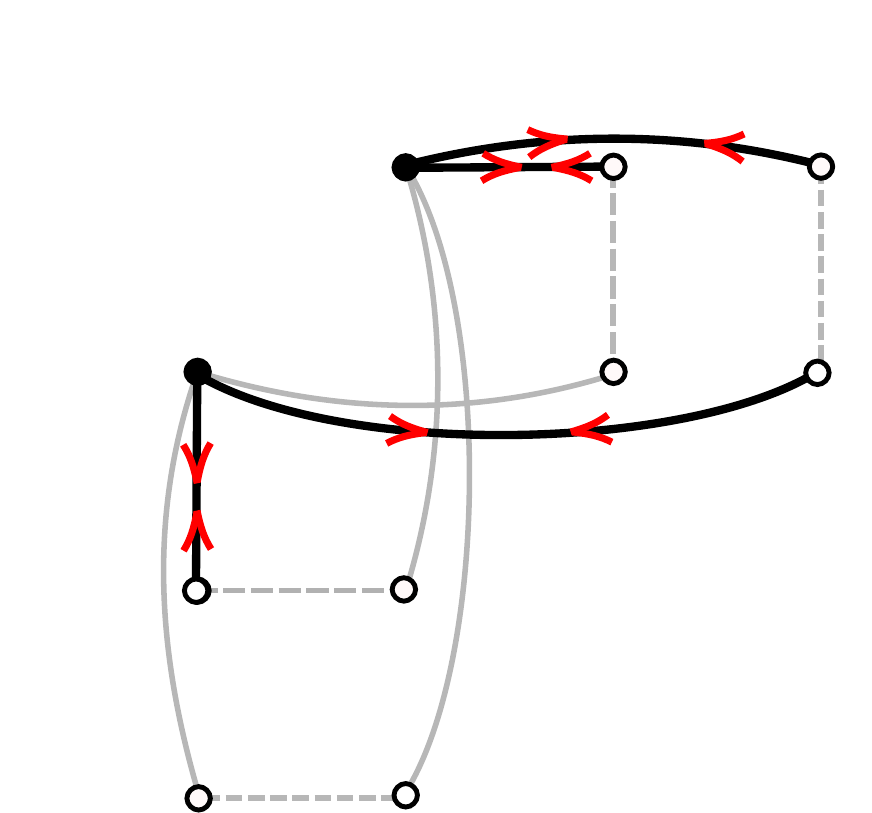}
		\caption{Contributions corresponding to paths of which all four edges are from the building blocks \B{ijk} and \B{ijl} of the third type.}
		\label{fig:2bb3s-inner}
	\end{subfigure}
	\hspace{30pt}
	\begin{subfigure}{.4\textwidth}
		\centering
		\def\svgwidth{\textwidth}
		\includesvg{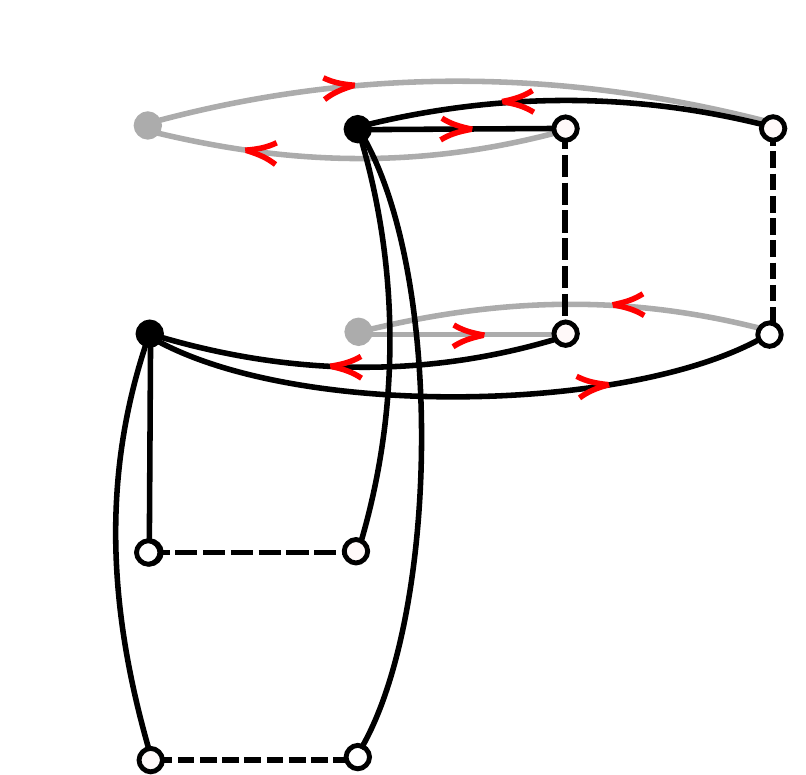}	
		\caption{Contributions corresponding to paths of which two edges are from building blocks \B{ik} and \B{il} of the second type.}
		\label{fig:2bb3s-outer}
	\end{subfigure}
\caption{In the case that there are two building blocks of the third type sharing two of their indices, there are extra four-scalar contributions to the action. They are given by \eqref{eq:2bb3-action}.}
\label{fig:2bb3s}
\end{figure}

These new interactions must be accounted for by the auxiliary fields. The first and second terms are of the form \eqref{eq:2bb2s-same} and should therefore be covered by the auxiliary fields $G_{i,j}$. The third term is of the form \eqref{eq:2bb2s-different} and should consequently be described by the combination of $G_{i,j}$ and the $u(1)$-field $H$. The second line of \eqref{eq:2bb3-action} should be rewritten in terms of the auxiliary field $F_{ij}$. This can indeed be achieved via the off shell Lagrangian
%
%
\bas
-\tr F_{ij}^*F_{ij}	+ \big(\tr F_{ij}^*( \beta_{ij,k} \sfer_{ik}\asfer_{jk} + \beta_{ij,l}\sfer_{il}\asfer_{jl}) + h.c.\big),
\eas
which on shell gives the following cross terms:
\ba
 \tr \beta_{ij,l}^*\beta_{ij,k}\sfer_{ik}\asfer_{jk}\sfer_{jl}\asfer_{il} + h.c. \label{eq:2bb3-aux}
\ea
In form, this indeed corresponds to the second line of \eqref{eq:2bb3-action-scaled}. In Section \ref{sec:4s-aux} a more detailed version of this argument is presented.

Furthermore, it can be that there are four different building blocks of the third type that all share one particular index ---say \B{ikl}, \B{ikm}, \B{jkl} and \B{jkm}, sharing index $k$--- then there arises one extra interaction, that is of the form
\bas
	& N_k\frac{f(0)}{\pi^2}\big[\tr\yuks{i}{m}\sfer_{im}\asfer_{jm}\yuk{j}{m}\yuks{j}{l}\sfer_{jl}\asfer_{il}\yuk{i}{l} + h.c.\big].\nn
\eas
Scaling the fields and rewriting the parameters using \eqref{eq:improvedUpsilons1} gives
\ba
	&4\frac{g_n^2}{q_n}\w{ik}\w{jk}
 \tr \tilde \Upsilon_{l}\tilde \Upsilon_{m}^*\sfer_{im}\asfer_{jm}(\tilde\Upsilon_{l}'\tilde\Upsilon_{m}'^*)^*\sfer_{jl}\asfer_{il} + h.c., \label{eq:4bb3s}
\ea
where $g_n$ can equal any of the coupling constants that appear in the theory and we have written 
\bas 
	\tilde \Upsilon_{m} &\equiv \yuk{i,m}{k}(N_m\yuks{i,m}{k}\yuk{i,m}{k} + N_l\yuks{i,l}{k}\yuk{i,l}{k})^{-1/2},\nn\\
 \tilde \Upsilon_{m}' &\equiv \yuk{j,m}{k}(N_m\yuks{j,m}{k}\yuk{j,m}{k} + N_l\yuks{j,l}{k}\yuk{j,l}{k})^{-1/2},
\eas
and the same for $m \leftrightarrow l$. The path to which such an interaction corresponds, is given in Figure \ref{fig:4bb3s}. One can check that this interaction can only be described off shell by invoking either one or both of the auxiliary fields $F_{ij}$ and $F_{lm}$. This means that in order to have a chance at supersymmetry, the finite spectral triple that corresponds to the Krajewski diagram of Figure \ref{fig:4bb3s} requires in addition at least \B{ij} or \B{lm}.

\begin{figure}
	\begin{center}
		\def\svgwidth{.5\textwidth}
		\includesvg{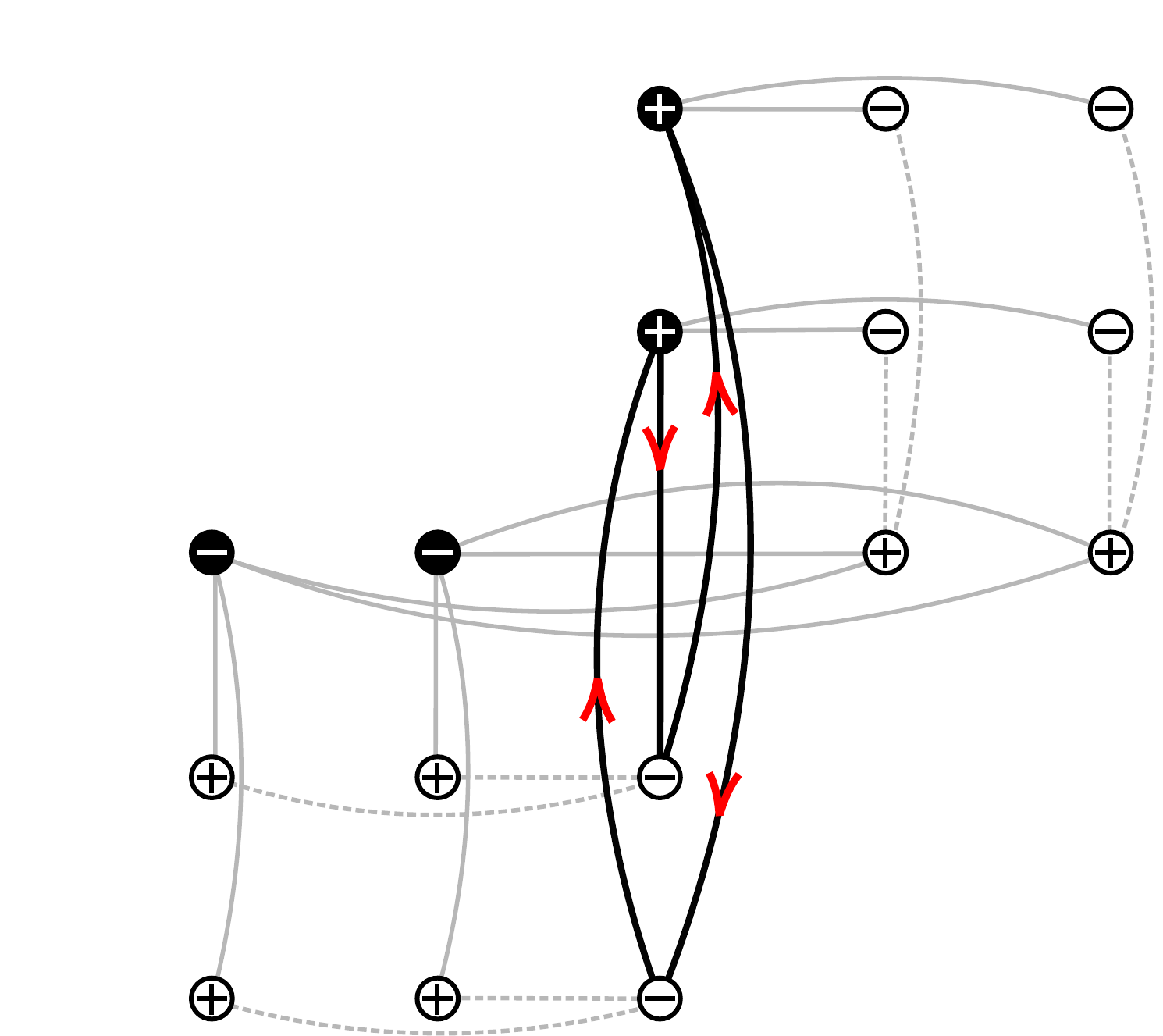}	
		\captionsetup{width=.8\textwidth}
	\caption{When four building blocks of the third kind share one common index (in this case $k$) and each pair of building blocks shares one of its two remaining indices ($i$, $j$, $l$ or $m$) with one other building block, there is an additional path that contributes to the trace of $D_F^4$ (including its inner fluctuations). The interaction is given by \eqref{eq:4bb3s}.}
	\label{fig:4bb3s}	
\end{center}
\end{figure}	

\subsection{Higher degree building blocks?}

The first three building blocks that gave supersymmetric actions are characterized by one, two and three indices respectively. One might wonder whether there are building blocks of higher order, carrying four or more indices. \\

Each of the elements of a finite spectral triple is characterized by one (components of the algebra, adjoint representations in the Hilbert space), two (non-adjoint representations in the Hilbert space) or three (components of the finite Dirac operator that satisfy the order-one condition) indices. For each of these elements corresponding building blocks have been identified. Any object that carries four or more different indices (e.g.~two or more off-diagonal representations, multiple components of a finite Dirac operator) must therefore be part of more than one building block of the first, second or third type. These blocks are, so to say, the irreducible ones. \\

This does not imply that there are no other building blocks left to be identified. However, as we will see in the next section, they are characterized by less than four indices.

\subsection{Mass terms}

There is a possibility that we have not covered yet. The finite Hilbert space can contain two or more copies of one particular representation. This can happen in two slightly different ways. The first is when there is a building block $\B{11'}$ of the second type, on which the same component $\com$ of the algebra acts both on the left and on the right in the same way. For the second way it is required that there are two copies of a particular building block $\B{ij}$ of the second type. If the gradings of the representations are of opposite sign (in the first situation this is automatically the case for finite KO-dimension $6$, in the second case by construction) there is allowed a component of the Dirac operator whose inner fluctuations will not generate a field, rather the resulting term will act as a mass term. In the first case such a term is called a Majorana mass term. We will cover both of them separately.

\subsubsection{Fourth building block: Majorana mass terms}\label{sec:bb4}

The finite Hilbert space can, for example due to some breaking procedure \cite{CC08, CCM07}, contain representations
\begin{align*}
	&\repl{1}{1'} \oplus \repl{1'}{1} \simeq \mathbb{C} \oplus \mathbb{C},
\end{align*} 
which are each other's antiparticles, e.g.~these representations are not in the adjoint (`diagonal') representation, but the same component $\com$ of the algebra\footnote{For a component $\mathbb{R}$ in the finite algebra this would work as well, but such a component would not give rise to gauge interactions and is therefore unfavourable.} acts on them. Then there is allowed a component $\D{1'1}{11'}$ of the Dirac operator connecting the two. It satisfies the first order condition \eqref{eq:order_one} and its inner fluctuations automatically vanish. Consequently, this component does not generate a scalar, unlike the typical component of a finite Dirac operator. Writing $(\xi, \xi') \in (\com\oplus \com)^{\oplus M}$ (where $M$ denotes the multiplicity of the representation) for the finite part of the fermions, the demand of $D_F$ to commute with $J_F$ reads
\bas
		(\D{11'}{1'1} \bar\xi, \D{1'1}{11'}\bar \xi') = \Big(\overline{\D{1'1}{11'}\xi}, \overline{\D{11'}{1'1}\xi'}\Big).
\eas
Using that $(\D{ij}{ik})^* = \D{ik}{ij}$ this teaches us that the component must be a symmetric matrix. It can be considered as a Majorana mass for the particle $\fer{11'}$ whose finite part is in the representation $\repl{1}{1'}$ (cf.~the Majorana mass for the right handed neutrino in the Standard Model \cite{CCM07}). Then we have 
\begin{defin}\label{def:bb4}
	For an almost-commutative geometry that contains a building block \B{11'} of the second type, a \emph{building block of the fourth type} \BBBB{11'} consists of a component
	\begin{align*}
		\D{1'1}{11'} : \repl{1}{1'} \to \repl{1'}{1} 
	\end{align*}
	of the finite Dirac operator. Symbolically it is denoted by 
\begin{align*}
	\BBBB{11'} = (0, \D{1'1}{11'}) \in \H_F \oplus \End(\H_F),
\end{align*}
where for the symmetric matrix that parametrizes this component we write $\maj$.
\end{defin}
In the language of Krajewski diagrams such a Majorana mass is symbolized by a dotted line, cf.~Figure \ref{fig:MajMass}.

\begin{figure}
	\begin{center}
		\def\svgwidth{.35\textwidth}
		\includesvg{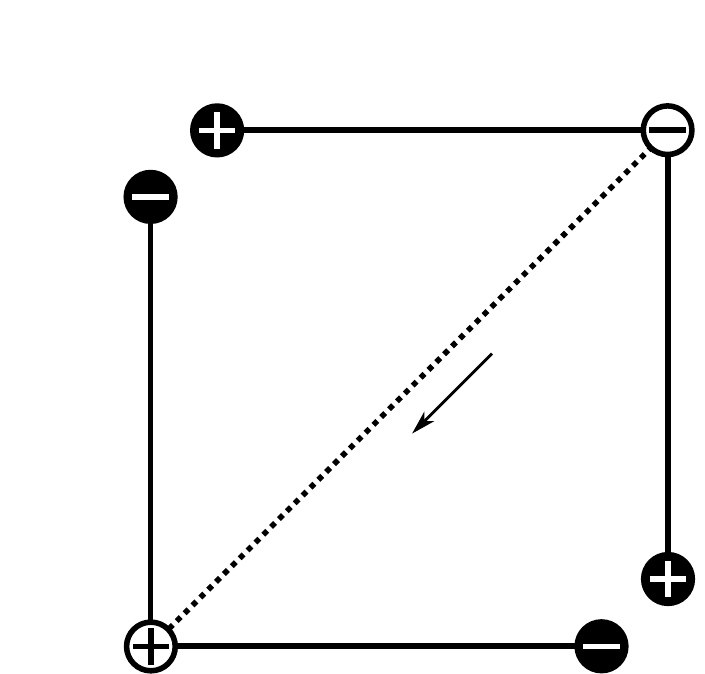}	
		\captionsetup{width=.5\textwidth}
		\caption{A component of the finite Dirac operator that acts as a Majorana mass is represented by a dotted line in a Krajewski diagram.}
		\label{fig:MajMass}	
\end{center}
\end{figure}	

A \BBBB{11'} adds the following to the action \eqref{eq:spectral_action_acg_flat}:
\begin{align}
&	\frac{1}{2}\inpr{J_M\fer{11'L}}{\gamma^5 \maj^* \fer{11'L}} + \frac{1}{2}\inpr{J_M\afer{11'R}}{\gamma^5 \maj \afer{11'R}}\nn\\
&\qquad  + \frac{f(0)}{\pi^2} \Big[|\maj \asfer_{11'}C_{111'}^*|^2 + |\maj \asfer_{11'}C_{1'1'1}^*|^2 + \sum_j \Big( |\maj\yuk{1'}{j}\sfer_{1'j}|^2 + |\maj^*\yuks{1}{j}\sfer_{1j}|^2\Big) \Big]\nn\\[-7pt]
& \qquad + \frac{f(0)}{\pi^2}\sum_j \Big(\tr (\asfer_{11'}C_{111'}^*)^o\maj\yuk{1'}{j}\sfer_{1'j}\asfer_{1j}C_{11j}^* \nn\\[-7pt]
&\qquad\qquad\qquad\qquad +  \tr \maj(\asfer_{11'}C_{1'1'1}^*)^oC_{1'1'j}\sfer_{1'j}\asfer_{1j}\yuk{1}{j}\nn\\
&\qquad\qquad\qquad\qquad	+  \tr \maj\yuk{1'}{j}\sfer_{1'j}(\asfer_{11'}\yuks{1}{1'})^o\asfer_{1j}\yuk{1}{j} + h.c.\Big),\label{eq:bb4-action}
\end{align}
where the traces are over $(\repl{1}{1'})^{\oplus M}$. In this expression, the first contribution comes from the inner product. The paths in the Krajewski diagram corresponding to the other contributions are depicted in Figure \ref{fig:bb4-paths}. In this set up it is $\sfer_{1'j}$ that does not have a family index. Consequently we can separate the traces over the family-index and that over \srep{j} in the penultimate term of the second line of \eqref{eq:bb4-action}. We would like to rewrite the above action in terms of $\yukw{}{} \equiv \yuk{1'}{j}$ by using the identity \eqref{eq:improvedUpsilons2}. For this we first need to rewrite the $C_{iij}$ to the $C_{ijj}$ by employing Remark \ref{rmk:bb3-relativesigns}. Writing out the family indices of the third and fourth line of \eqref{eq:bb4-action} gives
\ba
&  \tr ((\asfer_{11'}C_{111'}^*)^o\maj)_a\sfer_{1'j}\asfer_{1jc}(C_{11j}^*(\yuk{1'}{j})^t)_{ca} + \tr (\maj(\asfer_{11'}C_{1'1'1}^*)^o)_aC_{1'1'j}\sfer_{1'j}\asfer_{1jc}(\yuk{1}{j})_{ca}\nn\\
&=  \sqrt{\frac{n_1\K_j}{n_j\K_1}}\frac{g_1}{g_j}(s_{1'j} - s_{1j}s_{11'})\Big[\tr (\asfer_{11'}C_{11'1'}^*)^o\maj \yuk{1'}{j}\sfer_{1'j}\asfer_{1j}C_{1jj}^*\Big]\label{eq:bb4-idnyuks},
\ea
where $a, b, c$ are family indices, $s_{ij}$ is the product of the signs of $C_{iij}$ and $C_{ijj}$ (cf.~the notation in Remark \ref{rmk:bb3-relativesigns}) and where we have used that $\maj$ is a symmetric matrix. 

\begin{figure}
	\centering
	\begin{subfigure}{.3\textwidth}
		\centering
		\def\svgwidth{\textwidth}
		\includesvg{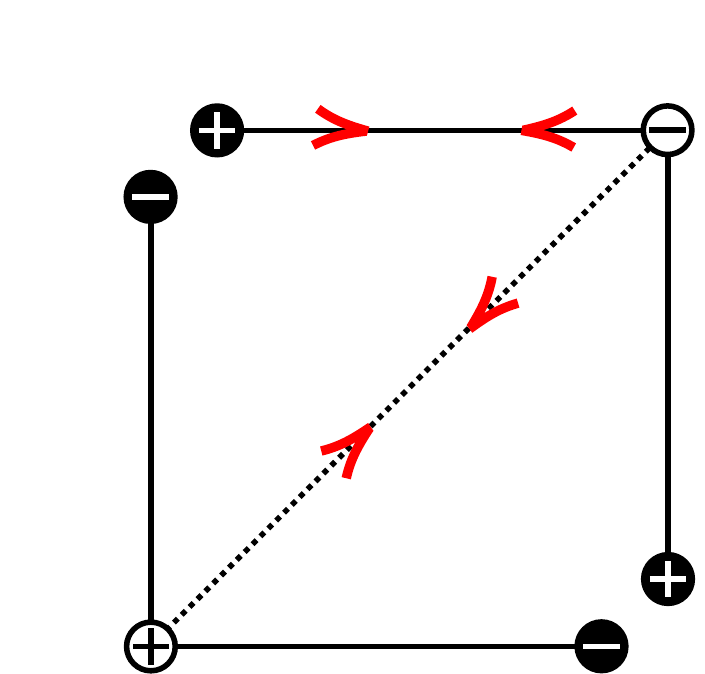}
		\caption{A path featuring edges from a building block of the second type.}
		\label{fig:bb4-path1}
	\end{subfigure}
	\hspace{30pt}
\begin{subfigure}{.3\textwidth}
		\centering
		\def\svgwidth{\textwidth}
		\includesvg{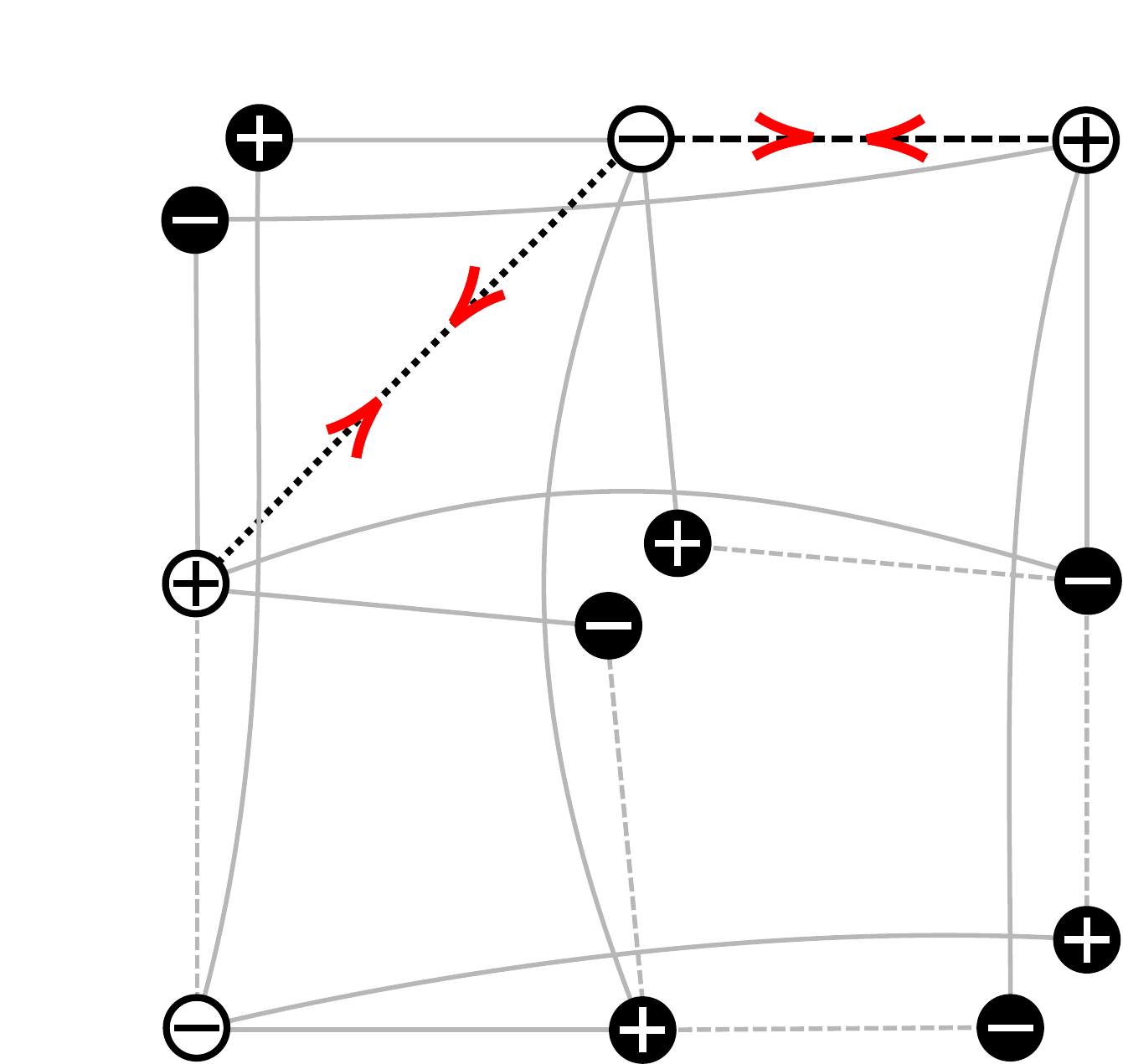}
		\caption{A path featuring edges from a building block of the third type.}
		\label{fig:bb4-path3}
	\end{subfigure}
	\\[10pt]
	\begin{subfigure}{.3\textwidth}
		\centering
		\def\svgwidth{\textwidth}
		\includesvg{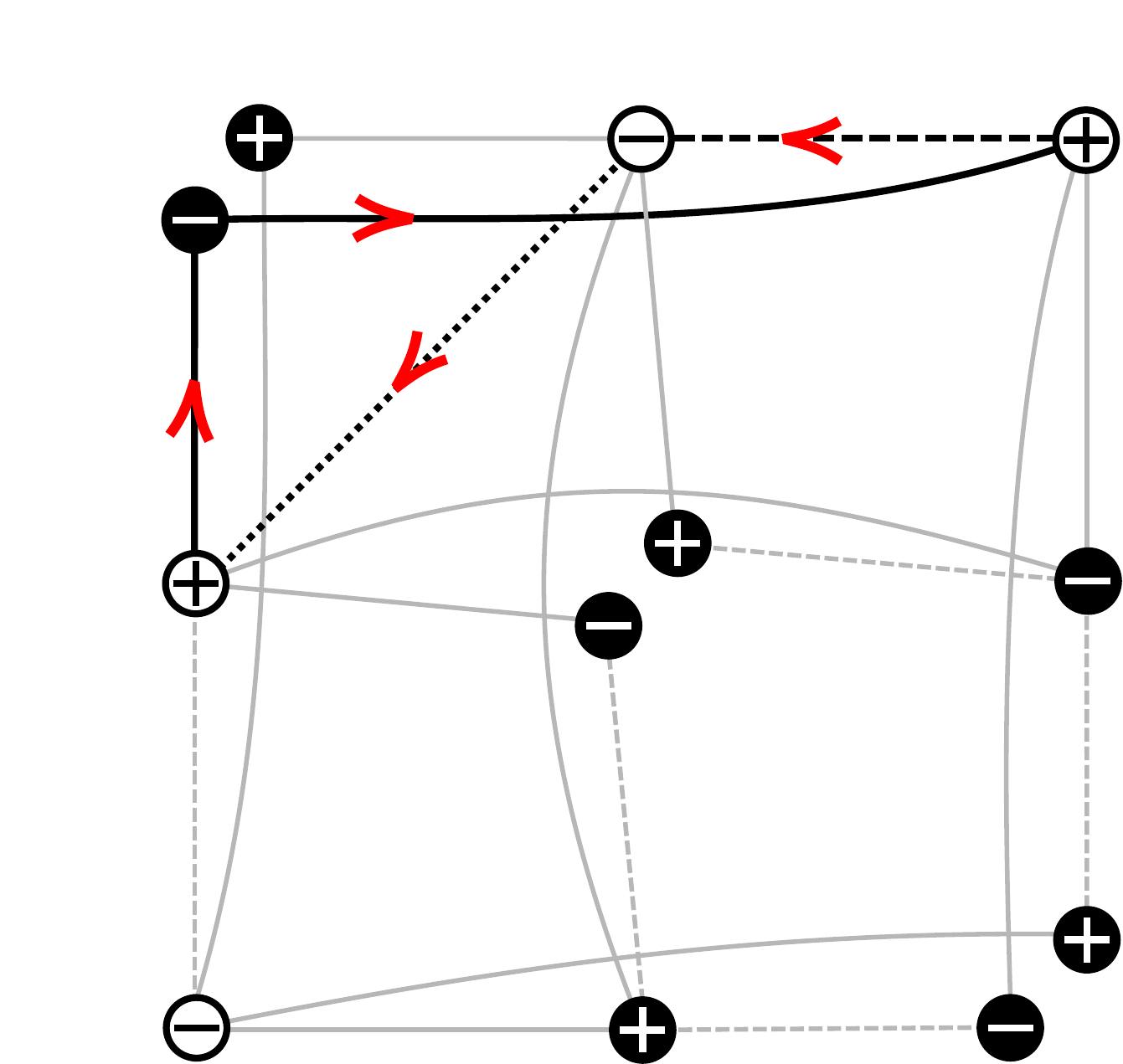}
		\caption{A path featuring edges from building blocks of the second and third type.}
		\label{fig:bb4-path2}
	\end{subfigure}
	\hspace{30pt}
\begin{subfigure}{.3\textwidth}
		\centering
		\def\svgwidth{\textwidth}
		\includesvg{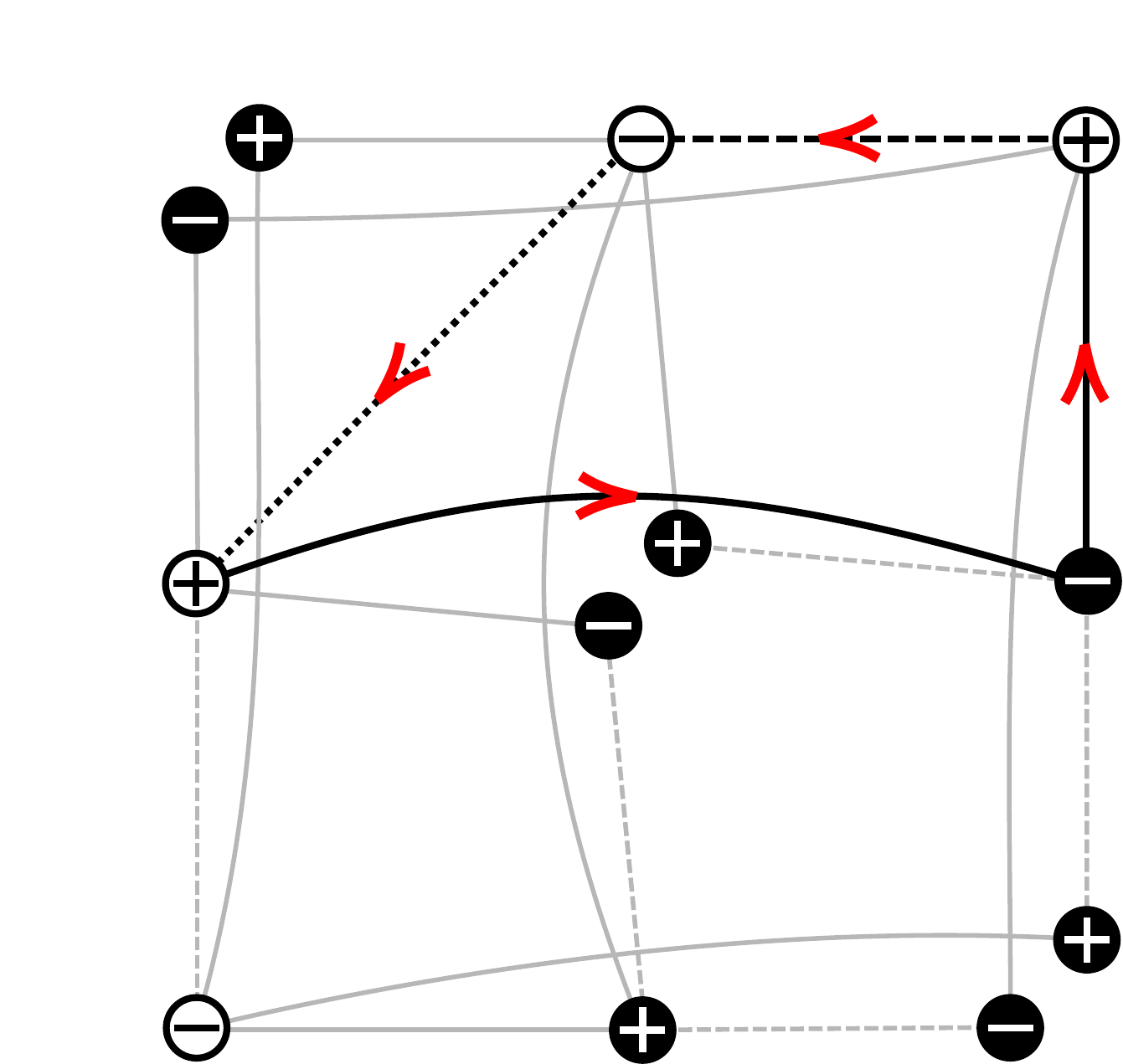}
		\caption{A second path featuring edges from a building block of the third type.}
		\label{fig:bb4-path4}
	\end{subfigure}

\captionsetup{width=.8\textwidth}
\caption{In the case that there is a building block of the fourth type, there are extra interactions in the action.}
\label{fig:bb4-paths}
\end{figure}

Then to make things a bit more apparent, we scale the fields in \eqref{eq:bb4-action} (with the third and fourth line replaced by \eqref{eq:bb4-idnyuks}) according to \eqref{eq:bb3-scalingfields} and put in the expressions for the $C_{ijj}$ from \eqref{eq:bb3-expressionCs-sqrt}, which gives
\ba
&	\frac{1}{2}\inpr{J_M\fer{11'L}}{\gamma^5 \maj^* \fer{11'L}} + \frac{1}{2}\inpr{J_M\afer{11'R}}{\gamma^5 \maj \afer{11'R}}\nn\\
&\qquad +  4r_1|\maj\asfer_{11'}|^2 + 2\sum_j \w{1j}\Big(|\maj\yukw{j}{}|_M^2 |\sfer_{1'j}|^2 + |\maj^*\yukws{j}{}\sfer_{1j}|^2\Big) \nn\\
& \qquad\qquad + \kappa_{1'}\kappa_j\sum_j 2g_m\sqrt{\frac{2\w{1j}}{q_m}} \Big( \tr \asfer_{11'}(r_1 + \w{1j}\yukw{j}{}\yukws{j}{})^t\maj\yukw{j}{}\sfer_{1'j}\asfer_{1j} + h.c.\Big), 
\label{eq:bb4-action-scaled}
\ea
where we have written $|a|^2_M = \tr_M a^*a$ for the trace over the family-index, $\yukw{j}{} \equiv \yukw{1'}{j}$, and where $\kappa_{1'} = \sgnc_{1',j}\sgnc_{1',1}$, $\kappa_{j} = \sgnc_{j,1'}\sgnc_{j,1}\in \{\pm 1\}$. We replaced $\asfer_{11'}^o$ by $\asfer_{11'}$ since these coincide when $\asfer_{11'}$ is a gauge singlet. Consequently, the traces are now over $\mathbf{1}^{\oplus M}$. In addition we used the relation \eqref{eq:improvedUpsilons2} between $\yuk{1}{j}$, $\yuk{1'}{j}$ and $\yuk{1}{1'}$, the symmetry of $\maj$ and that $g_1 \equiv g_{1'}$ (which follows from the set up) and consequently $r_1 = r_{1'}$ and $\w{1'j} = \w{1j}$. In contrast to the previous case, not all scalar interactions that appear here can be accounted for by auxiliary fields:

\begin{lem}\label{prop:bb4}

	For a finite spectral triple that contains, in addition to building blocks of the first, second and third type, one building block of the fourth type, the only terms in the associated spectral action that can be written off shell using the available auxiliary fields are those featuring $\sfer_{11'}$ or its conjugate. 	
\end{lem}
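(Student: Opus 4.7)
My plan is to dissect the extra scalar contributions in \eqref{eq:bb4-action-scaled} according to whether they involve $\sfer_{11'}$ or not, and for each piece to ask whether it can be reproduced on shell by extending a coupling of one of the available auxiliary fields (the $F_{ij}$ from the present \B{ij}'s of the second type, the $G_j$'s, and the single $u(1)$-auxiliary $H$) without generating contributions that are absent from the spectral action. The $\sfer_{11'}$-pieces are $4r_1|\maj\asfer_{11'}|^2$ together with the trilinear cross terms on the last line of \eqref{eq:bb4-action-scaled}; the remaining scalar pieces are the mass-like bilinears $2\w{1j}\,|\maj\yukw{j}{}|_M^2\,|\sfer_{1'j}|^2$ and $2\w{1j}\,|\maj^{*}\yukws{j}{}\sfer_{1j}|^2$.

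First I would dispose of the $\sfer_{11'}$-pieces by enlarging the off-shell couplings already at hand. Adding $F_{11'}^{*}\maj\sfer_{11'}+h.c.$ to the off-shell Lagrangian is permissible because $F_{11'}$ and $\sfer_{11'}$ lie in the same gauge representation, and its on-shell elimination reproduces $|\maj\asfer_{11'}|^2$. Likewise, appending a new coupling of the form $F_{1j}^{*}\gamma_j\,\asfer_{11'}\asfer_{1'j}+h.c.$ (with $\gamma_j$ tunable in terms of $\maj$ and $\yukw{j}{}$) to the existing \B{1jk}-couplings of $F_{1j}$ reproduces, via the cross term of $|F_{1j}|^{2}$ on shell, precisely the trilinear interaction appearing in the last line of \eqref{eq:bb4-action-scaled}; its byproducts are four-scalar terms already present elsewhere in the spectral action, so a short case-by-case inspection confirms that no spurious interactions remain.

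The hard part is the converse: showing that the mass-like bilinears, being quadratic in a single $\sfer_{1'j}$ (or $\sfer_{1j}$), cannot be reproduced. Since $G_j$ and $H$ are Hermitian, admit no linear tadpole coupling compatible with the gauge-representation content, and produce only quartic on-shell scalar interactions, they are powerless here. The only $F_{ij}$ in the correct representation to couple linearly to $\sfer_{1'j}$ is $F_{1'j}$ itself. However $F_{1'j}$ already carries the couplings $F_{1'j}^{*}\sum_k\beta_{1'j,k}\sfer_{1'k}\asfer_{jk}+h.c.$ fixed by the building blocks \B{1'jk} of the third type, so introducing an additional linear coupling $F_{1'j}^{*}\alpha\sfer_{1'j}+h.c.$ forces on shell a trilinear cross term of the form $\alpha^{*}\sfer_{1'j}^{\dagger}\sum_k\beta_{1'j,k}\sfer_{1'k}\asfer_{jk}+h.c.$, whose gauge-representation structure does not match any term in \eqref{eq:bb4-action-scaled} (all trilinear contributions there involve $\asfer_{11'}$ rather than $\sfer_{1'j}^{\dagger}$ paired with $\sfer_{1'k}\asfer_{jk}$). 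The symmetric argument applies to the $|\maj^{*}\yukws{j}{}\sfer_{1j}|^{2}$ term via $F_{1j}$. The technical core of the proof is to verify that no reshuffling of the $\alpha$- and $\beta$-couplings can absorb these spurious trilinears, which I expect to handle by a direct comparison of the gauge-representation content of the cross terms against the explicit list in \eqref{eq:bb4-action-scaled}, thereby ruling out an off-shell rewriting of the mass-like bilinears and establishing the lemma.
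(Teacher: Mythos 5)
Your treatment of the trilinear cross term is where the proof breaks down. The interaction on the last line of \eqref{eq:bb4-action-scaled} is \emph{cubic} in the scalar fields ($\asfer_{11'}$, $\sfer_{1'j}$, $\asfer_{1j}$). Eliminating an auxiliary field via $-\tr F^*F - (\tr F^*X + h.c.) \to \tr X^*X$ yields a cubic scalar interaction only if $X$ contains both a piece \emph{linear} in the scalars and a piece \emph{quadratic} in them. The coupling you propose, $F_{1j}^*\gamma_j\asfer_{11'}\asfer_{1'j}$ (which incidentally should read $\sfer_{11'}\sfer_{1'j}$ for the representations to compose to \rep{1}{j}), is quadratic, and so are the existing couplings $F_{1j}^*\beta_{1j,k}\sfer_{1k}\asfer_{jk}$ coming from the blocks \B{1jk}; every cross term in $|X_{1j}|^2$ is therefore quartic, and $F_{1j}$ has no linear coupling available (that would require a building block of the fifth type). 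Your mechanism thus cannot reproduce the trilinear term at all, and the quartic byproducts you dismiss as ``already present elsewhere'' would in any case have to be matched against the spectral action, which you do not do. The paper's route is to attach \emph{both} couplings to the single field $F_{11'}$: the linear one $\gamma_{11'}\asfer_{11'}$ and the quadratic one $\sum_j\beta_{11',j}\sfer_{1j}\asfer_{1'j}$ (the latter already dictated by \B{11'j}), as in \eqref{eq:bb4-auxfields}; upon elimination their cross term is precisely $\tr\gamma_{11'}\asfer_{11'}\sfer_{1'j}\asfer_{1j}\beta_{11',j}^* + h.c.$, i.e.~the cubic interaction in question.

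The negative half of your argument --- that the bilinears $|\maj\yukw{j}{}|_M^2|\sfer_{1'j}|^2$ and $|\maj^*\yukws{j}{}\sfer_{1j}|^2$ cannot be written off shell --- is sound and in fact more explicit than the paper's one-line assertion that \eqref{eq:bb4-auxfields} is the most general gauge-invariant choice: you correctly observe that $G_j$ and $H$ produce only quartic on-shell terms and that a linear coupling $F_{1'j}^*\alpha\sfer_{1'j}$ would force spurious trilinears through the existing $\beta_{1'j,k}$ couplings. This is consistent with the paper's subsequent remark that curing the obstruction requires exactly the mass-type couplings supplied by a building block of the fifth type. Once the trilinear step is repaired as above, the proof goes through.
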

\begin{proof}
The bosonic terms in \eqref{eq:bb4-action} must be the on shell expressions of an off shell Lagrangian that features the auxiliary fields available to us. Respecting gauge invariance, the latter must be  
\ba\label{eq:bb4-auxfields}
- \tr F_{11'}^*F_{11'} - \Big(\tr F_{11'}^*\big(\gamma_{11'} \asfer_{11'} + \sum_j \beta_{11',j}\sfer_{1j}\asfer_{1'j}\big) + h.c.\Big).
\ea
On shell this then gives the following contributions featuring $\sfer_{11'}$ and its conjugate: 
\bas
|\gamma_{11'} \asfer_{11'}|^2 
 + \sum_j 
 \big(\tr \gamma_{11'}\asfer_{11'}\sfer_{1'j}\asfer_{1j}\beta_{11',j}^* + h.c.\big),
\eas
which corresponds at least in form to all bosonic terms of \eqref{eq:bb4-action-scaled}, except the second term of the second line.
\end{proof}

We can use an argument similar to the one we used for building blocks of the third type:

\begin{lem}\label{lem:bb4-aux-susy}
	The action consisting of the fermionic terms of \eqref{eq:bb4-action-scaled} and the terms of \eqref{eq:bb4-auxfields} that do not feature $\beta_{11',j}$ or its conjugate is supersymmetric under the transformations \eqref{eq:susytransforms5} iff
	\ba\label{eq:bb4-aux-susy-demand}
			\gamma^*_{11'}\gamma_{11'} = \maj^*\maj
	\ea
	and the gauginos represented by the black vertices in Figure \ref{fig:bb4-path1} that have the same chirality are associated with each other.
\end{lem}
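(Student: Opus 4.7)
The strategy is to apply the supersymmetry transformations \eqref{eq:transforms4.1}--\eqref{eq:transforms4.2} and \eqref{eq:susytransforms5} to the sub-Lagrangian
\[
\mathcal{L}_0 := \tfrac{1}{2}\inpr{J_M\fer{11'L}}{\gamma^5\maj^*\fer{11'L}} + \tfrac{1}{2}\inpr{J_M\afer{11'R}}{\gamma^5\maj\afer{11'R}} - \tr F_{11'}^*F_{11'} - \bigl(\tr F_{11'}^*\gamma_{11'}\asfer_{11'} + h.c.\bigr)
\]
and to classify the resulting variations by the type of fields they contain. Since no component of $D_F$ mapping \emph{to or from} $\repl{1}{1'}$ is associated with an $F$-propagator other than $F_{11'}$, and since Theorem \ref{thm:bb1} already guarantees that $S_j$ is supersymmetric on its own, any would-be cancellation must happen within $\mathcal{L}_0$ itself. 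The variations split naturally into three groups according to whether they involve (i) $\can_A$ acting on $\fer{11'L}$ or $\afer{11'R}$, (ii) the auxiliary field $F_{11'}$ (and its conjugate) together with a fermion, or (iii) a gaugino $\gau{1L,R}$ or $\gau{1'L,R}$ together with $\sfer_{11'}$ or $\asfer_{11'}$.

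First I would handle group (i). Varying $\fer{11'L}$ inside the Majorana mass term via the $[\can_A,\sfer_{11'}]\eR$ piece of \eqref{eq:transforms4.2}, then integrating by parts to move $\can_A$ off $\sfer_{11'}$, produces a term of the schematic form $\sfer_{11'}\,(J_M\eR,\gamma^5\maj^*\can_A\fer{11'L})$. This must cancel against the contribution coming from varying $F_{11'}^*$ in $-\tr F_{11'}^*\gamma_{11'}\asfer_{11'}$ via the first term of \eqref{eq:transforms4.4b}, which yields $\asfer_{11'}\,(J_M\eL,\can_A\afer{11'R})\,\gamma_{11'}$ and is related to the previous expression by the antisymmetry property \eqref{eq:symmInnerProd} together with the symmetry of $\maj$. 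Matching the coefficients then fixes a relation between $\maj$, $\gamma_{11'}$ and the transformation constants $c_{11'}',d_{11'},d_{11',1},d_{11',1'}$; modulo the already-known identifications from Theorem \ref{prop:bb2}, this relation reduces to the single algebraic condition $\gamma_{11'}^*\gamma_{11'}=\maj^*\maj$ stated in \eqref{eq:bb4-aux-susy-demand}. Group (ii), coming from varying $\fer{11'L}$ via its $d_{11'}'F_{11'}\eL$ piece inside the Majorana term and from varying $\asfer_{11'}$ inside $-\tr F_{11'}^*\gamma_{11'}\asfer_{11'}$, is handled identically and produces the same condition, confirming its consistency.

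The main obstacle, and the content of the second half of the lemma, is group (iii). The gaugino pieces of $\delta F_{11'}$ in \eqref{eq:transforms4.4a} produce two terms
\[
d_{11',1}(J_M\eR,\gamma^5\gau{1R}\sfer_{11'})_\cS \quad\text{and}\quad -d_{11',1'}(J_M\eR,\gamma^5\sfer_{11'}\gau{1'R})_\cS,
\]
and since $\sfer_{11'}$ is a singlet (it is a gauge-invariant scalar, as the only non-trivial gauge action on $\repl{1}{1'}$ comes from the same $\com$ on the left and right), these two expressions are proportional to each other as soon as the two gauginos are related. There is no other term in $\mathcal{L}_0$ that can absorb them, so cancellation is possible only if the bracketed combination vanishes identically once one plugs this into $-\tr F_{11'}^*\gamma_{11'}\asfer_{11'} + h.c.$ after using the equation $\gamma_{11'}^*\gamma_{11'} = \maj^*\maj$ just obtained. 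This forces $\gau{1R}$ and $\gau{1'R}$ to be identified up to a sign, and an analogous argument with $\eL$ in place of $\eR$ (using \eqref{eq:transforms4.4b}) forces the same identification for the left-handed gauginos. Geometrically, this is precisely the statement that the two adjoint (``black'') vertices in Figure \ref{fig:bb4-path1} of matching chirality must carry the same gaugino degree of freedom. Conversely, once $\gamma_{11'}^*\gamma_{11'}=\maj^*\maj$ and the gaugino identifications hold, all three groups cancel by construction, establishing the ``if'' direction. I expect the sign bookkeeping in relating the Majorana bilinear $(J_M\zeta,\gamma^5\maj\zeta)_\cS$ to the $F$-bilinear through the antisymmetry \eqref{eq:symmInnerProd} and the symmetry of $\maj$ to be the only technically delicate step; everything else is a direct transcription of the Theorem \ref{prop:bb2} proof to the present mass-term setting.
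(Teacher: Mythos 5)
Your overall strategy---apply the transformations, sort the resulting variations by field content, extract the two relations $d\,\maj = c^*\gamma_{11'}$ and $c^*\maj^* = \gamma_{11'}^*d$ whose combination gives \eqref{eq:bb4-aux-susy-demand} together with $|c|^2=|d|^2$, and treat the gaugino terms as a separate group that forces the identification of the two adjoint vertices---is the same as the paper's. The gaugino discussion and the final conclusions are essentially correct.

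However, the explicit pairings you propose in groups (i) and (ii) are wrong, and as written those cancellations would fail. In the Euclidean setup of this paper $\fer{11'L}$ and $\afer{11'R}$ are \emph{independent} Grassmann variables, and $\sfer_{11'}$ and $\asfer_{11'}$ are likewise independent; a variation containing $\sfer_{11'}$ and $\fer{11'L}$ can therefore never cancel one containing $\asfer_{11'}$ and $\afer{11'R}$, and \eqref{eq:symmInnerProd} cannot convert one into the other (it exchanges the two slots of the inner product; it does not conjugate the fields). Concretely, the $c^*\gamma^5[\can_A,\sfer_{11'}]\eR$-variation of $\fer{11'L}$ in $\tfrac12\inpr{J_M\fer{11'L}}{\gamma^5\maj^*\fer{11'L}}$ must cancel against the $d(J_M\eR,\can_A\fer{11'L})$ piece of $\delta F_{11'}$ inserted in the hermitian-conjugate auxiliary term $-\tr\,\sfer_{11'}\gamma_{11'}^*F_{11'}$ (yielding $c^*\maj^*=\gamma_{11'}^*d$), \emph{not} against the $\delta F_{11'}^*$-variation of $-\tr F_{11'}^*\gamma_{11'}\asfer_{11'}$, which lives in the conjugate sector. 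Likewise your group (ii) pairs the $F_{11'}\eL$-variation of $\fer{11'L}$ with the $\delta\asfer_{11'}$-variation of $-\tr F_{11'}^*\gamma_{11'}\asfer_{11'}$; the correct partner of the latter is the $dF_{11'}^*\eR$-variation of $\afer{11'R}$ in $\tfrac12\inpr{J_M\afer{11'R}}{\gamma^5\maj\afer{11'R}}$ (yielding $d\,\maj=c^*\gamma_{11'}$), while the $F_{11'}\eL$-variation of $\fer{11'L}$ pairs with $\delta\sfer_{11'}=c(J_M\eL,\gamma^5\fer{11'L})$ in $-\tr\,\sfer_{11'}\gamma_{11'}^*F_{11'}$. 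Once the pairings are made within each conjugate sector, the computation goes through as you expect and produces exactly the stated conditions, so the defect is in the bookkeeping rather than in the final answer---but the cancellation mechanism you describe is not the one that actually operates.
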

\begin{proof}
	See Section \ref{sec:bb4-proof}.
\end{proof}

Combining the above two Lemmas, then gives the following result.

\begin{prop}\label{cor:bb4}
The action \eqref{eq:bb4-action-scaled} of a single building block of the fourth type breaks supersymmetry only softly via
\bas
 2\sum_j \w{1j}\Big(|\maj\yukw{j}{}|_M^2|\sfer_{1'j}|^2 + |\maj^*\yukws{j}{}\sfer_{1j}|^2\Big) \nn
\eas
iff
\ba\label{eq:bb4-susy-demands}
	r_1 &= \frac{1}{4} &&\text{and} & \w{1j}\yukw{j}{}\yukws{j}{} &= \Big(- \frac{1}{4} \pm \frac{\kappa_{1'}\kappa_j}{2}\Big)\id_M,
\ea
where the latter should hold for all $j$ appearing in the sum in \eqref{eq:bb4-action}. Here $\kappa_{1'} = \sgnc_{1',j}\sgnc_{1',1}, \kappa_j = \sgnc_{j,1'}\sgnc_{j,1}\in \{\pm 1\}$.
\end{prop}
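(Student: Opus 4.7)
The plan is to combine Lemmas \ref{prop:bb4} and \ref{lem:bb4-aux-susy} with a matching of the trilinear cross terms of \eqref{eq:bb4-action-scaled} against the on-shell value of the auxiliary Lagrangian \eqref{eq:bb4-auxfields}. First I would invoke Lemma \ref{prop:bb4} to isolate the obstruction: among the bosonic pieces of \eqref{eq:bb4-action-scaled}, only those containing $\sfer_{11'}$ or $\asfer_{11'}$ can be produced by eliminating $F_{11'}$, so the residual term $2\sum_j \w{1j}(|\maj\yukw{j}{}|_M^2|\sfer_{1'j}|^2 + |\maj^*\yukws{j}{}\sfer_{1j}|^2)$ has no off-shell supersymmetric completion and, being purely quadratic in the sfermions, is a genuine soft breaking. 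This fixes the soft-breaking piece appearing in the statement, and reduces the ``only if'' direction to demanding that every other term in \eqref{eq:bb4-action-scaled} arise on shell from a supersymmetric off-shell Lagrangian.

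Next I would apply Lemma \ref{lem:bb4-aux-susy} to the $\gamma_{11'}$-part of \eqref{eq:bb4-auxfields}, which forces $\gamma_{11'}^*\gamma_{11'}=\maj^*\maj$. Eliminating $F_{11'}$ produces the diagonal mass term $|\gamma_{11'}\asfer_{11'}|^2$, to be identified with the term $4r_1|\maj\asfer_{11'}|^2$ in \eqref{eq:bb4-action-scaled}. Combining the two equalities yields $4r_1\maj^*\maj=\maj^*\maj$ on the relevant subspace, and hence $r_1=1/4$, the first assertion of the proposition.

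The trilinear cross terms are handled similarly. On shell, \eqref{eq:bb4-auxfields} contributes $\sum_j \tr\gamma_{11'}\asfer_{11'}\sfer_{1'j}\asfer_{1j}\beta_{11',j}^* + \text{h.c.}$, which must equal the trilinear term in \eqref{eq:bb4-action-scaled} proportional to $\kappa_{1'}\kappa_j(r_1 + \w{1j}\yukw{j}{}\yukws{j}{})^t\maj\yukw{j}{}$. The coupling $\beta_{11',j}$ is not free: the building block \B{11'j} needed to even define it already fixes $\bps_{11',j}\bp_{11',j}$ via \eqref{eq:bb3-susy-demand2}, which upon using \eqref{eq:yukwyukp} is proportional to $\w{1j}\yukw{j}{}\yukws{j}{}$. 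Multiplying the matching identity by its adjoint and substituting $\gamma_{11'}^*\gamma_{11'}=\maj^*\maj$ eliminates both unknowns $\gamma_{11'}$ and $\beta_{11',j}$ and collapses to a single scalar equation for the Hermitian operator $X := \w{1j}\yukw{j}{}\yukws{j}{}$, schematically $X = (r_1+X)^2$. With $r_1=1/4$ already fixed, this quadratic has exactly the two roots $X = (-1/4 \pm \kappa_{1'}\kappa_j/2)\id_M$, which is \eqref{eq:bb4-susy-demands}. The converse then follows by substituting these values back, eliminating $F_{11'}$, and checking that every non-soft-breaking scalar interaction in \eqref{eq:bb4-action-scaled} is reproduced.

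The hard part will be the bookkeeping in the third step: the matching carries several transposes, the signs $\kappa_{1'},\kappa_j$ inherited from \B{11'} and \B{1'j}, and the normalization factors $g_m$, $q_m$, $\w{1j}$ from \eqref{eq:yukwyukp} and Remark \ref{rmk:bb3-relativesigns}. What needs to be verified carefully is that the two independent constraints --- $\gamma_{11'}^*\gamma_{11'}=\maj^*\maj$ from Lemma \ref{lem:bb4-aux-susy} and the third-building-block condition \eqref{eq:bb3-susy-demand2} on $\bp_{11',j}$ --- are jointly compatible with the cross-term identity \emph{precisely} when the quadratic relation on $X$ holds, so that no further hidden constraint is generated.
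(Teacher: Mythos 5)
Your overall strategy is the same as the paper's: isolate the soft-breaking residue via Lemma \ref{prop:bb4}, fix $\gamma_{11'}$ up to a phase by matching $|\gamma_{11'}\asfer_{11'}|^2$ against $4r_1|\maj\asfer_{11'}|^2$ and combine with $\gamma_{11'}^*\gamma_{11'}=\maj^*\maj$ from Lemma \ref{lem:bb4-aux-susy} to get $r_1=\tfrac14$, then match the trilinear cross terms using the value of $\beta_{11',j}$ forced by \eqref{eq:bb3-susy-demand2}. Steps one and two are fine.

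The third step as written would fail. After substituting $\gamma_{11'}=2\sqrt{r_1}e^{i\phi_\gamma}\maj$ and $\beta_{11',j}^*=g_m\sqrt{2\w{1j}/q_m}\,e^{-i\phi_{\beta_j}}(\yukw{j}{})^t$ into the cross-term identity, the common factor $\maj\yukw{j}{}$ cancels and what remains is \emph{linear} in $X:=\w{1j}\yukw{j}{}\yukws{j}{}$, namely $r_1\id_M+X=\pm\kappa_{1'}\kappa_j\sqrt{r_1}\,\id_M$, the sign coming from the residual phase ambiguity $\phi_\gamma=\phi_{\beta_j}\bmod\pi$; with $r_1=\tfrac14$ this gives exactly \eqref{eq:bb4-susy-demands}. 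Your schematic equation $X=(r_1+X)^2$ is not this relation: with $r_1=\tfrac14$ it reads $(X-\tfrac14)^2=0$ and has only the double root $X=\tfrac14\,\id_M$, so it does not have the two roots you assert (the correct squared form would be $(r_1+X)^2=r_1\id_M$). Moreover, multiplying the identity by its adjoint is counterproductive here: it destroys precisely the sign information that produces the $\pm\kappa_{1'}\kappa_j$ in the final answer, which is why the paper keeps the linear identity and tracks the relative phase of $\gamma_{11'}$ and $\beta_{11',j}$ instead.
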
 
\begin{proof}
To prove this, we must match the coefficients of the contribution \eqref{eq:bb4-action-scaled} to the spectral action from a building block \B{11'} to those of the auxiliary fields \eqref{eq:bb4-auxfields}. This requires
\ba\label{eq:bb4-susy-demands-interm}
	\gamma_{11'} &= 2\sqrt{r_1}e^{i\phi_\gamma}\maj, & \kappa_{1'}\kappa_j 2g_m \sqrt{\frac{2\w{1j}}{q_m}}(r_1\id_M  + \w{1j} \yukw{j}{}\yukws{j}{})^t\maj\yukw{j}{} = \gamma_{11'}(\beta_{11',j}^*)^t
\ea
for all $j$, where $e^{i\phi_\gamma}$ denotes the phase ambiguity left in $\maj$ from \eqref{eq:bb4-aux-susy-demand} and where we have used the symmetry of $\maj$. From supersymmetry $\gamma_{11'}$ is in addition constrained by \eqref{eq:bb4-aux-susy-demand}, which requires the first relation of \eqref{eq:bb4-susy-demands} to hold. For the building block \B{11'j} to have a supersymmetric action we demand
\bas	
		\beta_{11',j}^* &= g_m \sqrt{\frac{2\w{1j}}{q_m}} e^{-i\phi_{\beta_j}}(\yukw{j}{})^t,
\eas
which can be obtained by combining the demand \eqref{eq:bb3-susy-demand2} with the relation \eqref{eq:yukwyukp}, but keeping Remark \ref{rmk:bb3-R=1} in mind since it is $\sfer_{1'j}$ that does not have a family index. As is with $\maj$, the demand \eqref{eq:bb3-susy-demand2} determines $\beta_{11',j}$ only up to a phase $\phi_{\beta_j}$. Comparing this with the second demand of \eqref{eq:bb4-susy-demands-interm}, inserting \eqref{eq:bb4-aux-susy-demand} and using the symmetry of $\maj$, we must have 
\bas
	\phi_\gamma &= \phi_{\beta_j} \mod \pi, & 2(r_1\id_M + \w{1j}\yukw{j}{}\yukws{j}{}) = \pm \kappa_{1'}\kappa_j 2\sqrt{r_1}\id_M.
\eas
 Inserting the first relation of \eqref{eq:bb4-susy-demands}, its second relation follows. The second term of the second line of \eqref{eq:bb4-action-scaled} cannot be accounted for by the auxiliary fields at hand, which establishes the result.
\end{proof}

It is not per se impossible to write all of \eqref{eq:bb4-action-scaled} off shell in terms of auxiliary fields, but to avoid the obstruction from Lemma \ref{prop:bb4} at least requires the presence of mass terms for the representation $\sfer_{1j}$ and $\sfer_{1'j}$ such as the ones that are discussed in the next section.

%
%
%
%
%

\subsubsection{Fifth building block: `mass' terms}\label{sec:bb5}

If there are two building blocks of the second type with the same indices ---say $i$ and $j$--- but with different values for the grading, we are in the situation as depicted in Figure \ref{fig:bb5}. On the basis 
\ba\label{eq:bb5-basis}
	\Big[(\rep{i}{j})_L \oplus (\rep{j}{i})_R \oplus (\rep{i}{j})_R \oplus (\rep{j}{i})_L\Big]^{\oplus M}, 
\ea
the most general finite Dirac operator that satisfies the demand of self-adjointness, the first order condition \eqref{eq:order_one} and that commutes with $J_F$ is of the form
\ba\label{eq:bb5-DF}
	D_F = \begin{pmatrix}
			0  & 0 & \mu_i + \mu_j^o  & 0 \\
		0 & 0 & 0 & (\mu_i^o)^* + \mu_j^*\\
		\mu_i^* + (\mu_j^*)^o & 0 & 0 & 0   \\
		0 &  \mu_i^o + \mu_j  & 0 & 0 
	\end{pmatrix}
\ea
with $\mu_i \in M_{N_iM}(\com)$ and $\mu_j \in M_{N_jM}(\com)$. The inner fluctuations for general such matrices $\mu_{i,j}$ will generate scalar fields in the representations $M_{N_{i,j}}(\com)$. If we want these components to result in mass terms in the action, we should restrict them both to only act non-trivially on possible generations, i.e.~for a single generation the components are equal to a complex number. We will write $\mu := \mu_i + \mu_j^o \in M_{M}(\com)$ for the restricted component. 

This gives rise to the following definition.
\begin{defin}\label{def:bb5}
For a finite spectral triple that contains building blocks \Bc{ij}{\pm} and \Bc{ij}{\mp} of the second type (both with multiplicity $M$), a \emph{building block of the fifth type} is a component of $D_F$ that runs between the representations of the two building blocks and acts only non-trivially on the $M$ copies. Symbolically:
\begin{align*}
	\B{\mathrm{mass}, ij} = (0, \D{ijL}{ijR}) \in \H_F \oplus \End(\H_F).
\end{align*}
We denote this component with $\mu \in M_M(\com)$. 
\end{defin}

\begin{figure}
	\begin{center}
		\def\svgwidth{.45\textwidth}
		\includesvg{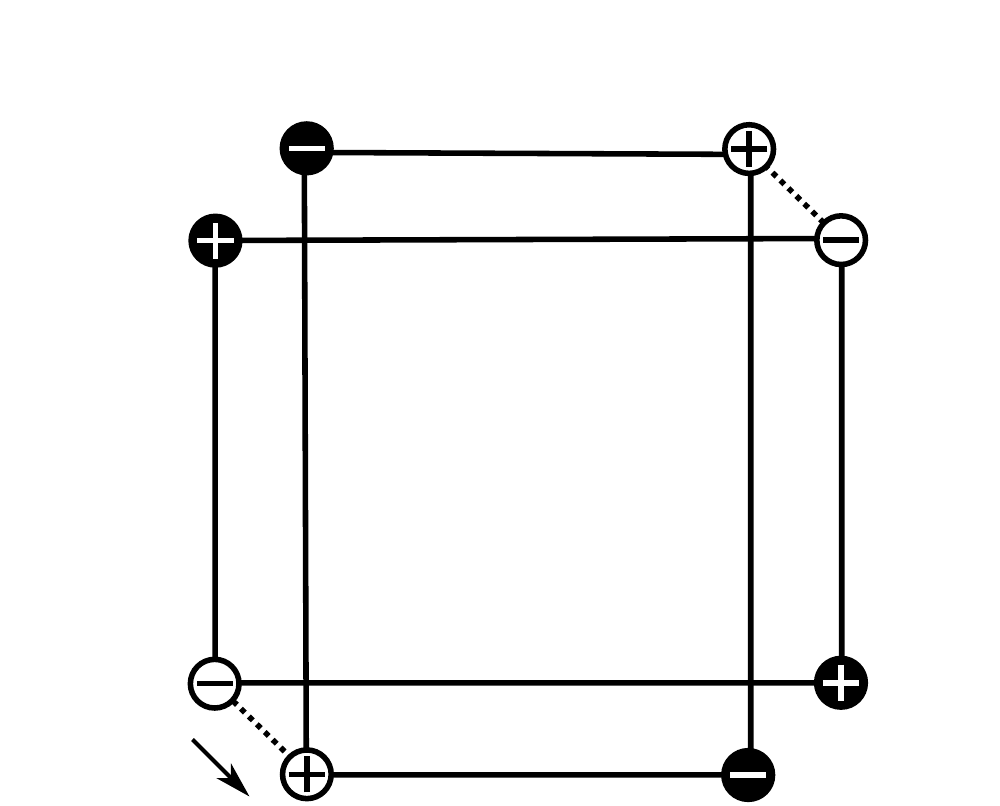}	
		\captionsetup{width=.7\textwidth}
		\caption{The case with two building blocks of the second type that have the same indices but an opposite grading; a component of the finite Dirac operator mapping between the two copies will generate a mass-term, indicated by the dotted line with the `$\mu$'.}
		\label{fig:bb5}	
\end{center}
\end{figure}	

If for convenience we restrict to the upper signs for the chiralities of the building blocks and write 
\bas
 (\fer{ijL},  \afer{ijR}, \fer{ijR}', \afer{ijL}')
\eas
for the elements of $L^2(M, S \otimes \H_F)$ on the basis \eqref{eq:bb5-basis} (where the first two fields are associated to \Bc{ij}{+} and the last two to \Bc{ij}{-}), then the contribution of \eqref{eq:bb5-DF} to the fermionic action reads
\ba
	S_{f,\textrm{mass}}[\zeta] &= \frac{1}{2}\inpr{J(\fer{ijL}, \afer{ijR}, \fer{ijR}', \afer{ijL}')}{\gamma^5 D_F (\fer{ijL}, \fer{ijR}', \afer{ijR}, \afer{ijL}')}\nn\\
&\qquad= \inpr{J_M \afer{ijR}}{\gamma^5 \mu\fer{ijR}'}
+ \inpr{J_M\afer{ijL}'}{\gamma^5 \mu^*\fer{ijL}}.\label{eq:bb5-action-ferm}
\ea
Let $\sfer$ and $\sfer'$ be the sfermions that are associated to \Bc{ij}{+} and \Bc{ij}{-} respectively, then the extra contributions to the spectral action as a result of adding this building block are given by
\begin{align}
	S_{b,\textrm{mass}}[\szeta] &= \frac{f(0)}{\pi^2}(N_i|\mu^*C_{iij}\sfer_{ij}|^2 + N_j|\mu^* C_{ijj}\sfer_{ij}|^2 + N_i|\mu C_{iij}'\sfer_{ij}'|^2 + N_j|\mu C_{ijj}'\sfer_{ij}'|^2)\nn\\
&\quad + \frac{f(0)}{\pi^2}\sum_{k}\Big[ N_i \tr \mu^*\asfer_{ij}'C_{iij}'^*C_{iik}\sfer_{ik}\asfer_{jk}\yuks{j}{k} + N_j \tr \asfer_{ij}'C_{ijj}'^*\,\mu^*\yuks{i}{k}\sfer_{ik}\asfer_{jk}C_{jjk}^* + h.c.\nn\\
&\qquad + \Big(N_j\tr_M(\mu\mu^*\yuks{i}{k}\yuk{i}{k})|\sfer_{ik}|^2 + N_i|\mu\yuk{j}{k}\sfer_{jk}|^2\Big)\Big],\label{eq:bb5-action}
\end{align}
where the second and third lines arise in a situation where for some $k$, $\B{ijk}$ is present. The paths corresponding to these expressions are depicted in Figure \ref{fig:bb5-paths}. Here, the $C_{iij}$ with a prime correspond to the components of the Dirac operator of \Bc{ij}{-}. We assume that they also satisfy \eqref{eq:bb2-resultCiij}. In this context $\sfer_{ik}$ does not have a family-index and consequently we could separate the traces in the first term of the third line of \eqref{eq:bb5-action}.\\

In a similar way as with the building block of the fourth type we can rewrite the second line of \eqref{eq:bb5-action} using Remarks \ref{rmk:bb3-relativesigns} and \ref{rmk:bb3-R=1}, giving
\ba
&\frac{f(0)}{\pi^2} \Big[N_i \tr (\asfer_{ij}'C_{iij}'^*)_aC_{iik}\sfer_{ik}\asfer_{jkb}(\yuks{j}{k}(\mu^*)^t)_{ba} \nn\\
&\qquad + N_j \tr (\asfer_{ij}'C_{ijj}'^*)_a\,(\mu^*\yuks{i}{k})_{ac}\sfer_{ik}\asfer_{jkb}(C_{jjk}^*)_{bc} + h.c.\Big]\nn\\
&= s_{jk}\bigg(\frac{N_ir_i + N_jr_j}{\sqrt{n_jn_k}g_jg_k}\bigg) \tr \asfer_{ij}'C_{ijj}'^*\mu^*\yuks{i}{k}\sfer_{ik}\asfer_{jk}C_{jkk}^*  + h.c.\label{eq:bb5-relativesigns}
\ea
Replacing the second line of \eqref{eq:bb5-action} with \eqref{eq:bb5-relativesigns} and then scaling the fields and rewriting $\yuk{i}{j}$ and $\yuk{j}{k}$ in terms of $\yuk{i}{k} \equiv\yuk{}{}$ using the identities \eqref{eq:improvedUpsilons2}, reduces the bosonic contribution \eqref{eq:bb5-action} to
\begin{align}
	& 2(1 - \w{ij})\big(|\mu^* \sfer_{ij}|^2 + |\mu \sfer_{ij}'|^2\big) + 2\sum_{k}\bigg[\kappa_{j} g_l (1 - \w{ij})\sqrt{\frac{2\w{ik}}{q_l}} \tr \asfer_{ij}'\mu^*\yukws{}{}\sfer_{ik}\asfer_{jk} + h.c.\nn\\
&\qquad\qquad + \w{ik}\Big(N_j|\yuk{}{}\mu|_M^2|\sfer_{ik}|^2 + N_i|\mu\yukw{}{}\sfer_{jk}|^2\Big)\bigg],\label{eq:bb5-action-scaled}
\end{align}
where we have again employed the notation $|a|_M^2 = \tr_M a^*a$ for the trace over the family-index and used that $s_{jk}\sgnc_{j,i}\sgnc_{k,j} = \sgnc_{j,i}\sgnc_{j,k} \equiv \kappa_j \in \{\pm\}$. The index $l$ can take any of the values that appear in the model. \\

\begin{figure}
	\centering
	\begin{subfigure}{.45\textwidth}
		\centering
		\def\svgwidth{\textwidth}
		\includesvg{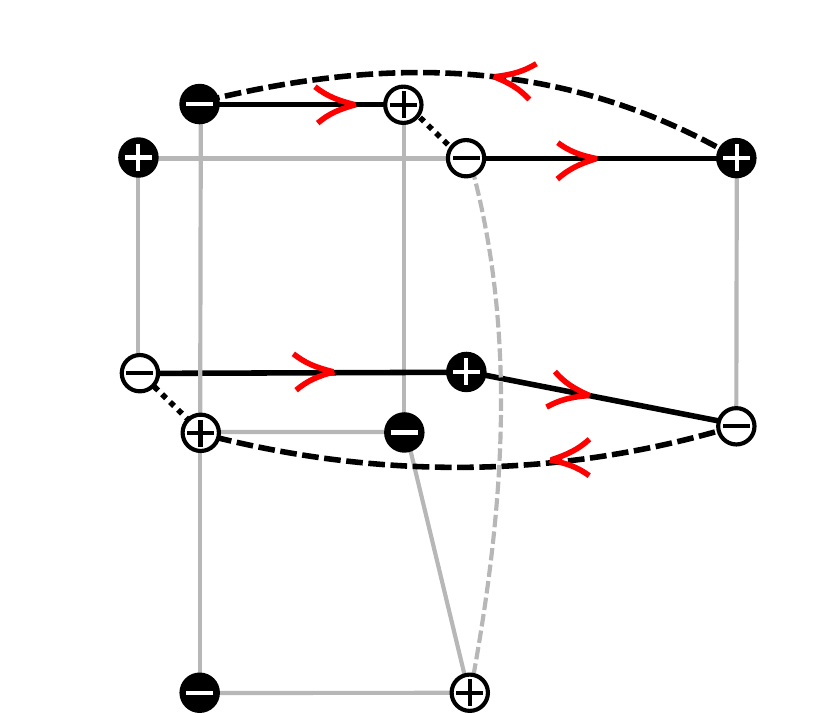}
		\caption{A path with $\mu$, featuring edges from a building block of the second and third type.}
		\label{fig:bb5-path1}
	\end{subfigure}
	\hspace{30pt}
	\begin{subfigure}{.45\textwidth}
		\centering
		\def\svgwidth{\textwidth}
		\includesvg{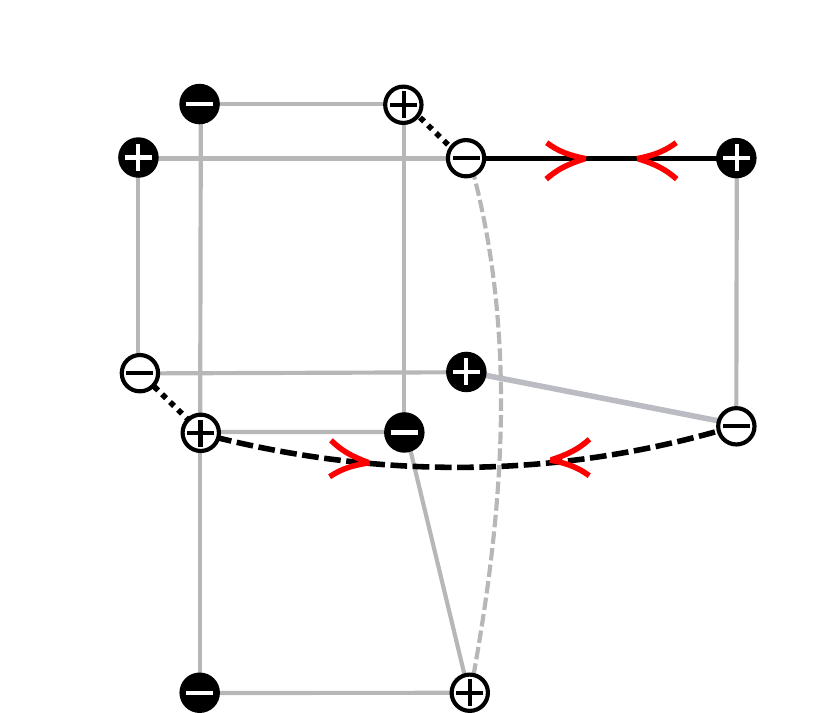}
		\caption{A path with $\mu$, featuring only edges from building blocks of the third and fifth type.}
		\label{fig:bb5-path2}
	\end{subfigure}
		\caption{In the case of a building block of the fifth type, there are various extra contributions to the action, depending on the content of the finite spectral triple.}
		\label{fig:bb5-paths}	
\end{figure}	

Here we have a similar result as in the previous section:

\begin{lem}\label{lem:bb5}

	For a finite spectral triple that contains, in addition to building blocks of the first, second and third type, one building block of the fifth type, the only terms in the associated spectral action that can be written off shell are those featuring $\sfer_{ij}$, $\sfer_{ij}'$ or their conjugates. 	
\end{lem}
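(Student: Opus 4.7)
The plan is to mirror the proof strategy of Lemma~\ref{prop:bb4}: write down the most general gauge-invariant off-shell Lagrangian built from the auxiliary fields available in the model, then check by completing the square which of the bosonic contributions in \eqref{eq:bb5-action-scaled} can be reproduced on shell.

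The relevant auxiliary fields are $F_{ij}$ and $F_{ij}'$ from the two building blocks $\Bc{ij}{+}$ and $\Bc{ij}{-}$ of the second type, together with the $F_{ik}$, $F_{jk}$ coming from any building blocks of the third type through $k$, and the real auxiliary fields $G_i$, $G_j$, $H$. Because $\mu\in M_M(\com)$ acts only on family indices, the building block of the fifth type enlarges the list of gauge-invariant combinations in $\rep{j}{i}$ that can couple linearly to $F_{ij}^*$ and $F_{ij}'^{*}$ by exactly $\asfer_{ij}'\mu^{*}$ and $\mu\asfer_{ij}$ respectively. Augmenting \eqref{eq:bb3-auxfields} in the most general way consistent with gauge invariance therefore yields
\bas
-\tr F_{ij}^{*} F_{ij} -\tr F_{ij}'^{*} F_{ij}' &+ \Big(\tr F_{ij}^{*}\big(\gamma_{ij}\,\asfer_{ij}'\mu^{*} + \sum_{k}\beta_{ij,k}\sfer_{ik}\asfer_{jk}\big) + h.c.\Big)\\
&+ \Big(\tr F_{ij}'^{*}\big(\gamma_{ij}'\,\mu\,\asfer_{ij} + \sum_{k}\beta_{ij,k}'\sfer_{ik}'\asfer_{jk}'\big) + h.c.\Big).
\eas
Integrating out $F_{ij}$ and $F_{ij}'$ produces, beyond the terms already discussed in Theorem~\ref{thm:bb3}, the squares $|\gamma_{ij}\asfer_{ij}'\mu^{*}|^{2}$, $|\gamma_{ij}'\mu\asfer_{ij}|^{2}$, and cross terms of the form $\tr\gamma_{ij}\asfer_{ij}'\mu^{*}\sfer_{jk}\asfer_{ik}\beta_{ij,k}^{*}+h.c.$ and their primed counterparts. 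Every one of these contains $\sfer_{ij}$ or $\sfer_{ij}'$, and an appropriate choice of $\gamma_{ij},\gamma_{ij}'$ (consistent with the constraints on the $\beta$'s already imposed by Theorem~\ref{thm:bb3}) matches precisely the first line of \eqref{eq:bb5-action-scaled} together with the $\asfer_{ij}'\mu^{*}\yukws{i}{k}\sfer_{ik}\asfer_{jk}$ cross terms on its second line.

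The obstruction is the residual piece $2\sum_{k}\w{ik}\big(N_{j}|\yuk{i}{k}\mu|_{M}^{2}|\sfer_{ik}|^{2} + N_{i}|\mu\yukw{i}{k}\sfer_{jk}|^{2}\big)$, which are pure mass terms for $\sfer_{ik}$ and $\sfer_{jk}$ and contain no factor of $\sfer_{ij}$ or $\sfer_{ij}'$. Reproducing, say, $|\mu\sfer_{ik}|^{2}$ by completing a square would require supplementing $F_{ik}^{*}$ with a new linear coupling of the form $F_{ik}^{*}(\alpha_{k}\mu\sfer_{ik})$, which is gauge invariant because $\mu$ only acts in family space. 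The modified EOM $F_{ik}=\alpha_{k}\mu\sfer_{ik}+\beta_{ik,j}^{*}\sfer_{ij}\sfer_{jk}$ would, however, inject into $|F_{ik}|^{2}$ the cross term $\alpha_{k}^{*}\beta_{ik,j}^{*}\tr\sfer_{ik}^{*}\mu^{*}\sfer_{ij}\sfer_{jk}+h.c.$, a four-scalar interaction that corresponds to no closed path in the Krajewski diagram (such a loop would need an edge linking $\rep{i}{j}$ to $\rep{i}{k}$ parametrized by $\mu$, which the fifth building block does not provide) and is therefore absent from the spectral action \eqref{eq:bb5-action}. The same objection applies verbatim to any analogous coupling of $F_{jk}$, and the real auxiliary fields $G_{i},G_{j},H$ couple only to the $\sfer\asfer$-bilinears of their own representations as in Section~\ref{sec:bb2}, so they cannot carry a factor of $\mu$ either. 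The main obstacle in turning this into a complete proof is the exhaustive character of the last step: one must rule out that \emph{any} gauge-invariant modification of the linear couplings of \emph{any} of the available auxiliary fields can generate the required $\sfer_{ik}$- and $\sfer_{jk}$-mass terms without at the same time producing interactions that are not present in the spectral action.
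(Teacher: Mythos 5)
Your overall strategy --- write down the most general linear couplings of the available auxiliary fields, integrate them out, and identify the residual terms as the obstruction --- is exactly the one the paper uses, and your identification of the obstruction (the mass terms $\w{ik}\big(N_j|\yukw{}{}\mu|^2_M|\sfer_{ik}|^2 + N_i|\mu\yukw{}{}\sfer_{jk}|^2\big)$, which contain no factor of $\sfer_{ij}$ or $\sfer_{ij}'$) is correct. Your argument that these cannot be recovered by deforming the couplings of $F_{ik}$ or $F_{jk}$ without simultaneously producing trilinear scalar interactions absent from \eqref{eq:bb5-action} is in fact more explicit than what the paper offers, which essentially just observes that the terms do not arise from the couplings at hand.

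There is, however, a concrete error in the couplings you propose for $F_{ij}$ and $F_{ij}'$. Since $F_{ij},\sfer_{ij},\sfer_{ij}'\in C^{\infty}(M,\rep{i}{j})$, the field $F_{ij}^*$ lives in \rep{j}{i} and a gauge-invariant linear coupling must be of the form $\tr F_{ij}^*X$ with $X\in\rep{i}{j}$. Your candidate $X=\gamma_{ij}\asfer_{ij}'\mu^*$ lies in \rep{j}{i}, so $\tr F_{ij}^*\asfer_{ij}'$ is not gauge invariant --- and for $N_i\neq N_j$ is not even a well-defined matrix product; the same objection applies to $\tr F_{ij}'^*\mu\asfer_{ij}$. (The analogous coupling $\tr F_{11'}^*\gamma_{11'}\asfer_{11'}$ used for the fourth building block is admissible only because $\sfer_{11'}$ is there a gauge singlet, so $\asfer_{11'}$ and $\asfer_{11'}^o$ can be identified; that escape is unavailable for generic \rep{i}{j}.) The correct new couplings are $\tr F_{ij}^*\delta_{ij}'\sfer_{ij}'$ and $\tr F_{ij}'^*\delta_{ij}\sfer_{ij}$ with the \emph{unconjugated} sfermions and with $\delta_{ij},\delta_{ij}'\in M_M(\com)$ a priori free; $\mu$ should not be hard-coded into the off-shell Lagrangian but enters only afterwards through the supersymmetry constraints $\delta\delta^*=\mu^*\mu$ and $\delta'\delta'^*=\mu\mu^*$ of Lemma \ref{prop:bb5}. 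With this correction the on-shell squares and the cross terms with $\beta_{ij,k}$ reproduce the $\sfer_{ij}$- and $\sfer_{ij}'$-terms of \eqref{eq:bb5-action-scaled}, and the remainder of your argument goes through. A minor further slip: the sum $\sum_k\beta_{ij,k}'\sfer_{ik}'\asfer_{jk}'$ refers to fields that do not exist, since the building block of the fifth type doubles only the $(i,j)$ representation, not $(i,k)$ or $(j,k)$.
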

\begin{proof}
	In order to rewrite the first terms of \eqref{eq:bb5-action-scaled} in terms of auxiliary fields, we must introduce an interaction featuring one auxiliary field $F$ and one sfermion. Since $\sfer_{ij}$ and $\sfer_{ij}'$ are in the same representation of the algebra, we can choose whether to couple $\sfer_{ij}$ to $F_{ij}$ (corresponding to \Bc{ij}{+}) or to $F_{ij}'$ (corresponding to \Bc{ij}{-}). The same holds for $\sfer_{ij}'$. Transforming the fermions in \eqref{eq:bb5-action-ferm} according to \eqref{eq:susytransforms4} suggests that, in order to have a chance at supersymmetry, we must couple $F_{ij}'$ to $\sfer_{ij}$ and $F_{ij}$ to $\sfer_{ij}'$. We thus write 
	\ba\label{eq:bb5-auxfields}
		- \tr F_{ij}^*F_{ij} - \tr F_{ij}'^*F_{ij}' - \big( \tr F_{ij}^*\delta_{ij}'\sfer_{ij}' + \tr F_{ij}'^*\delta_{ij} \sfer_{ij} + h.c.\big)
	\ea
with $\delta_{ij}, \delta_{ij}' \in M_M(\com)$. This yields on shell 
$
	|\delta_{ij}\sfer_{ij}|^2 + |\delta_{ij}'\sfer_{ij}'|^2
$, which is indeed of the same form as the first two terms in \eqref{eq:bb5-action-scaled}. In the case that there is a building block \B{ijk} of the third type present, the extra contributions to the action must come from the cross terms of
\bas
		- \tr F_{ij}^*F_{ij} - \tr F_{ij}'^*F_{ij}' -  \Big[ \tr F_{ij}^*\big(\delta_{ij}'\sfer_{ij}' + \beta_{ij,k}\sfer_{ik}\asfer_{jk}\big) + \tr F_{ij}'^*\delta_{ij}\sfer_{ij} + h.c.\Big]
\eas
where the interaction with $\beta_{ij,k}$ corresponds to the second term of \eqref{eq:bb3-auxfields}. On shell this gives us the additional interaction
\ba
 \tr \asfer_{ij}'\delta_{ij}'^*\beta_{ij,k}\sfer_{ik}\asfer_{jk} + h.c.\label{eq:bb5-cross-term}
\ea
In form, this indeed coincides with the second line of \eqref{eq:bb5-action-scaled}. The last two terms of \eqref{eq:bb5-action-scaled} do not appear here and consequently they cannot be addressed using the auxiliary fields that are available to us when having only building blocks of the first, second and third type.
\end{proof}

Similar as with the previous building blocks we can check what the demands for off shell supersymmetry are.

\begin{lem}\label{prop:bb5}
	The action consisting of the fermionic action \eqref{eq:bb5-action-ferm} and the off shell action \eqref{eq:bb5-auxfields} is supersymmetric under the transformations \eqref{eq:susytransforms5} if and only if 
	\ba\label{eq:bb5-constraints}
		\delta\delta^* &= \mu^*\mu,& \delta'\delta'^* &= \mu\mu^*.
	\ea
\end{lem}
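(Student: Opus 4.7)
My plan is to compute the SUSY variation of the combined action $S_{f,\textrm{mass}}+S_{\text{aux}}$ (where $S_{\text{aux}}$ denotes \eqref{eq:bb5-auxfields}) under the transformations \eqref{eq:susytransforms4}--\eqref{eq:susytransforms5}, suitably adapted to the building block $\Bc{ij}{-}$ (where the chirality assignments are flipped, so that $\fer{ijR}'$ plays the role that $\fer{L}$ plays for $\Bc{ij}{+}$, and analogously for $\afer{ijL}'$). Demanding that every independent monomial in the variation cancels will then force precisely \eqref{eq:bb5-constraints}.

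First I would vary the fermionic mass term \eqref{eq:bb5-action-ferm}. The piece $d'^{*}_{ij}F^{*}_{ij}\eR$ in $\delta\afer{ijR}$ and the analogous piece $d'_{ij'}F'_{ij}\eL$ in $\delta\fer{ijR}'$ (coming from the $\Bc{ij}{-}$ transformation) generate $F$-linear, fermion-linear contributions of the form
\begin{align*}
\inpr{J_M F^{*}_{ij}\eR}{\gamma^5\mu\fer{ijR}'}+\inpr{J_M\afer{ijR}}{\gamma^5\mu F'_{ij}\eL}+\text{(analogous terms from the second summand of \eqref{eq:bb5-action-ferm})}.
\end{align*}

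Second I would vary the auxiliary Lagrangian. Using $\delta\sfer'_{ij}=c_{ij'}(J_M\eL,\gamma^5\fer{ijR}')_{\cS}$ inside $-\tr F^{*}_{ij}\delta'\sfer'_{ij}$ (and the conjugate relation in the other term) produces $F$-linear, fermion-linear contributions structurally identical to those above, but with $\mu$ replaced by $\delta'$ and a different transformation constant. The variation of $F$ itself via \eqref{eq:transforms4.4a} in $-\tr F^{*}_{ij}\delta'\sfer'_{ij}$ produces terms of the form $\delta' \sfer'_{ij}(\can_A\fer{ijL})$, which after integration by parts cancel the $c'^{*}_{ij}\gamma^5[\can_A,\asfer]\eL$-contribution to the variation of the mass term $\inpr{J_M\afer{ijR}}{\gamma^5\mu\fer{ijR}'}$; the coefficients for this cancellation are already fixed by Theorem \ref{prop:bb2} and require no new constraint.

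Third, I would equate the coefficients of each independent $F\fer$- and $F'\fer$-bilinear. This yields the pair of relations
\begin{align*}
d'^{*}_{ij}\mu=c_{ij'}\delta',\qquad d'_{ij'}\mu^{*}=c_{ij}\delta,
\end{align*}
together with their adjoints. Multiplying each relation by its adjoint and using $d'_{ij}=d^{*}_{ij}$ and $c_{ij}'=c^{*}_{ij}$ established in Theorem \ref{prop:bb2} (and the corresponding primed identities for $\Bc{ij}{-}$) eliminates the transformation constants and gives $\delta\delta^{*}=\mu^{*}\mu$ and $\delta'\delta'^{*}=\mu\mu^{*}$. The converse is immediate: substituting \eqref{eq:bb5-constraints} back into the variation shows every monomial cancels.

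The main obstacle I expect is bookkeeping rather than mathematics: one must be careful that the two independent sign ambiguities in the transformation constants of $\Bc{ij}{+}$ and $\Bc{ij}{-}$ do not produce spurious relative phases between the two conditions in \eqref{eq:bb5-constraints}. This is handled by observing that only the moduli $|c|,|c'|,|d|,|d'|$ enter after squaring, so the final constraint is insensitive to these signs; any residual phase in $\delta,\delta'$ can be absorbed into a field redefinition of $F_{ij},F'_{ij}$.
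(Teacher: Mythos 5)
Your overall strategy — vary every field, collect the variation into groups of independent monomials, and read off constraints — is exactly the paper's, but there is a genuine gap in your second step. You assert that the derivative terms (the $c'^{*}_{ij}\gamma^5[\can_A,\asfer]\eL$-type contributions to the variation of the mass term, against the $(J_M\eL,\can_A\fer{})$-type contributions from varying $F$ in the auxiliary Lagrangian) cancel ``with coefficients already fixed by Theorem \ref{prop:bb2} and require no new constraint.'' This is not so: that group cancels only if $c\mu = d'\delta^*$ and $c'\mu^* = d\delta'^*$ (the paper's \eqref{eq:bb5-constr2}), which are new relations involving $\mu$ and $\delta$. They are indispensable, because your $F$-linear relations alone give only $\mu^*\mu = (|c|^2/|d'|^2)\,\delta\delta^*$, i.e.\ proportionality with an undetermined positive ratio. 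Theorem \ref{prop:bb2} cannot remove that ratio: the identities $c_{ij}'=c_{ij}^*$, $d_{ij}=d_{ij}'^*$ relate the sfermion- and fermion-transformation constants \emph{within one} building block of the second type, whereas here $c,d$ belong to \Bc{ij}{+} and $c',d'$ to the a priori independent block \Bc{ij}{-}. The cross-relation $|c|^2=|d'|^2$ (and $|c'|^2=|d|^2$) is itself an output of the proof, obtained precisely by combining \eqref{eq:bb5-constr1} with \eqref{eq:bb5-constr2}; only then do the exact equalities $\delta\delta^*=\mu^*\mu$ and $\delta'\delta'^*=\mu\mu^*$ follow.

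A secondary omission: the transformations \eqref{eq:susytransforms5} of $F_{ij}$ and $F_{ij}'$ contain gaugino pieces proportional to $d_{ij,i}$, $d_{ij,j}$, so varying $-\tr F^*\delta'\sfer'$ and $-\tr F'^*\delta\sfer$ produces gaugino--sfermion--sfermion monomials that form two further independent groups. Their cancellation imposes $d_{ij,i}^*\delta'=-d_{ij,i}'\delta^*$ (and the analogous $j$- and right-handed relations), hence $|d_{ij,i}|^2=|d_{ij,i}'|^2$, etc. These do not change the final form of \eqref{eq:bb5-constraints}, but the ``if and only if'' claim requires verifying that these groups vanish too, which your proposal does not address. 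Your closing remark about absorbing residual phases into $F_{ij}$, $F_{ij}'$ is fine as far as phases go, but it cannot repair the missing modulus condition described above.
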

\begin{proof}
	See Section \ref{sec:bb5-proof}.
\end{proof}

Combining the above lemmas gives the following result for a building block of the fifth type.

\begin{prop}\label{cor:bb5}
	For a finite spectral triple that contains, in addition to building blocks of the first, second and third type, one building block of the fifth type, the action of a single building block of the fifth type breaks supersymmetry only softly via 
\bas
&\w{ik}\Big(N_j|\yukw{}{}\mu|_M^2|\sfer_{ik}|^2 + N_i|\mu\yukw{}{}\sfer_{jk}|^2\Big)
\eas
iff
\bas
	\w{ij} &= \frac{1}{2}
\eas
and the product of the possible phases of $\delta'^*$ and $\beta_{ij,k}$ (cf.~\eqref{eq:bb5-constraints} and \eqref{eq:bb3-susy-demand2} respectively) is equal to $\sgnc_{j,i}\sgnc_{j,k}$.
\end{prop}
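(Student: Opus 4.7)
The plan is to combine the two preceding lemmas and then match coefficients between the on-shell expression coming from the auxiliary fields and the scaled action \eqref{eq:bb5-action-scaled}. By Lemma \ref{lem:bb5}, exactly the terms of \eqref{eq:bb5-action-scaled} involving $\sfer_{ij}$ or $\sfer_{ij}'$ are expressible off-shell via the Lagrangian \eqref{eq:bb5-auxfields} (augmented with the $\beta_{ij,k}$-coupling from \eqref{eq:bb3-auxfields}); the two remaining terms in the last line of \eqref{eq:bb5-action-scaled} cannot be absorbed in this way and must therefore appear as breaking terms. These residual terms are manifestly bilinear in the sfermions (no quartic scalar interactions, no fermionic partners absent on the other side of the transformation), so they are soft in the standard sense---this gives the ``soft-breaking'' part of the statement essentially for free.

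Next I would determine the conditions under which the remaining, off-shell-expressible terms are supersymmetric. Lemma \ref{prop:bb5} provides $\delta\delta^*=\mu^*\mu$ and $\delta'\delta'^*=\mu\mu^*$, fixing $\delta$ and $\delta'$ up to a phase factor (or, more generally, up to a unitary on family-space which I will assume to be just a phase to parallel Proposition \ref{cor:bb4}). On the other hand, elimination of $F_{ij}, F_{ij}'$ from \eqref{eq:bb5-auxfields} yields $|\delta\sfer_{ij}|^2+|\delta'\sfer_{ij}'|^2$, which must equal the first bracket of \eqref{eq:bb5-action-scaled}, namely $2(1-\w{ij})(|\mu^*\sfer_{ij}|^2+|\mu\sfer_{ij}'|^2)$. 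Using the SUSY relations from Lemma \ref{prop:bb5}, this forces
\begin{align*}
2(1-\w{ij})=1,
\end{align*}
that is $\w{ij}=\tfrac12$, which is the first condition of the proposition.

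It then remains to match the cross-term \eqref{eq:bb5-cross-term} produced by eliminating $F_{ij}$ in the combined auxiliary Lagrangian against the corresponding term on the second line of \eqref{eq:bb5-action-scaled}. The coefficient on the spectral-action side is $2\kappa_j g_l(1-\w{ij})\sqrt{2\w{ik}/q_l}=\kappa_j g_l\sqrt{2\w{ik}/q_l}$ after inserting $\w{ij}=\tfrac12$. On the auxiliary side, $\delta'^*$ is determined only up to a phase $e^{-i\phi_{\delta'}}$ by Lemma \ref{prop:bb5}, and $\beta_{ij,k}$ is likewise determined only up to a phase $e^{i\phi_{\beta}}$ by the SUSY condition \eqref{eq:bb3-susy-demand2} (translated via \eqref{eq:yukwyukp}, and keeping Remark \ref{rmk:bb3-R=1} in mind since here it is $\sfer_{ij}$ that carries the family index). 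Equating the two expressions, the magnitudes cancel thanks to $\w{ij}=\tfrac12$ and the constraints on $\delta',\beta_{ij,k}$, leaving a condition purely on the phases: $e^{i(\phi_\beta-\phi_{\delta'})}=\kappa_j=\sgnc_{j,i}\sgnc_{j,k}$. This is the second condition of the proposition.

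The main obstacle I anticipate is bookkeeping: tracking the relative signs $\kappa_i,\kappa_j$, the placement of the family-trace (since by assumption $\sfer_{ik}$ has no family index while $\sfer_{ij},\sfer_{ij}'$ do), and the correct form of $\beta_{ij,k}$ dictated by Remark \ref{rmk:bb3-R=1}---all so that the coefficient $(1-\w{ij})$ appears explicitly and can be set to $\tfrac12$ without hidden dependence elsewhere. Once this accounting is done carefully, the two stated conditions follow directly, and the soft breaking follows from the fact that Lemma \ref{lem:bb5} leaves no auxiliary channel for the last bracket of \eqref{eq:bb5-action-scaled}.
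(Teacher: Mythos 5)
Your proposal is correct and follows essentially the same route as the paper, whose own proof is just the one-line instruction to compare the spectral action \eqref{eq:bb5-action-scaled} with the off-shell action \eqref{eq:bb5-auxfields} using the constraints \eqref{eq:bb5-constraints} and \eqref{eq:bb3-susy-demand2}; your matching of $2(1-\w{ij})=1$ for the bilinear terms and the phase condition $e^{i(\phi_\beta-\phi_{\delta'})}=\kappa_j$ for the cross term is exactly that comparison carried out explicitly. The only thing worth noting is that you supply more detail than the paper does (in particular the identification of the unabsorbable last line as the soft-breaking remainder via Lemma \ref{lem:bb5}), which is a feature, not a deviation.
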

\begin{proof}
This follows from comparing the spectral action \eqref{eq:bb5-action-scaled} with the off shell action \eqref{eq:bb5-auxfields} and using the demands \eqref{eq:bb5-constraints} and \eqref{eq:bb3-susy-demand2}.
\end{proof}

The form of the soft breaking term suggests that, in order to let it be part of a truly supersymmetric action, we have the following necessary requirement. Each two building blocks of the second type that are connected to each other via an edge of a building block of the third type, both need to have a building block of the fifth type defined on them. In the case above this would have been $\sfer_{ik}$ and $\sfer_{jk}$. 

\section{Conditions for a supersymmetric spectral action}\label{sec:4s-aux}

Our aim is to determine whether the total action that corresponds to an almost-commutative geometry consisting of various of the five identified building blocks, is supersymmetric. More than once we used the following strategy for that. First, we identified the off shell counterparts for the contributions of $\tr_F \Phi^4$ to the (on shell) spectral action, using the available auxiliary fields and coefficients whose values were undetermined still. Second, we derived constraints for these coefficients based on the demand of having supersymmetry for the fermionic action and this off shell action. Finally, we should check if the off shell interactions correspond on shell to the spectral action again, when their coefficients satisfy the constraints that supersymmetry puts on them. If this is the case then the action from noncommutative geometry is an on shell counterpart of an off shell action that is supersymmetric.

In the previous sections we have experienced multiple times that the pre-factors of all bosonic interactions can get additional contributions when extending the almost-commutative geometry. As was stated before, we should therefore assess whether or not the demands from supersymmetry on the coefficients are satisfied for the final model only. In this section we will present an overview of all four-scalar interactions that have appeared previously, from which building blocks their pre-factors get what contributions and which demands hold for them. We identify several such demands, thus constructing a checklist for supersymmetry.

\begin{enumerate}
\item To have supersymmetry for a building block \B{ij} of the second type, the components of the finite Dirac operator should satisfy \eqref{eq:bb2-resultCiij}, after scaling them. For a single building block of the second type this demand can only be satisfied for $N_i = N_j$ and $M = 4$ (Proposition \ref{lem:bb2-nosol}). When \B{ij} is part of a building block of the third type the demand is automatically satisfied via the solution \eqref{eq:bb3-expressionCs-sqrt}.

\item A necessary requirement to have supersymmetry for any building block \B{ijk} of the third type (Section \ref{sec:bb3}), is that the scaled parameters of the finite Dirac operator that make up such a building block satisfy 
\ba\label{eq:demand-0}
	\w{jk}\yukws{j}{k}\yukw{j}{k} &= \w{ik}\yukws{i}{k}\yukw{i}{k} =  
	\w{ij}\yukws{i}{j}\yukw{i}{j} =: \Om{ijk}^*\Om{ijk}.
\ea
This relation can be obtained from \eqref{eq:improvedUpsilons1}, multiplying each term with its conjugate. For notational convenience we have introduced the variable $\Om{ijk}^*\Om{ijk}$. 

\item Terms $\propto |\sfer_{ij}\asfer_{ij}|^2$ appear for the first time with a building block of the second type (\eqref{eq:exprM1} in Section \ref{sec:bb2}) but also get contributions from a building block \B{ijk} of the third type (first term of \eqref{eq:bb3-boson-action}). The total expression reads 
	\ba
		& \frac{f(0)}{2\pi^2} \Big[ N_i|C_{iij}^*C_{iij}\sfer_{ij} \asfer_{ij}|^2 + N_j|C_{iij}^*C_{ijj}\sfer_{ij} \asfer_{ij}|^2 \nn\\
		&\qquad + \sum_k N_k |\yuks{i,k}{j}\yuk{i,k}{j}\sfer_{ij}\asfer_{ij}|^2\Big]\nn\\
		%
		&\to 2\frac{g_i^2}{q_i}\bigg|\bigg( N_ir_i^2 + \alpha_{ij}  \sum_k N_k(\Om{ijk}^*\Om{ijk})^2\bigg)^{1/2}\sfer_{ij}\asfer_{ij}\bigg|^2\nn\\
		&\qquad + 2\frac{g_j^2}{q_j}\bigg|\bigg(N_jr_j^2 + (1 - \alpha_{ij}) \sum_k N_k (\Om{ijk}^*\Om{ijk})^2\bigg)^{1/2}\sfer_{ij}\asfer_{ij}\bigg|^2,\nn
	\ea
	upon scaling the fields. Here we have introduced a parameter $\alpha_{ij} \in \mathbb{R}$ that tells how any new contributions are divided over the initial two. Such terms can only be described off shell using the auxiliary fields $G_{i}$ and $G_{j}$ (cf.~Lemma \ref{lem:bb2-offshell}) via
\bas
	 - \frac{1}{2n_i}\tr G_i\big(G_i + 2n_i\P_i \sfer_{ij}\asfer_{ij}\big)  
	 - \frac{1}{2n_j}\tr G_j\big(G_j + 2n_j \asfer_{ij}\P_j\sfer_{ij}\big),
\eas
which on shell equals
\bas
	\frac{n_i}{2}|\P_i \sfer_{ij}\asfer_{ij}|^2 + \frac{n_j}{2}|\asfer_{ij}\P_j \sfer_{ij}|^2, 
\eas
cf.~\eqref{eq:bb2-auxterms}. Comparing this with the above expression sets the coefficients $\P_{i}$ and $\P_{j}$:
\bas
		\frac{n_i}{2}\P_i^2 &= 2\frac{g_i^2}{q_i}\bigg(N_ir_i^2 + \alpha_{ij}  \sum_k N_k (\Om{ijk}^*\Om{ijk})^2\bigg), \nn\\
		\frac{n_j}{2}\P_j^2 &= 2\frac{g_j^2}{q_i}\bigg(N_jr_j^2 + (1 - \alpha_{ij}) \sum_k N_k (\Om{ijk}^*\Om{ijk})^2\bigg),\nn
\eas
where there is an additional trace over the last terms if $\sfer_{ij}$ has no family index. If the action is supersymmetric then \eqref{eq:bb2-resultCiij} can be used with $\K_i = \K_j = 1$ and the above relations read
	\ba\label{eq:demand-1}
		\frac{r_i}{4} &= N_ir_i^2 + \alpha_{ij}  \sum_k N_k \tr[(\Om{ijk}^*\Om{ijk})^2],\nn\\
  	\frac{r_j}{4} &= N_jr_j^2 + (1 - \alpha_{ij}) \sum_k N_k \tr[(\Om{ijk}^*\Om{ijk})^2], 
	\ea
when $\sfer_{ij}$ has no family index and 
	\ba\label{eq:demand-1.1}
		\frac{r_i}{4}\id_M &= N_ir_i^2\id_M + \alpha_{ij}  \sum_k N_k (\Om{ijk}^*\Om{ijk})^2, \nn\\
  	\frac{r_j}{4}\id_M &= N_jr_j^2\id_M + (1 - \alpha_{ij}) \sum_k N_k (\Om{ijk}^*\Om{ijk})^2, 
	\ea
when it does. Here we have used that $r_i = q_in_i$. 

\item An interaction $\propto |\sfer_{ij}\sfer_{jk}|^2$ can receive contributions in two different ways; one comes from a building block \B{ijk} of the third type \eqref{eq:bb3-action-final}, the other comes from two adjacent building blocks \B{ijl} and \B{jkl} (first and second term of \eqref{eq:2bb3-action-scaled}, but occurs only for particular values of the grading): 
\bas
	&	g_m^2\frac{4\w{ij}}{q_m}(1 - \w{ik}) |\yukw{i,k}{j}\sfer_{ij}\sfer_{jk}|^2 + 4\bigg( 
n_jr_jN_jg_j^2|\sfer_{ij}\sfer_{jk}|^2 + \frac{g_m^2}{q_m}\w{ij}\w{jk}N_l|\yukw{i,l}{j}\sfer_{ij}\yukw{j,l}{k}\sfer_{jk}|^2\bigg).\nn
\eas
From this, however, we need to subtract the value $n_j g_j^2|\sfer_{ij}\sfer_{jk}|^2$ that is expected from the cross term
\bas
 -  \tr G_j\big(\P_{j,i}\asfer_{ij}\sfer_{ij} + \P_{j,k}\sfer_{jk}\asfer_{jk}\big),
\eas
that should already be there when the almost-commutative geometry contains \Bc{ij}{\pm} and \Bc{jk}{\mp} but nevertheless does not appear in the spectral action (see Section \ref{sec:2bb2} and the discussion above Theorem \ref{prop:bb3}). 
The remaining terms must be accounted for by 
\ba\label{eq:bb3-auxfields2}
	- \tr F_{ik}^*F_{ik} + \big(\tr F_{ik}^*\bps_{ik,j}\sfer_{ij}\sfer_{jk} + h.c.\big) 
\ea
which equals
\bas \tr \asfer_{jk}\asfer_{ij}\bp_{ik,j}\bps_{ik,j}\sfer_{ij}\sfer_{jk}
\eas
on shell. Since $\beta_{ik,j}\beta_{ik,j}^*$ is positive definite we can also write the above as
\bas
	 |(\bp_{ik,j}\bps_{ik,j})^{1/2}\sfer_{ij}\sfer_{jk}|^2.
\eas
Comparing the above relations, the off shell action \eqref{eq:bb3-auxfields2} corresponds on shell to the spectral action, iff
\bas
	\bp_{ik,j}\bps_{ik,j} &= g_m^2\frac{4\w{ij}}{q_m}(1 - \w{ik}) \yukws{i,k}{j}\yukw{i,k}{j} - n_j g_j^2\id_M\nn\\
			&\qquad + 4\bigg(n_jr_jN_j g_j^2\id_M + \frac{g_m^2}{q_m}\w{ij}\w{jk}N_l(\yukw{i,l}{j}\yukw{j,l}{k})^*(\yukw{i,l}{j}\yukw{j,l}{k})\bigg),
\eas
where we have assumed that it is $\sfer_{ij}$ not having a family structure.
Furthermore, from the demand of supersymmetry $\bp_{ik,j}$ must satisfy 
\bas
	\bp_{ik,j}\bps_{ik,j} &= g_m^2\frac{2\w{ij}}{q_m}\yukws{i}{j}\yukw{i}{j} \equiv 2\frac{g_m^2}{q_m}\Om{ijk}^*\Om{ijk}
\eas
i.e.~\eqref{eq:bb3-susy-demand2},\footnote{In fact, in \eqref{eq:bb3-susy-demand2} the variables are in reversed order compared to here but looking at \eqref{eq:bb3-constr2} ---from which the former is derived--- one sees immediately that this also holds.} but with $\yukp{}{}$ replaced by $\yukw{}{}$ using \eqref{eq:yukwyukp}. Combining the above two relations, we require that 
\bas
	\frac{2g_m^2}{q_m}\Om{ijk}^*\Om{ijk}& = 4\frac{g_m^2}{q_m}(1 - \w{ik})\Om{ijk}^*\Om{ijk} - n_j g_j^2\id_M \nn\\ &\qquad + 4\bigg( 
n_jr_jN_jg_j^2\id_M + \frac{g_m^2}{q_m}\w{ij}\w{jk}N_l(\yukw{i,l}{j}\yukw{j,l}{k})^*(\yukw{i,l}{j}\yukw{j,l}{k})\bigg),
\eas
using the notation introduced in \eqref{eq:demand-0}. Setting $m = j$ in particular, this reduces to 
\ba\label{eq:demand-2}
	 2(1 - 2\w{ik})\Om{ijk}^*\Om{ijk} - r_j \id_M  + 4\bigg( 
N_jr_j^2\id_M + \w{ij}\w{jk}N_l(\yukw{i,l}{j}\yukw{j,l}{k})^*(\yukw{i,l}{j}\yukw{j,l}{k})\bigg) = 0.
\ea


%

\item The interaction $\propto \tr\sfer_{ik}\asfer_{jk}\sfer_{jl}\asfer_{il}$ only appears in the case of two adjacent building blocks \B{ijk} and \B{ijl} of the third type (cf.~the Lagrangian \eqref{eq:2bb3-action-scaled}). Equating this term to \eqref{eq:2bb3-aux} that appears from the auxiliary field $F_{ij}$, gives
\bas
 \kappa_{k}\kappa_{l} 4\frac{g_m^2}{q_m}(1 - \w{ij})\w{ij}\tr \yukw{l}{}\yukws{k}{}\sfer_{ik}\asfer_{jk}\sfer_{jl}\asfer_{il} + h.c. = \tr \bps_{ij,l}\bp_{ij,k}\sfer_{ik}\asfer_{jk}\sfer_{jl}\asfer_{il} + h.c.,
\eas
with $\kappa_{k} = \sgnc_{k,i}\sgnc_{k,j}, \kappa_{l} = \sgnc_{l,i}\sgnc_{l,j}$. From the demand of supersymmetry $\bps_{ij,l}$ and $\bp_{ij,k}$ should satisfy \eqref{eq:bb3-susy-demand2}. Their phases, if any, must be opposite modulo $\pi$ for the action to be real. We write $\phi_{kl}$ for the remaining sign ambiguity. Inserting these demands above and using \eqref{eq:yukwyukp} requires that $\kappa_{k}\kappa_{l} 4\w{ij}(1- \w{ij}) = 2\phi_{kl}\w{ij}$ for this interaction to be covered by the auxiliary field $F_{ij}$. This has two solutions, the only acceptable of which is 
\ba\label{eq:demand-3}
	\phi_{kl} &= \kappa_{k}\kappa_{l},& \w{ij} &= \frac{1}{2} \quad\Longrightarrow\quad
			  r_iN_i + r_jN_j = \frac{1}{2},
\ea
where we have used \eqref{eq:kintermnorm}.
\item From the spectral action interactions $\propto |\sfer_{ij}|^4$ only appear in the context of a building block of the second type as 
\bas
	\frac{f(0)}{\pi^2}|C_{iij}\sfer_{ij}|^2 |C_{ijj}\sfer_{ij}|^2 \to 4\frac{g_l^2}{q_l}r_ir_j |\sfer_{ij}|^4,
\eas
see \eqref{eq:bb2-action}. Via the auxiliary fields on the other hand they appear in two ways; from the $G_{i,j}$ and via the $u(1)$-field $H$ (see Lemma \ref{lem:bb2-offshell} for both). The latter give on shell the contributions
\bas
	\bigg(\frac{\Q_{ij}^2}{2} - n_i\frac{\P_i^2}{2N_i} - n_j\frac{\P_j^2}{2N_j}\bigg)|\sfer_{ij}|^4,
\eas		
where the minus-signs stem from the identity \eqref{eq:idn-sun-gens} between the generators $T^a_{i,j}$ of $su(N_{i,j})$. Demanding supersymmetry, $\P_i^2$ must equal $g_i^2$ and similarly $\P_j^2 = g_j^2$. In order for the interactions from the spectral action to equal the above equation, $\Q_{ij}^2$ is then set to be
	\ba\label{eq:demand-4}
		\Q_{ij}^2 = \frac{g_l^2}{q_l}\bigg(8r_ir_j + \frac{r_i}{N_i} + \frac{r_j}{N_j}\bigg).
	\ea
	In the case that $\sfer_{ij}$ has family indices, the expressions for $\P_{i,j}^2$ and $\Q_{ij}^2$ must be multiplied with the $M \times M$ identity matrix $\id_M$.

\item Interactions $\propto |\sfer_{ij}|^2|\sfer_{jk}|^2$ (having one common index $j$) appear via the spectral action in two different ways. First of all from two adjacent building blocks \B{ij} and \B{jk} of the second type (cf.~\eqref{eq:2bb2s-different}), and secondly from a building block of the third type (second line of \eqref{eq:bb3-boson-action}). This gives
\bas
& \frac{f(0)}{\pi^2}\Big( |C_{ijj}\sfer_{ij}|^2|C_{jjk}\sfer_{jk}|^2
		+ |\sfer_{ij}|^2|\yuks{i}{j}\yuk{j}{k}\sfer_{jk}|^2\Big) \nn\\
&\qquad \to 4\frac{g_l^2}{q_l}\Big( r_j^2|\sfer_{ij}|^2|\sfer_{jk}|^2
		 + \w{jk}\w{ij}|\sfer_{ij}|^2|\yukws{i}{j}\yukw{j}{k}\sfer_{jk}|^2\Big),
\eas
where we have assumed $\sfer_{ij}$ not to have a family-index. We can write this as
\bas
	& 4\frac{g_l^2}{q_l}\big|\sfer_{ij}\big|^2 \big|\big(r_j^2\id_M + \w{ij}\w{jk}(\yukws{i}{j}\yukw{j}{k})^*\yukws{i}{j}\yukw{j}{k}\big)^{1/2}\sfer_{jk}\big|^2.
\eas
From the auxiliary fields these terms can appear via $G_j$ (with coefficients $\P_{j,i}$ and $\P_{j,k}$, i.e.~as in \eqref{eq:2bb2-different-aux}) and via the $u(1)$-field $H$ with coefficients $\Q_{ij}$ and $\Q_{jk}$:
\bas
	\bigg[\Q_{ij}\Q_{jk} - n_j\frac{\P_{j,i}\P_{j,k}}{N_j}\bigg]|\sfer_{ij}|^2 |\sfer_{jk}|^2.
\eas
Equating the terms from the spectral action and those from the auxiliary fields, and 
inserting the values for the coefficients $\P_{j,i}$, $\P_{j,k}$ (from \eqref{eq:bb2-resultCiij}), $\Q_{ij}$ and $\Q_{jk}$ (from \eqref{eq:demand-4}) that we obtain from supersymmetry, we require
\ba\label{eq:demand-5}
&\bigg(2r_ir_j + \frac{r_i}{4N_i} + \frac{r_j}{4N_j}\bigg)
\bigg(2r_jr_k + \frac{r_j}{4N_j} + \frac{r_k}{4N_k}\bigg)\id_M\nn\\
&\qquad = \Big[\Big(r_j^2 + \frac{r_j}{4N_j}\Big)\id_M + \w{ij}\w{jk}(\yukws{i}{j}\yukw{j}{k})^*\yukws{i}{j}\yukw{j}{k}\Big]^2.
	\ea

\item There are interactions $\propto |\sfer_{ik}|^2|\sfer_{jl}|^2$ and $\propto |\sfer_{jk}|^2|\sfer_{il}|^2$ that arise from two adjacent building blocks \B{ijk} and \B{ijl} of the third type. The first of these is given by 
\bas
4\frac{g_m^2}{q_m}|(\w{ik}\yukw{i,j}{k}\yukws{i,j}{k})^{1/2}\sfer_{ik}|^2|(\w{jl}\yukws{j,i}{l}\yukw{j,i}{l})^{1/2}\sfer_{jl}|^2,\nn
\eas
see \eqref{eq:2bb3-action-scaled}. Since the interactions are characterized by four different indices, the auxiliary fields $G_i$ cannot account for these and consequently they should be described by the $u(1)$-field $H$:
\bas
	|\Q_{ik}^{1/2}\sfer_{ik}|^2|\Q_{jl}^{1/2}\sfer_{jl}|^2.
\eas
 In order for the spectral action to be written off shell we thus require that 
	\bas
		 \Q_{ik}\Q_{jl} &= 4\frac{g_m^2}{q_m}\Om{ijk}\Om{ijk}^*\Om{ijl}^*\Om{ijl}.\nn
	\eas
	With $\Q_{ik}$ and $\Q_{jl}$ being determined by \eqref{eq:demand-4} from the demand of supersymmetry, we can infer from this that for the squares of these expressions we must have
	\ba
		\Big(2r_ir_k + \frac{r_i}{4N_i} + \frac{r_k}{4N_k}\Big)\id_M &= \Om{ijk}\Om{ijk}^*,\nn\\ 
		\Big(2r_jr_l + \frac{r_j}{4N_j} + \frac{r_l}{4N_l}\Big)\id_M &= \Om{ijl}^*\Om{ijl}\label{eq:demand-6}.
	\ea

\item As was already covered in Section \ref{sec:bb4}, a building block \BBBB{} of the fourth type only breaks supersymmetry softly iff
\ba\label{eq:demand-7}
	r_1 &= \frac{1}{4}&&\text{and}& \w{1j}\yukw{j}{}\yukws{j}{} &= \Big(- \frac{1}{4} \pm \frac{\kappa_{1'}\kappa_{j}}{2}\Big)\id_M
\ea
(see Proposition \ref{cor:bb4}), where the latter should hold for each building block \B{11'j} of the third type. Here $\kappa_{1'}, \kappa_{j} \in \{\pm 1\}$.  

\item Covered in Section \ref{sec:bb5}, a building block \B{\textrm{mass}, ij} of the fifth type also breaks supersymmertry only softly iff
\ba\label{eq:demand-8}
	 \w{ij} &= \frac{1}{2},
\ea
see Proposition \ref{cor:bb5}. 

\end{enumerate}

To be able to say whether an almost-commutative geometry that is built out of building blocks of the first to the fifth type has a supersymmetric action then entails checking whether all the relevant relations above are satisfied. 

\subsection{Applied to a single building block of the third type}

We apply a number of the demands above to the case of a single building block of the third type (and the building blocks of the second and first type that are needed to define it) to see whether this possibly exhibits supersymmetry. We will assume that $\fer{ij}$ has $R = - 1$ (and consequently no family index), but of course we could equally well have taken one of the other two (see e.g.~Remark \ref{rmk:bb3-R=1}). 
The generalization of Remark \ref{rmk:bb2-rmk} for the expressions of the $r_i$ that results from normalizing the gauge bosons' kinetic terms is
\bas
	r_i &= \frac{3}{2N_i + N_j + MN_k}, &
	r_j &= \frac{3}{N_i + 2N_j + MN_k}, &
	r_k &= \frac{3}{M(N_i + N_j) + 2N_k}.
\eas

For the first of the demands of the previous section, \eqref{eq:demand-0}, one of the three terms that are equated to each other reads
\bas
	\w{ik}\yukw{i}{k}\yukws{i}{k} &\equiv \w{ik}(N_j\yuk{i}{k}\yuks{i}{k})^{-1/2}\yuk{i}{k}\yuks{i}{k}(N_j\yuk{i}{k}\yuks{i}{k})^{-1/2} = \frac{\w{ik}}{N_j}\id_M = \w{ik}\yukws{i}{k}\yukw{i}{k},
\eas
where we have used the definition \eqref{eq:def-yukw} of $\yukw{i}{k}$. Similarly,
\bas
	\w{jk}\yukws{j}{k}\yukw{j}{k} &= \frac{\w{jk}}{N_i}\id_M && 
\text{and} &
	\w{ij}\yukws{i}{j}\yukw{i}{j} &= \frac{\w{ij}}{N_k}\yuks{i}{j}\yuk{i}{j} (\tr \yuks{i}{j}\yuk{i}{j})^{-1}
\eas
for the other two. Equating these, we obtain: 
\ba\label{eq:idn-bb3-sols}
 \frac{\w{ik}}{N_j}\id_M &= \frac{\w{jk}}{N_i}\id_M = \frac{\w{ij}}{N_k}\yuks{i}{j}\yuk{i}{j} (\tr \yuks{i}{j}\yuk{i}{j})^{-1},
\ea
i.e.~$\yuk{i}{j}$ is constrained to be proportional to a unitary matrix. Taking the trace gives the demand
\ba\label{eq:demand-0-bb3}
 M\frac{\w{ik}}{N_j} &= M\frac{\w{jk}}{N_i} = \frac{\w{ij}}{N_k}.
\ea
Given the expressions for $r_{i,j,k}$ above, we can test whether this demand admits solutions. Indeed, we find
\ba\label{eq:solutions-bb3}
	N_i &= N_j = N_k \equiv N,& M &= 1 \lor 2.
\ea
In the first case we find that 
	\bas
		r_iN_i &= r_jN_j = r_k N_k = \frac{3}{4},& \w{ij} &= \w{ik} = \w{jk} = - \frac{1}{2}, 
	\eas 
	whereas in the second case we have 
	\bas
		r_iN_i &= r_jN_j = \frac{3}{5},& r_kN_k &= \frac{1}{2},& \w{ij} &= - \frac{1}{5},& \w{ik} &= \w{jk} = - \frac{1}{10}.
	\eas

Next, we have the demand \eqref{eq:demand-1} to ensure that terms of the form $|\sfer_{ij}\asfer_{ij}|^2$ can be written off shell in a supersymmetric manner. In this context it reads 
\bas
		\frac{r_i}{4} &= N_ir_i^2 + \alpha_{ij} N_k \w{ij}^2\tr[(\yukws{i}{j}\yukw{i}{j})^2],\nn\\
  	\frac{r_j}{4} &= N_jr_j^2 + \alpha_{ji} N_k \w{ij}^2\tr[(\yukws{i}{j}\yukw{i}{j})^2],\nn
\eas
for $\sfer_{ij}$ (where the trace in the last term comes from the fact that $\sfer_{ij}$ does not have family indices) and 
\bas
		\frac{r_k}{4}\id_M &= N_kr_k^2\id_M + \alpha_{kj} N_i \w{jk}^2(\yukws{j}{k}\yukw{j}{k})^2, \nn\\
		\frac{r_j}{4}\id_M &= N_jr_j^2\id_M + \alpha_{jk} N_i \w{jk}^2(\yukws{j}{k}\yukw{j}{k})^2,\nn\\
		\frac{r_k}{4}\id_M &= N_kr_k^2\id_M + \alpha_{ki} N_j \w{ik}^2(\yukw{i}{k}\yukws{i}{k})^2,\nn\\
		\frac{r_i}{4}\id_M &= N_ir_i^2\id_M + \alpha_{ik} N_j \w{ik}^2(\yukw{i}{k}\yukws{i}{k})^2,
\eas
for $\sfer_{jk}$ and $\sfer_{ik}$ respectively. Here we have written $\alpha_{ji} = 1 - \alpha_{ij}$, etc. We can remove all variables $\yukw{i}{j}$, $\yukw{i}{k}$ and $\yukw{j}{k}$ by using the squares of the expressions in \eqref{eq:idn-bb3-sols}. This gives
\bas
		\frac{N_ir_i}{4} &= (N_ir_i)^2 + \alpha_{ij} N_k \frac{\w{jk}^2}{N_i}M, &
  	\frac{N_jr_j}{4} &= (N_jr_j)^2 + \alpha_{ji} N_k \frac{\w{ik}^2}{N_j}M,\nn\\
		\frac{N_kr_k}{4} &= (N_kr_k)^2 + \alpha_{kj} N_k \frac{\w{jk}^2}{N_i}, &
		\frac{N_jr_j}{4} &= (N_jr_j)^2 + \alpha_{jk} N_i \frac{\w{ik}^2}{N_j},\nn\\
		\frac{N_kr_k}{4} &= (N_kr_k)^2 + \alpha_{ki} N_k \frac{\w{ik}^2}{N_j},&
		\frac{N_ir_i}{4} &= (N_ir_i)^2 + \alpha_{ik} N_j \frac{\w{jk}^2}{N_i},
\eas
where the $M$ in the first line above comes from taking the trace over $\id_M$. Comparing the expressions featuring the same combinations $r_{i}N_{i}$, $r_{j}N_{j}$, $r_{k}N_{k}$ and using \eqref{eq:demand-0-bb3} we must have that
\bas
	\alpha_{ij}N_kM &= \alpha_{ik}N_j,&
	(1- \alpha_{jk})N_i &= (1 - \alpha_{ik})N_j,&
	(1 - \alpha_{ij})N_kM &= \alpha_{jk}N_i.
\eas
Since both solutions \eqref{eq:solutions-bb3} to the relation \eqref{eq:demand-0-bb3} have $N_i = N_j = N_k$, this solves 
\bas
	\alpha_{ij} &= \frac{1}{2},& \alpha_{ik} &= \frac{1}{2}M, & \alpha_{jk} &= \frac{1}{2}M
\eas
and the demands above reduce to
\bas
		N_ir_i &=4(N_ir_i)^2 + 2 \w{jk}^2M, &
  	N_jr_j &=4(N_jr_j)^2 + 2 \w{ik}^2M, &
		N_kr_k &=4(N_kr_k)^2 + \w{ik}^2(4 - 2M).
\eas
We can check that for neither of the two cases of \eqref{eq:solutions-bb3} these are satisfied. As a cross check of this result we will employ one more demand.

In the context of a single building block of the third type the demand \eqref{eq:demand-2} that is necessary to write terms of the form $|\sfer_{ij}\sfer_{jk}|^2$ off shell in a supersymmetric manner, reduces to
\bas
	 2(1 - 2\w{ik})\w{ik} &= r_jN_j ,&
	 2(1 - 2\w{jk})\w{jk} &= r_iN_i ,&
	 2(1 - 2\w{ij})\w{ij}\yuks{i}{j}\yuk{i}{j}  &= r_kN_k\id_M \tr \yuks{i}{j}\yuk{i}{j}.
\eas
We can use \eqref{eq:demand-0-bb3} to rewrite the last equation in terms of $\w{ik}$ or $\w{jk}$. In any way, the LHS are seen to be negative for all values of $\w{ij}$, $\w{ik}$ and $\w{jk}$ allowed by the solutions \eqref{eq:solutions-bb3}, whereas $r_{i}N_{i}$, $r_{j}N_{j}$ and $r_{k}N_{k}$ are necessarily positive. We thus get a contradiction. 

A single building block of the third type (together with the building blocks needed to define it) is thus not supersymmetric.



\section{Summary and conclusions}

The main subject of this paper are almost-commutative geometries of the form 
	\bas
		(C^{\infty}(M, \A_F), L^2(M, S\otimes \H_F), \dirac \otimes 1 + \gamma_5 \otimes D_F; \gamma_5\otimes \gamma_F, J_M\otimes J_F)
	\eas 
 of KO-dimension $2$ on a flat, $4$-dimensional background $M$. We have dressed these with a grading $R : \H \to \H$ called \emph{$R$-parity}. We have shown that such almost-commutative geometries provide an arena suited for describing fields theories that have a supersymmetric particle content. This was done by identifying five different \emph{building blocks}; constituents of a finite spectral triple that yield an almost-commutative geometry whose particle content has an equal number of (off shell) fermionic and bosonic degrees of freedom. In addition they contain the right interactions to make them eligible for supersymmetric theories. These five building blocks are listed in Table \ref{tab:bbs}.\\

\begin{table}[ht]
	\setlength{\extrarowheight}{3pt}
\begin{tabularx}{\textwidth}{XlllX}
\toprule	
	& \textbf{Building block} & \textbf{Required} &  \textbf{Counterpart in superfield formalism} & \\ 
\midrule
	& \B{i}	(\S\ref{sec:bb1})									&	 ---													& Vector multiplet & \\
	& \BB{ij}	(\S\ref{sec:bb2})								&	 \B{i}, \B{j}									& Chiral multiplet & \\
	& \BBB{ijk}	(\S\ref{sec:bb3})							&	 \BBB{ij}, \BBB{ik}, \BBB{jk}	& Superpotential with three chiral superfields & \\
	& \BBBB{11'}	(\S\ref{sec:bb4})						&	 \B{11'}											& Majorana mass for $\fer{11'}$, $\sfer_{11'}$ & \\
	& \B{\mathrm{mass}, ij} (\S\ref{sec:bb5})	& \Bc{ij}{+}, \Bc{ij}{-}				& A mass(-like) term for $\fer{ij}, \sfer_{ij}$& \\
\bottomrule	
\end{tabularx}
\caption{The building blocks of a supersymmetric spectral triple. In the last column we have listed their counterparts in the superfield formalism.}
\label{tab:bbs}
\end{table}

Although we have not been using the notion of superspace and superfields, the building blocks themselves can thus be seen as an alternative. However, a significant difference between the two approaches is that if a certain superfield enters the action, then automatically all its component fields do too. For the components of these building blocks this need not be true; without \emph{demanding} supersymmetry we are free to e.g.~define a finite Hilbert space consisting of only the representation \rep{i}{j} (and its conjugate), without its superpartner arising from a component of the finite Dirac operator. However, the philosophy to include each component of $D_F$ that is not explicitly forbidden by the demands on a spectral triple turned out to be a fruitful one in obtaining models that have a supersymmetric particle content, as long as we start by adding gauginos to the finite Hilbert space.\\

It is far from automatic, though, that when the field content is supersymmetric also the action is. First of all, there is a number of obstructions to a supersymmetric action:
\begin{enumerate}
	\item A single building block \B{i} of the first type (i.e.~without a building block \B{ij} of the second type, for some $j$) for which $N_i = 1$, has vanishing bosonic interactions (Remark \ref{rmk:bb1-obstr}).
	\item A single building block \B{ij} of the second type that has $R = -1$, has two different $u(1)$ gauge fields that interact whereas the corresponding gauginos do not (Remark \ref{rmk:bb2-obstr}).
	\item If the finite algebra contains more than two components $M_{N_i}(\com)$, $M_{N_j}(\com)$ and $M_{N_k}(\com)$ over $\com$ and there is a set of two or more building blocks \B{ij}, \B{ik} that share three different indices, then there are two different $u(1)$ gauge fields that interact, whereas the corresponding gauginos do not (Proposition \ref{prop:2bb2-obstr}).
\end{enumerate}
Second, for a set up that avoids these three obstructions, the question is whether the four-scalar interactions that are generated by the spectral action are rewritable as an off shell action in terms of the auxiliary fields that are available to us. On top of this, the pre-factors of the interactions with the auxiliary fields are dictated by supersymmetry. Both the form of the action functional used in noncommutative geometry and supersymmetry thus put demands on the pre-factors of interactions which together heavily constrain the number of possible solutions. Typical for almost-commutative geometries is that there are new contributions to various expressions when extending a model. The question whether for the `full theory' the coefficients are such that these terms do have an off shell counterpart, is then phrased in terms of the demands listed in Section \ref{sec:4s-aux}.\\ 

%
%


Despite all these technical calculations and detailed issues, we have a definite handle on which almost-commutative geometries exhibit a supersymmetric action and which do not. To obtain an exhaustive list of examples that do satisfy all demands requires an automated strategy, in which step by step models are extended with building blocks and it is checked whether they satisfy the aforementioned demands. Whatever the outcome of such a strategy will be, the examples of supersymmetric almost-commutative geometries will be sparse. This is markedly different from the more generic superfield formalism, but at the same time the models that do satisfy all demands will enjoy a very special status.\\



This paper covered a particular class of spectral triples that is of direct interest for model building in particle physics; almost-commutative geometries whose background was flat and four dimensional and whose finite spectral triple is of KO-dimension $6$. We have restricted ourselves to theories with one supersymmetry charge (i.e.~$N = 1$ supersymmetry). Similar analyses can of course be done for manifolds with other dimensions than $4$, manifolds that are not flat, theories with $N =2$ and $N = 4$ supersymmetry and finite spectral triples of different KO-dimensions.  \\

Although we have encountered a couple of interactions that break supersymmetry softly, we have no thorough analysis of this phenomenon yet. This will be the subject of the second part in this series of papers.

\section*{Acknowledgements}
The authors would like to thank John Barrett for giving useful comments. One of the authors would like to thank the Dutch Foundation for Fundamental Research on Matter (FOM) for funding this work.
\appendix

\section{The action from a building block of the third type}\label{sec:bb3-calc-action}

In this section we derive in detail the action that comes from a building block \B{ijk} of the third type (cf.~Section \ref{sec:bb3}), such as that of Figure \ref{fig:bb3}. If we constrain ourselves for now to the off-diagonal part of the finite Hilbert space, then on the basis 
\bas
	\H_{F,\mathrm{off}} &= (\rep{i}{j})_L\, \oplus\, (\rep{i}{k})_R\, \oplus\, (\rep{j}{k})_L\nn\\
		&\qquad  \oplus\, (\rep{j}{i})_R\, \oplus\, (\rep{k}{i})_L\, \oplus\, (\rep{k}{j})_R
\eas
the most general allowed finite Dirac operator is of the form
\ba
	D_F= & \begin{pmatrix}
			0 &   \yuks{j}{k\,o}& 0 & 0 & 0 &   \yuks{i}{k}\\
			  \yuk{j}{k\,o}	& 0 &   \yuk{i}{j} & 0 & 0 & 0 \\
			0 &   \yuks{i}{j} & 0  &   \yuks{i}{k\,o}& 0 & 0 \\
			0 & 0 &   \yuk{i}{k\,o}& 0 &   \yuk{j}{k} & 0 \\	
			0 & 0 & 0 &   \yuks{j}{k} & 0 &   \yuks{i}{j\,o}\\	
			  \yuk{i}{k} & 0 & 0 & 0 &   \yuk{i}{j\,o}& 0 
	\end{pmatrix}\label{eq:bb3-DF}
\ea
We write for a generic element $\zeta$ of $\frac{1}{2} (1 + \gamma) L^2(S \otimes \H_{F, \mathrm{off}})$ 
\bas
	\zeta = (\fer{ijL}, \fer{ikR}, \fer{jkL}, \afer{ijR}, \afer{ikL}, \afer{jkR})
\eas
where $\afer{ijR} \in L^2( S_- \otimes \rep{j}{i})$, etc. Applying the matrix \eqref{eq:bb3-DF} to this element yields
\bas
	\gamma^5 D_F \zeta &= \gamma^5\Big(\fer{ikR}\asfer_{jk}\yuks{j}{k} + \yuks{i}{k}\sfer_{ik}\afer{jkR}, \fer{ijL}\yuk{j}{k}\sfer_{jk} + \yuk{i}{j}\sfer_{ij}\fer{jkL}, \\
						&\qquad \asfer_{ij}\yuks{i}{j}\fer{ikR} + \afer{ijR}\yuks{i}{k}\sfer_{ik}, \fer{jkL}\asfer_{ik}\yuk{i}{k} + \yuk{j}{k}\sfer_{jk}\afer{ikL}, \\
				&\qquad \asfer_{jk}\yuks{j}{k}\afer{ijR} + \afer{jkR}\asfer_{ij}\yuks{i}{j}, \afer{ikL}\yuk{i}{j}\sfer_{ij} + \asfer_{ik}\yuk{i}{k}\fer{ijL}\Big).
\eas
Notice that for the pairs $(i,j)$ and $(j,k)$ we always encounter $\sfer_{ij}$ in combination with $\yuk{i}{j}$, whereas for $(i,k)$ it is the combination $\sfer_{ik}$ and $\yuks{i}{k}$. This has to do with the fact that the sfermion $\sfer_{ik}$ crosses the particle/antiparticle-diagonal in the Krajewski diagram. Since 
\bas
	J\zeta &= J(\fer{ijL}, \fer{ikR}, \fer{jkL}, \afer{ijR}, \afer{ikL}, \afer{jkR}) \\
				&= (J_M \afer{ijR}, J_M\afer{ikL}, J_M\afer{jkR}, J_M\fer{ijL}, J_M\fer{ikR}, J_M\fer{jkL}),
\eas
the extra contributions to the inner product are written as
\bas
	&\frac{1}{2}\inpr{J\zeta}{\gamma^5 D_F\zeta} \nn\\ &= \frac{1}{2}\inpr{J_M \afer{ijR}}{\gamma^5(\fer{ikR}\asfer_{jk}\yuks{j}{k} + \sfer_{ik}\yuks{i}{k}\afer{jkR})} + 
	\frac{1}{2}\inpr{J_M\afer{ikL}}{\gamma^5(\fer{ijL}\yuk{j}{k}\sfer_{jk} + \yuk{i}{j}\sfer_{ij}\fer{jkL})} \\ 
&\qquad + \frac{1}{2}\inpr{J_M\afer{jkR}}{\gamma^5(\asfer_{ij}\yuks{i}{j}\fer{ikR} + \afer{ijR}\yuks{i}{k}\sfer_{ik})} + 
\frac{1}{2}\inpr{J_M\fer{ijL}}{\gamma^5(\fer{jkL}\asfer_{ik}\yuk{i}{k} + \yuk{j}{k}\sfer_{jk}\afer{ikL})} \\ 
&\qquad  + \frac{1}{2}\inpr{J_M\fer{ikR}}{\gamma^5(\asfer_{jk}\yuks{j}{k}\afer{ijR} + \afer{jkR}\asfer_{ij}\yuks{i}{j})} + 
\frac{1}{2}\inpr{J_M\fer{jkL}}{\gamma^5(\afer{ikL}\yuk{i}{j}\sfer_{ij} + \asfer_{ik}\yuk{i}{k}\fer{ijL})}.
\eas
Using the symmetry properties \eqref{eq:identitySymJ} of the inner product, this equals
\bas
&\inpr{J_M \afer{ijR}}{\gamma^5\fer{ikR}\asfer_{jk}\yuks{j}{k}} + 
\inpr{J_M \afer{ijR}}{\gamma^5\yuks{i}{k}\sfer_{ik}\afer{jkR}} + 
\inpr{J_M\afer{ikL}}{\gamma^5\fer{ijL}\yuk{j}{k}\sfer_{jk}} \\ 
&\qquad + \inpr{J_M\afer{ikL}}{\gamma^5\yuk{i}{j}\sfer_{ij}\fer{jkL}} + 
\inpr{J_M\afer{jkR}}{\gamma^5\asfer_{ij}\yuks{i}{j}\fer{ikR}} + 
\inpr{J_M\fer{jkL}}{\gamma^5\asfer_{ik}\yuk{i}{k}\fer{ijL}}.
\eas
We drop the subscripts $L$ and $R$, keeping in mind the chirality of each field, and for brevity we replace $ij \to 1$, $ik \to 2$, $jk \to 3$:
\ba
S_{123,F}[\zeta, \szeta] &= \inpr{J_M \afer{1}}{\gamma^5\fer{2}\asfer_{3}\yuks{3}{}} + 
\inpr{J_M \afer{1}}{\gamma^5\yuks{2}{}\sfer_{2}\afer{3}} + 
\inpr{J_M\afer{2}}{\gamma^5\fer{1}\yuk{3}{}\sfer_{3}} \nn\\ 
&\qquad + \inpr{J_M\afer{2}}{\gamma^5\yuk{1}{}\sfer_{1}\fer{3}} + 
\inpr{J_M\afer{3}}{\gamma^5\asfer_{1}\yuks{1}{}\fer{2}} + 
\inpr{J_M\fer{3}}{\gamma^5\asfer_{2}\yuk{2}{}\fer{1}}\label{eq:bb3-action-ferm-detail}.
\ea

The spectral action gives rise to some new interactions compared to those coming from building blocks of the second type. They arise from the trace of the fourth power of the finite Dirac operator and are given by the following list.

\begin{figure}
\begin{center}
	\def\svgwidth{.8\textwidth}
	\includesvg{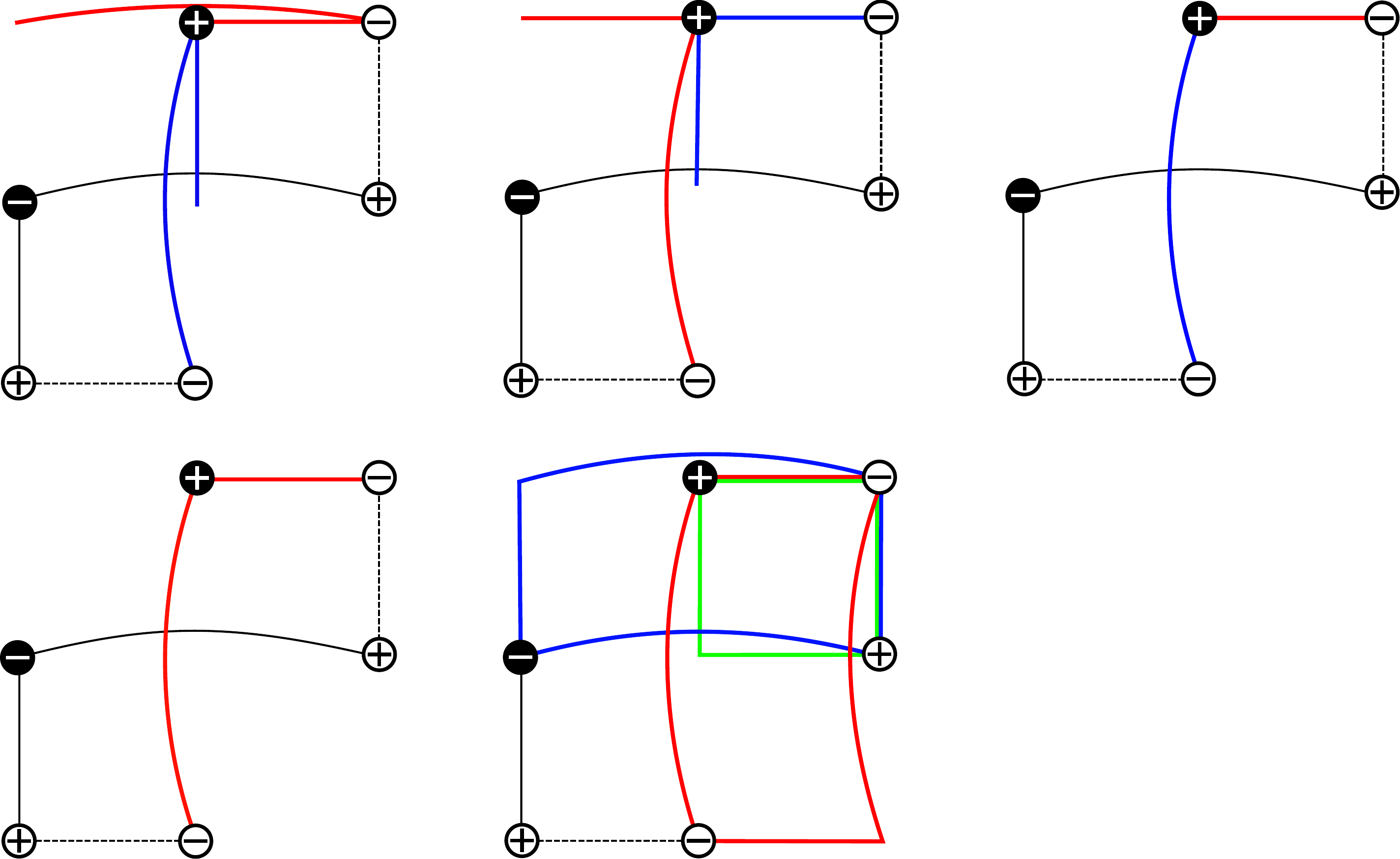}	
	\caption{The various contributions to $\tr D_F^4$ in the language of Krajewski diagrams corresponding to a building block \B{ijk} of the third type.}
	\label{fig:bb3-contributions}
\end{center}
\end{figure}

\begin{itemize}
	\item From paths of the type such as the one in the upper left corner of Figure \ref{fig:bb3-contributions} the contribution is 
	\ba
			& 8\Big[N_i|C_{iij}\sfer_{ij}\yuk{j}{k}\sfer_{jk}|^2  
			 + N_k|\yuk{i}{j}\sfer_{ij}C_{jkk}\sfer_{jk}|^2  
			 + N_j|\asfer_{ij}C_{ijj}^*\yuks{i}{k}\sfer_{ik}|^2\nn\\
			 &\qquad +  N_k|\asfer_{ij}\yuks{i}{j}C_{ikk}\sfer_{ik}|^2
		 + N_i|\yuk{j}{k}\sfer_{jk}\asfer_{ik}C_{iik}^*|^2	
		+ N_j|C_{jjk}\sfer_{jk}\asfer_{ik}\yuk{i}{k}|^2	\Big].\label{eq:bb3-action-1}
	\ea
	Here the multiplicity $8 = 2(1 + 1+ 2)$ comes from the fact that there are three vertices involved in each path, on each of which the path can start. In the case of the `middle' vertices the path can be traversed in two distinct orders. Furthermore a factor two comes from that each path occurs twice; also mirrored along the diagonal of the diagram.
	\item From paths such as the upper middle one in Figure \ref{fig:bb3-contributions} the contribution is:
	\ba
			& 8\Big[ \tr (C_{iij}\sfer_{ij})^o \yuk{j}{k}\sfer_{jk}\asfer_{jk}\yuks{j}{k}(\asfer_{ij}C_{iij}^*)^o
					+ \tr(\yuk{i}{j}\sfer_{ij})^oC_{jjk}\sfer_{jk}\asfer_{jk}C_{jjk}^*(\asfer_{ij}\yuks{i}{j})^o\nn\\
			&\qquad		+ \tr(\asfer_{ij}C_{iij}^*)^o\yuks{i}{k}\sfer_{ik}\asfer_{ik}\yuk{i}{k}(C_{iij}\sfer_{ij})^o
			 +  \tr(\asfer_{ij}\yuks{i}{j})^oC_{iik}\sfer_{ik}\asfer_{ik}C_{iik}^*(\yuk{i}{j}\sfer_{ij})^o\nn\\
			&\qquad +  \tr(\asfer_{ik}C_{ikk}^*)^o\yuk{j}{k}\sfer_{jk}\asfer_{jk}\yuks{j}{k}(C_{ikk}\sfer_{ik})^o
			  +  \tr(\asfer_{ik}\yuk{i}{k})^oC_{jjk}\sfer_{jk}\asfer_{jk}C_{jkk}^*(\yuks{i}{k}\sfer_{ik})^o	\Big],\label{eq:bb3-action-2}
	\ea
		where the arguments for determining the multiplicity are the same as for the previous contribution.
	\item From paths such as the upper right one in Figure \ref{fig:bb3-contributions}, going back and forth along the same edge twice, the contribution is:
	\ba
			4\Big[N_i|\yuk{j}{k}\sfer_{jk}\asfer_{jk}\yuks{j}{k}|^2 + N_j|\yuks{i}{k}\sfer_{ik}\asfer_{ik}\yuk{i}{k}|^2 + N_k|\yuk{i}{j}\sfer_{ij}\asfer_{ij}\yuks{i}{j}|^2\Big] \label{eq:bb3-action-3}
	\ea
	The multiplicity arises from $2$ vertices on which the path can start and each such path occurs again reflected.

	\item From paths such as the lower left one in Figure \ref{fig:bb3-contributions} the contribution is:
	\ba
		& 8 \Big[|\sfer_{ij}|^2|\yuk{i}{j}\yuks{i}{k}\sfer_{ik}|^2 + |\sfer_{ij}|^2|\yuks{i}{j}\yuk{j}{k}\sfer_{jk}|^2 + |\yuks{i}{k}\sfer_{ik}|^2|\yuk{j}{k}\sfer_{jk}|^2 \Big]. \label{eq:bb3-action-4}	
	\ea

	\item From paths such as the lower right one in Figure \ref{fig:bb3-contributions} the contribution is:
	\ba
			& 8 \Big[ \tr (\asfer_{ik}C_{iik}^*(\yuk{i}{j}\sfer_{ij})^o(\asfer_{ij} C_{iij}^*)^o \yuks{i}{k}\sfer_{ik})  + \tr(\asfer_{jk}\yuks{j}{k}(\asfer_{ij}C_{ijj}^*)^o(\yuk{i}{j}\sfer_{ij})^oC_{jjk}\sfer_{jk}) \nn\\
		&\qquad + \tr((\asfer_{ik}\yuk{i}{k})^oC_{jkk}\sfer_{jk}\asfer_{jk}\yuks{j}{k}(C_{ikk}\sfer_{ik})^o)  + h.c. \Big],\label{eq:bb3-action-5}
	\ea
	corresponding with the blue, green and red paths respectively. The multiplicity arises from the fact that any such path has four vertices on which it can start and also occurs reflected around the diagonal. Besides, each path can also be traversed in the opposite direction, hence the `h.c.'.
\end{itemize}
Adding \eqref{eq:bb3-action-1}, \eqref{eq:bb3-action-2}, \eqref{eq:bb3-action-3}, \eqref{eq:bb3-action-4} and \eqref{eq:bb3-action-5} the total \emph{extra} contribution to $\tr D_F^4$ from adding a building block \B{ijk} of the third type, is given by \eqref{eq:bb3-boson-action}.

\section{Proofs}\label{sec:AppSusySTAction}

In this section we give the actual proofs and calculations of the Lemmas and Theorems presented in the text. First we introduce some notation. With $(.,.)_{\mathcal{S}} : \Gamma^{\infty}(\mathcal{S}) \times \Gamma^{\infty}(\mathcal{S}) \to C^{\infty}(M)$ we mean the $C^{\infty}(M)$-valued Hermitian structure on $\Gamma^{\infty}(\mathcal{S})$. 
The Hermitian form on $\Gamma^{\infty}(\mathcal{S})$ is to be distinguished from the $C^{\infty}(M)$-valued form on $\H \equiv L^2(M, S\otimes \H_F)$:
\begin{align*}
		(., .)_{\H} : \Gamma(\mathcal{S} \otimes \H_F) \times \Gamma(\mathcal{S} \otimes \H_F) \to C^{\infty}(M)
\end{align*}
given by
\begin{align*}
		(\fer{1}, \fer{2})_{\H}  := (\zeta_1,\zeta_2)_{\mathcal{S}}\langle m_1, m_2 \rangle_{F},\qquad \psi_{1,2} = \zeta_{1,2} \otimes m_{1,2},
\end{align*}
where $\inpr{\,.\,}{\,.\,}_F$ denotes the inner product on the finite Hilbert space $\H_F$. The inner product on the full Hilbert space $\H$ is then obtained by integrating over the manifold $M$: 
\begin{align*}
	\langle \psi_1, \psi_2\rangle_{\H} := \int_M  (\psi_1, \psi_2)_{\H}\,\sqrt{g}\mathrm{d}^4x.
\end{align*}
If no confusion is likely to arise between $(.,.)_{\mathcal{S}}$ and $(.,.)_{\H}$, we omit the subscript.

In the proofs there appear a number of a priori unknown constants. To avoid confusion: capital letters always refer to parameters of the Dirac operator, lowercase letters always refer to proportionality constants for the superfield transformations. For the latter the number of indices determines what field they belong to: constants with one index belong to a gauge boson--gaugino pair, constants with two indices belong to a fermion--sfermion pair. 

\subsection{First building block}\label{sec:SYM}

This section forms the proof of Theorem \ref{prop:bb1}. In this case the action is given by \eqref{eq:SYM}. Its constituents are the ---flat--- metric metric $g$, the gauge field $A^j \in \End(\Gamma(\cS) \otimes su(N_j))$ and spinor $\gau{j} \in L^2(M, S \otimes su(N_j))$, both in the adjoint representation and the spinor after reducing its degrees of freedom (see Section {\ref{sec:equalizing}}).

Now for $\epsilon \equiv (\eL, \eR) \in L^2(M, S)$, decomposed into Weyl spinors that vanish covariantly (i.e.~$\nabla^S\epsilon = 0$), we define 
\begin{subequations}\label{eq:susytransforms2}
\begin{align}
		\delta A_j &= c_{j}\gamma^\mu\big[(J_M\eR, \gamma_\mu\gau{jL})_\cS +  (J_M\eL, \gamma_\mu\gau{jR})_\cS\big] \equiv \gamma^\mu (\delta A_{\mu j +} + \delta A_{\mu j -})\label{eq:transforms2.1},\\
		\delta \gau{jL,R} &= (c_{j}' F^j + c_{G_j}'G_j)\eLR,\qquad F^j  \equiv \gamma^\mu\gamma^\nu F_{\mu\nu}^{j}\label{eq:transforms2.2},\\ 
		\delta G_j &= c_{G_j}\big[(J_M\eL, \can_A\gau{jR})_{\cS} + (J_M\eR, \can_A\gau{jL})_{\cS}\big]\label{eq:transforms2.3},
\end{align}
\end{subequations}
where the coefficients $c_j, c_j', c_{G_j}, c_{G_j}'$ are yet to be determined. In the rest of this section we will drop the index $j$ for notational convenience and discard the factor $n_j$ from the normalization of the gauge group generators, since it appears in the same way for each term. 


\begin{itemize}
\item The fermionic part of the Lagrangian, upon transforming the fields, equals:
\begin{align}
	\langle J_M\gau{L}, \can_A\gau{R}\rangle &\to \int_M (J_M[c'F + c_{G}'G]\eL, \can_A\gau{R})_{\H} + (J_M\gau{L}, \can_A[c'F + c_{G}'G]\eR)_{\H}\nn\\
		&\qquad + gc(J_M\gau{L}, \gamma^\mu\ad[(J_M\eL, \gamma_\mu\gau{R})_{\cS} + (J_M\eR, \gamma_\mu\gau{L})_{\cS}]\gau{R})_{\H}\label{eq:bb1-transf1}.
\end{align}
Here we mean with $\ad(X)$ the adjoint: $\ad(X)Y := [X, Y]$.

\item The kinetic terms for the gauge bosons transform to:
\begin{align}
 \frac{1}{4}\K \int_M \tr_N F^{\mu\nu}F_{\mu\nu} &\to  c\frac{\K}{2} \int_M \tr_{N} F^{\mu\nu}\Big(\partial_{[\mu} \big[(J_M\eR, \gamma_{\nu]}\gau{L})_\cS +  (J_M\eL, \gamma_{\nu]}\gau{R})_\cS\big] \nn\\
	&\qquad -ig[(J_M\eR, \gamma_{\mu}\gau{L})_\cS +  (J_M\eL, \gamma_{\mu}\gau{R})_\cS, A_\nu]\nn\\
	&\qquad  -ig[A_\mu, (J_M\eR, \gamma_{\nu}\gau{L})_\cS +  (J_M\eL, \gamma_{\nu}\gau{R})_\cS]\Big)\sqrt{g}\mathrm{d}^4x\label{eq:bb1-transf2}.
\end{align}
where $A_{[\mu}B_{\nu]} \equiv A_\mu B_\nu - A_\nu B_\mu$.
\item And finally the term for the auxiliary fields transforms to
\begin{align}
	- \frac{1}{2}\int_M \tr_{N} G^2 &\to - c_{G}\int_M \tr_{N} G\big[ (J_M\eR, \can_A\gau{L})_{\cS} + (J_M\eL, \can_A\gau{R})_{\cS}\big].\label{eq:bb1-transf3}
\end{align}
\end{itemize}

If we collect the terms of \eqref{eq:bb1-transf1}, \eqref{eq:bb1-transf2} and \eqref{eq:bb1-transf3} containing the same field content, we get three groups of terms that separately need to vanish in order to have a supersymmetric theory. These groups are:
\begin{itemize}
\item one consisting of only one term with four fermionic fields (coming from the second line of \eqref{eq:bb1-transf1}):
\begin{align}
	gc(J_M\gau{L}, \gamma^\mu\ad(J_M\eL, \gamma_\mu\gau{R})_{\cS}\gau{R})_{\H}.\label{eq:bb1-group1}
\end{align}
There is a second such term with $\eL \to \eR$ and $\gau{R} \to \gau{L}$ that is obtained via $(J_M\eL, \gamma_\mu\gau{R})_{\cS} \to (J_M\eR, \gamma_\mu\gau{L})_{\cS}$.

\item one consisting of a gaugino and two or three gauge fields:
\begin{align}
	& \int_M \bigg[c' (J_M\gau{L}, \can_AF\eR)_{\H} + c\frac{\K}{2} \tr_{N} F^{\mu\nu}\Big(\partial_{[\mu}(J_M\eR, \gamma_{\nu]}\gau{L})_\cS  -ig\big[(J_M\eR, \gamma_\mu\gau{L})_\cS, A_\nu\big] \nn\\
	&\qquad-ig\big[A_\mu, (J_M\eR, \gamma_{\nu}\gau{L})_\cS\big]\Big)\bigg]\label{eq:bb1-group2}
\end{align}
featuring the third term of \eqref{eq:bb1-transf1} and the terms of \eqref{eq:bb1-transf2} featuring $\gau{L}$. There is another such group with $\eR \to \eL$ and $\gau{L} \to \gau{R}$ consisting of the first term of \eqref{eq:bb1-transf1} and the other terms of \eqref{eq:bb1-transf2}.

\item one consisting of the auxiliary field $G$, a gauge field and a gaugino:
\begin{align}
	 \int_M \Big[ c_{G}'(J_M\gau{L},\can_AG\eR)_{\H}- c_{G} \tr_{N} G(J_M\eR, \can_A\gau{L})_{\cS} \Big]\label{eq:bb1-group3}
\end{align}
featuring the second part of the third term of \eqref{eq:bb1-transf1} and the first term of \eqref{eq:bb1-transf3}. There is another such group with $\eR\to\eL$ and $\gau{L} \to \gau{R}$.
\end{itemize}
We will tackle each of these groups separately in the following Lemmas.


\begin{lem}\label{lem:bb1-group1}
The term \eqref{eq:bb1-group1} equals zero.
\end{lem}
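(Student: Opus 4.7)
The plan is to recognise the expression as a standard four-gaugino term of the super Yang--Mills type and to show that it vanishes by combining a Fierz rearrangement of the chiral spinor bilinears with the total antisymmetry of the $su(N_j)$ structure constants.

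First, I would expand the adjoint-valued gauginos in a basis of generators, $\gau{L,R} = \gau{L,R}^a\otimes T^a_j$, with $[T^a_j,T^b_j] = if^{abc}T^c_j$ and $\tr T^a_jT^b_j = n_j \delta^{ab}$. Under this expansion the Hilbert space inner product $(.,.)_\H$ factorises into the spinor inner product $(.,.)_\cS$ times the Lie-algebra trace, so that \eqref{eq:bb1-group1} becomes
\begin{align*}
  gc\int_M (J_M\gau{L}^d,\gamma^\mu \gau{R}^b)_\cS\,(J_M\eL,\gamma_\mu\gau{R}^a)_\cS \,\tr\bigl(T^d_j[T^a_j,T^b_j]\bigr)\,\vol
  = ign_jc\, f^{abc}\!\int_M (J_M\gau{L}^c,\gamma^\mu\gau{R}^b)_\cS\,(J_M\eL,\gamma_\mu\gau{R}^a)_\cS \,\vol .
\end{align*}

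Second, I would invoke the Fierz identity appropriate to four-dimensional Weyl spinors of matching chirality: for any such spinors $\chi_{1,2,3,4}$,
\begin{align*}
  (J_M\chi_1,\gamma^\mu\chi_2)_\cS\,(J_M\chi_3,\gamma_\mu\chi_4)_\cS
  + (J_M\chi_1,\gamma^\mu\chi_4)_\cS\,(J_M\chi_3,\gamma_\mu\chi_2)_\cS = 0,
\end{align*}
which follows from the completeness of the Clifford algebra on the chiral spinor bundle together with $J_M\gamma^\mu J_M^{*} = -\gamma^\mu$. Applying this identity with $\chi_1=\gau{L}^c$, $\chi_2=\gau{R}^b$, $\chi_3=\eL$, $\chi_4=\gau{R}^a$ swaps the roles of the indices $a$ and $b$ at the cost of a sign. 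Relabelling the dummy indices and using the total antisymmetry $f^{bac} = -f^{abc}$ then shows the expression equals minus itself, hence vanishes. The companion term obtained from \eqref{eq:bb1-group1} by the substitution $\eL\to\eR$, $\gau{R}\to\gau{L}$ is treated identically.

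The main obstacle will be to nail down the precise sign in the Fierz identity in our Euclidean conventions, since the bilinears here are built with the charge-conjugation operator $J_M$ rather than a Dirac adjoint. I expect the signs arising from commuting $J_M$ past $\gamma^\mu$ (via $J_M\gamma^\mu = -\gamma^\mu J_M$) together with the Grassmann nature of the spinor fields to conspire to produce precisely the relative sign needed for the Fierz plus Jacobi cancellation to go through, as is familiar from the on-shell supersymmetry of $\mathcal{N}{=}1$ super Yang--Mills in four dimensions. No restriction on the coefficients $c_j,c_j',c_{G_j},c_{G_j}'$ is produced by this group of terms.
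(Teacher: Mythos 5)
Your overall strategy --- expanding the gauginos in generators, Fierzing the four-fermion term and playing the result off against the antisymmetry of $f^{abc}$ --- is exactly the route the paper takes, but there is a genuine problem with the sign of your Fierz identity and, worse, with the logic you build on it. For \emph{Grassmann-valued} spinors of the chiralities at hand the product of the two vector bilinears is \emph{symmetric}, not antisymmetric, under the exchange of the two right-handed gauginos: carrying out the full five-channel Clifford expansion of Proposition \ref{prop:fierz} (with $C_{10}=-C_{14}=4$, $C_{11}=C_{13}=-2$, $C_{12}=0$) one finds that the scalar and pseudoscalar channels cancel against each other while the vector and axial-vector channels add, leaving
\begin{align*}
(J_M\gau{L}^a,\gamma^\mu\gau{R}^b)_\cS\,(J_M\eL,\gamma_\mu\gau{R}^c)_\cS
= +\,(J_M\gau{L}^a,\gamma^\mu\gau{R}^c)_\cS\,(J_M\eL,\gamma_\mu\gau{R}^b)_\cS .
\end{align*}
The antisymmetric version you postulate is what one obtains for \emph{commuting} spinors; the Grassmann nature of the fields (cf.\ the sign discussion in the proof of Lemma \ref{lem:symmJ}) flips it.

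Even granting your antisymmetric identity, the concluding step is a non sequitur: the Fierz swap would give one minus sign and the relabelling $f^{abc}\to f^{bac}=-f^{abc}$ a second one, so your chain of equalities returns $X=X$ rather than $X=-X$ --- a tensor antisymmetric in $(a,b)$ contracted with the antisymmetric $f^{abc}$ does \emph{not} vanish. The cancellation requires exactly one sign flip: it is the \emph{symmetry} of the spinor product under the exchange of the two $\gau{R}$'s, combined with the total antisymmetry of $f^{abc}$, that forces the term to zero. The Lemma and your method are sound, but you must establish the symmetric Fierz identity via the full Clifford-basis expansion for the argument to close.
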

\begin{proof}
Evaluating \eqref{eq:bb1-group1} point-wise, applying the finite inner product and using the normalization for the generators of the gauge group, yields up to a constant factor
\begin{align}
	f^{abc} (J_M\gau{L}^a, \gamma^\mu \gau{R}^b)_\cS(J_M\eL, \gamma_\mu \gau{R}^c)_\cS\label{eq:fierz_start}.
\end{align}
Here the $f^{abc}$ are the structure constants of the Lie algebra $SU(N)$. We employ a Fierz transformation (See Appendix \ref{sec:fierz}), using $C_{10} = - C_{14} = 4$, $C_{11} =  C_{13} = -2$, $C_{12} = 0$, to rewrite \eqref{eq:fierz_start} as
\begin{align*}
	& f^{abc} (J_M\gau{L}^a, \gamma^\mu \gau{R}^b)_\cS(J_M\eL, \gamma_\mu \gau{R}^c)_\cS = - \frac{1}{4}f^{abc}\Big[4(J_M\eL, \gau{R}^b)_\cS(J_M\gau{L}^a, \gau{R}^c)_\cS\\
&\qquad\qquad - 2 (J_M\eL, \gamma_\mu \gau{R}^b)_\cS					(J_M\gau{L}^a, \gamma^\mu \gau{R}^c)_\cS  - 2 (J_M\eL, \gamma_\mu \gamma^5 \gau{R}^b)_\cS(J_M\gau{L}^a, \gamma^\mu \gamma^5 \gau{R}^c)_\cS\\
&\qquad\qquad - 4 (J_M\eL, \gamma^5 \gau{R}^b)_\cS						(J_M\gau{L}^a, \gamma^5 \gau{R}^c)_\cS\Big].
\end{align*}
The first and last terms on the right hand side of this expression are seen to cancel each other, whereas the second and third term add. We retain 
\begin{align*}
	& f^{abc} (J_M\gau{L}^a, \gamma^\mu \gau{R}^b)_\cS(J_M\eL, \gamma_\mu \gau{R}^c)_\cS =  f^{abc} (J_M\gau{L}^a, \gamma^\mu \gau{R}^c)_\cS (J_M\eL, \gamma_\mu \gau{R}^b)_\cS	. 
\end{align*}
Since $f^{abc}$ is fully antisymmetric in its indices, this expression equals zero.
\end{proof}


\begin{lem}\label{lem:bb1-group2}
The term \eqref{eq:bb1-group2} equals zero if and only if
\begin{align}
	2ic' &= - c\K \label{bb1-constr2}.
\end{align}
\end{lem}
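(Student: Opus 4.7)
The plan is to reduce both summands in \eqref{eq:bb1-group2} to a common form built out of the covariantized field strength and then to compare coefficients. Concretely, the first term carries a factor $\can_A F\eR$, where $F=\gamma^\mu\gamma^\nu F_{\mu\nu}^j$, so I would first rewrite
\[
\can_A F\,\eR \;=\; i\gamma^\rho D_\rho\bigl(\gamma^\mu\gamma^\nu F_{\mu\nu}\bigr)\eR \;=\; i\gamma^\rho\gamma^\mu\gamma^\nu\,D_\rho F_{\mu\nu}\,\eR ,
\]
and then decompose the triple product $\gamma^\rho\gamma^\mu\gamma^\nu$ into its totally antisymmetric part plus metric contractions (standard Clifford identity). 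The totally antisymmetric part, when contracted with $D_\rho F_{\mu\nu}$, vanishes by the Bianchi identity $D_{[\rho}F_{\mu\nu]}=0$ for the non-abelian field strength, so only two terms survive, each of which contains a single $\gamma$-matrix and either $D^\mu F_{\mu\nu}$ or $D_\rho F^{\rho\nu}$.

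Next I would rewrite the second summand of \eqref{eq:bb1-group2}. The two commutator pieces can be combined with $\partial_{[\mu}(\,\cdot\,)_{\nu]}$ into the gauge-covariant expression $D_{[\mu}(J_M\eR,\gamma_{\nu]}\gau{jL})_\cS$, using that $A_\mu$ acts in the adjoint. Because $F^{\mu\nu}$ is antisymmetric, only $D_\mu(J_M\eR,\gamma_\nu\gau{jL})_\cS$ (pulled through the trace) contributes. At this point I integrate by parts in $x$ to move the covariant derivative off the spinor bilinear onto $F^{\mu\nu}$, producing a term proportional to $(J_M\eR,\gamma_\nu\gau{jL})_\cS\,D_\mu F^{\mu\nu}$. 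The assumption $\nabla^S\epsilon=0$ ensures that no derivative acts on $\eR$.

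With both summands now of the schematic form ``$D_\mu F^{\mu\nu}$ contracted with a spinor bilinear $(\text{something},\gamma_\nu\gau{jL})$,'' I would carefully move $J_M$ and the $\gamma$-matrices through the Hermitian structure $(\,\cdot\,,\,\cdot\,)_\cS$ using the standard identities (and the KO-dimension signs summarised in Table \ref{tab:ko_dimensions}) so that the first summand appears in the \emph{same} spinor bilinear form as the rewritten second one. Collecting the pre-factors yields the single scalar equation $2ic_j'+c_j\K_j=0$, i.e.\ \eqref{bb1-constr2}, which is both necessary and sufficient for the group to vanish.

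The routine part here is the $\gamma$-algebra bookkeeping; the genuinely important step is recognising that (i) the Bianchi identity kills the only potentially dangerous three-$\gamma$ structure, and (ii) the ``$\partial+[A,\cdot]+[\cdot,A]$'' combination in \eqref{eq:bb1-group2} is precisely a covariant derivative, so that integration by parts against $F^{\mu\nu}$ reproduces the $D_\mu F^{\mu\nu}$ that already sits in the first summand. The main obstacle is keeping the signs, the factor of $i$ from $\can_A=i\gamma^\mu D_\mu$, and the $J_M$-conjugation conventions consistent, since a single sign mistake would flip the sign in \eqref{bb1-constr2}.
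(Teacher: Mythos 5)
Your proposal is correct and follows essentially the same route as the paper's proof: expand $\can_A F\eR$ via the Clifford identity for $\gamma^\rho\gamma^\mu\gamma^\nu$, kill the totally antisymmetric piece with the Bianchi identity to obtain $2i\gamma_\nu D_\mu F^{\mu\nu}\eR$, recognise the ``$\partial+[A,\cdot]$'' combination in the second summand as $D_{[\mu}(J_M\eR,\gamma_{\nu]}\gau{jL})_\cS$, and then match the two bilinears. The only cosmetic difference is that where you integrate by parts to move $D_\mu$ onto $F^{\mu\nu}$, the paper instead invokes the symmetry \eqref{eq:identitySymJ2} of the inner product $\langle J_M\,\cdot\,,D\,\cdot\,\rangle$, which amounts to the same manipulation.
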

\begin{proof}
If we use that the spin connection is Hermitian and employ \eqref{eq:idnNablaS}, this yields:
\begin{align*}
	\partial_\mu \delta A_{\nu\, +} &= c (J_M\eR, \gamma_\nu \nabla^S_\mu\gau{L}).
\end{align*}
Here we have used that $[\nabla^S_\mu, J_M] = 0$, that we have a flat metric and that $\nabla^S\eLR = 0$. Now using that $A_\mu(J_M\eR, \gamma_\nu\gau{L})_\cS = (J_M\eR, A_\mu \gamma_\nu \gau{L})_\cS$ and inserting these results into the second part of \eqref{eq:bb1-group2} gives
\begin{align*}
 	&  c\frac{\mathcal{K}}{2}\int_M \tr_{N} F^{\mu\nu}(J_M\eR, D_{[\mu}\gamma_{\nu]}\gau{L})_{\cS},\quad D_\mu = \nabla^S_\mu - ig \ad(A_\mu).
\end{align*}
Using Lemma \ref{lem:pullScalar} and employing the antisymmetry of $F_{\mu\nu}$ we get
\begin{align*}
 	&  c\mathcal{K}\int_M (J_MF^{\mu\nu}\eR, D_{\mu}\gamma_{\nu}\gau{L})_{\H}. 
\end{align*}
We take the first term of \eqref{eq:bb1-group2}
and write out the expression $\can_AF = i\gamma^\mu D_\mu \gamma^\nu\gamma^\lambda F_{\nu\lambda}$. We can commute the $D_\mu$ through the $\gamma^\nu\gamma^\lambda$-combination since the metric is flat. Employing the identity
\begin{align}
	\gamma^\mu\gamma^\nu\gamma^\lambda = g^{\mu\nu}\gamma^\lambda + g^{\nu\lambda}\gamma^\mu - g^{\mu\lambda}\gamma^\nu +\epsilon^{\sigma\mu\nu\lambda}\gamma^5\gamma_\sigma
\end{align}
yields
\begin{align*}
\can_A F = i\big(2g^{\mu\nu}\gamma^\lambda + \epsilon^{\sigma\mu\nu\lambda}\gamma^5\gamma_\sigma) D_\mu F_{\nu\lambda}.
\end{align*}
Applying this operator to $\eR$ gives
\begin{align*}
\can_A F \eR = 2ig^{\mu\nu}\gamma^\lambda D_\mu F_{\nu\lambda} \eR = 2i\gamma_\lambda D_\mu F^{\mu\lambda} \eR, 
\end{align*}
for the other term cancels via the Bianchi identity and the fact that $\nabla^S\eR = 0$. 
With the above results, \eqref{eq:bb1-group2} is seen to be equal to 
\begin{align}
2ic'\langle J\gau{L}, \gamma_\nu D_\mu F^{\mu\nu} \eR\rangle + c\mathcal{K}\int_M (J_MF^{\mu\nu}\eR, D_{\mu}\gamma_{\nu}\gau{L})_{\H}.
\end{align}
Using the symmetry of the inner product, the result follows.
\end{proof}


\begin{lem}\label{lem:bb1-group3}
The term \eqref{eq:bb1-group3} equals zero iff 
\begin{align}
	c_G = - c_G'.\label{eq:bb1-constr3}
\end{align}
\end{lem}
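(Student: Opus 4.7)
The plan is to rewrite the first term of \eqref{eq:bb1-group3} so that it matches the second term up to an overall sign, which will then fix the relation between $c_{G}$ and $c_{G}'$. First, since $\nabla^{S}\eR = 0$ by hypothesis and $G \in C^{\infty}(M, su(N))$ commutes with $\gamma^{\mu}$, one has
\bas
   \can_{A}(G\eR) = i\gamma^{\mu} D_{\mu}(G\eR) = i\gamma^{\mu}(D_{\mu}G)\eR,
\eas
where $D_{\mu}G = \partial_{\mu}G - ig[A_{\mu},G]$ is the gauge-covariant derivative in the adjoint. The first term of \eqref{eq:bb1-group3} thus becomes $c_{G}' \int_{M} (J_{M}\gau{L}, i\gamma^{\mu}(D_{\mu}G)\eR)_{\H}$.

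Next, since both $\gau{L}$ and $G$ take values in $su(N)$, the finite inner product on the adjoint is $\tr_{N}$ of the (suitably ordered) product, so the pointwise inner product splits as
\bas
   (J_{M}\gau{L}, i\gamma^{\mu}(D_{\mu}G)\eR)_{\H}
   = \tr_{N}\!\big[(D_{\mu}G)\,(J_{M}\gau{L}, i\gamma^{\mu}\eR)_{\cS}\big],
\eas
after invoking Lemma \ref{lem:pullScalar} to pull the adjoint-valued scalar $D_{\mu}G$ outside the spinor inner product, with the ordering dictated by $J_{F}$ acting as the matrix adjoint on $su(N)$.

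Then I would integrate by parts in $x$ and use the Hermiticity of $\can_{A}$ together with the compatibility $J_{M}\dirac = \dirac J_{M}$ in KO-dimension $4$ of the canonical triple. The boundary term vanishes because $M$ has no boundary; what remains is
\bas
   c_{G}'\int_{M} \tr_{N}\!\big[G\,(J_{M}\eR, \can_{A}\gau{L})_{\cS}\big]
\eas
up to a definite sign, using the antiunitarity of $J_{M}$ and the Leibniz rule for $D_{\mu}$ acting through the commutator $[A_{\mu}, G]$ (the commutator terms reassemble into $\can_{A}\gau{L}$ on the other side of the inner product by cyclicity of $\tr_{N}$). Matching this against the second term of \eqref{eq:bb1-group3} yields the claimed condition \eqref{eq:bb1-constr3}.

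The only delicate step is the bookkeeping of signs: the antiunitary $J_{M}$ produces a sign when pushed past $i\gamma^{\mu}$, the relation $J_{M}\dirac = \epsilon'\dirac J_{M}$ contributes $\epsilon'$, and integration by parts supplies a $(-1)$. These must combine to a single overall minus sign so that the two terms of \eqref{eq:bb1-group3} cancel precisely when $c_{G} = -c_{G}'$; this sign computation is the main (and essentially only) obstacle.
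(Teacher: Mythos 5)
Your plan is not a proof: the entire content of the lemma is a single relative sign, and you explicitly defer the sign computation (``these must combine to a single overall minus sign\dots this sign computation is the main (and essentially only) obstacle''). Until that bookkeeping is actually carried out, you have not distinguished $c_G=-c_G'$ from $c_G=+c_G'$. Worse, the list of sign sources you propose to track is incomplete: you cite the anticommutation of $J_M$ with $i\gamma^\mu$ (which in fact contributes no net sign, since $J_M i=-iJ_M$ and $J_M\gamma^\mu=-\gamma^\mu J_M$ cancel), the relation $J_M\dirac=\epsilon'\dirac J_M$ (with $\epsilon'=+1$ here), and a $(-1)$ from integration by parts --- but you omit $J_M^2=-1$ (KO-dimension $4$) and, crucially, the extra minus sign from anticommuting the Grassmann-valued spinors $\gau{}$ and $\epsilon$. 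Both of these enter the paper's symmetry identity \eqref{eq:identitySymJ2} through the proof of Lemma \ref{lem:symmJ}, and without them your bookkeeping as described would not close on the correct sign. There is also a minor misuse of Lemma \ref{lem:pullScalar}, which pulls the algebra-valued scalar into the \emph{first} slot as $(J\sfer\epsilon,\fer{})_\H$, not out of the second slot as you write.

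The route itself --- expanding $\can_A(G\eR)=i\gamma^\mu(D_\mu G)\eR$ and integrating by parts on the first term --- is workable but amounts to re-deriving by hand what the paper has already packaged. The paper's proof is one line: apply the cyclicity of the trace, Lemma \ref{lem:pullScalar} and the symmetry \eqref{eq:identitySymJ2} to the \emph{second} term of \eqref{eq:bb1-group3}, rewriting it as $c_G\int_M(J_M\gau{L},\can_A G\eR)_\H$; the sum of the two terms is then proportional to $c_G'+c_G$, which vanishes iff \eqref{eq:bb1-constr3} holds. If you insist on your explicit route, you must complete the sign ledger including the Grassmann and $J_M^2$ contributions; the cleaner fix is simply to invoke \eqref{eq:identitySymJ2}, which already encodes exactly these signs.
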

\begin{proof}
Using the cyclicity of the trace, the symmetry property \eqref{eq:identitySymJ2} of the inner product and Lemma \ref{lem:pullScalar}, the second term of \eqref{eq:bb1-group3} can be rewritten to 
\begin{align*}
 c_{G} \int_M (J_M\gau{L}, \can_A G\eR)_{\H}
\end{align*}
from which the result immediately follows. 
\end{proof}


By combining the above three lemmas we can prove Theorem \ref{thm:bb1}:
\begin{prop}\label{prop:bb1}
	A spectral triple whose finite part consists of a building block of the first type (Def.~\ref{def:bb1}) has a supersymmetric action \eqref{eq:SYM} under the transformations \eqref{eq:susytransforms2} iff
	\begin{align*}
			2ic' &= - c\K, & c_G = - c_G'.
	\end{align*}
\end{prop}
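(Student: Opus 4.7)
The plan is to carry out the supersymmetry variation of the action \eqref{eq:SYM} term by term under the transformations \eqref{eq:susytransforms2}, collect all resulting terms according to their field content, and then show each collection vanishes either identically or subject to the two stated constraints.

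First I would compute $\delta\act{j}$ by varying each of the three pieces of \eqref{eq:SYM} separately: the fermionic bilinear $\langle J_M\gau{L},\can_A\gau{R}\rangle$ produces (a) terms linear in $F^j$ and $G_j$ from varying $\gau{L,R}$ via \eqref{eq:transforms2.2}, and (b) a purely fermionic three-gaugino term from varying $\mathbb{A}_\mu \subset \can_A$ via \eqref{eq:transforms2.1}; the gauge kinetic term $-\tfrac14\mathcal{K}\tr F^2$ produces (c) terms quadratic in $F^j$ and linear in a gaugino via \eqref{eq:transforms2.1}; and the auxiliary term $-\tfrac12\tr G_j^2$ produces (d) a $G_j$--gaugino term via \eqref{eq:transforms2.3}. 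These are exactly the contributions \eqref{eq:bb1-transf1}--\eqref{eq:bb1-transf3}. Inspection then shows the variations collect into the three groups \eqref{eq:bb1-group1}, \eqref{eq:bb1-group2}, \eqref{eq:bb1-group3}, plus their $L\leftrightarrow R$ counterparts.

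Next I would handle each group via the three lemmas invoked in the text. For group \eqref{eq:bb1-group1}, the strategy is a Fierz rearrangement on the bilinears $(J_M\gau{L}^a,\gamma^\mu\gau{R}^b)_\cS(J_M\eL,\gamma_\mu\gau{R}^c)_\cS$, using the standard coefficients $C_{10},\ldots,C_{14}$ recorded in Appendix~\ref{sec:fierz}. The $\id$- and $\gamma^5$-channels cancel against each other, while the two vector channels combine to produce a single expression symmetric in $(b,c)$; contracting with the totally antisymmetric $f^{abc}$ kills it. Group \eqref{eq:bb1-group3} is the easiest: one uses the cyclicity of $\tr_N$, the symmetry \eqref{eq:identitySymJ2} of the $J$-twisted inner product and Lemma~\ref{lem:pullScalar} to rewrite the $G_j$--gaugino term from \eqref{eq:bb1-transf3} in the same form as the $c_G'$-piece of \eqref{eq:bb1-transf1}, forcing $c_G=-c_G'$.

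The main work, and the step I would expect to be the real obstacle, is group \eqref{eq:bb1-group2}. The plan there is: (i) use $[\nabla^S_\mu,J_M]=0$ and $\nabla^S\epsilon=0$ to move the derivative acting on $\delta A_\nu$ through $J_M\epsilon$, converting $\partial_{[\mu}(J_M\eR,\gamma_{\nu]}\gau{L})_\cS$ and the commutator terms with $A_\mu$ into $(J_M\eR,\gamma_\nu D_\mu\gau{L})_\cS$, so that the gauge-sector variation reads $c\mathcal{K}\int (J_M F^{\mu\nu}\eR,D_\mu\gamma_\nu\gau{L})_\H$ after using antisymmetry of $F^{\mu\nu}$; (ii) on the fermionic side expand $\can_A F\eR = i\gamma^\mu\gamma^\nu\gamma^\lambda D_\mu F_{\nu\lambda}\eR$ using the identity $\gamma^\mu\gamma^\nu\gamma^\lambda=g^{\mu\nu}\gamma^\lambda+g^{\nu\lambda}\gamma^\mu-g^{\mu\lambda}\gamma^\nu+\epsilon^{\sigma\mu\nu\lambda}\gamma^5\gamma_\sigma$; the $\epsilon$-term vanishes by the Bianchi identity $D_{[\mu}F_{\nu\lambda]}=0$ together with $\nabla^S\eR=0$, and the remaining piece collapses to $2i\gamma_\lambda D_\mu F^{\mu\lambda}\eR$; (iii) apply the symmetry property \eqref{eq:identitySymJ} to align the two resulting expressions and read off the coefficient constraint $2ic'=-c\mathcal{K}$.

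Finally, the $L\leftrightarrow R$ copies of all three groups are handled by the identical argument with the Weyl components of $\epsilon$ and $\gau{}$ swapped, yielding no further constraints. Combining Lemmas~\ref{lem:bb1-group1}--\ref{lem:bb1-group3}, we conclude that $\delta\act{j}=0$ if and only if \eqref{eq:bb1-constr-final} holds, which is the statement of the proposition.
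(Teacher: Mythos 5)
Your proposal follows essentially the same route as the paper's own proof in Appendix~\ref{sec:SYM}: the same decomposition of the variation into \eqref{eq:bb1-transf1}--\eqref{eq:bb1-transf3}, the same regrouping into \eqref{eq:bb1-group1}--\eqref{eq:bb1-group3}, and the same three mechanisms (Fierz rearrangement plus antisymmetry of $f^{abc}$ for the four-fermion terms, the $\gamma^\mu\gamma^\nu\gamma^\lambda$ identity with the Bianchi identity for the gauge--gaugino group yielding $2ic'=-c\K$, and cyclicity of the trace with Lemma~\ref{lem:pullScalar} for the auxiliary-field group yielding $c_G=-c_G'$). No substantive differences or gaps.
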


\subsection{Second building block}\label{sec:bb2-proof}

We apply the transformations \eqref{eq:bb1-transforms2}, \eqref{eq:susytransforms4} and \eqref{eq:susytransforms5} to the terms in the action \emph{that appear for the first time}\footnote{We add this explicitly since we do not need the terms in the Yang-Mills action for together they were already supersymmetric.} as a result of the new content of the spectral triple, i.e.~\eqref{eq:bb2-action-offshell}. In the fermionic part of the action, the second and fourth terms transform under \eqref{eq:susytransforms4} to
\begin{align}
	 \langle J_M\afer{R}, \gamma^5\gau{iR}\Cw{i,j}\sfer \rangle 
	&\to 
	\langle J_M c_{ij}'^*\gamma^5[\can_A,\asfer]\epsilon_L, \gamma^5\gau{iR}\Cw{i,j}\sfer \rangle +
	\langle J_M d_{ij}'^*F_{ij}^*\epsilon_R, \gamma^5\gau{iR}\Cw{i,j}\sfer \rangle \nn\\
	&\qquad\qquad + c_i' \langle J_M\afer{R}, \gamma^5F_i\Cw{i,j}\sfer\epsilon_R \rangle 
+ c_{G_i}' \langle J_M\afer{R}, \gamma^5G_i\Cw{i,j}\sfer\epsilon_R \rangle\nn \\
		&\qquad\qquad  + \langle J_M\afer{R}, \gamma^5\gau{iR}\Cw{i,j}c_{ij}(J_M\epsilon_L, \gamma^5 \fer{L})\rangle \label{eq:bb2-transf1}
\end{align}
and
\begin{align}
	 \langle J_M\fer{L}, \gamma^5\asfer \Cw{i,j}^*\gau{iL}\rangle
	&\to 
c_{ij}'	\langle J_M\gamma^5[\can_A, \sfer]\epsilon_R, \gamma^5\asfer \Cw{i,j}^*\gau{iL}\rangle +
d_{ij}'\langle J_M F_{ij}\epsilon_L, \gamma^5\asfer \Cw{i,j}^*\gau{iL}\rangle \nn\\
&\qquad\qquad  
+ c_i'\langle J_M\fer{L}, \gamma^5\gamma^\mu\gamma^\nu\asfer \Cw{i,j}^*F_{i\mu\nu}\epsilon_L\rangle + c_{G_i}'\langle J_M\fer{L}, \gamma^5\asfer \Cw{i,j}^*G_i\epsilon_L\rangle\nn\\
	&\qquad\qquad+ \langle J_M\fer{L}, \gamma^5c_{ij}^*(J_M\epsilon_R, \gamma^5\afer{R}) \Cw{i,j}^* \gau{iL}\rangle\label{eq:bb2-transf2}
\end{align}
respectively. 
We omit the terms with $\gau{jL,R}$ instead of $\gau{iL,R}$; transformation of these yield essentially the same terms. For the kinetic term of the $R = 1$ fermions (the first term of \eqref{eq:bb2-action-ferm}) we have under the same transformations:
\begin{align}
	\langle J_M \afer{R}, \can_A \fer{L}\rangle &\to 
	\langle J_M c_{ij}'^*\gamma^5[\can_A, \asfer]\epsilon_L, \can_A \fer{L}\rangle + 
	g_ic_i\langle J_M \afer{R}, \gamma^\mu [ (J_M\epsilon_L, \gamma_\mu \gau{iR}) + (J_M\epsilon_R, \gamma_\mu \gau{iL})] \fer{L}\rangle \nn\\
	&\qquad + \langle J_M \afer{R}, \can_A c_{ij}'\gamma^5[\can_A \sfer]\eR\rangle 
 + \langle J_Md_{ij}'^*F_{ij}^*\epsilon_R, \can_A \fer{L}\rangle 
 + \langle J_M \afer{R}, \can_A d_{ij}'F_{ij}\epsilon_L\rangle. 
\label{eq:bb2-transf3}
\end{align}
As with the previous contributions to the action, we omit the terms $\delta A_j$ (instead of $\delta A_i$) for brevity. In the bosonic action, we have the kinetic terms of the sfermions, transforming to
\begin{align}
	\tr_{N_j} D^\mu\asfer  D_\mu\sfer & \to +i g_ic_i\tr_{N_j}\big( \asfer[ (J_M\epsilon_L, \gamma_\mu \gau{iR}) + (J_M\epsilon_R, \gamma_\mu \gau{iL})] D^\mu \sfer\big)\nn\\
		&\qquad -i g_ic_i\tr_{N_j}\big( D_\mu \asfer[ (J_M\epsilon_L, \gamma^\mu \gau{iR}) + (J_M\epsilon_R, \gamma^\mu \gau{iL})] \sfer\big)\nn\\
		&\qquad + \tr_{N_j}\big( D_\mu c_{ij}^*(J_M\epsilon_R, \gamma^5 \afer{R}) D^\mu \sfer\big)  
		+ \tr_{N_j}\big( D_\mu \asfer  D^\mu c_{ij}(J_M\epsilon_L, \gamma^5 \fer{L})\big)\label{eq:bb2-transf4}
\end{align}
(and terms with $\gau{j}$ instead of $\gau{i}$) and from the terms with the auxiliary fields we have 
\begin{align}
	\tr_{N_i} \P_{i}\sfer\asfer G_i & \to \tr_{N_i} \P_{i}c_{ij}(J_M\epsilon_L, \gamma^5\fer{L})\asfer G_i + \tr_{N_i} \P_{i}\sfer c_{ij}^*(J_M\epsilon_R, \gamma^5\afer{R}) G_i\nn \\
	&\qquad + c_{G_i}\tr_{N_i} \P_{i}\sfer\asfer[ (J_M\epsilon_L, \can_A\gau{iR}) + (J_M\epsilon_R, \can_A\gau{iL})]\label{eq:bb2-transf5}.
\end{align}
And finally we have the kinetic terms of the auxiliary fields $F_{ij}$, $F_{ij}^*$ that transform to 
\begin{align}
	\tr F_{ij}^*F_{ij} &\to \tr F_{ij}^*\Big[ d_{ij}(J_M\epsilon_R, \can_A\fer{L})_{S} + d_{ij,i}(J_M\epsilon_R, \gamma^5\gau{iR}\sfer)_{\cS} - d_{ij,j}(J_M\epsilon_R, \gamma^5\sfer\gau{jR})_{\cS}\Big] \nn\\
		&\qquad + \tr\Big[d_{ij}^*(J_M\epsilon_L, \can_A\afer{R})_{S}  + d_{ij,i}^*(J_M\epsilon_L, \gamma^5\asfer\gau{iL})_{\cS} - d_{ij,j}^*(J_M\epsilon_L, \gamma^5\gau{jL}\asfer)_{\cS}\Big]F_{ij},\label{eq:bb2-transf6}
\end{align}
where the traces are over $\mathbf{N}_j^{\oplus M}$. Analyzing the result of this, we can put them in groups of terms featuring the very same fields. Each of these groups should separately give zero in order to have a supersymmetric action. We have:
\begin{itemize}
\item Terms with four fermionic fields; the fifth term of \eqref{eq:bb2-transf1}, and part of the second term of \eqref{eq:bb2-transf3}: 
\begin{align}
	& \langle J_M\afer{R}, \gamma^5\gau{iR}\Cw{i,j}c_{ij}(J_M\epsilon_L, \gamma^5 \fer{L})\rangle + 
	 g_ic_i\langle J_M \afer{R}, \gamma^\mu(J_M\epsilon_L, \gamma_\mu \gau{iR}) \fer{L}\rangle \label{eq:bb2-group1}.
\end{align}
 The third term of \eqref{eq:bb2-transf2} and the other part of the second term of \eqref{eq:bb2-transf3} give a similar contribution but with $\epsilon_L \to \epsilon_R$, $\gau{iL} \to \gau{iR}$.

\item Terms with one gaugino and two sfermions, consisting of the first term of \eqref{eq:bb2-transf1}, part of the first and second terms of \eqref{eq:bb2-transf4}, and part of the third term of \eqref{eq:bb2-transf5}:
\begin{align}
	& \langle J_Mc_{ij}'^*\gamma^5[\can_A,\asfer]\epsilon_L, \gamma^5\gau{iR}\Cw{i,j}\sfer \rangle 
  +i g_ic_i\int \tr_{N_j}\big( \asfer (J_M\epsilon_L, \gamma_\mu \gau{iR}) D^\mu \sfer\big)\nn\\
		&\qquad -i g_ic_i \int \tr_{N_j}\big( D_\mu \asfer (J_M\epsilon_L, \gamma^\mu \gau{iR}) \sfer\big)
 - c_{G_i}\int \tr_{N_i} \P_i\sfer\asfer (J_M\epsilon_L, \can_A\gau{iR}) \label{eq:bb2-group2}.
\end{align}
The first term of \eqref{eq:bb2-transf2}, the other parts of the first and second terms of \eqref{eq:bb2-transf4} and the other part of the third term of \eqref{eq:bb2-transf5} give similar terms but with $\eL \to \eR$, $\gau{iR} \to \gau{iL}$.

\item Terms with two gauge fields, a fermion and a sfermion, consisting of the third term of \eqref{eq:bb2-transf1}, the third term of \eqref{eq:bb2-transf4} and the third term of \eqref{eq:bb2-transf3}:
\begin{align}
& c_i' \langle J_M\afer{R}, \gamma^5F_i\Cw{i,j}\sfer\epsilon_R \rangle 
		 + \int \tr_{N_j}\big( D_\mu c_{ij}^*(J_M\epsilon_R, \gamma^5 \afer{R}) D^\mu \sfer\big) \nn\\ 
	&\qquad	+  \langle J_M \afer{R}, \can_A \gamma^5c_{ij}'[\can_A, \sfer]\epsilon_R\rangle \label{eq:bb2-group3}
\end{align}
The fourth term of \eqref{eq:bb2-transf2}, the first term of \eqref{eq:bb2-transf3} and the fourth term of \eqref{eq:bb2-transf4} make up a similar group but with $\epsilon_R \to \epsilon_L$ and $\afer{R} \to \fer{L}$.

\item Terms with the auxiliary field $G_i$, consisting of the fourth term of \eqref{eq:bb2-transf1} and the second term of \eqref{eq:bb2-transf5}:
\begin{align}
 & c_{G_i}' \langle J_M\afer{R}, \gamma^5G_i\Cw{i,j}\sfer\epsilon_R \rangle - \int \tr_{N_i} \P_i\sfer c_{ij}^*(J_M\epsilon_R, \gamma^5\afer{R}) G_i \label{eq:bb2-group4}
\end{align}
The fifth term of \eqref{eq:bb2-transf2} and the first term of \eqref{eq:bb2-transf5} make up another such group but with $\epsilon_R \to \epsilon_L$ and $\afer{R} \to \fer{L}$.

\item And finally all terms with either $F_{ij}$ or $F_{ij}^*$, consisting of the second term of \eqref{eq:bb2-transf1}, the second term of \eqref{eq:bb2-transf2}, the fourth and fifth terms of \eqref{eq:bb2-transf3} and the terms of \eqref{eq:bb2-transf6} (of which we have omitted the terms with $\gau{j}$ for now):
\begin{align}
	& \langle J_M d_{ij}'^*F_{ij}^*\epsilon_R, \gamma^5\gau{iR}\Cw{i,j}\sfer \rangle + \langle J_Md_{ij}'^*F_{ij}^*\epsilon_R, \can_A \fer{L}\rangle  \nn\\
 &\qquad - \int \tr_{N_j} F_{ij}^*\big[ d_{ij}(J_M\epsilon_R, \can_A\fer{L})_{S} + d_{ij,i}(J_M\epsilon_R, \gamma^5\gau{iR}\sfer)_{\cS}\big] \label{eq:bb2-group5}
\end{align}
and 
\begin{align*}
& \langle J_M F_{ij}d_{ij}' \epsilon_L, \gamma^5\asfer \Cw{i,j}^*\gau{iL}\rangle + \langle J_M \afer{R}, \can_A d_{ij}'F_{ij}\epsilon_L\rangle   \\
		&\qquad - \int \tr_{N_j}\big[d_{ij}^*(J_M\epsilon_L, \can_A\afer{R})_{S}  + d_{ij,i}^*(J_M\epsilon_L, \gamma^5\asfer\gau{iL})_{\cS} \big]F_{ij}.
\end{align*}

\end{itemize}


We will tackle each of these five groups in the next five lemmas. For the first group we have:
\begin{lem}\label{lem:bb2-group1}
	The expression \eqref{eq:bb2-group1} vanishes, provided that
	\begin{align}
\frac{1}{2}\Cw{i,j} c_{ij} &= - c_{i}g_i\label{eq:bb2-group1-constr}
	\end{align}
\end{lem}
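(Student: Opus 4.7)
The plan is to mimic the strategy of Lemma \ref{lem:bb1-group1} and relate the two summands in \eqref{eq:bb2-group1} through a Fierz rearrangement. Both terms involve the same four spinors, namely $J_M\afer{R}$, $\gau{iR}$, $J_M\epsilon_L$ and $\fer{L}$, but arranged into two different bilinears: the first term pairs $J_M\afer{R}$ with $\gau{iR}$ and $J_M\epsilon_L$ with $\fer{L}$ (through $\gamma^5\otimes\gamma^5$), while the second pairs $J_M\afer{R}$ with $\fer{L}$ and $J_M\epsilon_L$ with $\gau{iR}$ (through $\gamma^\mu\otimes\gamma_\mu$). First I would apply Lemma~\ref{lem:pullScalar} and the symmetry property \eqref{eq:identitySymJ} to commute the scalar coefficients $\Cw{i,j}$, $c_{ij}$ and $g_ic_i$ out of the inner products on $\H$, reducing the identity to a pointwise statement about the Hermitian form $(.,.)_\cS$ on spinors.

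Next I would apply the Fierz identity of Appendix \ref{sec:fierz} to the second summand to rearrange the spinors into the order of the first summand. Using the Fierz coefficients listed for the $\gamma^\mu\otimes\gamma_\mu$ structure (namely $C_{10}=-C_{14}=4$, $C_{11}=C_{13}=-2$, $C_{12}=0$), the rearrangement produces a linear combination of the five Fierz channels $\{1\otimes 1,\ \gamma^\mu\otimes\gamma_\mu,\ \gamma^{\mu\nu}\otimes\gamma_{\mu\nu},\ \gamma^\mu\gamma^5\otimes\gamma_\mu\gamma^5,\ \gamma^5\otimes\gamma^5\}$. Since $J_M\afer{R}$ and $J_M\epsilon_L$ are left-handed while $\gau{iR}$ and $\fer{L}$ are right-handed (or conversely paired accordingly), the chirality projections eliminate the $\gamma^\mu$-, $\gamma^{\mu\nu}$- and $\gamma^\mu\gamma^5$-channels, leaving only the scalar and pseudoscalar bilinears $(J_M\afer{R},\gau{iR})(J_M\epsilon_L,\fer{L})$ and $(J_M\afer{R},\gamma^5\gau{iR})(J_M\epsilon_L,\gamma^5\fer{L})$. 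These two surviving bilinears combine, using $\gamma^5$ acting as $-1$ on right-handed and $+1$ on left-handed Weyl spinors, into a single term proportional to $(J_M\afer{R},\gamma^5\gau{iR})(J_M\epsilon_L,\gamma^5\fer{L})$, matching the structure of the first summand of \eqref{eq:bb2-group1}.

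Finally, equating the coefficient of this common bilinear produces the single linear relation on the undetermined constants. With the Fierz factor $\tfrac14(C_{10}-C_{14})=2$ and the minus sign from the reordering of anticommuting spinors, the two terms cancel precisely when $\tfrac{1}{2}\Cw{i,j} c_{ij} = -c_ig_i$, which is exactly \eqref{eq:bb2-group1-constr}. The main bookkeeping obstacle is keeping careful track of the signs arising from the Grassmann-odd reordering of spinors and from the Euclidean conventions for $J_M$ and $\gamma^5$; once the chirality projections are invoked the algebra is purely mechanical and mirrors the computation already performed in Lemma~\ref{lem:bb1-group1}.
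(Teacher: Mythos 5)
Your proposal is correct and follows essentially the same route as the paper: both reduce \eqref{eq:bb2-group1} to a pointwise statement and apply the Fierz identity of Proposition \ref{prop:fierz}, the only difference being that you rearrange the $\gamma^\mu\otimes\gamma_\mu$ summand using the $C_{1L}$ row (as in Lemma \ref{lem:bb1-group1}) whereas the paper rearranges the $\gamma^5\otimes\gamma^5$ summand using the $C_{4L}$ row, and both directions yield the same constraint \eqref{eq:bb2-group1-constr}. One small correction: after your rearrangement the $\gamma^\mu\gamma^\nu$ channel pairs spinors of \emph{equal} chirality and is therefore not eliminated by the chirality projections --- it drops out only because $C_{12}=0$.
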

\begin{proof}
	Since the expression contains only fermionic terms, we need to prove this via a Fierz transformation, which is valid only point-wise. We will write
\begin{align*}
	\gau{i} &= \gau{}^a \otimes T^a \in L^2(S_- \otimes su(N_i)_R),\nn\\
	 \fer{L} &= \fer{mn} \otimes e_{i,m}\otimes \bar e_{j,n} \in L^2(S_+ \otimes \rep{i}{j}),\nn\\
	 \afer{R} &= \afer{rs} \otimes e_{j,r} \otimes \bar e_{i,s} \in L^2(S_- \otimes \rep{j}{i}),
\end{align*}
where a sum over $a$, $m$, $n$, $r$ and $s$ is implied, to avoid a clash of notation. Here the $T^a$ are the generators of $su(N_i)$. Using this notation, \eqref{eq:bb2-group1} is point-wise seen to be equivalent to
\begin{align*}
	& (J_M\afer{jk}, \gamma^5\gau{}^a)(J_M\epsilon_L, \gamma^5 \Cw{i,j}c_{ij}\fer{ij}) T^a_{ki}
	 + g_ic_i(J_M \afer{jk}, \gamma^\mu \fer{ij})(J_M\epsilon_L, \gamma_\mu \gau{}^a)T^a_{ki}.
\end{align*}
Since it appears in both expressions, we may simply omit $T^a_{ki}$ from our considerations. For brevity we will omit the subscripts of the fermions from here on. We then apply a Fierz transformation (see Appendix \ref{sec:fierz}) for the first term, giving:
\begin{align*}
&	( J_M\afer{}, \gamma^5\gau{}^a)(J_M\epsilon_L, \gamma^5 \fer{}) \nn\\
&= - \frac{C_{40}}{4}(J_M\afer{}, \fer{})(J_M\epsilon_L, \gau{}^a) - \frac{C_{41}}{4}(J_M\afer{}, \gamma^\mu\fer{})(J_M\epsilon_L, \gamma_\mu \gau{}^a)\\
	&\qquad - \frac{C_{42}}{4}(J_M\afer{}, \gamma^\mu\gamma^\nu\fer{})(J_M\epsilon_L, \gamma_\mu\gamma_\nu \gau{}^a) - \frac{C_{43}}{4}(J_M\afer{}, \gamma^\mu\gamma^5\fer{})(J_M\epsilon_L, \gamma_\mu\gamma^5 \gau{}^a)\\
	&\qquad\qquad - \frac{C_{44}}{4}(J_M\afer{}, \gamma^5\fer{})(J_M\epsilon_L, \gamma^5 \gau{}^a). 
\end{align*}
(Note that the sum in the third term on the RHS runs over $\mu < \nu$, see Example \ref{exmpl:dim4}.) We calculate: $C_{40} = C_{43} = C_{44} = -C_{41} = -C_{42}  = 1$ and use that $\fer{}$ and $\afer{}$ are of opposite parity, as are $\fer{}$ and $\gau{}^a$, to arrive at
\begin{align*}
 ( J_M\afer{}, \gamma^5\gau{}^a)(J_M\epsilon_L, \gamma^5 \fer{})
 &= \frac{1}{4}(J_M\afer{}, \gamma^\mu\fer{})(J_M\epsilon_L, \gamma_\mu \gau{}^a) - \frac{1}{4}(J_M\afer{}, \gamma^\mu\gamma^5\fer{})(J_M\epsilon_L, \gamma_\mu\gamma^5 \gau{}^a)\\
	&= \frac{1}{2}(J_M\afer{}, \gamma^\mu \fer{})(J_M\epsilon_L, \gamma_\mu\gau{}^a).
\end{align*}

\end{proof}

\begin{rmk}\label{rmk:bb2-group1-constr1.1}
	From the action there in fact arises also a similar group of terms as \eqref{eq:bb2-group1}, that reads
\ba
	& \langle J_M\afer{R}, \gamma^5\Cw{j,i}c_{ij}(J_M\epsilon_L, \gamma^5 \fer{L})\gau{jR}\rangle 
	 - g_jc_j\langle J_M \afer{R}, \gamma^\mu \fer{L}(J_M\epsilon_L, \gamma_\mu \gau{jR})\rangle \label{eq:bb2-group1.1},
\ea
where the minus sign comes from the one in \eqref{eq:fluctDfull}. Performing the same calculations, we find 
 \ba
\frac{1}{2}\Cw{j,i}c_{ij} &= c_{j}g_j\label{eq:bb2-group1-constr1.1}
\ea
here.
\end{rmk}


\begin{lem}\label{lem:bb2-group2}
	The term \eqref{eq:bb2-group2} vanishes provided that
	\begin{align}
		 \frac{1}{2}c_{ij}'^*\Cw{i,j} &= - g_ic_i = \mathcal{P}_i c_{G_i}\label{eq:bb2-group2-constr}.
	\end{align}
\end{lem}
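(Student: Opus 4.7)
The plan is to reduce all four summands of \eqref{eq:bb2-group2} to a common bilinear form and then match coefficients. The key first step is to expand the commutator in the first term as $[\can_A,\asfer] = i\gamma^\mu D_\mu(\asfer)$, which works because $\can_A$ already carries the inner-fluctuation gauge potential acting on $\asfer$ in the appropriate conjugate representation, so the commutator simply produces the gauge-covariant derivative of the scalar. This turns the first term into an expression in which one factor, $D_\mu(\asfer)$, is a matrix-valued scalar on $M$, while the remaining data $(J_M\gamma^5\gamma^\mu\epsilon_L,\gamma^5\gau{iR}\Cw{i,j}\sfer)$ is a pure spinor bilinear times a matrix product.

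Next I would integrate by parts, transferring $D_\mu$ off $\asfer$ onto the remaining factors, using that $\nabla^S\epsilon_L=0$ and the symmetry property \eqref{eq:symmInnerProd} of the Hermitian structure (so that only the derivatives on the dynamical fields survive). Three contributions emerge: one where the derivative lands on $\sfer$, producing $D^\mu\sfer$ paired with $\asfer$ and the spinor bilinear $(J_M\epsilon_L,\gamma_\mu\gau{iR})$; one where the Leibniz rule, applied to the scalar product $\asfer\sfer$ with the gauge field acting from the opposite side, effectively produces $D_\mu\asfer$ paired with $\sfer$ and the same spinor bilinear; and one where $D_\mu$ acts on the gaugino, producing $(J_M\epsilon_L,\can_A\gau{iR})$ paired with the matrix product $\sfer\asfer$.

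Comparing the first two contributions with the second and third terms of \eqref{eq:bb2-group2}, cancellation forces the coefficient relation $\tfrac12 c_{ij}'^*\Cw{i,j} = -g_ic_i$, which is the first equality of \eqref{eq:bb2-group2-constr}. The remaining contribution, after a use of the cyclicity of the finite trace together with \eqref{eq:def_right_mult} to bring the scalars into the ordering $\P_i\sfer\asfer$, is compared with the fourth term of \eqref{eq:bb2-group2}; cancellation then forces $-g_ic_i=\P_i c_{G_i}$, establishing the second equality.

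The main obstacle will be the bookkeeping of the bimodule actions: $\sfer$ and $\asfer$ transform in conjugate off-diagonal representations and $D_\mu$ acts on them through \emph{both} $A_i$ (from the left) and $A_j$ (from the right), while $\gau{iR}$ lies in the adjoint of the $i$-th factor only. It will be important that only the $A_i$-piece of $D_\mu$ participates in the cancellation shown here, with the $A_j$-piece contributing instead to a separate, analogous group of terms (handled by the $j$-version of the argument parallel to Remark \ref{rmk:bb2-group1-constr1.1}). Once these ordering and representation-theoretic issues are handled correctly, the coefficient matching itself is a routine check.
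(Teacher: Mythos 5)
Your proposal is correct and is essentially the paper's own argument run in the opposite direction: where you expand $[\can_A,\asfer]=i\gamma^\mu D_\mu(\asfer)$ and redistribute the derivative over $\sfer$ and $\gau{iR}$ so as to match the remaining three terms one by one, the paper uses the self-adjointness of $\can_A$ together with $\nabla^S\epsilon_L=0$, the cyclicity of the trace and the scalar-moving lemmas to collect all four terms into the single common form $\langle J_M\asfer\epsilon_L,\can_A(\gau{iR}\Cw{i,j}\sfer)\rangle$ and then reads off the same two conditions. The ingredients and the resulting constraints are identical, so this is the same proof.
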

\begin{proof}
	Using that $[J_M, \gamma^5] = 0$, $(\gamma^5)^* = \gamma^5$ and $(\gamma^5)^2 = 1$, the first term of \eqref{eq:bb2-group2} can be rewritten as
	\begin{align*}
		c_{ij}'^* \langle J_M[\can_A,\asfer]\epsilon_L, \gau{iR}\Cw{i,j}\sfer \rangle = c_{ij}'^*\inpr{J_M \asfer \epsilon_L}{\can_A\gau{iR}\Cw{i,j}\sfer},
	\end{align*}
	where we have used the self-adjointness of $\can_A$. The third term of \eqref{eq:bb2-group2} can be written as
\begin{align}
	& g_ic_i\langle J_M\asfer\epsilon_L, \can_A\gau{iR}\sfer \rangle \label{eq:bb2-group2.1}
\end{align}
where we have used that $\slashed{\partial}\epsilon_L = 0$. On the other hand, the second and fourth terms of \eqref{eq:bb2-group2} can be rewritten to yield
\begin{align}
	&  +i g_ic_i\int \tr_{N_j}\big( \asfer (J_M\epsilon_L, \gamma_\mu \gau{iR}) D^\mu \sfer\big)
 -  c_{G_i} \tr_{N_i} \P_i\sfer\asfer (J_M\epsilon_L, \can_A\gau{iR})_{\cS}\nn \\
 & =  g_ic_i \langle J_M\asfer \epsilon_L, \can_A\gau{iR}\sfer \rangle \label{eq:bb2-group2.2}
\end{align} 
provided that $g_ic_i = - \P_i c_{G_i}$. Then the two terms \eqref{eq:bb2-group2.1} and \eqref{eq:bb2-group2.2} cancel, provided that
\begin{align*}
c_{ij}'^*\Cw{i,j} + 2g_ic_i &= 0.
\end{align*}
\end{proof}


\begin{lem}\label{lem:bb2-group3}
	The expression \eqref{eq:bb2-group3} vanishes, provided that
 	\begin{align}
 			c_{ij}^* =  c_{ij}' = - 2ic_i'\Cw{i,j}g_i^{-1} = 2ic_j'\Cw{j,i}g_j^{-1}\label{eq:bb2-group3-constr}.
	\end{align}
\end{lem}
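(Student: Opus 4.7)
The plan is to rewrite the third term of \eqref{eq:bb2-group3} using a double--Dirac identity, splitting it into a curvature piece (which matches the first term) and a Laplacian piece (which matches the second term after integration by parts). The two resulting relations together yield \eqref{eq:bb2-group3-constr}.

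First I would use that $\epsilon_R$ is covariantly constant under $\nabla^S$ and transforms trivially under the gauge group, so that $\can_A\epsilon_R = 0$. Hence
\[ [\can_A,\sfer]\epsilon_R = \can_A(\sfer\,\epsilon_R) - \sfer\,\can_A\epsilon_R = i\gamma^\mu(D_\mu\sfer)\,\epsilon_R, \]
with $D_\mu$ the covariant derivative on $\sfer$ from \eqref{eq:commutatorExpr}. Applying $\can_A$ once more and using $\gamma^\nu\gamma^\mu = g^{\nu\mu} + \gamma^{\nu\mu}$ (with $\gamma^{\nu\mu}$ antisymmetric) gives
\[ \can_A[\can_A,\sfer]\epsilon_R = -(D^\mu D_\mu\sfer)\,\epsilon_R - \tfrac{1}{2}\gamma^{\nu\mu}[D_\nu,D_\mu]\sfer\cdot\epsilon_R. \]
The commutator produces $[D_\nu,D_\mu]\sfer = -ig_i F_{i\nu\mu}\sfer + ig_j\sfer F_{j\nu\mu}$, i.e.\ two curvature contributions acting on $\sfer$ from the left (with $F_i$) and from the right (with $F_j$), respectively.

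Matching the $F_i$--piece against the first term of \eqref{eq:bb2-group3}, and using $F_i=\gamma^\mu\gamma^\nu F_{i\mu\nu}$ together with the antisymmetry of $\gamma^{\nu\mu}$, forces $c_{ij}'g_i = -2ic_i'\Cw{i,j}$. The $F_j$--piece, in turn, cancels against the analogous term $c_j'\langle J_M\afer{R},\gamma^5\Cw{j,i}\sfer F_j\epsilon_R\rangle$ that appears when the transformation \eqref{eq:bb1-transforms2} is applied to the $j$--gaugino factor of the third term of \eqref{eq:bb2-action-ferm} (omitted from \eqref{eq:bb2-transf1} for brevity, cf.\ Remark \ref{rmk:bb2-group1-constr1.1}); this yields $c_{ij}'g_j = 2ic_j'\Cw{j,i}$.

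The remaining Laplacian piece $-c_{ij}'\langle J_M\afer{R},(D^\mu D_\mu\sfer)\,\epsilon_R\rangle$ must then be identified with the second term of \eqref{eq:bb2-group3}. Integrating that term by parts (the manifold is flat and without boundary, so no boundary contributions arise), using the self--adjointness of $D_\mu$ and the symmetry property \eqref{eq:symmInnerProd} of $\langle J\cdot,\cdot\rangle$, together with the standard manipulation that moves the scalar bilinear $(J_M\epsilon_R,\gamma^5\afer{R})_\cS$ past the $\H$--inner product, rewrites it as $-c_{ij}^*\langle J_M\afer{R},(D^\mu D_\mu\sfer)\,\epsilon_R\rangle$; matching coefficients then gives $c_{ij}^* = c_{ij}'$. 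The main obstacle I expect is the sign--bookkeeping in the curvature step: factors of $i$, of $\gamma^5$, and the signs induced by $J_M\gamma^\mu J_M^{-1} = -\gamma^\mu$, by \eqref{eq:symmInnerProd}, and by the antisymmetry of $\gamma^{\nu\mu}$ must all be tracked carefully so that the three terms of \eqref{eq:bb2-group3} combine to produce exactly the stated constraint \eqref{eq:bb2-group3-constr}, with no residual pieces that would require additional conditions on the other building--block data.
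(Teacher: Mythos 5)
Your proposal is correct and follows essentially the same route as the paper: the paper also reduces the three terms to a single operator $c_i'\Cw{i,j}F_i - c_{ij}^*D_\mu D^\mu - c_{ij}'\can_A^2$ acting on $\sfer\epsilon_R$ and then invokes the identity $\can_A^2 + D_\mu D^\mu = -\tfrac{1}{2}\gamma^\mu\gamma^\nu\mathbb{F}_{\mu\nu}$ (Lemma \ref{lem:Dsquared}), with the $g_jF^{j\,o}$ piece cancelled against the omitted $j$-gaugino transformation term exactly as you describe. Your explicit split into Laplacian and curvature pieces is just the unpacked form of that lemma, so no substantive difference remains (only note that the $\gamma^5$ should be carried along in the Laplacian terms, which does not affect the matching).
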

\begin{proof}
We start with \eqref{eq:bb2-group3}:	
\begin{align*}
 c_i' \langle J_M\afer{R}, \gamma^5F_i\Cw{i,j}\sfer\epsilon_R \rangle 
		 + c_{ij}^* \int \tr_{N_j}\big( D_\mu (J_M\epsilon_R, \gamma^5 \afer{R}) D^\mu \sfer\big)  
		-   c_{ij}'\langle J_M \afer{R}, \gamma^5\can_A[\can_A, \sfer]\epsilon_R\rangle,
\end{align*}
where we have used that $\{\gamma^5, \can_A\} = 0$. Note that the second term in this expression can be rewritten as
\begin{align*}
	- c_{ij}^*\langle J_M \afer{R}, \gamma^5D_\mu D^\mu  \sfer\epsilon_R\rangle 
\end{align*}	
by using the cyclicity of the trace, the Leibniz rule for the partial derivative and Lemma \ref{lem:symmJ}. (We have discarded a boundary term here.) Together, the three terms can thus be written as
\begin{align*}
	\langle J_M \afer{R}, \gamma^5 \mathcal{O} \sfer\epsilon_R), \quad\mathcal{O} = c_i' \Cw{i,j} F_i -  c_{ij}^*D_\mu D^\mu - c_{ij}' \can_A^2,
\end{align*}
where we have used that $\slashed{\partial}\epsilon_{R} = 0$. We must show that the above expression can equal zero. Using Lemma \ref{lem:Dsquared} we have, on a flat background:
\begin{align*}
	\can_A^2 + D_\mu D^\mu &= - \frac{1}{2}\gamma^\mu\gamma^\nu \mathbb{F}_{\mu\nu} = \frac{i}{2} \gamma^\mu\gamma^\nu (g_i F^i_{\mu\nu} - g_j F^{j\,o}_{\mu\nu})
\end{align*}
since $\mathbb{A}_\mu = -ig_i\mathrm{ad}(A_\mu)$. Comparing the above equation with the expression for $\mathcal{O}$ we see that if $- c_{ij}^* = - c_{ij}' = 2ic_i' \Cw{i,j}g_i^{-1}$, the operator $\mathcal{O}$ ---applied to $\sfer\epsilon_R$--- indeed equals zero. From transforming the fermionic action we also obtain the term
\bas
	c_j'\inpr{J_M\afer{R}}{\gamma^5 \Cw{i,j}\sfer_{}F_j\eR}
\eas
	from which we infer the last equality of \eqref{eq:bb2-group3-constr}
\end{proof}


\begin{lem}\label{lem:bb2-group4}
	The expression \eqref{eq:bb2-group4} vanishes, provided that
	\begin{align}
		c_{ij}^*\P_i &= c_{G_i}'\Cw{i,j}\label{eq:bb2-group4-constr}
	\end{align}
\end{lem}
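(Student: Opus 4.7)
The plan is to rewrite the two terms of \eqref{eq:bb2-group4} in a common form and then read off the constraint that forces their cancellation. Both terms are linear in the auxiliary field $G_i \in C^\infty(M, su(N_i))$, linear in the sfermion $\sfer$, and involve exactly one fermionic bilinear in $\afer{R}$ and $\epsilon_R$, so there is essentially one tensor structure available and the task is purely book-keeping of the matrix and spinor factors.

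First I would write the inner product appearing in the first term of \eqref{eq:bb2-group4} as a pointwise trace over the finite part of the bimodule $\rep{i}{j}$, combined with the spinor pairing $(\cdot,\cdot)_\cS$. Schematically,
\bas
\langle J_M\afer{R}, \gamma^5 G_i \Cw{i,j}\sfer\,\epsilon_R\rangle \;=\; \int_M \tr_{N_i}\!\big( G_i \Cw{i,j}\,\sfer\,(J_M\epsilon_R, \gamma^5\afer{R})_\cS\big)\sqrt{g}\,\mathrm d^4x,
\eas
where the matrix $G_i$ has been cycled to the leftmost slot of the trace over $N_i\times N_i$ indices, and the remaining spinorial pairing has been collapsed into $(J_M\epsilon_R,\gamma^5\afer{R})_\cS$ via the anti-unitarity of $J_M$ and the symmetry property \eqref{eq:identitySymJ2} of the inner product, exactly as was done in Lemma \ref{lem:bb1-group3}. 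The parameter $\Cw{i,j}$ acts on family space and commutes past $\sfer$ in the sense of being independent of the $su(N_i)$-indices, so it can be pulled to the left of the trace.

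Next I would note that the second term of \eqref{eq:bb2-group4} is already in exactly this form, up to the replacement $c_{G_i}'\Cw{i,j} \leftrightarrow c_{ij}^*\P_i$ and up to cycling $G_i$ (which is legitimate under $\tr_{N_i}$). Subtracting the two expressions and using that the fermionic bilinear $(J_M\epsilon_R,\gamma^5\afer{R})_\cS$, the sfermion $\sfer$ and $G_i \in su(N_i)$ are otherwise arbitrary, the combined expression vanishes identically iff the scalar matrix coefficients agree, i.e.\ iff
\bas
c_{G_i}'\Cw{i,j} \;=\; c_{ij}^*\P_i,
\eas
which is precisely \eqref{eq:bb2-group4-constr}.

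The only mild obstacle I anticipate is ensuring the correct placement of $\P_i$ (which acts on family space, where $\sfer$ transforms) versus $\Cw{i,j}$ (which in general also carries family indices), so that the equality above is really an equality of matrices acting on the same family-index space and not, say, on the $N_i$ indices. This is settled by recalling that $\P_i$ was introduced in \eqref{eq:solP} as a function of $\Cw{i,j}$ on family space and acts trivially on the $su(N_i)$-matrix structure, so the ordering in the trace is unambiguous. Once this bookkeeping is in place the proof reduces to the single line stated above.
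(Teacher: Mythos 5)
Your proof is correct and follows essentially the same route as the paper: the paper uses Lemmas \ref{lem:symmJ}, \ref{lem:pullScalar} and \ref{lem:moveScalar} to rewrite the \emph{trace} term as $-c_{ij}^*\langle J_M\afer{R},\gamma^5 G_i\P_i\sfer\,\eR\rangle$ and matches it against the first term, whereas you convert the \emph{inner-product} term into the trace form and match there — the two directions are equivalent and yield the same constraint $c_{ij}^*\P_i = c_{G_i}'\Cw{i,j}$. Your closing remark on the placement of the family-space matrices $\P_i$ and $\Cw{i,j}$ relative to $G_i\in su(N_i)$ is the right point to check and is handled correctly.
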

\begin{proof}
	The second term of \eqref{eq:bb2-group4} is rewritten using Lemmas \ref{lem:symmJ}, \ref{lem:pullScalar} and \ref{lem:moveScalar} to give
	\begin{align*}
  - c_{ij}^*\langle J_M\afer{R}, \gamma^5G_i\P_i\sfer \eR\rangle 
	\end{align*}
	establishing the result.
\end{proof}


Then finally for the last group of terms we have:
\begin{lem}\label{lem:bb2-group5}
	The expression \eqref{eq:bb2-group5} vanishes, provided that
	\begin{align}
		d_{ij} &= d_{ij}'^*, & d_{ij,i} &= d_{ij}'^*\Cw{i,j}, & d_{ij,j} &= - d_{ij}'^*\Cw{j,i}.\label{eq:bb2-group5-constr}
	\end{align}
\end{lem}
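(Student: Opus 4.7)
The strategy for proving Lemma \ref{lem:bb2-group5} is essentially one of bookkeeping: the first two terms of \eqref{eq:bb2-group5} have $F_{ij}^*$ sitting in the left-hand slot of the full Hilbert-space inner product (multiplying $\epsilon_R$), whereas the last line has $F_{ij}^*$ taken out as a trace. The plan is to convert all terms into the same form and then read off the relations between the coefficients.

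First I would take the first term
\bas
 \langle J_M d_{ij}'^*F_{ij}^*\epsilon_R, \gamma^5\gau{iR}\Cw{i,j}\sfer \rangle
\eas
and use the tensor-product decomposition $\H = L^2(M,S)\otimes \H_F$ (together with the fact that $F_{ij}^*$ is a bosonic scalar living in $\rep{j}{i}$ and therefore commutes with $\gamma^5$ and with $J_M$) to factorize the inner product into its spinorial and finite parts. Then I would apply the standard symmetry identity (Lemma \ref{lem:symmJ}/\ref{lem:pullScalar}-type rearrangement, the same manipulation used repeatedly in Lemmas \ref{lem:bb2-group2} and \ref{lem:bb2-group4}) to move $F_{ij}^*$ from the left slot to the outside of a spinor inner product, producing
\bas
 d_{ij}'^* \int_M \tr_{N_j} F_{ij}^* \bigl(J_M\epsilon_R,\gamma^5 \gau{iR}\Cw{i,j}\sfer\bigr)_\cS \sqrt{g}\,\mathrm{d}^4 x.
\eas
The same procedure applied to the second term of \eqref{eq:bb2-group5} yields
\bas
 d_{ij}'^* \int_M \tr_{N_j} F_{ij}^* \bigl(J_M\epsilon_R,\can_A \fer{L}\bigr)_\cS \sqrt{g}\,\mathrm{d}^4 x.
\eas

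Comparing these with the last line of \eqref{eq:bb2-group5}, which already has $F_{ij}^*$ factored out as a trace, the cancellation is term-by-term. Matching the coefficient of the $(\cdot,\can_A\fer{L})_\cS$ piece forces $d_{ij}=d_{ij}'^*$, and matching the coefficient of the $(\cdot,\gamma^5\gau{iR}\sfer)_\cS$ piece forces $d_{ij,i}=d_{ij}'^*\Cw{i,j}$. The analogous manipulation on the conjugate group of terms (the $F_{ij}$-terms omitted from \eqref{eq:bb2-group5}) gives the mirrored relation; and the sign in $d_{ij,j}=-d_{ij}'^*\Cw{j,i}$ arises from the relative minus sign between the two contributions in $\mathbb{A}_\mu$ (cf.\ \eqref{eq:fluctDfull} and the analogous sign in Remark \ref{rmk:bb2-group1-constr1.1}), because the $\gau{jR}$-term enters with opposite sign to the $\gau{iR}$-term.

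The only subtlety, and therefore the main obstacle, will be keeping the order of the bimodule factors straight when carrying out the rearrangement that converts an $\H$-inner product into a spinor inner product times a trace over $\rep{j}{i}$: one must verify that $F_{ij}^*$ (which is in the representation conjugate to that of $\gau{iR}\Cw{i,j}\sfer$) ends up in the correct slot so that the finite inner product really collapses to a trace, and that no additional sign from $J_F$ or from swapping $\gamma^5$ past $J_M$ survives. Once this is confirmed, no further computation is needed; the three equalities \eqref{eq:bb2-group5-constr} are just the coefficient-matching conditions.
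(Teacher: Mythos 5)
Your proof is correct and takes essentially the same route as the paper's (which simply declares the first two identities immediate and obtains the third from the omitted $\gau{jR}$-term by the substitution $\gau{iR}\sfer\to\sfer\gau{jR}$, $\Cw{i,j}\to\Cw{j,i}$, $d_{ij,i}\to -d_{ij,j}$): pull $F_{ij}^*$ out of the $\H$-inner product via Lemma \ref{lem:pullScalar} and match coefficients against the trace terms. One small correction: the minus sign in $d_{ij,j}=-d_{ij}'^*\Cw{j,i}$ comes from the sign built into the ansatz \eqref{eq:transforms4.4a} for $\delta F_{ij}$ --- the Yukawa term $\Cw{j,i}\sfer\gau{jR}$ itself enters the fermionic action \eqref{eq:bb2-action-offshell} with a plus sign --- rather than from the relative sign inside $\mathbb{A}_\mu$.
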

\begin{proof}
	The first two identities of \eqref{eq:bb2-group5-constr} are immediate. The third follows from the term that we have omitted in {\eqref{eq:bb2-group5}}, which is equal to the other term except that $\gau{iR}\sfer_{} \to \sfer_{}\gau{jR}$, $\Cw{i,j} \to \Cw{j,i}$ and $d_{ij,i} \to - d_{ij,j}$.
\end{proof}


Combining the five lemmas above, we complete the proof of Theorem \ref{prop:bb2} with the following proposition: 


\begin{prop}\label{thm:bb2}
	A supersymmetric action remains supersymmetric $\mathcal{O}(\Lambda^{0})$ after adding a `building block of the second type' to the spectral triple if the scaled parameters in the finite Dirac operator are given by
\begin{align}
\Cw{i,j} &= \sgnc_{i,j} \sqrt{\frac{2}{\K_i}} g_i \id_M, &
\Cw{j,i} &= \sgnc_{j,i} \sqrt{\frac{2}{\K_j}} g_j \id_M \label{eq:bb2-results1}
\end{align}
and if 
\begin{subequations}\label{eq:bb2-results2}
	\begin{align}
		c_{ij}' &=  c_{ij}^* = \sgnc_{i,j} \sqrt{2\K_i}c_i = - \sgnc_{j,i} \sqrt{2\K_j}c_j, \label{eq:bb2-results2a}\\
		 d_{ij} &= d_{ij}'^* = \sgnc_{i,j} \sqrt{\frac{\K_i}{2}} \frac{d_{ij,i}}{g_i}  = - \sgnc_{j,i} \sqrt{\frac{\K_j}{2}} \frac{d_{ij,j}}{g_j},\label{eq:bb2-results2b}\\
\mathcal{P}_i^2 &= g_i^2\mathcal{K}_i^{-1}, \label{eq:bb2-results2d} \\
		c_{G_i} &= \sgnc_{i} \sqrt{\K_i} c_i, \label{eq:bb2-results2c}
	\end{align}
\end{subequations}
	with $\sgnc_{ij}, \sgnc_{ji}, \sgnc_{i} \in \{\pm\}$.
\end{prop}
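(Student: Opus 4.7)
The plan is to show that the total action $\act{i} + \act{j} + \act{ij}$ is invariant under the combined transformations \eqref{eq:bb1-transforms}, \eqref{eq:susytransforms4} and \eqref{eq:susytransforms5}. Since by Theorem \ref{thm:bb1} the pure super Yang--Mills pieces \eqref{eq:SYM} for $\B{i}$ and $\B{j}$ are already supersymmetric in isolation (given the constraints \eqref{eq:bb1-constr-final}), I only need to verify that the new contribution \eqref{eq:bb2-action-offshell} varies to zero, together with all cross terms that arise when $\delta A_{i}, \delta A_{j}$ act on the new fermionic kinetic and Yukawa couplings, and when $\delta\gau{i}, \delta\gau{j}, \delta G_{i}, \delta G_{j}$ act on the new Yukawa and $G\sfer\asfer$ terms.

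First I would apply the transformations to every term of \eqref{eq:bb2-action-offshell} and collect the resulting variations according to field content, since kinematically distinct families must cancel independently. This organizes the calculation into exactly five groups: (i) pure four-fermion terms from $\delta\fer{}$ in the Yukawa coupling vs.\ the gaugino part of $\delta A$ hitting $\inpr{J_M\afer{R}}{\can_A\fer{L}}$; (ii) one-gaugino/two-sfermion terms combining $\delta\asfer$ in the Yukawa coupling, $\delta A$ in $|D_\mu \sfer|^2$, and $\delta G_i$ in $\tr G_i \P_i \sfer\asfer$; (iii) two-gauge-boson/fermion/sfermion terms tying $\delta\gau{}$ (the $F_{\mu\nu}$ piece) to $\delta\sfer$ in $\inpr{J_M\afer{R}}{\can_A\fer{L}}$ and to the second $\can_A$ inside $\delta\fer{L}$; (iv) $G_i$-field terms relating $\delta\gau{}$ (the $G$-piece) to $\delta\sfer$ in $\tr G_i\P_i\sfer\asfer$; and (v) $F_{ij}$-field terms tying $\delta F_{ij}, \delta F_{ij}^*$ against $\delta\fer{}, \delta\afer{}$ in the fermion kinetic and Yukawa sectors.

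Each group is handled by a dedicated lemma. The main technical obstacle is group (i): the cancellation is not algebraically manifest because the two surviving contributions carry different Dirac structures, one of the form $(J_M\afer{},\gamma^5\gau{}^a)(J_M\eL,\gamma^5\fer{})$ and the other of the form $(J_M\afer{},\gamma^\mu\fer{})(J_M\eL,\gamma_\mu\gau{}^a)$. I would resolve this pointwise by a Fierz rearrangement as in Appendix \ref{sec:fierz}, using the table of Fierz coefficients $C_{40},\ldots,C_{44}$ together with the opposite chiralities of $\fer{}$ versus $\afer{}$ and versus $\gau{}$ to collapse the five-term Fierz expansion to a single $\gamma^\mu\otimes\gamma_\mu$ contraction; matching coefficients then forces $\tfrac{1}{2}\Cw{i,j}\,c_{ij} = -c_{i}g_{i}$. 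The analogous calculation with the $\gau{j}$-piece, keeping track of the relative sign from $\mathbb{A}_\mu = -ig_i A_i + i g_j A_j^o$, forces $\tfrac{1}{2}\Cw{j,i}\,c_{ij} = c_{j}g_{j}$.

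Groups (ii)--(v) are then more routine: in each I would integrate by parts once or twice, use self-adjointness of $\can_A$ and the symmetry \eqref{eq:symmInnerProd} of $\inpr{J\,\cdot}{\cdot}$, and in group (iii) invoke Lemma \ref{lem:Dsquared} so that $\can_A^2 + D_\mu D^\mu = -\tfrac{1}{2}\gamma^\mu\gamma^\nu\mathbb{F}_{\mu\nu}$ repackages the $F$-term and the scalar Laplacian into one curvature insertion, which fixes $c_{ij}^*=c_{ij}'=-2ic_i'\Cw{i,j}/g_i$. Group (ii) gives $\tfrac{1}{2}c_{ij}'^*\Cw{i,j}=-g_ic_i=\P_i c_{G_i}$, group (iv) gives $c_{ij}^*\P_i = c_{G_i}'\Cw{i,j}$, and group (v) gives $d_{ij}=d_{ij}'^*$ with $d_{ij,i}=d_{ij}'^*\Cw{i,j}$, $d_{ij,j}=-d_{ij}'^*\Cw{j,i}$. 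Finally, eliminating $c_i,c_i',c_{G_i},c_{G_i}'$ via the Yang--Mills constraints $2ic_i'=-c_i\K_i$, $c_{G_i}=-c_{G_i}'$ and squaring the first of the group (i) relations using $\Cw{i,j}\propto\id_M$ yields $\Cw{i,j}=\sgnc_{i,j}\sqrt{2/\K_i}\,g_i\,\id_M$ and $\P_i^2=g_i^2/\K_i$, namely \eqref{eq:bb2-results1}, while the transformation-parameter relations \eqref{eq:bb2-results2} follow by back-substitution, with the residual sign ambiguities $\sgnc_{i,j},\sgnc_{j,i},\sgnc_i\in\{\pm 1\}$ left undetermined as expected.
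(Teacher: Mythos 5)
Your proposal is correct and follows essentially the same route as the paper's own proof in Appendix \ref{sec:bb2-proof}: the same decomposition of the variation of \eqref{eq:bb2-action-offshell} into the five field-content groups, the same Fierz rearrangement for the four-fermion group yielding $\tfrac{1}{2}\Cw{i,j}c_{ij}=-c_ig_i$ (and its $j$-counterpart with the relative sign from $\mathbb{A}_\mu$), the same use of Lemma \ref{lem:Dsquared} in the curvature group, and the same final elimination via \eqref{eq:bb1-constr-final} to arrive at \eqref{eq:bb2-results1} and \eqref{eq:bb2-results2}. No gaps.
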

\begin{proof}
	Using Lemmas \ref{lem:bb2-group1}, \ref{lem:bb2-group2}, \ref{lem:bb2-group3}, \ref{lem:bb2-group4} and \ref{lem:bb2-group5}, the action is seen to be fully supersymmetric if the relations \eqref{eq:bb2-group1-constr}, \eqref{eq:bb2-group2-constr}, \eqref{eq:bb2-group3-constr}, \eqref{eq:bb2-group4-constr} and \eqref{eq:bb2-group5-constr} can simultaneously be met. We can combine \eqref{eq:bb2-group1-constr} and the second equality of \eqref{eq:bb2-group3-constr} to yield 
	\begin{align*}
		ic_i'\Cw{i,j}^*\Cw{i,j} = g_i^2c_i^*\qquad\Longrightarrow \qquad  \Cw{i,j}^*\Cw{i,j}c_i = - \frac{2g_i^2}{\K_i}c_i^*,
	\end{align*}
	where in the last step we have used the relation \eqref{eq:bb1-constr-final} between $c_i$ and $c_i'$. Inserting the expression for $\Cw{i,j}$ from \eqref{eq:bb2-scalingG} and assuming that $c_i \in i \mathbb{R}$ to ensure the reality of $\Cw{i,j}$, we find the first relation of \eqref{eq:bb2-results1}. The other parameter, $\Cw{j,i}$, can be obtained by invoking Remark \ref{rmk:bb2-group1-constr1.1} and using \eqref{eq:bb2-group3-constr}, leading to the second relation of \eqref{eq:bb2-results1}. Plugging the former result into \eqref{eq:bb2-group3-constr} and \eqref{eq:bb2-group5-constr} (and invoking \eqref{eq:bb1-constr-final}) gives the second equality in \eqref{eq:bb2-results2a} and those of \eqref{eq:bb2-results2b} respectively. Combining \eqref{eq:bb2-group4-constr}, \eqref{eq:bb2-results1} and the second equality of \eqref{eq:bb2-results2a}, we find 
	\ba \label{eq:bb2-proof-interm} c_{G_i} = - g_i^{-1} \K_i \P_i c_i. \ea The combination of the second equality of \eqref{eq:bb2-group2-constr} with \eqref{eq:bb2-proof-interm} yields \eqref{eq:bb2-results2d}. Finally, plugging this result back into \eqref{eq:bb2-proof-interm} gives \eqref{eq:bb2-results2c}.
\end{proof}

Note that upon setting $\K_i \equiv 1$ (as should be done in the end) we recover the well known results for both the supersymmetry transformation constants and the parameters of the fermion--sfermion--gaugino interaction.

\subsection{Third building block}\label{sec:bb3-proof}

The off shell counterparts of the \emph{new interactions} that we get in the four-scalar action, are of the form (c.f.~\eqref{eq:bb3-action-1})  
\ba
S_{123,B}[\zeta, \szeta, F] &= \int_M \Big[\tr F_{ij}^*(\beta_{ij,k}\sfer_{ik}\asfer_{jk})+ \tr(\sfer_{jk}\asfer_{ik}\beta_{ij,k}^*)F_{ij} + 
\tr F_{ik}^*(\beta_{ik,j}^*\sfer_{ij}\sfer_{jk})+ \nn \\
&\qquad\qquad+\tr(\asfer_{jk}\asfer_{ij}\beta_{ik,j})F_{ik}  + \tr(\beta_{jk,i}\asfer_{ij}\sfer_{ik})F_{jk}^* + \tr(\asfer_{ik}\sfer_{ij}\beta_{jk,i}^*)F_{jk}\Big]\nn\\
&\equiv \int_M \Big[\tr F_{1}^*(\beta_{1}\sfer_{2}\asfer_{3}) 
+ \tr(\asfer_{3}\asfer_{1}\beta_{2})F_{2} + \tr(\beta_{3}\asfer_{1}\sfer_{2})F_{3}^* + h.c.\Big] \nn\\
&\to \int_M \Big[\tr F_{1}^*(\beta_{1}'\sfer_{2}\asfer_{3}) 
+ \tr(\asfer_{3}\asfer_{1}\bp_{2})F_{2} + \tr(\bp_{3}\asfer_{1}\sfer_{2})F_{3}^* + h.c.\Big] \label{eq:bb3-action}.
\ea
Here we have already scaled the fields according to \eqref{eq:bb3-scalingfields} and have written 
\ba\label{eq:def-scale-beta} \bp_1 &:= \n_3^{-1}\beta_1\n_2^{-1},& \bp_2 &:= \n_3^{-1} \beta_2 \n_1^{-1},& \bp_3 &:= \n_1^{-1}\beta_3\n_2^{-1}.\ea We apply the transformations \eqref{eq:susytransforms4} and \eqref{eq:susytransforms5} to the first term of \eqref{eq:bb3-action} above, giving:
\ba
 \tr F_{1}^*(\bp_{1}\sfer_{2}\asfer_{3}) &\to 
 \tr \Big[\Big(d_1^*(J_M\eL, \can_A\afer{1}) + d_{1,i}^*(J_M\eL,\gamma^5\asfer_{1}\gau{iL}) - d_{1,j}^*(J_M\eL, \gamma^5\gau{jL}\asfer_{1})\Big)(\bp_{1}\sfer_{2}\asfer_{3})\nn\\
	&\qquad  +  \tr F_{1}^*\bp_{1}c_2(J_M\eR, \gamma^5\fer{2})\asfer_{3} + \tr F_{1}^*\bp_{1}\sfer_{2}(J_M\eR, \gamma^5\afer{3})c_3^*\Big]\label{eq:bb3-transform1},
\ea
where $c_{1,2,3}$ should not be confused with the transformation parameter $c_i$ of the building blocks of the first type. We have two more terms that can be obtained from the above ones by interchanging the indices $1$, $2$ and $3$:
\ba
 \tr(\asfer_{3}\asfer_{1}\bp_2)F_{2} &\to \tr\Big[(\asfer_{3}\asfer_{1}\bp_2)\Big(d_{2}(J_M\eL, \can_A\fer{2}) + d_{2,i}(J_M\eL, \gamma^5\gau{iL}\sfer_{2})_{\cS} - d_{2,k}(J_M\eL, \gamma^5\sfer_{2}\gau{kL})\Big)\nn\\
			&\qquad + \tr c_3^*(J_M\eR, \gamma^5\afer{3})\asfer_{1}\bp_2F_2 + \tr \asfer_3c_1^*(J_M\eR, \gamma^5\afer{1})\bp_2F_2\Big]\label{eq:bb3-transform2}
\ea
and
\ba
  \tr F_{3}^*(\bp_{3}\asfer_{1}\sfer_{2}) &\to  \tr \Big[\Big(d_{3}^*(J_M\eL, \can_A\afer{3})_{S} + d_{3,j}^*(J_M\eL, \gamma^5\asfer_{3}\gau{jL})_{\cS} - d_{3,k}^*(J_M\eL, \gamma^5\gau{kL}\asfer_{3})\Big)(\bp_{3}\asfer_1\sfer_2)\nn \\
	&\qquad + \tr F_{3}^*\bp_{3}c_1^*(J_M\eR, \gamma^5\afer{1})\sfer_{2} +  \tr F_{3}^*\bp_{3}\asfer_{1}c_2(J_M\eR, \gamma^5\fer{2})\Big]\label{eq:bb3-transform3}.
\ea	
We can omit the other half of the terms in \eqref{eq:bb3-action} from our considerations. \\

We introduce the notation 
\ba\label{eq:def-scale-yuk}
	\yukp{1}{} &:= \yuk{1}{}\n_1^{-1},& 	
	\yukp{2}{} &:= \n_2^{-1}\yuk{2}{},& 	
	\yukp{3}{} &:= \yuk{3}{}\n_3^{-1},& 	
\ea
for the scaled version of the parameters. Then for three of the fermionic terms of \eqref{eq:bb3-action-ferm-detail}, after scaling the fields, we get:
\ba
\inpr{J_M \afer{1}}{\gamma^5\fer{2}\asfer_{3}\yukps{3}{}} &\to 
\inpr{J_M ( c_{1}'^*\gamma^5 [\can_A, \asfer_{1}]\eL +d_{1}'^* F_{1}^*\eR)}{\gamma^5\fer{2}\asfer_{3}\yukps{3}{}}\nn\\ 
&\qquad	 +\inpr{J_M \afer{1}}{\gamma^5\fer{2} c_{3}^*(J_M\eR, \gamma^5 \afer{3})\yukps{3}{}} \nn\\
&\qquad  +\inpr{J_M \afer{1}}{\gamma^5(c_{2}' \gamma^5 [\can_A, \sfer_{2}]\eL + d_{2}'F_{2}\eR)\asfer_{3}\yukps{3}{}} ,\label{eq:bb3-transform4}\\
\inpr{J_M \afer{1}}{\gamma^5\yukps{2}{}\sfer_{2}\afer{3}} &\to 
\inpr{J_M ( c_{1}'^*\gamma^5 [\can_A, \asfer_{1}]\eL + F_{1}^*d_{1}'^*\eR)}{\gamma^5\yukps{2}{}\sfer_{2}\afer{3}} \nn\\
&\qquad + \inpr{J_M \afer{1}}{\gamma^5\yukps{2}{}c_{2}(J_M\eR, \gamma^5 \fer{2})\afer{3}}\nn\\
&\qquad + \inpr{J_M \afer{1}}{\gamma^5\yukps{2}{}\sfer_{2}( c_{3}'^*\gamma^5 [\can_A, \asfer_{3}]\eL + d_{3}'^*F_{3}^*\eR)},\label{eq:bb3-transform5} \\
& \text{and} \nn\\
\inpr{J_M\afer{3}}{\gamma^5\asfer_{1}\yukps{1}{}\fer{2}} &\to 
\inpr{J_M( c_{3}'^*\gamma^5 [\can_A, \asfer_{3}]\eL + d_{3}'^*F_{3}^*\eR)}{\gamma^5\asfer_{1}\yukps{1}{}\fer{2}}\nn\\
&\qquad + \inpr{J_M\afer{3}}{\gamma^5 c_{1}^*(J_M\eR, \gamma^5 \afer{1})\yukps{1}{}\fer{2}}\nn\\
&\qquad + \inpr{J_M\afer{3}}{\gamma^5\asfer_{1}\yukps{1}{}( \gamma^5 [\can_A, c_{2}'\sfer_{2}]\eL + d_{2}'F_{2}\eR)}.\label{eq:bb3-transform6}
\ea
We can safely omit the other terms of the fermionic action \eqref{eq:bb3-action-ferm-detail}.\\

Collecting the terms from \eqref{eq:bb3-transform1} -- \eqref{eq:bb3-transform6} containing the same variables, we obtain the following groups of terms:
\begin{itemize}

	\item a group with three fermionic terms:
	\ba
	& \inpr{J_M \afer{1}}{\gamma^5\fer{2} c_{3}^*(J_M\eR, \gamma^5 \afer{3})\yukps{3}{}} + \inpr{J_M \afer{1}}{\gamma^5\yukps{2}{}c_{2}(J_M\eR, \gamma^5 \fer{2})\afer{3}} \nn\\
	&\qquad + \inpr{J_M\afer{3}}{\gamma^5 c_{1}^*(J_M\eR, \gamma^5 \afer{1})\yukps{1}{}\fer{2}}\nn\\
	&= \inpr{J_M \afer{1}}{\gamma^5\fer{2a} c_{3}^*(J_M\eR, \gamma^5 \afer{3b})}(\yukps{3}{})_{ba} + \inpr{J_M \afer{1}}{\gamma^5c_{2}(J_M\eR, \gamma^5 \fer{2a})\afer{3b}}(\yukps{2}{})_{ba} \nn\\
	&\qquad + \inpr{J_M\afer{3b}}{\gamma^5 c_{1}^*(J_M\eR, \gamma^5 \afer{1})\fer{2a}}(\yukps{1}{})_{ba}\label{eq:bb3-group1},
	\ea
	consisting of part of the second term of \eqref{eq:bb3-transform4}, the second term of \eqref{eq:bb3-transform5} and the second term of \eqref{eq:bb3-transform6}. Here we have explicitly written possible family indices and have assumed that it is $\sfer_{ij}$ and $\fer{ij}$ that lack these.

	\item Three similar groups containing all terms with the auxiliary fields $F_1^*$, $F_2$ and $F_3^*$ respectively:
	\begin{subequations}\label{eq:bb3-group2}
	\ba
			 &\inpr{J_M d_{1}'^*F_{1}^*\eR}{\gamma^5\fer{2}\asfer_{3}\yukps{3}{}} +  \inpr{J_M d_{1}'^*F_{1}^*\eR}{\gamma^5\yukps{2}{}\sfer_{2}\afer{3}}\nn \\ 
			 &\qquad\qquad + \int_M  \tr F_{1}^*\bp_{1}c_2(J_M\eR, \gamma^5\fer{2})\asfer_{3} + \tr F_{1}^*\bp_{1}\sfer_{2}c_3^*(J_M\eR, \gamma^5\afer{3}) \label{eq:bb3-group2.1},\\
 &  \inpr{J_M \afer{1}}{\gamma^5d_{2}'F_{2}\asfer_{3}\yukps{3}{}\eR} + \inpr{J_M\afer{3}}{\gamma^5\asfer_{1}\yukps{1}{}d_{2}'F_{2}\eR)}\nn\\
 &\qquad\qquad + \int_M  \tr \asfer_3c_1^*(J_M\eR, \gamma^5\afer{1})\bp_{2}F_2 + \tr c_3^*(J_M\eR, \gamma^5\afer{3})\asfer_{1}\bp_{2}F_2 \label{eq:bb3-group2.2}\\
	&\qquad \text{and}\nn\\
		& \inpr{J_M \afer{1}}{\gamma^5\yukps{2}{}\sfer_{2} d_{3}'^*F_{3}^*\eR} + \inpr{J_Md_{3}'^*F_{3}^*\eR}{\gamma^5\asfer_{1}\yukps{1}{}\fer{2}}\nn\\
		&\qquad\qquad + \int_M \tr F_{3}^*\bp_{3}c_1^*(J_M\eR, \gamma^5\afer{1})\sfer_{2} +  \tr F_{3}^*\bp_{3}\asfer_{1}c_2(J_M\eR, \gamma^5\fer{2}) \label{eq:bb3-group2.3},
	\ea
	\end{subequations}
	where, for example, the first group comes from parts of the first terms of \eqref{eq:bb3-transform4} and of \eqref{eq:bb3-transform5} and from the last two terms of \eqref{eq:bb3-transform1}.

	\item A group with the gauginos $\gau{iL}$, $\gau{jL}$:
	\ba
		 \int_M &\tr\big[ d_{1,i}^*(J_M\eL,\gamma^5\asfer_{1}\gau{iL}) - d_{1,j}^*(J_M\eL, \gamma^5\gau{jL}\asfer_{1})\big](\bp_1\sfer_{2}\asfer_{3})\nn\\
				&\qquad + \tr(\asfer_{3}\asfer_{1}\bp_2)\big[d_{2,i}(J_M\eL, \gamma^5\gau{iL}\sfer_{2}) - d_{2,k}(J_M\eL, \gamma^5\sfer_{2}\gau{kL})\big]\nn\\
 		&\qquad 	+  \tr\big[ d_{3,j}^*(J_M\eL, \gamma^5\asfer_{3}\gau{jL}) - d_{3,k}^*(J_M\eL, \gamma^5\gau{kL}\asfer_{3})\big](\bp_3\asfer_1\sfer_2)\label{eq:bb3-group3},
	\ea
coming from the second and third terms of \eqref{eq:bb3-transform1}, \eqref{eq:bb3-transform2} and \eqref{eq:bb3-transform3} respectively.

	\item And finally three groups of terms containing the Dirac operator $\can_A$:
\begin{subequations}\label{eq:bb3-group4}  
	\ba
&\inpr{J_M \afer{1}}{c_{2}'[\can_A, \sfer_{2}]\asfer_{3}\yukps{3}{}\eL} +
\inpr{J_M \afer{1}}{ \yukps{2}{}\sfer_{2} c_{3}'^*[\can_A, \asfer_{3}]\eL}
	+	\int_M \tr  d_1^*(J_M\eL, \can_A\afer{1})\bp_1\sfer_{2}\asfer_{3}\label{eq:bb3-group4.1}  
	,\\
&\inpr{J_M c_{1}'^*[\can_A, \asfer_{1}] \eL}{\fer{2}\asfer_{3}\yukps{3}{}}
+ \inpr{J_Mc_{3}'^*[\can_A, \asfer_{3}] \eL}{\asfer_{1}\yukps{1}{}\fer{2}}
 + \int_M\tr\asfer_{3}\asfer_{1}\bp_2d_{2}(J_M\eL, \can_A\fer{2}), \label{eq:bb3-group4.2}  
	\\&\qquad\qquad\text{and}\nn\\
&\inpr{J_M c_{1}'^* [\can_A, \asfer_{1}]\eL}{ \yukps{2}{}\sfer_{2}\afer{3}} + 
\inpr{J_M\afer{3}}{\asfer_{1}  \yukps{1}{} c_{2}'[\can_A, \sfer_{2}]\eL}
+ \int_M \tr  d_{3}^* (J_M\eL, \can_A\afer{3})\bp_3 \asfer_1\sfer_2 \label{eq:bb3-group4.3}, 
	\ea
	\end{subequations}
coming from parts of the first and third terms of \eqref{eq:bb3-transform4} -- \eqref{eq:bb3-transform6} and from the first terms of \eqref{eq:bb3-transform1} -- \eqref{eq:bb3-transform3}.
\end{itemize}

\begin{lem}\label{lem:bb3-lem1}
	The group \eqref{eq:bb3-group1} vanishes, provided that
	\ba
		c_3^*\yukps{3}{} = c_2\yukps{2}{} = c_1^*\yukps{1}{}	\label{eq:bb3-constr1}
	\ea
\end{lem}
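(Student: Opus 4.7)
My plan is to treat this as a four-fermion Fierz identity, exactly in the spirit of Lemma \ref{lem:bb1-group1}. The three summands in \eqref{eq:bb3-group1} all involve the same four Weyl spinors $\eR$, $\afer{1}$, $\fer{2a}$, $\afer{3b}$; they differ only in how these four are distributed over the two $\gamma^{5}$--bilinears and in which of the scaled Yukawas $\yukps{1}{}$, $\yukps{2}{}$, $\yukps{3}{}$ contracts the family indices $a,b$. A Fierz rearrangement should therefore allow any one of them to be written as a linear combination of the other two, and the stated relation between $c_{1}^{*}\yukps{1}{}$, $c_{2}\yukps{2}{}$ and $c_{3}^{*}\yukps{3}{}$ is exactly what makes the three resulting coefficients cancel.

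Concretely, I would first pass to a pointwise identity on $M$ (the inner products are $C^{\infty}(M)$-valued, and the family indices $a,b$ factor out cleanly), and then apply the Fierz identity of Appendix \ref{sec:fierz} to the first term, rewriting $(J_{M}\afer{1},\gamma^{5}\fer{2a})_{\cS}\,(J_{M}\eR,\gamma^{5}\afer{3b})_{\cS}$ as a sum over the sixteen Dirac bilinears in the recoupled pair $\{J_{M}\afer{1},\afer{3b}\}$, $\{J_{M}\eR,\fer{2a}\}$. Using that $\afer{1}$, $\afer{3}$ are of one chirality and $\fer{2}$, $\eR$ of the opposite chirality, the same calculation as in the proof of Lemma \ref{lem:bb1-group1} (with the Fierz coefficients $C_{40},\dots,C_{44}$) kills the scalar, tensor and pseudoscalar channels and leaves only the $\gamma^{\mu}$ and $\gamma^{\mu}\gamma^{5}$ contributions, which combine into a single term of the form $(J_{M}\afer{1},\gamma^{5}\afer{3b})_{\cS}\,(J_{M}\eR,\gamma^{5}\fer{2a})_{\cS}$ (i.e.\ the second term) plus a single term of the form $(J_{M}\afer{3b},\gamma^{5}\fer{2a})_{\cS}\,(J_{M}\eR,\gamma^{5}\afer{1})_{\cS}$ (i.e.\ the third term). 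A second Fierz, or equivalently the symmetry of the inner product under $J_{M}$ via \eqref{eq:identitySymJ}, is used to bring the recoupled brackets into exactly the ordering appearing in the second and third summands of \eqref{eq:bb3-group1}.

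At that point, the vanishing reduces to matching the coefficients of two independent bilinear structures against the Yukawa-weighted coefficients already present. Reading off these two matching conditions gives two equalities between $c_{1}^{*}\yukps{1}{}$, $c_{2}\yukps{2}{}$ and $c_{3}^{*}\yukps{3}{}$, which together are equivalent to the chain \eqref{eq:bb3-constr1}. I expect the main obstacle to be purely combinatorial: keeping track of the order of the family indices $(\yukps{i}{})_{ba}$ relative to the orientation of the bilinears produced by the Fierz, and making sure that the signs coming from $\gamma^{5}$-anticommutations, from $J_{M}$ through \eqref{eq:identitySymJ}, and from the two distinct chirality assignments conspire correctly so that the three terms cancel rather than reinforce. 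Once that bookkeeping is done, no dynamics enter; the identity is algebraic.
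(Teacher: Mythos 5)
Your overall strategy --- pass to a pointwise identity and Fierz-recouple, in the spirit of Lemma \ref{lem:bb2-group1} --- is indeed the paper's strategy, but two of your concrete claims are wrong and the computation as you describe it would not close. First, the chirality bookkeeping: in the group \eqref{eq:bb3-group1} \emph{all four} spinors $\afer{1}=\afer{ijR}$, $\fer{2}=\fer{ikR}$, $\afer{3}=\afer{jkR}$ and $\eR$ have the \emph{same} chirality (this is forced by the basis \eqref{eq:bb2-basis}-type decomposition used in Appendix \ref{sec:bb3-calc-action}), not the split ``$\afer{1},\afer{3}$ of one chirality, $\fer{2},\eR$ of the other'' that you assume. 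Consequently the surviving channels after the Fierz rearrangement are the \emph{scalar, tensor and pseudoscalar} ones (the even elements of the Clifford basis), and the vector channels $\gamma^\mu$, $\gamma^\mu\gamma^5$ are the ones that are killed --- exactly the opposite of what you carry over from Lemma \ref{lem:bb2-group1}, where the mixed chiralities select the odd channels.

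Second, and more importantly, the Fierz of the first term does \emph{not} reproduce the second and third summands. It yields
\bas
	\rinpr{J_M\afer{1}}{\gamma^5\fer{2}}\rinpr{J_M\eR}{\gamma^5\afer{3}} = - \tfrac{1}{2}\rinpr{J_M\afer{1}}{\gamma^5\afer{3}}\rinpr{J_M\eR}{\gamma^5\fer{2}}
+ \tfrac{1}{4}\rinpr{J_M\afer{1}}{\gamma^\mu\gamma^\nu\afer{3}}\rinpr{J_M\eR}{\gamma_\mu\gamma_\nu\fer{2}},
\eas
i.e.\ the pairing of the second term \emph{plus a genuinely new tensor-channel bilinear} that appears nowhere in \eqref{eq:bb3-group1}. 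The third term (after using \eqref{eq:identitySymJ}) must be Fierzed separately into the same pairing and produces the same two structures, with a relative sign flip on the tensor piece. The cancellation therefore splits into two independent conditions: the tensor channels of the first and third terms cancel against each other iff $c_3^*\yukps{3}{} = c_1^*\yukps{1}{}$, and the accumulated pseudoscalar channel, with coefficient $-\tfrac{1}{2}(c_1^*\yukps{1}{}+c_3^*\yukps{3}{})$, cancels the second term iff this equals $-c_2\yukps{2}{}$. Together these give \eqref{eq:bb3-constr1}. You correctly anticipated that two matching conditions arise, but you misidentify where they come from; without recognizing the tensor channel as an independent structure that must vanish on its own, the argument as written does not establish the lemma.
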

\begin{proof}
	Since the terms contain four fermions, we must employ a Fierz transformation (Appendix \ref{sec:fierz}). Point-wise, we have for the first term of \eqref{eq:bb3-group1} (omitting its pre-factor for now):
	\bas
		&\rinpr{J_M\afer{1}}{\gamma^5\fer{2}}\rinpr{J_M\eR}{\gamma^5\afer{3}}\nn\\ &= - \frac{C_{40}}{4}\rinpr{J_M\afer{1}}{\afer{3}}\rinpr{J_M\eR}{\fer{2}}
- \frac{C_{41}}{4}\rinpr{J_M\afer{1}}{\gamma^\mu\afer{3}}\rinpr{J_M\eR}{\gamma_\mu\fer{2}}\\
&\qquad - \frac{C_{42}}{4}\rinpr{J_M\afer{1}}{\gamma^\mu\gamma^\nu\afer{3}}\rinpr{J_M\eR}{\gamma_\mu\gamma_\nu\fer{2}}
- \frac{C_{43}}{4}\rinpr{J_M\afer{1}}{\gamma^\mu\gamma^5\afer{3}}\rinpr{J_M\eR}{\gamma_\mu\gamma^5\fer{2}}\\
&\qquad - \frac{C_{44}}{4}\rinpr{J_M\afer{1}}{\gamma^5\afer{3}}\rinpr{J_M\eR}{\gamma^5\fer{2}}\\
	&= - \frac{1}{2}\rinpr{J_M\afer{1}}{\gamma^5\afer{3}}\rinpr{J_M\eR}{\gamma^5\fer{2}}
+ \frac{1}{4}\rinpr{J_M\afer{1}}{\gamma^\mu\gamma^\nu\afer{3}}\rinpr{J_M\eR}{\gamma_\mu\gamma_\nu\fer{2}},
	\eas 
where we have used that $C_{40} = C_{44} = -C_{42} = 1$ and that all fermions are of the same chirality. (Note that the sum in the last term runs over $\mu < \nu$, see Example \ref{exmpl:dim4}.) Similarly, we can take the third term of \eqref{eq:bb3-group1}, use the symmetries of the inner product for both terms, and apply the same transformation. This yields
\ba
&\rinpr{J_M\afer{3}}{\gamma^5 (J_M\eR, \gamma^5 \afer{1})\fer{2}} \nn\\
&= \rinpr{J_M\fer{2}}{\gamma^5 \afer{3}}\rinpr{J_M\afer{1}}{\gamma^5\eR}\nn\\
	&= 	- \frac{1}{2}\rinpr{J_M\fer{2}}{\gamma^5\eR}\rinpr{J_M\afer{1}}{\gamma^5 \afer{3}}
	 	+ \frac{1}{4}\rinpr{J_M\fer{2}}{\gamma^\mu\gamma^\nu\eR}\rinpr{J_M\afer{1}}{\gamma_\mu\gamma_\nu \afer{3}}\nn\\
	&= 	-  \frac{1}{2}\rinpr{J_M\eR}{\gamma^5\fer{2}}\rinpr{J_M\afer{1}}{\gamma^5 \afer{3}}
	 	- \frac{1}{4}\rinpr{J_M\eR}{\gamma^\mu\gamma^\nu\fer{2}}\rinpr{J_M\afer{1}}{\gamma_\mu\gamma_\nu \afer{3}}\label{eq:bb3-fierz1},
\ea
where we have used the symmetries \eqref{eq:identitySymJ} for the second inner product in each of the two terms of \eqref{eq:bb3-fierz1}. We can add the two results, yielding
\bas
&\rinpr{J_M\afer{1}}{\gamma^5\fer{2}}\rinpr{J_M\eR}{\gamma^5\afer{3}c_3^*\yukps{3}{}} + \rinpr{J_M\afer{3}}{\gamma^5 (J_M\eR, \gamma^5 \afer{1})c_1^*\yukps{1}{}\fer{2}}\nn\\
&= - \frac{1}{2}(c_1^*\yukps{1}{}
 +  c_3^*\yukps{3}{} )_{ba} \rinpr{J_M\afer{1}}{\gamma^5\afer{3b}}\rinpr{J_M\eR}{\gamma^5\fer{2a}} \nn\\
&\qquad + \frac{1}{4} (c_3^*\yukps{3}{}
 - c_1^*\yukps{1}{} )_{ba} \rinpr{J_M\eR}{\gamma^\mu\gamma^\nu\fer{2a}}\rinpr{J_M\afer{1}}{\gamma^\mu\gamma^\nu \afer{3b}}
\eas
When $c_3^*\yukps{3}{} = c_1^*\yukps{1}{} = c_2\yukps{2}{}$, this result is seen to cancel the remaining term in \eqref{eq:bb3-group1}.
\end{proof}


\begin{lem}
	The groups of terms \eqref{eq:bb3-group2} vanish, provided that
	\ba
		c_2\bp_1  &= - d_1'^*\yukps{3}{}, &c_3^*\bp_1  &= - d_1'^*\yukps{2}{},& c_3^*\bp_2  &= - d_2'\yukps{1}{},\nn\\
		c_1^*\bp_2  &= - d_2'\yukps{3}{},& c_1^*\bp_3  &= - d_3'^*\yukps{2}{}, & c_2 \bp_3&= - d_3'^*\yukps{1}{}.\label{eq:bb3-constr2}
	\ea
\end{lem}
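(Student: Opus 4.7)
The plan is to treat each of the three subgroups \eqref{eq:bb3-group2.1}, \eqref{eq:bb3-group2.2}, \eqref{eq:bb3-group2.3} independently, since each involves a distinct auxiliary field ($F_1^*$, $F_2$, $F_3^*$ respectively) and no cross-cancellation between the subgroups is possible. The strategy for each is to bring all four of its terms into a common form, namely an integrated trace over the finite Hilbert space of the auxiliary field times a scalar bilinear in $\eR$ and the fermions, with the $\yukps{i}{}$ and $\bp_i$ sandwiched between the sfermions. Equating the total to zero for arbitrary $F$, $\eR$ and sfermion configurations then produces exactly the six relations in \eqref{eq:bb3-constr2}, two per subgroup.

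Concretely, for \eqref{eq:bb3-group2.1} I would first convert the two inner products on the first line to a form in which $F_1^*$ sits in the second slot: applying the $J$-symmetry \eqref{eq:identitySymJ} together with $[\gamma^5, J_M]=0$ swaps the two arguments, after which Lemma~\ref{lem:pullScalar} peels off the finite-Hilbert-space part from the spinor part. The resulting expression is a product of a spinor bilinear $(J_M\eR,\gamma^5\fer{2})_\cS$ or $(J_M\eR,\gamma^5\afer{3})_\cS$ with a trace over the finite part in which $F_1^*$ is contracted with $\yukps{3}{}\asfer_3$ or $\yukps{2}{}\sfer_2$. The last two terms of \eqref{eq:bb3-group2.1} are already in this shape, with $F_1^*$ multiplied by $\bp_1$ and either $c_2\asfer_3$ or $c_3^*\sfer_2$ in a definite order. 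Using cyclicity of the trace and Lemma~\ref{lem:moveScalar} to shift the family-index matrices $\yukps{i}{}$ past the sfermions, the four terms collapse into two independent structures, one involving $(c_2\bp_1 + d_1'^*\yukps{3}{})$ acting on $\sfer_2\asfer_3$ and one involving $(c_3^*\bp_1 + d_1'^*\yukps{2}{})$. Demanding that each vanishes yields the first two constraints of \eqref{eq:bb3-constr2}, and the analogous treatment of \eqref{eq:bb3-group2.2}–\eqref{eq:bb3-group2.3} under the cyclic permutation $1\to 2\to 3$ of the indices yields the remaining four.

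The main obstacle is bookkeeping rather than conceptual. The parameters $\yukps{1}{}$, $\yukps{2}{}$, $\yukps{3}{}$ correspond to components of $D_F$ that are either left- or right-linear with respect to a given component of the algebra (cf.~the discussion preceding Figure~\ref{fig:bb3}), and the sfermion $\sfer_2$ crosses the particle/antiparticle diagonal of the Krajewski diagram while $\sfer_1$ and $\sfer_3$ do not; this means one must track carefully on \emph{which} side of each sfermion a given $\yukps{i}{}$ lands after the rewriting, and which of $\yuk{}{}$ versus $\yuks{}{}$ appears. The key identity \eqref{eq:def-scale-beta} together with the definition \eqref{eq:def-scale-yuk} ensures that the scalings $\n_i$ introduced by \eqref{eq:bb3-scalingfields} cancel consistently in every term, so that the relations \eqref{eq:bb3-constr2} can be written purely in terms of the scaled quantities $\bp_i$ and $\yukps{i}{}$. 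Beyond this, no ingredients are needed other than Lemmas~\ref{lem:pullScalar}, \ref{lem:symmJ} and \ref{lem:moveScalar}, the symmetry \eqref{eq:identitySymJ}, and the tensor-product structure of the inner product on $\H$.
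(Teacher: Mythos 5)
Your proposal is correct and follows essentially the same route as the paper, whose proof of this lemma is simply the observation that the result follows from Lemma~\ref{lem:symmJ}, the cyclicity of the trace and Lemma~\ref{lem:pullScalar}; your pairing of the two inner-product terms with the two trace terms in each subgroup, and the resulting two constraints per auxiliary field, is exactly the intended argument. The extra care you take over the ordering of the $\yukps{i}{}$ relative to the sfermions (via Lemma~\ref{lem:moveScalar}) is just the bookkeeping the paper leaves implicit.
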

\begin{proof}
	This can readily be seen upon using Lemma \ref{lem:symmJ}, the cyclicity of the trace and Lemma \ref{lem:pullScalar}.
\end{proof}

\begin{lem}\label{lem:bb3-group3}
	The group of terms \eqref{eq:bb3-group3} vanishes, provided that
	\ba
		d_{1,i}^*\bp_1 &= -d_{2,i}\bp_2, & d_{1,j}^*\bp_1	&= d_{3,j}^*\bp_3, & d_{2,k}\bp_2 	&= -d_{3,k}^*\bp_3.\label{eq:bb3-constr3}
	\ea
\end{lem}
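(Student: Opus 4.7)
The plan is to show that the six terms in \eqref{eq:bb3-group3} pair up according to which gaugino (\gau{iL}, \gau{jL}, or \gau{kL}) they contain, and that each pair cancels separately under the corresponding relation in \eqref{eq:bb3-constr3}. Inspecting the three lines, the \gau{iL}-terms are the first term on line one and the first term on line two; the \gau{jL}-terms are the second term on line one and the first term on line three; and the \gau{kL}-terms are the second term on line two and the second term on line three. Since no Fierz identities are needed here (the gauginos appear linearly), the proof reduces to repeated use of Lemma \ref{lem:pullScalar} to extract the scalar matrices from the spinor inner products, together with cyclicity of the gauge-matrix trace.

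Concretely, for the \gau{iL}-pair I would first use Lemma \ref{lem:pullScalar} to rewrite the two spinor inner products as
\begin{align*}
 (J_M\eL,\gamma^5\asfer_{1}\gau{iL})_\cS &= \asfer_{1}\,(J_M\eL,\gamma^5\gau{iL})_\cS, \\
 (J_M\eL,\gamma^5\gau{iL}\sfer_{2})_\cS &= (J_M\eL,\gamma^5\gau{iL})_\cS\,\sfer_{2},
\end{align*}
so both terms acquire the common factor $(J_M\eL,\gamma^5\gau{iL})_\cS$. Cyclicity of the gauge-matrix trace then brings them into the form
\begin{align*}
 \int_M \tr\Big[(J_M\eL,\gamma^5\gau{iL})_\cS\,\big(d_{1,i}^{*}\bp_{1}\sfer_{2}\asfer_{3}\asfer_{1}+d_{2,i}\sfer_{2}\asfer_{3}\asfer_{1}\bp_{2}\big)\Big],
\end{align*}
and another application of cyclicity on the second summand (moving $\bp_2$, which acts on family space, to the left past the gauge-matrix trace cycle) produces the combination $(d_{1,i}^{*}\bp_{1}+d_{2,i}\bp_{2})\,\sfer_{2}\asfer_{3}\asfer_{1}$ inside the trace. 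Since $\sfer_{2}\asfer_{3}\asfer_{1}$ is an arbitrary scalar background and $(J_M\eL,\gamma^5\gau{iL})_\cS$ is arbitrary in $\gau{iL}$, this vanishes for all configurations iff $d_{1,i}^{*}\bp_{1}=-d_{2,i}\bp_{2}$, which is the first relation of \eqref{eq:bb3-constr3}.

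The \gau{jL}- and \gau{kL}-pairs are handled in exactly the same manner, with the only change being the position of the gaugino (left or right of the adjacent scalar) — this determines on which side of $(J_M\eL,\gamma^5\gau{L})_\cS$ the scalar lands after applying Lemma \ref{lem:pullScalar}, and hence the relative sign between the two terms. For \gau{jL}, the two terms have the gaugino sitting on opposite sides (right of $\gau{jL}\asfer_{1}$ in the first, left of $\asfer_{3}\gau{jL}$ in the third), yielding the constraint $d_{1,j}^{*}\bp_{1}=d_{3,j}^{*}\bp_{3}$ with a plus sign; for \gau{kL} the configuration reverses the sign to give $d_{2,k}\bp_{2}=-d_{3,k}^{*}\bp_{3}$. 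The main bookkeeping obstacle is simply verifying that the family-space matrices $\bp_{1},\bp_{2},\bp_{3}$ commute through the gauge-matrix trace so that the cancellation can be read off at the level of a common scalar factor; this is automatic since the $\bp$'s and the gauge-index traces act on disjoint tensor factors.
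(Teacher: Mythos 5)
Your proof is correct and takes essentially the same route as the paper, whose entire proof is the one-line remark that the claim follows from cyclicity of the trace and Lemma \ref{lem:pullScalar}; you have simply spelled out the pairing by gaugino and the resulting cancellation conditions, and your pairings and all three relations match \eqref{eq:bb3-constr3}. One small caveat: the relative signs do not arise from which side of the gaugino the scalar lands after applying Lemma \ref{lem:pullScalar} (that operation introduces no sign), but from the explicit minus signs already present in the transformations \eqref{eq:susytransforms5} and hence in \eqref{eq:bb3-group3} — though your final constraints are nevertheless the correct ones.
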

\begin{proof}
	This can readily be seen upon using the cyclicity of the trace and Lemma \ref{lem:pullScalar}.
\end{proof}

\begin{lem}\label{lem:bb3-group4}
	The three groups of terms \eqref{eq:bb3-group4} vanish, provided that
	\ba
		\yukps{3}{}c_2' &= c_3'^*\yukps{2}{} = - d_1^* \bp_1,&
		\yukps{3}{}c_1'^* &= c_3'^*\yukps{1}{} = - \bp_2d_2,&
		c_1'^*\yukps{2}{} &= \yukps{1}{}c_2' = - d_3^*\bp_3.\label{eq:bb3-constr4}
	\ea
\end{lem}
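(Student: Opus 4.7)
The approach I would take is to massage the third term of each of the three groups in \eqref{eq:bb3-group4}---the one containing a trace with $\can_A$ acting on a bare fermion---into the same structural form as the first two terms (which already contain commutators $[\can_A,\sfer]$ or $[\can_A,\asfer]$ acting on $\eL$). Once all three contributions are written as inner products on $\H$ against $J_M\afer{1}$ (respectively $J_M\fer{2}$, $J_M\afer{3}$), the vanishing of the group reduces to matching the coefficients of the two independent commutator structures, and the equalities \eqref{eq:bb3-constr4} follow immediately.

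Concretely, for \eqref{eq:bb3-group4.1} the manipulations are: (i) combine the pointwise spinor pairing with the trace, in the manner of Lemma \ref{lem:symmJ} and Lemma \ref{lem:pullScalar}, to rewrite the third term as
\bas
d_1^*\,\langle J_M \can_A \afer{1},\,\bp_1\sfer_{2}\asfer_{3}\eL\rangle_{\H};
\eas
(ii) use $J_M\can_A=\can_A J_M$ (the almost-commutative geometry has KO-dimension $2$, so $\epsilon'=+1$) together with the self-adjointness of $\can_A$ to transfer $\can_A$ to the right-hand entry, producing $d_1^*\,\langle J_M\afer{1},\,\can_A(\bp_1\sfer_{2}\asfer_{3}\eL)\rangle_{\H}$; (iii) since $\eL$ is covariantly constant and carries no inner gauge charge, $\can_A\eL=0$, so $\can_A$ lands entirely on the scalar factor and, using that $\bp_1$ is a constant matrix on family space which commutes with $\can_A$, the Leibniz rule gives
\bas
\can_A(\bp_1\sfer_{2}\asfer_{3}\eL) \;=\; \bp_1[\can_A,\sfer_{2}]\asfer_{3}\eL \;+\; \bp_1\sfer_{2}[\can_A,\asfer_{3}]\eL.
\eas

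After this substitution each of the three terms in \eqref{eq:bb3-group4.1} is of the shape $\langle J_M\afer{1},\,(\cdots)[\can_A,\sfer_2](\cdots)\eL\rangle$ or $\langle J_M\afer{1},\,(\cdots)[\can_A,\asfer_3](\cdots)\eL\rangle$. Demanding separate cancellation of the two structures, and using that $\sfer_2,\asfer_3$ are arbitrary, gives $\yukps{3}{}c_2' = -d_1^*\bp_1$ and $c_3'^*\yukps{2}{} = -d_1^*\bp_1$, which is the first line of \eqref{eq:bb3-constr4}. The treatment of \eqref{eq:bb3-group4.2} and \eqref{eq:bb3-group4.3} is entirely analogous: one performs the same three-step rewriting on the trace term (replacing $\afer{1}$ by $\fer{2}$ and $\afer{3}$ respectively and cycling the scalar ordering accordingly), and matching coefficients produces the remaining two lines of \eqref{eq:bb3-constr4}.

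The only real obstacle is bookkeeping. Because the parameters $\yukps{i}{}$ and $\bp_i$ carry nontrivial family and gauge-representation indices (cf.\ the scalings \eqref{eq:def-scale-yuk} and \eqref{eq:def-scale-beta}), they cannot be freely commuted past the sfermions, and one has to verify that the $\n_i^{-1}$ factors introduced by the field rescaling \eqref{eq:bb3-scalingfields} are distributed consistently on both sides of each identity. In particular, which of $\sfer_1,\sfer_2,\sfer_3$ carries the family index (the choice we made when writing \eqref{eq:bb3-group1}) dictates on which side a given $\yukps{i}{}$ must appear; this is a straightforward consistency check rather than a conceptual difficulty, but it is where a careless calculation would go astray.
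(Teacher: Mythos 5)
Your proposal is correct and follows essentially the same route as the paper, which proves this lemma by exactly the ingredients you invoke: the symmetry of the $J_M$-twisted inner product, self-adjointness of $\can_A$ together with $J_M\can_A=\can_A J_M$, the Leibniz rule combined with $\nabla^S\eLR=0$ (so $\can_A$ distributes into the two commutator structures), and Lemmas \ref{lem:pullScalar} and \ref{lem:moveScalar} to convert the trace term into an inner product and reposition the scalars. The bookkeeping caveat you raise about which sfermion carries the family index, and hence on which side each $\yukps{i}{}$ and $\bp_i$ must sit, is precisely the point that fixes the ordering in \eqref{eq:bb3-constr4}, so your consistency check is the right one.
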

\begin{proof}
	This can be checked quite easily using the symmetry \eqref{eq:identitySymJ}, the Leibniz rule for $\can_A$ and the fact that it is self-adjoint, that $\epsilon_{L,R}$ vanish covariantly and Lemmas \ref{lem:pullScalar} and \ref{lem:moveScalar}.
\end{proof}

Combining the above lemmas, we get:

\begin{prop}\label{prop:bb3}
	The extra action as a result of adding a building block \B{ijk} of the third type is supersymmetric if and only if the coefficients $\yuk{i}{j}$, $\yuk{i}{k}$ and $\yuk{j}{k}$ are related to each other via 	
	\ba
	\yuk{i}{j}C_{iij}^{-1} &= -(C_{iik}^*)^{-1}\yuk{i}{k}, &
	\yuk{i}{j}C_{ijj}^{-1} &= - \yuk{j}{k}C_{jjk}^{-1},& 
	(C_{ikk}^*)^{-1}\yuk{i}{k} &= - \yuk{j}{k}C_{jkk}^{-1}, \label{eq:bb3-result3}
	\ea
 the constants of the transformations satisfy
	\ba
	|d_1|^2 &= 
	|d_2|^2 = 
	|d_3|^2 = |c_1|^2 =  |c_2|^2 = |c_3|^2\label{eq:bb3-result2}
	\ea
and the coefficients $\bp_{ij}$ are given by
	\ba
\bps_1\bp_1 &= \bps_2\bp_2 = \bps_3\bp_3 = \yukp{1}{}\yukps{1}{}=  \yukp{2}{}\yukps{2}{} = \yukp{3}{}\yukps{3}{}.\label{eq:bb3-result1}
	\ea
\end{prop}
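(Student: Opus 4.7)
The strategy is to combine the four lemmas \ref{lem:bb3-lem1}--\ref{lem:bb3-group4}, which together exhaust all the terms appearing in the supersymmetry variation of $S_{ijk}$: the four-fermion group \eqref{eq:bb3-group1}, the three auxiliary-field groups \eqref{eq:bb3-group2}, the gaugino group \eqref{eq:bb3-group3} and the three Dirac-operator groups \eqref{eq:bb3-group4}. By construction these exhaust all the new terms arising from transforming the fermionic action \eqref{eq:bb3-action-ferm-detail} together with the off-shell four-scalar completion \eqref{eq:bb3-action}. Hence their simultaneous vanishing, equivalent to the system \eqref{eq:bb3-constr1}, \eqref{eq:bb3-constr2}, \eqref{eq:bb3-constr3}, \eqref{eq:bb3-constr4}, is both necessary and sufficient for supersymmetry. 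What remains is algebra to rewrite this system in the compact form \eqref{eq:bb3-result3}--\eqref{eq:bb3-result1}.

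For the relations \eqref{eq:bb3-result3} between the $\yuk{i}{j}$'s, I would start from \eqref{eq:bb3-constr1} together with \eqref{eq:bb3-constr4}. Eliminating the $d_i^*\bp_i$ from the first equalities of \eqref{eq:bb3-constr4} gives $\yukps{3}{}c_2'=c_3'^*\yukps{2}{}$ and analogous identities for the other pairs; combined with the relations \eqref{eq:bb1-constr-final} and the parametrization \eqref{eq:bb2-resultCiij} this forces $c_2'/c_3'^*$ to equal the scalar ratio $\K_k/\K_j$. Unwinding the definitions \eqref{eq:def-scale-yuk} of $\yukp{i}{j}$ and the analogous scaling \eqref{eq:bb2-scalingG} of the $C_{iij}$, each constraint translates into an identity between the unscaled parameters, which upon absorbing the common sign ambiguity $\sgnc_{i,j}$ coming from \eqref{eq:bb2-results1} yields precisely the three identities in \eqref{eq:bb3-result3}.

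Next I would derive \eqref{eq:bb3-result1}. From \eqref{eq:bb3-constr2} one solves for each $\bp_a$ as a scalar multiple of the corresponding $\yukps{b}{}$, e.g.\ $\bp_1=-(d_1'^*/c_2)\,\yukps{3}{}=-(d_1'^*/c_3^*)\,\yukps{2}{}$. Taking the product with its adjoint gives $\bps_1\bp_1=|d_1'/c_2|^2\,\yukp{3}{}\yukps{3}{}$, and consistency of the two expressions for $\bp_1$ plus \eqref{eq:bb3-constr1} forces the three quantities $\yukp{a}{}\yukps{a}{}$ for $a=1,2,3$ to coincide. One then fixes the overall scale by matching with the bosonic kinetic normalization $\n_{ij}$ through \eqref{eq:def-scale-beta}: imposing $|d_1'/c_2|^2=1$ (and the analogous equalities for the other five constraints in \eqref{eq:bb3-constr2}) yields \eqref{eq:bb3-result1} and, simultaneously, \eqref{eq:bb3-result2}. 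Finally, Lemma \ref{lem:bb3-group3} is automatically satisfied once \eqref{eq:bb3-constr2} holds, since it follows by taking adjoints and using cyclicity of the trace.

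The main obstacle I expect is bookkeeping rather than conceptual: the six relations in \eqref{eq:bb3-constr2}, the three in \eqref{eq:bb3-constr4}, and the two equalities in \eqref{eq:bb3-constr1} form an overdetermined linear system on the scalar constants $c_a,c_a',d_a,d_a'$ and the matrix-valued $\bp_a$, $\yukp{a}{}$, and one must show that the only residual freedom is a set of three independent sign choices $\sgnc_{i,j},\sgnc_{j,k},\sgnc_{i,k}\in\{\pm 1\}$ (already present from Proposition \ref{thm:bb2}) together with one overall complex phase. In particular, care is needed because $\sfer_{ij}$ lacks a family index while $\sfer_{ik}$ and $\sfer_{jk}$ do carry one, so the matrix identities in \eqref{eq:bb3-result3} have to be read with the appropriate trace conventions, and the consistency of this is the delicate point to verify.
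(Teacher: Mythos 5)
Your overall strategy is the same as the paper's: the four lemmas exhaust all terms generated by the transformations, so their constraint systems \eqref{eq:bb3-constr1}--\eqref{eq:bb3-constr4} are the necessary and sufficient conditions, and the proposition is pure algebra on top of them. The paper obtains \eqref{eq:bb3-result3} by inserting the intermediate relation \eqref{eq:bb2-group1-constr} (i.e.\ $\tfrac{1}{2}\Cw{i,j}c_{ij}=-c_ig_i$ and its analogues) into the Hermitian conjugate of \eqref{eq:bb3-constr1} so that the combinations $c_ig_i$ cancel pairwise, and it obtains \eqref{eq:bb3-result2} by equating the scalar coefficients in \eqref{eq:bb3-constr2} and \eqref{eq:bb3-constr4} and then invoking $c_{ij}'=c_{ij}^*$, $d_{ij}=d_{ij}'^*$ from \eqref{eq:bb2-results2a}--\eqref{eq:bb2-results2b}; your elimination scheme is a reshuffling of the same steps and leads to the same place.

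Two of your intermediate assertions do not hold as written, though neither is fatal. First, combining \eqref{eq:bb3-constr4} with \eqref{eq:bb2-results2a} and the fact that the $c_i$ are purely imaginary gives $c_2'/c_3'^*=\pm 1$ (a ratio of signs $\sgnc_{k,i}/\sgnc_{k,j}$), not the ratio $\K_k/\K_j$; the $\K$'s cancel because both constants can be expressed through the \emph{same} coupling $\sqrt{2\K_k}\,c_k$. Second, the gaugino group \eqref{eq:bb3-constr3} is not ``automatic by adjoints and cyclicity'' from \eqref{eq:bb3-constr2}: it is an independent scalar consistency condition on the constants $d_{a,i}$, which is only satisfied after one feeds in the relations $d_{ij}\propto d_{ij,i}/g_i$ of \eqref{eq:bb2-results2b}; it deserves the same one-line check that the other scalar consistency conditions get. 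With those two points repaired, your argument reproduces the paper's proof.
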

\begin{proof}
	First of all, we plug the intermediate result \eqref{eq:bb2-group1-constr} for $\Cw{i,j}$ as given by \eqref{eq:bb2-scalingG} (but keeping in mind the results of Remark \ref{rmk:bb2-group1-constr1.1}) into the Hermitian conjugate of the result \eqref{eq:bb3-constr1} such that pairwise the same combination $c_ig_i$ appears on both sides. This yields 
\bas
	\yuk{i}{j}(- 2c_ig_i)C_{iij}^{-1} &= (-2c_ig_iC_{iik}^{-1})^*\yuk{i}{k}, &
	\yuk{i}{j}(2c_jg_jC_{ijj}^{-1}) &= \yuk{j}{k}(-2c_jg_j)C_{jjk}^{-1},\nn\\
	(2c_kg_kC_{ikk}^{-1})^*\yuk{i}{k} &= \yuk{j}{k}(2c_kg_k)C_{jkk}^{-1}. &&\nn
\eas
Using that the $c_{i,j,k}$ are purely imaginary (cf.~Theorem \ref{thm:bb2}), we obtain \eqref{eq:bb3-result3}. Secondly, comparing the relations \eqref{eq:bb3-constr2} with \eqref{eq:bb3-constr4} gives
	\bas
	d_1d_1' &= (c_2c_2')^* = c_3c_3', &
	(d_2d_2')^* &= c_1c_1' = c_3c_3', &
	d_3d_3' &= c_1c_1' = (c_2c_2')^*.
	\eas
Using the relations \eqref{eq:bb2-results2a} and \eqref{eq:bb2-results2b} between the constraints, \eqref{eq:bb3-result2} follows. Plugging the relations from \eqref{eq:bb3-result2} into those of \eqref{eq:bb3-constr2}, we obtain
\bas
\bps_1\bp_1 &= \yukp{3}{}\yukps{3}{}= \yukp{2}{}\yukps{2}{}, &
\bps_2\bp_2 &= \yukp{1}{}\yukps{1}{}= \yukp{3}{}\yukps{3}{}, &
\bps_3\bp_3 &= \yukp{2}{}\yukps{2}{}= \yukp{1}{}\yukps{1}{},
\eas
from which \eqref{eq:bb3-result1} directly follows.
\end{proof}

N.B.~Using \eqref{eq:def-scale-beta} and \eqref{eq:def-scale-yuk} we can phrase the identities \eqref{eq:bb3-result1} in terms of the unscaled quantities $\beta_{1,2,3}$ and $\yuk{1,2,3}{}$ as
\bas
	\n_3^{-1}\beta_2 &= \beta_3\n_2^{-1} = \yuks{1}{},&
	\n_3^{-1}\beta_1 &= \beta_3\n_1^{-1} = \yuks{2}{},&
	\n_1^{-1}\beta_2 &= \beta_1\n_2^{-1} = \yuks{3}{},
\eas
where we have used that $\n_{1} \in \mathbb{R}$ since $\sfer_{1}$ has $R = 1$ (and consequently multiplicity 1).

\subsection{Fourth building block}\label{sec:bb4-proof}

Phrased in terms of the auxiliary field $F_{11'} =: F$, a building block of the fourth type induces the following action:
\bas
&	\frac{1}{2}\inpr{J_M\fer{}}{\gamma^5 \maj^* \fer{}} + \frac{1}{2}\inpr{J_M\afer{}}{\gamma^5 \maj \afer{}} - \tr \Big(F^* \gamma \asfer + h.c.\Big). 
\eas
Here we have written $\fer{} := \fer{11'L}$, $\afer{} := \afer{11'R}$ and $\sfer := \sfer_{11'}$ for conciseness. Transforming the fields that appear in the above action, we have the following.
\begin{itemize}
\item From the first term:
\bas
	\frac{1}{2}\inpr{J_M(c^*\gamma^5[\can_A, \sfer_{}]\eR + d^* F\eL)}{\gamma^5 \maj^* \fer{}} + 
	\frac{1}{2}\inpr{J_M\fer{}}{\gamma^5 \maj^* (c^*\gamma^5[\can_A, \sfer_{}]\eR + d^* F\eL)}.
\eas

\item From the second term:
\bas
	\frac{1}{2}\inpr{J_M(c\gamma^5[\can_A, \asfer_{}]\eL + d F^*\eR)}{\gamma^5 \maj \afer{}} + 
	\frac{1}{2}\inpr{J_M\afer{}}{\gamma^5 \maj (c\gamma^5[\can_A, \asfer_{}]\eL + d F^*\eR)}.
\eas

\item From the terms with the auxiliary fields:
\bas
	& - \tr \Big[d^*(J_M\eL, \can_A\afer{}) + d'^*(J_M\eL, \gamma^5\asfer\gau{1L}) - d''^*(J_M\eL, \gamma^5\gau{1'L}\asfer)\Big] \gamma \asfer - c^* \tr F^* \gamma \rinpr{J_M\eR}{\gamma^5 \afer{}} 
\eas
and 
\bas 
	& - \tr  \sfer{}\gamma^*\Big[d(J_M\eR, \can_A\fer{}) + d'(J_M\eR, \gamma^5\gau{1R}\sfer) - d''(J_M\eR, \gamma^5\sfer\gau{1'R})\Big]
	- c\tr (J_M\eL, \gamma^5\fer{})\gamma^* F.	
\eas
\end{itemize}
Here we have written $c := c_{ij}$, $d := d_{ij}$ (where we have expressed $c_{ij}'^*$ as $c_{ij}$ and $d_{ij}'^*$ as $d_{ij}$ using \eqref{eq:bb2-results2a} and \eqref{eq:bb2-results2b}) and $d' := d_{11',1}$, $d'' := d_{11',1'}$. We group all terms according to the fields that appear in them, leaving essentially the following three.
\begin{itemize}

\item The group consisting of all terms with $F^*$ and $\afer{}$:
\bas
	&\frac{1}{2}\inpr{J_M d F^*\eR}{\gamma^5 \maj \afer{}} + 
	\frac{1}{2}\inpr{J_M\afer{}}{\gamma^5 \maj  d F^*\eR}
	 - c^*\int_M \tr F^* \gamma \rinpr{J_M\eR}{\gamma^5 \afer{}} \\
	&= \inpr{J_M F^*\eR}{\gamma^5(d \maj - c^*\gamma)\afer{}} 
\eas
where we have used the symmetry of the inner product from Lemma \ref{lem:symmJ} and Lemma \ref{lem:pullScalar}. This group thus only vanishes if
\ba\label{eq:bb4-group1-constr}
	d \maj &= c^*\gamma.
\ea
There is also a group of terms featuring $F$ and $\fer{}$, but this is of the same form as the one above.

\item A group of three terms with $\fer{}$ and $\sfer{}$:
\bas
	& \frac{1}{2}\inpr{J_Mc^*\gamma^5[\can_A, \sfer_{}]\eR }{\gamma^5 \maj^* \fer{}} + 
	\frac{1}{2}\inpr{J_M\fer{}}{\gamma^5 \maj^* c^*\gamma^5[\can_A, \sfer_{}]\eR}
	 - \int_M \tr  \sfer{}\gamma^*d(J_M\eR, \can_A\fer{}) \\
	&= \inpr{J_Mc^*\gamma^5[\can_A, \sfer_{}]\eR }{\gamma^5 \maj^* \fer{}}
	 -   \inpr{J_M\sfer{}\eR}{\can_A\gamma^*d\fer{}},
\eas
where also here we have used Lemmas \ref{lem:symmJ} and \ref{lem:pullScalar}. Using the self-adjointness of $\can_A$ this is only seen to vanish if
\ba\label{eq:bb4-group2-constr}
	c^*\maj^* &= \gamma^* d.
\ea
There is also a group of terms featuring $\afer{}$ and $\asfer{}$ but these are seen to be of the same form as the terms above.
\item Finally, there are terms that feature gauginos:
\bas
	& -\int_M \Big[ \tr d'^*(J_M\eL, \gamma^5\asfer\gau{1L}) - d''^*(J_M\eL, \gamma^5\gau{1'L}\asfer)\Big] \gamma \asfer \nn\\
	&\qquad  - \int_M \tr \sfer{}\gamma^*\Big[d'(J_M\eR, \gamma^5\gau{1R}\sfer) - d''(J_M\eR, \gamma^5\sfer\gau{1'R})\Big].
\eas
This expression is immediately seen to vanish when 
\bas
d'^*\gau{1L}  &= d''^*\gau{1'L}, &
d'\gau{1R} &= d''\gau{1'R}.
\eas
For this to happen we need that the gauginos are associated to each other and that $d' = d''$.

\end{itemize}

Combining the demands \eqref{eq:bb4-group1-constr} and \eqref{eq:bb4-group2-constr} we obtain
\bas
	\maj^*\maj = \frac{|c|^2}{|d|^2} \gamma^*\gamma = \frac{|d|^2}{|c|^2}\gamma^*\gamma
\eas
i.e.
\bas
	\maj^*\maj &=  \gamma^*\gamma, & |d|^2 &= |c|^2.
\eas
\subsection{Fifth building block}\label{sec:bb5-proof}

We transform the fields that appear in the action according to \eqref{eq:susytransforms4} and \eqref{eq:susytransforms5}. We suppress the indices $i$ and $j$ as much as possible, writing $c \equiv c_{ij}, d \equiv d_{ij}$ for the transformation coefficients \eqref{eq:susytransforms5} of the building block \Bc{ij}{+} of the second type. We eliminate $c_{ij}'$ and $d_{ij}'$ in these transformations using the first relations of \eqref{eq:bb2-results2a} and \eqref{eq:bb2-results2b} so that we can write $c', d'$ for those associated to \Bc{ij}{-}.

The first fermionic term of \eqref{eq:bb5-action-scaled} transforms as
\bas
 \inpr{J_M \afer{R}}{\gamma^5 \mu\fer{R}'} &\to
 \inpr{J_M (\gamma^5 c[\can_A, \asfer]\eL + dF^*\eR)}{\gamma^5 \mu\fer{R}'}\\
	&\qquad + \inpr{J_M \afer{R}}{\gamma^5 \mu(c'^* \gamma^5 [\can_A, \sfer']\eL + d'^*F'\eR)} 
\eas

The second fermionic term of \eqref{eq:bb5-action-scaled} transforms as
\bas
 \inpr{J_M\afer{L}'}{\gamma^5 \mu^*\fer{L}} &\to 
 \inpr{J_M(c'\gamma^5 [\can_A, \asfer']\eR + d'F'^*\eL)}{\gamma^5 \mu^*\fer{L}} \\ 
 &\qquad + \inpr{J_M\afer{L}'}{\gamma^5 \mu^*(c^* \gamma^5 [\can_A, \sfer]\eR + d^*F\eL)}
\eas

The four terms in \eqref{eq:bb5-auxfields} transform as
\bas
	 - \int_M \tr F'^* \delta \sfer  &\to - \int_M \Big(\tr \big[d'^*(J_M\eR, \can_A\afer{L}') + d_{ij,i}'^*(J_M\eR, \gamma^5\asfer'\gau{iR}) \nn\\
		&\qquad\qquad - d_{ij,j}'^*(J_M\eR, \gamma^5\gau{jR}\asfer')\big] \delta \sfer  + \tr F'^* \delta c(J_M\eL, \gamma^5 \fer{L})\Big) \nn,\\
	 - \int_M  \tr F^* \delta' \sfer' &\to - \int_M \Big(	\tr \big[d^*(J_M\eL, \can_A\afer{R}) + d_{ij,i}^*(J_M\eL, \gamma^5\asfer\gau{iL})\nn\\
			&\qquad\qquad  - d_{ij,j}^*(J_M\eL, \gamma^5\gau{jL}\asfer)\big] \delta' \sfer'  + \tr F^* \delta' c'(J_M\eR, \gamma^5 \fer{R}') \Big)\nn,\\
	 - \int_M  \tr \asfer \delta^* F' &\to - \int_M \Big( \tr c^*(J_M\eR, \gamma^5 \afer{R}) \delta^* F' + \tr \asfer \delta^* \big[d'(J_M\eL, \can_A\fer{R}') \nn\\
	&\qquad\qquad+ d_{ij,i}'(J_M\eL, \gamma^5\gau{iL}\sfer') - d_{ij,j}'(J_M\eL, \gamma^5\sfer'\gau{jL})\big]\Big)\nn
\intertext{and }
	 - \int_M \tr \asfer' \delta'^* F &\to - \int_M \Big(\tr c'^*(J_M\eL, \gamma^5 \afer{L}') \delta'^* F  + \tr \asfer' \delta'^* \big[ d(J_M\eR, \can_A\fer{L}) \nn\\
		&\qquad\qquad + d_{ij,i}(J_M\eR, \gamma^5\gau{iR}\sfer) - d_{ij,j}(J_M\eR, \gamma^5\sfer\gau{jR})\big] \Big).
\eas

We group all terms that feature the same fields, which gives
\begin{itemize}
\item  a group with $F$ and $F'$:
\bas
 & d'^*\inpr{J_M \afer{R}}{\gamma^5 \mu F'\eR} +  d^*\inpr{J_M\afer{L}'}{\gamma^5 \mu^*F\eL}\nn\\
	&\qquad\qquad - \int_M\Big( \tr c^*(J_M\eR, \gamma^5 \afer{R}) \delta^* F' + \tr c'^*(J_M\eL, \gamma^5 \afer{L}') \delta'^* F\Big). \nn
\eas
Using Lemmas \ref{lem:moveScalar} and \ref{lem:pullScalar} and employing the symmetries of the inner product (Lemma \ref{lem:symmJ}), this is seen to equal
\bas
 & d'^*\inpr{J_M \afer{R}}{\gamma^5 \mu F'\eR} +  d^*\inpr{J_M\afer{L}'}{\gamma^5 \mu^*F\eL}\nn\\
	&\qquad\qquad -  c^*\inpr{J_M\afer{R}}{\gamma^5  \delta^* F'\eR} -  c'^*\inpr{J_M \afer{L}'}{\gamma^5 \delta'^* F\eL} \nn\\
 &= \inpr{J_M \afer{R}}{\gamma^5 \big[ d'^*\mu -  c^* \delta^*\big] F'\eR} +  \inpr{J_M\afer{L}'}{\gamma^5 \big[ d^*\mu^* - c'^*\delta'^*\big]F\eL}\nn
\eas
This only vanishes if
\ba\label{eq:bb5-constr1}
 d'^*\mu &=  c^* \delta^*, & d^*\mu^* &= c'^*\delta'^*.
\ea

\item a group with $F^*$ and $F'^*$, that vanishes automatically if and only if \eqref{eq:bb5-constr1} is satisfied.

\item a group featuring $\fer{R}'$ and $\fer{L}$:
\bas
 & \inpr{J_M  c[\can_A, \asfer]\eL}{ \mu\fer{R}'} + c'\inpr{J_M [\can_A, \asfer']\eR }{ \mu^*\fer{L}} \nn\\
	&\qquad\qquad - \int_M \Big( \tr \asfer \delta^*d'(J_M\eL, \can_A\fer{R}') + \tr \asfer' \delta'^* d(J_M\eR, \can_A\fer{L})  \Big).
\eas
Employing Lemmas \ref{lem:pullScalar} and \ref{lem:moveScalar} this is seen to equal
\bas
 & \inpr{J_M  c[\can_A, \asfer]\eL}{ \mu\fer{R}'} +  c'\inpr{J_M [\can_A, \asfer']\eR }{ \mu^*\fer{L}} \nn\\
	&\qquad\qquad - d'\inpr{J_M\asfer \delta^*\eL}{\can_A\fer{R}'} -  d\inpr{J_M\asfer' \delta'^*\eR}{\can_A\fer{L}} 
\eas
Using the self-adjointness of $\can_A$, that $[\mu, \can_A] = 0$ and the symmetries of the inner product, this reads
\bas
 & \inpr{J_M  \asfer\eL}{ \big[c\mu - d'\delta^*\big]\can_A\fer{R}'} +  \inpr{J_M \asfer'\eR }{\big[c' \mu^* - d \delta'^*\big]\can_A\fer{L}} \nn.
\eas
We thus require that 
\ba\label{eq:bb5-constr2}
c\mu &= d'\delta^*, & c' \mu^* &= d \delta'^*
\ea
for this to vanish.

\item a group with $\afer{R}$ and $\afer{L}'$ that vanishes if and only if \eqref{eq:bb5-constr2} is satisfied. 

	\item a group with the left-handed gauginos: 
\bas
	&- \int_M \Big( \tr \big[d_{ij,i}^*(J_M\eL, \gamma^5\asfer\gau{iL}) - d_{ij,j}^*(J_M\eL, \gamma^5\gau{jL}\asfer)\big] \delta' \sfer' \nn\\
	&\qquad\qquad + \tr \asfer \delta^* \big[ d_{ij,i}'(J_M\eL, \gamma^5\gau{iL}\sfer') - d_{ij,j}'(J_M\eL, \gamma^5\sfer'\gau{jL})\big]\Big)\nn\\
	& =  - \inpr{J_M \big(d_{ij,i}^* \delta'\sfer'\asfer  + d_{ij,i}'\sfer'\asfer \delta^*\big)\eL}{\gamma^5\gau{iL}} + \inpr{J_M\big(d_{ij,j}^*\asfer \delta' \sfer' + d_{ij,j}'\asfer \delta^* \sfer'\big)\eL}{\gamma^5\gau{jL}},  
\eas
where we have used Lemmas \ref{lem:moveScalar} and \ref{lem:pullScalar}. For this to vanish, we require that 
\bas
	 d_{ij,i}^* \delta' &= - d_{ij,i}' \delta^*, & d_{ij,j}^* \delta' &= - d_{ij,j}' \delta^* . 
\eas
Inserting \eqref{eq:bb5-constr2} above this is equivalent to
\bas
d_{ij,i}^*\frac{c'^*}{d^*} &= - d_{ij,i}'\frac{c}{d'}, &
d_{ij,j}^*\frac{c'^*}{d^*} &= - d_{ij,j}'\frac{c}{d'}.
\eas

\item A group with the right-handed gauginos 
\bas
	  &- \int_M \tr \big[ d_{ij,i}'^*(J_M\eR, \gamma^5\asfer'\gau{iR})  - d_{ij,j}'^*(J_M\eR, \gamma^5\gau{jR}\asfer')\big] \delta \sfer \nn\\
	 &\qquad - \int_M \tr  \asfer'\delta'^*\big[  d_{ij,i}(J_M\eR, \gamma^5\gau{iR}\sfer) - d_{ij,j}(J_M\eR, \gamma^5\sfer\gau{jR})\big] \nn\\
	  &= -  \inpr{J_M\big(d_{ij,i}'^* \delta \sfer \asfer' + d_{ij,i}\sfer\asfer'\delta'^*\big)\eR}{ \gamma^5\gau{iR}}  + \inpr{J_M\big(d_{ij,j}'^*\asfer'\delta \sfer + d_{ij,j}\asfer' \delta'^*\sfer\big)\eR}{ \gamma^5\gau{jR}},
\eas
which vanishes iff
\bas
	  d_{ij,i}'^* \delta &= - d_{ij,i}\delta'^* ,& d_{ij,j}'^*\delta&= - d_{ij,j} \delta'^*.
\eas

\end{itemize}

Combining all relations, above, we require that 
\bas
	|c|^2 &= |d'|^2,& |c'|^2 &= |d|^2,& |d_{ij,i}|^2 &= |d_{ij,i}'|^2,& |d_{ij,j}|^2 &= |d_{ij,j}'|^2,
\eas
for the transformation constants and 
\bas
	\delta\delta^* &= \mu^*\mu,  & \delta'\delta'^* &= \mu\mu^*
\eas
for the parameters in the off shell action.

\section{Auxiliary lemmas and identities}

In this section we provide some auxiliary lemmas and identities that are used in and throughout the previous proofs.

\begin{lem}
For the spin-connection $\nabla^S :\Gamma(S) \to \mathcal{A}^1(M) \otimes_{C^\infty(M)} \Gamma(S)$ on a flat manifold we have:
\begin{align}
	[\nabla^S, \gamma^\mu]	&= 0.\label{eq:idnNablaS}
\end{align}
\end{lem}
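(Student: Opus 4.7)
The plan is to exploit the fact that on a flat manifold one can work in global (or at least sufficiently large) coordinate patches in which the metric components are constant. In such coordinates all Christoffel symbols vanish, as already stipulated at the start of Section \ref{sec:susy-st}. One then chooses an orthonormal frame $e_a = e_a^\mu \partial_\mu$ that is parallel with respect to the Levi-Civita connection; on a flat manifold one can just take the $e_a^\mu$ to be constant (in the simplest case, $e_a^\mu = \delta_a^\mu$). In this frame the connection one-form $\omega_\mu{}^{ab}$ of the Levi-Civita connection vanishes identically, and consequently its lift to the spin bundle,
\begin{equation*}
  \nabla^S_\mu = \partial_\mu + \tfrac14\omega_\mu{}^{ab}\gamma_a\gamma_b,
\end{equation*}
reduces to $\nabla^S_\mu = \partial_\mu$.

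Next I would observe that the curved gamma matrices are defined by $\gamma^\mu = e^\mu_a \gamma^a$, with $\gamma^a$ the constant flat Dirac matrices. Since $e^\mu_a$ was chosen constant in the frame above, $\gamma^\mu$ is also a constant matrix in this frame, and therefore $[\partial_\mu, \gamma^\nu] = 0$ by inspection. Combining with $\nabla^S_\mu = \partial_\mu$ gives the claim $[\nabla^S, \gamma^\mu] = 0$.

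The only conceptually subtle point is to make precise in what sense the commutator is to be read: $\gamma^\mu$ is a section of $T M \otimes \End(S)$, and when one applies $\nabla^S$ to the composed operator $\gamma^\mu$ acting on a spinor one must use the natural extension of $\nabla^S$ to $T M \otimes \End(S)$, which includes a Christoffel contribution on the $TM$-index. In the flat-coordinate frame this extra term vanishes along with the spin connection, so no additional work is needed. I would therefore spend one sentence recording this convention and then close the proof; the main obstacle is purely notational (being explicit about the bundle on which the commutator lives), not computational.
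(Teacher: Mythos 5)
Your proof is correct and rests on the same fact as the paper's: on a flat background the Levi--Civita connection annihilates the coordinate one-forms, $\nabla^g \mathrm{d}x^\mu = -\Gamma^\mu_{\nu\lambda}\,\mathrm{d}x^\nu\otimes \mathrm{d}x^\lambda = 0$. The paper reaches the conclusion slightly more invariantly, by applying the compatibility relation $\nabla^S c(\alpha) = c(\alpha)\nabla^S + c(\nabla^g\alpha)$ to $\alpha = \mathrm{d}x^\mu$, so that the ``extra term on the $TM$-index'' you flag at the end appears explicitly as $c(\nabla^g \mathrm{d}x^\mu)$ without any choice of frame; your flat-coordinate, parallel-frame trivialization is an equally valid, more hands-on route to the same (local) statement.
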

\begin{proof}
	The spin-connection is the unique connection compatible with the Levi-Civita connection on $T^*M$, which means that it satisfies
	\begin{align*}
		\nabla^Sc(\alpha) = c(\alpha)\nabla^S + c(\nabla^g \alpha)	
	\end{align*}
	for any $\alpha \in \Gamma^{\infty}(\mathbb{C}l(M))$. Here $c : \Gamma(\mathbb{C}l(M)) \to \Gamma(\End(S))$ is the \emph{spin homomorphism}. Taking in particular $\alpha = \mathrm{d}x^\mu$, writing $\gamma^\mu = c(\mathrm{d}x^\mu)$ and using that $\nabla^g\mathrm{d}x^\mu = - \Gamma^\mu_{\nu\lambda} \mathrm{d}x^\nu \otimes \mathrm{d}x^\lambda = 0$ for a flat manifold, we have
	\begin{align*}
		\nabla^S\gamma^\mu &= \gamma^\mu\nabla^S. 
	\end{align*}
	Here we have used that $c(\mathrm{d}x^\mu \otimes \mathrm{d}x^\nu) = c(\mathrm{d}x^\mu \wedge \mathrm{d}x^\nu) + c(\{\mathrm{d}x^\mu,\mathrm{d}x^\nu\})$.
\end{proof}

\begin{lem}\label{lem:Dsquared}
Let
$
	\can_A = -i c \circ (\nabla^S + \mathbb{A})
$
 and $D_\mu = (\nabla^S + \mathbb{A})_\mu$. For a flat manifold, we have locally:
\begin{align*}
	\can_A^2 + D_\mu D^\mu = - \frac{1}{2} \gamma^\mu\gamma^\nu \mathbb{F}_{\mu\nu}.
\end{align*}
\end{lem}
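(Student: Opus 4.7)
The strategy is a direct computation, using only the flatness of the metric (which lets the $\gamma^\mu$ commute with $D_\mu = (\nabla^S + \mathbb{A})_\mu$) and the Clifford relation $\{\gamma^\mu,\gamma^\nu\} = 2 g^{\mu\nu}$. I would write $\can_A = -i\gamma^\mu D_\mu$ and square it, then split $\gamma^\mu\gamma^\nu$ into its symmetric and antisymmetric parts; the symmetric part will cancel $D_\mu D^\mu$ exactly, while the antisymmetric part picks out the commutator $[D_\mu,D_\nu] = \mathbb{F}_{\mu\nu}$.

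In more detail, first I would note that by the preceding lemma $[\nabla^S,\gamma^\mu] = 0$ locally on a flat background, and since $\mathbb{A}_\mu$ acts on the internal factor (or as a scalar-valued one-form in the canonical direction) it also commutes with the $\gamma^\nu$. Therefore
\[
\can_A^2 \;=\; (-i\gamma^\mu D_\mu)(-i\gamma^\nu D_\nu) \;=\; -\gamma^\mu\gamma^\nu D_\mu D_\nu.
\]
Writing $\gamma^\mu\gamma^\nu = g^{\mu\nu} + \tfrac{1}{2}[\gamma^\mu,\gamma^\nu]$ and using $g^{\mu\nu}D_\mu D_\nu = D_\mu D^\mu$ then yields
\[
\can_A^2 + D_\mu D^\mu \;=\; -\tfrac{1}{2}[\gamma^\mu,\gamma^\nu] D_\mu D_\nu.
\]

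The next step is to exploit the antisymmetry of $[\gamma^\mu,\gamma^\nu]$ in $\mu\nu$: contracted with the symmetric combination $\tfrac{1}{2}\{D_\mu,D_\nu\}$ it vanishes identically, so only the antisymmetric part of $D_\mu D_\nu$ survives, giving $[\gamma^\mu,\gamma^\nu] D_\mu D_\nu = \tfrac{1}{2}[\gamma^\mu,\gamma^\nu]\,\mathbb{F}_{\mu\nu}$, where I use $\mathbb{F}_{\mu\nu} = [D_\mu,D_\nu]$ from \eqref{eq:gauge_field_strength}. Finally, using antisymmetry of $\mathbb{F}_{\mu\nu}$ once more one has $[\gamma^\mu,\gamma^\nu]\,\mathbb{F}_{\mu\nu} = 2\gamma^\mu\gamma^\nu \mathbb{F}_{\mu\nu}$, which when combined with the factor $-\tfrac{1}{4}$ in front produces exactly the desired $-\tfrac{1}{2}\gamma^\mu\gamma^\nu \mathbb{F}_{\mu\nu}$.

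There is no real obstacle here; the only subtlety is making sure one is allowed to pull the $\gamma^\mu$ past $D_\nu$ (this is precisely where flatness enters, through \eqref{eq:idnNablaS}) and carefully bookkeeping the antisymmetry arguments so that the symmetric piece $\{D_\mu,D_\nu\}$ drops out cleanly. On a curved background one would instead pick up an additional curvature contribution via the Lichnerowicz formula, but that case is explicitly excluded by the hypothesis.
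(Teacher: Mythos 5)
Your proposal is correct and follows essentially the same route as the paper's proof: square $\can_A = -i\gamma^\mu D_\mu$, use flatness (the vanishing of $\nabla^g\mathrm{d}x^\nu$, i.e.\ \eqref{eq:idnNablaS}) to move the Clifford elements past the covariant derivatives, split $\gamma^\mu\gamma^\nu$ into $g^{\mu\nu}+\tfrac12[\gamma^\mu,\gamma^\nu]$, and let antisymmetry extract $\mathbb{F}_{\mu\nu}=[D_\mu,D_\nu]$. The factor bookkeeping at the end is also correct.
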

\begin{proof}
Locally we write 
\begin{align*}
	\can_A = -i c(\mathrm{d}x^\mu) (\nabla^S_\mu + \mathbb{A}_\mu)
\end{align*}
where $\mathbb{A}_\mu$ is skew-Hermitian in order for $\can_A$ to be self-adjoint. Now for the square of this, we have
\begin{align*}
	 \can_A^2  &= - c(\mathrm{d}x^\mu) (\nabla^S_\mu + \mathbb{A}_\mu)c(\mathrm{d}x^\nu) (\nabla^S_\nu + \mathbb{A}_\nu)\\
					&= - c(\mathrm{d}x^\mu) c(\mathrm{d}x^\nu)(\nabla^S_\mu + \mathbb{A}_\mu)(\nabla^S_\nu + \mathbb{A}_\nu) 
					- c(\mathrm{d}x^\mu) c(\nabla^g_{\partial_\mu}\mathrm{d}x^\nu)(\nabla^S_\nu + \mathbb{A}_\nu), 
\end{align*}
of which the last term vanishes for a flat manifold. Here we have employed that the spin connection is the unique connection compatible with the Levi-Civita connection. We write:
\begin{align*}
	c(\mathrm{d}x^\mu) c(\mathrm{d}x^\nu) &= \frac{1}{2}\{c(\mathrm{d}x^\mu), c(\mathrm{d}x^\nu)\} + \frac{1}{2}[c(\mathrm{d}x^\mu), c(\mathrm{d}x^\nu)]= g^{\mu\nu} + \frac{1}{2}[c(\mathrm{d}x^\mu), c(\mathrm{d}x^\nu)]
\end{align*}
to arrive at
\begin{align*}
 \can_A^2 &= - (\nabla^{S} + \mathbb{A})^\mu(\nabla^S + \mathbb{A})_\mu - \frac{1}{2}c(\mathrm{d}x^\mu)c(\mathrm{d}x^\nu)[\nabla^S_\mu + \mathbb{A}_\mu, \nabla^S_\nu + \mathbb{A}_\nu] 
\end{align*}
obtaining the result.
\end{proof}

\begin{cor}\label{cor:4bcurvs}
By applying the previous result, we have for $\szeta_{ik} \in C^{\infty}(M, \mathbf{N_i}\otimes \mathbf{N_k})$, $\epsilon \in L^2(M, S)$
\begin{align*}
	(\can_A [\can_A, \szeta_{ik}]\epsilon + D_\mu [D^\mu, \szeta_{ik}])\epsilon = \frac{1}{2}[\mathbb{F}, \szeta_{ik}]\epsilon + [D^\mu, \szeta_{ik}]\nabla^S_\mu\epsilon + [\can_A, \szeta_{ik}]\slashed{\partial}\epsilon,
\end{align*}
where the term with $R$ vanished due to the commutator.
\end{cor}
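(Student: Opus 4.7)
The plan is to obtain the identity as a direct algebraic consequence of Lemma~\ref{lem:Dsquared}, by expanding the squared operators via the Leibniz rule and subtracting off the pieces that act only on the pure spinor $\epsilon$.

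Concretely, first I would apply Lemma~\ref{lem:Dsquared} to the product $\szeta_{ik}\epsilon$:
\bas
(\can_A^2 + D^\mu D_\mu)(\szeta_{ik}\epsilon) = -\tfrac{1}{2}\gamma^\mu\gamma^\nu \mathbb{F}_{\mu\nu}\,\szeta_{ik}\epsilon.
\eas
Next, for each of $\can_A^2$ and $D^\mu D_\mu$, I would apply the identity $P(\szeta_{ik}\tau)=[P,\szeta_{ik}]\tau+\szeta_{ik}P\tau$ twice, yielding
\bas
\can_A^2(\szeta_{ik}\epsilon) &= \can_A[\can_A,\szeta_{ik}]\epsilon + [\can_A,\szeta_{ik}]\can_A\epsilon + \szeta_{ik}\can_A^2\epsilon,\\
D^\mu D_\mu(\szeta_{ik}\epsilon) &= D^\mu[D_\mu,\szeta_{ik}]\epsilon + [D^\mu,\szeta_{ik}]D_\mu\epsilon + \szeta_{ik}D^\mu D_\mu\epsilon.
\eas

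The key input is that $\epsilon\in L^2(M,S)$ carries no component in the finite Hilbert space, so the inner fluctuation $\mathbb{A}$, and hence the gauge curvature $\mathbb{F}_{\mu\nu}$, acts trivially on it. Consequently $\can_A\epsilon=\slashed{\partial}\epsilon$, $D_\mu\epsilon=\nabla^S_\mu\epsilon$, and Lemma~\ref{lem:Dsquared} specialised to $\epsilon$ alone on the flat background reads $(\can_A^2+D^\mu D_\mu)\epsilon = -\tfrac12\gamma^\mu\gamma^\nu\mathbb{F}_{\mu\nu}\epsilon = 0$. This kills the $\szeta_{ik}(\can_A^2+D^\mu D_\mu)\epsilon$ contribution from the expansion. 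Rearranging then leaves
\bas
\can_A[\can_A,\szeta_{ik}]\epsilon + D^\mu[D_\mu,\szeta_{ik}]\epsilon = -\tfrac{1}{2}\gamma^\mu\gamma^\nu\mathbb{F}_{\mu\nu}\,\szeta_{ik}\epsilon - [\can_A,\szeta_{ik}]\slashed{\partial}\epsilon - [D^\mu,\szeta_{ik}]\nabla^S_\mu\epsilon.
\eas

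The final step is to recast the curvature term as a commutator: since $\mathbb{F}_{\mu\nu}\epsilon = 0$, I may freely add the vanishing $\szeta_{ik}\gamma^\mu\gamma^\nu\mathbb{F}_{\mu\nu}\epsilon$ so that $\gamma^\mu\gamma^\nu\mathbb{F}_{\mu\nu}\,\szeta_{ik}\epsilon = [\gamma^\mu\gamma^\nu\mathbb{F}_{\mu\nu},\szeta_{ik}]\epsilon \equiv [\mathbb{F},\szeta_{ik}]\epsilon$, which matches the form stated in the corollary (up to the overall sign convention for $\mathbb{F}$). This is precisely what makes the parenthetical remark work: any piece of $\can_A^2+D^\mu D_\mu$ that is a scalar endomorphism (such as the $R/4$ Lichnerowicz term one would pick up on a non-flat background) commutes with $\szeta_{ik}$ and therefore drops out automatically. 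There is no real obstacle in the argument --- the whole proof is careful bookkeeping around the single ingredient Lemma~\ref{lem:Dsquared} and the observation that $\mathbb{F}_{\mu\nu}$ annihilates the pure spinor $\epsilon$.
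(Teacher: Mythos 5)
Your derivation is exactly the intended one: the paper offers no separate argument beyond ``by applying the previous result,'' and the Leibniz expansion of $\can_A^2$ and $D_\mu D^\mu$ on $\szeta_{ik}\epsilon$, combined with Lemma~\ref{lem:Dsquared}, the fact that $\mathbb{F}_{\mu\nu}$ annihilates the bare spinor, and the observation that any scalar (Lichnerowicz) term drops inside the commutator, is all there is to it. Note only that the relative minus signs you obtain on \emph{all three} right-hand terms --- not just the curvature one --- are in fact the ones consistent with how the corollary is used in the proof of Lemma~\ref{lem:bb2-group3} (where $\nabla^S\epsilon=0$ kills the last two and one is left with $(\can_A^2+D_\mu D^\mu)(\szeta_{ik}\epsilon)$), so the residual sign discrepancy sits in the printed statement rather than in your argument.
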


\begin{lem}\label{lem:symmJ}
Let $M$ be a four-dimensional Riemannian spin manifold and $\inpr{\,.\,}{\,.\,} : L^2(S) \times L^2(S) \to \mathbb{C}$ the inner product on sections of the spinor bundle. For $\mathcal{P}$ a basis element of $\Gamma(\com l(M))$, we have the following identities:
\bas
	\inpr{J_M\zeta_1}{\mathcal{P}\zeta_2} = \pi_{\mathcal{P}}\inpr{J_M\zeta_2}{\mathcal{P}\zeta_1},\qquad \pi_{\mathcal{P}} \in \{\pm\},
\eas 
for any $\zeta_{1,2}$, the Grassmann variables corresponding to $\zeta_{1,2}' \in L^2(S)$. The signs $\pi_{\P}$ are given by 
\begin{align}
	\pi_{\id} &= 1, & \pi_{\gamma^\mu} &= - 1, & \pi_{\gamma^\mu\gamma^\nu} &= - 1\quad (\mu < \nu), \nn\\
 	\pi_{\gamma^\mu\gamma^5} &=  1, & \pi_{\gamma^5} &=  1.\label{eq:identitySymJ}
\end{align} 
\end{lem}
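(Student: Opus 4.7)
The plan is to write out both sides of the identity in components, using that on a Riemannian spin manifold the charge conjugation factors as $J_M = C\circ K$ with $K$ componentwise complex conjugation and $C$ a unitary matrix, and then reduce the question to an algebraic property of a single matrix.

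First I would use that $M$ is four-dimensional and Riemannian, so the canonical real structure satisfies $J_M^2 = -1$. Combined with anti-unitarity (which gives $C^\dagger C = \id$) this forces $\bar{C}\,C = -\id$, and hence $C^T = -C$. Next I would expand pointwise
\[
(J_M\zeta_1, \mathcal{P}\zeta_2)_{\cS} \;=\; \sum_{a,b,c} C^*_{ab}\,\mathcal{P}_{ac}\,\zeta_{1b}\zeta_{2c},
\qquad
(J_M\zeta_2, \mathcal{P}\zeta_1)_{\cS} \;=\; \sum_{a,b,c} C^*_{ab}\,\mathcal{P}_{ac}\,\zeta_{2b}\zeta_{1c},
\]
and set $M_{bc} := \sum_a C^*_{ab}\mathcal{P}_{ac}$. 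The Grassmann anticommutation $\zeta_{1b}\zeta_{2c} = -\zeta_{2c}\zeta_{1b}$ then reduces the identity $\rinpr{J_M\zeta_1}{\mathcal{P}\zeta_2}_\cS = \pi_\mathcal{P}\rinpr{J_M\zeta_2}{\mathcal{P}\zeta_1}_\cS$ to the purely algebraic statement $M^T = -\pi_\mathcal{P}\, M$, i.e.\ $M$ should be antisymmetric when $\pi_\mathcal{P}=+1$ and symmetric when $\pi_\mathcal{P}=-1$.

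The third step is to translate this symmetry into a statement about $\mathcal{P}$ alone. The relation $J_M\gamma^\mu = -\gamma^\mu J_M$ (which, as computed in \eqref{eq:idnNablaS}-type manipulations earlier, follows from $J_M\dirac = \dirac J_M$ in KO-dimension $4$ together with antilinearity) is equivalent via $J_M = CK$ and Hermiticity of $\gamma^\mu$ to $(\gamma^\mu)^T = -C^{-1}\gamma^\mu C$. By induction this promotes to
\[
(\gamma^{\mu_1}\cdots \gamma^{\mu_k})^T \;=\; (-1)^{k(k+1)/2}\, C^{-1}\,\gamma^{\mu_1}\cdots\gamma^{\mu_k}\,C,
\]
where $k$ sign flips come from each factor of $J_M\gamma^{\mu}$ and $k(k-1)/2$ swaps are needed to reverse the order of the anticommuting factors. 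Inserting this into the condition $M^T = -\pi_\mathcal{P} M$, together with $C^T = -C$ and $C^* = -C^{-1}$, yields after a short manipulation $\pi_\mathcal{P} = (-1)^{k(k+1)/2}$. Specialising to $k = 0,1,2,3,4$ reproduces the table $\pi_{\id} = +1$, $\pi_{\gamma^\mu} = -1$, $\pi_{\gamma^\mu\gamma^\nu} = -1$, $\pi_{\gamma^\mu\gamma^5} = +1$ and $\pi_{\gamma^5} = +1$.

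The main technical obstacle is pure sign bookkeeping: the identity tangles together antilinearity of $J_M$, antisymmetry of $C$, Grassmann anticommutation between $\zeta_1$ and $\zeta_2$, and the Clifford reversal sign $(-1)^{k(k-1)/2}$. Making sure these combine consistently to the single power $(-1)^{k(k+1)/2}$ rather than its negation is where errors are most likely to slip in; I would double-check by verifying the case $\mathcal{P} = \id$ directly (where $M_{bc} = C^*_{cb} = -C^*_{bc}$ is manifestly antisymmetric, giving $\pi_{\id}=+1$) before trusting the general formula.
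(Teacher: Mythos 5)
Your proof is correct: the sign table and the closed form $\pi_{\mathcal{P}} = (-1)^{k(k+1)/2}$ for a product of $k$ distinct gamma matrices agree with \eqref{eq:identitySymJ}, and each step (antisymmetry of $C$ from $J_M^2=-1$ plus unitarity, the reduction to $M^T = -\pi_{\mathcal{P}}M$ for $M = C^\dagger\mathcal{P}$ via Grassmann anticommutation, and the transpose relation $(\gamma^{\mu_1}\cdots\gamma^{\mu_k})^T = (-1)^{k(k+1)/2}C^{-1}\gamma^{\mu_1}\cdots\gamma^{\mu_k}C$) checks out. The paper's own proof uses exactly the same three ingredients --- $J_M^2=-1$ together with anti-unitarity, $J_M\gamma^\mu = -\gamma^\mu J_M$ with $(\gamma^\mu)^*=\gamma^\mu$, and one extra sign from the Grassmann nature of the spinors --- but packages them basis-freely: it writes $\inpr{J_M\zeta_1}{\mathcal{P}\zeta_2} = -\inpr{J_M\zeta_1}{J_M^2\mathcal{P}\zeta_2} = -\inpr{J_M\mathcal{P}\zeta_2}{\zeta_1}$, commutes $J_M$ through $\mathcal{P}$ at the cost of $(-1)^k$, reverses the order of the gammas at the cost of $(-1)^{k(k-1)/2}$, and cites the Grassmann minus sign from the literature. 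Your componentwise route via $J_M = C\circ K$ trades that three-line operator manipulation for explicit index bookkeeping, but it buys two things: the Grassmann sign is not an afterthought bolted on by citation but is built into the equivalence with $M^T = -\pi_{\mathcal{P}}M$, and the base case $\mathcal{P}=\id$ (antisymmetry of $C$) gives an independent consistency check. The only cosmetic weak point is the parenthetical claim that $J_M\gamma^\mu = -\gamma^\mu J_M$ ``follows from $J_M\dirac = \dirac J_M$''; strictly one also needs $[J_M,\nabla^S]=0$, but since the paper itself invokes this anticommutation relation without proof, this is not a gap relative to the paper's own standard.
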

\begin{proof}
Using that $J_M^2 = -1$ and $\inpr{J_M\zeta_1'}{J_M\zeta_2'} = \inpr{\zeta_2'}{\zeta_1'}$, we have
\bas
	\inpr{J_M\zeta_1'}{\mathcal{P}\zeta_2'} = - \inpr{J_M\zeta_1'}{J_M^2\mathcal{P}\zeta_2'} = - \inpr{J_M\mathcal{P}\zeta_2'}{\zeta_1'}.
\eas
When considering Grassmann variables, we obtain an extra minus sign (see the discussion in \cite[\S 4.2.6]{DS12}). From $J_M\gamma^\mu = - \gamma^\mu J_M$, $(\gamma^\mu)^* = \gamma^\mu$ and $\gamma^\mu\gamma^\nu = - \gamma^\nu\gamma^\mu$ for $\mu \ne \nu$, we obtain the result.
\end{proof}

\begin{cor}\label{cor:symmInnerProd}
	Similarly (\cite[\S 4]{CCM07}) we find by using that $\dirac^* = \dirac$ and $J_M\dirac = \dirac J_M$, that
	\ba
		\inpr{J_M\zeta_1}{\dirac\zeta_2} =  \inpr{J_M\zeta_2}{\dirac\zeta_1}\label{eq:identitySymJ2}
	\ea
	for the Grassmann variables corresponding to any two $\zeta_{1, 2}' \in L^2(S)$.
\end{cor}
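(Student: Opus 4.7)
The plan is to follow the same strategy used in the proof of Lemma~\ref{lem:symmJ}, namely exploiting the anti-unitarity of $J_M$, the self-adjointness of $\dirac$, the commutation relation $J_M\dirac = \dirac J_M$ (valid for the relevant KO-dimension of the canonical spectral triple), and finally the extra sign that arises when passing to Grassmann variables.

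Concretely, I would start from $\inpr{J_M \zeta_1'}{\dirac \zeta_2'}$ for \emph{ordinary} (non-Grassmann) spinors $\zeta_1', \zeta_2' \in L^2(S)$, and proceed in four short steps. First, insert $-J_M^2 = \id$ to write
\[
\inpr{J_M \zeta_1'}{\dirac \zeta_2'} = -\inpr{J_M\zeta_1'}{J_M^2\,\dirac\zeta_2'}.
\]
Second, use $J_M\dirac = \dirac J_M$ to rewrite $J_M^2\dirac = J_M\dirac\, J_M$, and then apply the anti-unitarity identity $\inpr{J_M\xi}{J_M\eta} = \inpr{\eta}{\xi}$ to obtain
\[
-\inpr{J_M\zeta_1'}{J_M \dirac\, J_M\zeta_2'} = -\inpr{\dirac\, J_M\zeta_2'}{\zeta_1'}.
\]
Third, invoke the self-adjointness $\dirac^* = \dirac$ to move $\dirac$ to the right-hand slot, giving $-\inpr{J_M\zeta_2'}{\dirac\zeta_1'}$. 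Fourth, pass to the Grassmann variables $\zeta_{1,2}$: as recalled in the proof of Lemma~\ref{lem:symmJ} (and in the reference \cite[\S 4.2.6]{DS12}), swapping the two spinor arguments in the inner product introduces an additional minus sign for Grassmann-valued entries, which exactly cancels the overall minus sign produced above. This yields $\inpr{J_M\zeta_1}{\dirac\zeta_2} = \inpr{J_M\zeta_2}{\dirac\zeta_1}$, as claimed.

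There is no genuine obstacle here: the statement is a direct consequence of the three structural properties of $(J_M,\dirac)$ plus the Grassmann sign, and the argument is parallel to that of Lemma~\ref{lem:symmJ} but with no Clifford element in the middle, so no case-by-case analysis of signs $\pi_{\mathcal{P}}$ is required. The only point worth being careful about is book-keeping the sign: one minus from inserting $J_M^2$, no sign from the swap under anti-unitarity (it is simply complex conjugation of scalars, which is absorbed in the slot exchange), and one minus from the Grassmann swap, so the two cancel and we get the symmetric (not antisymmetric) identity.
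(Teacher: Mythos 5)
Your proof is correct and follows essentially the same route the paper intends: the corollary is stated as a ``similarly'' to Lemma~\ref{lem:symmJ}, i.e.\ insert $-J_M^2$, apply anti-unitarity, use $J_M\dirac = \dirac J_M$ together with $\dirac^* = \dirac$ in place of the Clifford-element sign analysis, and cancel the resulting minus against the Grassmann swap sign. Your sign book-keeping matches the paper's, so nothing further is needed.
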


\begin{lem}\label{lem:pullScalar}
	For any $\sfer \in C^{\infty}(M, \rep{i}{j})$, $\fer{} \in L^2(S \otimes \rep{j}{i})$ and $\epsilon \in L^2(S)$ we have
	\begin{align*}
		\tr_{N_i} \sfer(J_M\epsilon, \fer{})_{\cS} = (J\sfer\epsilon, \fer{})_{\H}.
	\end{align*}
\end{lem}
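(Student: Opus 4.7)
The plan is to reduce the identity to a check on elementary tensors and then verify it by directly unpacking the definitions of $J$, the finite inner product, and the matrix trace. By $C^{\infty}(M)$-bilinearity of both sides in $\sfer$ and $\fer{}$, it suffices to treat the case where $\sfer = \eta \otimes \bar\xi \in \rep{i}{j}$ is an elementary tensor (with $\eta \in \srep{i}$, $\xi \in \srep{j}$) and $\fer{} = \zeta \otimes (\eta' \otimes \bar\xi')$ with $\zeta \in \Gamma(\cS)$, $\eta' \in \srep{j}$, $\xi' \in \srep{i}$. All remaining manipulations are local, so a partition of unity argument is not required.

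For the left-hand side I would first note that, since $\sfer$ is purely a finite object, $(J_M\epsilon, \fer{})_{\cS}$ extracts the spinor pairing $(J_M\epsilon, \zeta)_{\cS} \in C^{\infty}(M)$ and leaves the finite factor $\eta'\otimes\bar\xi'$ untouched. Multiplying on the left by $\sfer = \eta\otimes\bar\xi$ and using the natural bimodule pairing $(\eta\otimes\bar\xi)(\eta'\otimes\bar\xi') = \langle \xi,\eta'\rangle\, \eta\otimes\bar\xi'$ in $M_{N_i}(\com)$, the trace over $N_i$ then yields $\langle \xi,\eta'\rangle\langle\xi',\eta\rangle$, so the left-hand side equals
\[
\langle \xi,\eta'\rangle\langle\xi',\eta\rangle\, (J_M\epsilon,\zeta)_{\cS}.
\]

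For the right-hand side I would use the form $J = J_M\otimes J_F$ (up to the sign conventions collected in Definition~\ref{def:acg}; any extra factors of $\gamma_M$ or $\gamma_F$ commute past the scalar $\sfer\epsilon$ and are absorbed into the spinor pairing without altering the final identity). Since $\sfer\epsilon = \epsilon\otimes(\eta\otimes\bar\xi)$, applying $J$ produces $J_M\epsilon\otimes J_F(\eta\otimes\bar\xi) = J_M\epsilon \otimes (\xi\otimes\bar\eta)$ by the defining relation \eqref{eq:fin_real} for $J_F$. Taking the Hilbert space inner product with $\fer{} = \zeta\otimes(\eta'\otimes\bar\xi')$ and factoring the Hermitian structure as $(\,\cdot\,,\,\cdot\,)_{\H} = (\,\cdot\,,\,\cdot\,)_{\cS}\,\langle\,\cdot\,,\,\cdot\,\rangle_F$ then gives
\[
(J_M\epsilon,\zeta)_{\cS}\,\langle \xi\otimes\bar\eta,\eta'\otimes\bar\xi'\rangle_F = (J_M\epsilon,\zeta)_{\cS}\,\langle\xi,\eta'\rangle\langle\xi',\eta\rangle,
\]
matching the left-hand side.

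The main thing to be careful about is aligning the conventions: the definition \eqref{eq:def_right_mult} fixes the action of $J_F$ on elementary tensors, the canonical inner product on $\srepo{j}$ is the one induced from $\srep{j}$ via complex conjugation, and the identification $\rep{i}{j}\cdot\rep{j}{i}\subset M_{N_i}(\com)$ used on the left-hand side must be the same one that produces the pairing $\langle\xi,\eta'\rangle$ on the right. Once these conventions are fixed consistently, the computation above is essentially a one-line check, and the only real work is bookkeeping; no analytic input beyond the Hilbert space tensor product structure is needed.
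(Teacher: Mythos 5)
Your proof is correct and follows essentially the same route as the paper, which simply says to write out $\sfer$ and $\fer{}$ as elementary tensors and compute; you carry out that computation explicitly and the bookkeeping with $J_F$, the finite pairing, and the trace all checks out.
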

\begin{proof}
This can be seen easily by writing out the elements in full detail:
	 \begin{align*}
		\szeta &= f \otimes e \otimes \bar e', & \fer{} &= \zeta \otimes \eta \otimes \bar \eta',\qquad f \in C^{\infty}(M, \com), \zeta \in L^2(S).
		\end{align*}
\end{proof}

\begin{lem}\label{lem:moveScalar}
	Let $\fer{1} \in L^2(S\otimes \rep{i}{j})$, $\fer{2} \in L^2(S \otimes \rep{k}{i})$, $\afer{2} \in L^2(S \otimes \rep{j}{k})$, $\sfer \in C^{\infty}(M, \rep{j}{k})$ and $\sfer' \in C^{\infty}(M, \rep{k}{i})$, then
\ba
	\inpr{J\fer{1}\sfer}{\fer{2}} &= \inpr{J\fer{1}}{\sfer\fer{2}}& &\text{and}& \inpr{J\fer{1}}{\fer{2}\sfer'} &= \inpr{J\sfer'\fer{1}}{\fer{2}}.\label{eq:moveScalar}
\ea
\end{lem}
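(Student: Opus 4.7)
The plan is to prove both identities by direct computation, exactly in the spirit of the proof of Lemma~\ref{lem:pullScalar}: expand all elements on a basis of elementary tensors, apply the explicit form of $J=J_M\otimes J_F$ together with the rule $J_F(\eta\otimes\bar\zeta)=\zeta\otimes\bar\eta$ from \eqref{eq:fin_real}, and observe that both sides reduce to the same scalar. First I would write
\begin{align*}
\fer{1}&=\zeta_1\otimes e_i\otimes\bar e_j,&\fer{2}&=\zeta_2\otimes e_k\otimes\bar e_i',&\sfer&=f\otimes e_j'\otimes\bar e_k',
\end{align*}
with $\zeta_{1,2}\in L^2(S)$, $f\in C^\infty(M,\com)$, and use the canonical bimodule multiplication, which contracts the dual/representation factor carrying the shared index: for instance
$\fer{1}\sfer=\langle e_j,e_j'\rangle\,\zeta_1 f\otimes e_i\otimes\bar e_k'\in L^2(S\otimes\rep{i}{k})$ and
$\sfer\fer{2}=\langle e_k',e_k\rangle\,f\zeta_2\otimes e_j'\otimes\bar e_i'\in L^2(S\otimes\rep{j}{i})$.

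For the first identity I would then compute $J(\fer{1}\sfer)=\overline{\langle e_j,e_j'\rangle}\,\bar f\,J_M\zeta_1\otimes e_k'\otimes\bar e_i$ via antilinearity of $J$ and the swap rule for $J_F$, and pair this against $\fer{2}$ in the Hilbert space inner product. Pulling the scalar out of the antilinear first slot restores $\langle e_j,e_j'\rangle$ and $f$ and leaves the finite inner product $\langle e_k',e_k\rangle\langle e_i',e_i\rangle$. The same overall scalar
\bas
\int_M f\,\langle e_j,e_j'\rangle\langle e_k',e_k\rangle\langle e_i',e_i\rangle\,(J_M\zeta_1,\zeta_2)_\cS\sqrt{g}\mathrm d^4x
\eas
drops out of the right-hand side $\inpr{J\fer{1}}{\sfer\fer{2}}$ in the same way, since the bilinear pairing $\langle J\fer{1},\sfer\fer{2}\rangle_F$ collects the same Kronecker factors but from the finite inner product instead of the bimodule contraction; this establishes the first equality.

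The second identity (which I read with $\afer{2}\in L^2(S\otimes\rep{j}{k})$ in place of $\fer{2}$, so that the types match: $\sfer'\fer{1}\in\rep{k}{j}$ gives $J(\sfer'\fer{1})\in\rep{j}{k}$, and $\afer{2}\sfer'\in\rep{j}{i}=J\fer{1}$'s home) is entirely analogous: I would parametrize $\sfer'=g\otimes e_k''\otimes\bar e_i''$, compute $\sfer'\fer{1}$ and $\afer{2}\sfer'$ by the same contraction rule, apply $J$, and check that both inner products evaluate to the same integral over $M$. The only real bookkeeping is tracking the antilinearity of $J_M$ on the scalar factors and the conjugation on Kronecker factors that reappear when a scalar is pulled out of the first slot of the Hermitian form; this is the step most prone to a sign or complex-conjugation error, but it is entirely mechanical once the elementary-tensor expansion is in place. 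No deep structural obstacle is expected, and the result extends by linearity to arbitrary $\fer{1},\fer{2},\afer{2},\sfer,\sfer'$.
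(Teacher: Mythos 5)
Your computation is correct, and your reading of the second identity (with $\afer{2}$ in place of $\fer{2}$, which is forced by the representation types and is clearly what the declared hypothesis $\afer{2} \in L^2(S\otimes\rep{j}{k})$ is for) is the right one. However, you take a genuinely different route from the paper. The paper's proof is a two-line abstract argument: since the right action of $\sfer$ is implemented via the real structure as in \eqref{eq:def_right_mult}, i.e.~$\fer{1}\sfer = J\sfer^* J^*\fer{1}$, the identities follow immediately from the facts that $J$ is an anti-isometry and that $J^2 = \pm 1$ --- one simply writes $\inpr{J(\fer{1}\sfer)}{\fer{2}} = \inpr{JJ\sfer^*J^*\fer{1}}{\fer{2}}$, cancels the $J$'s up to sign, and uses anti-isometry to move $\sfer$ to the other slot. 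Your elementary-tensor expansion in the style of Lemma \ref{lem:pullScalar} reaches the same conclusion but at the cost of tracking the bimodule contractions and the conjugations on the Kronecker factors by hand; the trade-off is that your version makes the mechanism completely explicit and verifiable term by term, while the paper's version is shorter, avoids all index bookkeeping, and makes transparent that the statement is really just the compatibility of the opposite-algebra action with the inner product rather than anything specific to elementary tensors. Both are valid proofs; if you keep yours, the one step worth double-checking in writing is the conjugation of the scalar factors pulled out of the antilinear first slot, which you correctly flag as the error-prone point.
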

\begin{proof}
	This can simply be proven by using that the right action is implemented via $J$ and that $J$ is an anti-isometry with $J^2 = \pm $.
\end{proof}

\subsection{Fierz transformations}\label{sec:fierz}

Details for the Fierz transformation in this context can be found in the Appendix of \cite{BS10} but we list the main result here.

\begin{defin}[Orthonormal Clifford basis] Let $Cl(V)$ be the Clifford algebra over a vector space $V$ of dimension $n$. Then $\gamma_K := \gamma_{k_1}\cdots\gamma_{k_r}$ for all strictly ordered sets $K = \{k_1 < \ldots < k_r\} \subseteq \{1, \ldots, n\}$ form a basis for $Cl(V)$. If $\gamma_K$ is as above, we denote with $\gamma^K$ the element $\gamma^{k_1}\cdots\gamma^{k_r}$. The basis spanned by the $\gamma_K$ is said to be \emph{orthonormal} if $\tr\gamma_K\gamma_L = nn_K\delta_{KL}\ \forall\ K, L$. Here $n_K := (-1)^{r(r-1)/2}$, where $r$ denotes the cardinality of the set $K$ and with $\delta_{KL}$ we mean 
\begin{align}
  \delta_{KL} = \left\{\begin{array}{ll} 1\quad \text{if}\ K = L\\
                         0 \quad \text{else}\\
                       \end{array}.
                \right.
\end{align}
\end{defin}

\begin{exmpl}\label{exmpl:dim4} Take $V = \mathbb{R}^4$ and let $Cl(4, 0)$ be the Euclidean Clifford algebra [i.e. with signature \mbox{($+$\ $+$\ $+$\ $+$)}]. Its basis are the sixteen matrices
\begin{align*}
 & 1                                                      & & \nonumber\\
 & \gamma_\mu                                             & & \text{(4 elements)} \nonumber\\
 & \gamma_\mu\gamma_\nu\quad          \quad (\mu < \nu)   &        & \text{(6 elements)}\nonumber\\
 & \gamma_\mu\gamma_\nu\gamma_\lambda \quad (\mu < \nu < \lambda)& & \text{(4 elements)} \nonumber\\
 &\gamma_1\gamma_2\gamma_3\gamma_4=:\gamma_5.             &  &\nonumber
\end{align*}
We can identify
\begin{align}\label{eq:mink_eucl}
 \gamma_1\gamma_2\gamma_3 &= \gamma_4\gamma_5, & \gamma_1\gamma_3\gamma_4 &= \gamma_2\gamma_5 &
 \gamma_1\gamma_2\gamma_4 &= - \gamma_3\gamma_5, & \gamma_2\gamma_3\gamma_4 &= -\gamma_1\gamma_5,
\end{align}
establishing a connection with the basis most commonly used by physicists.
\end{exmpl}

We then have the following result:
\begin{prop}[(Generalized) Fierz identity]\label{prop:fierz} If for any two strictly ordered sets $K, L$ there exists a third strictly ordered set $M$ and $c \in \mathbb{N}$ such that $\gamma_K\gamma_L = c\,\gamma_M$, we have for any $\psi_1, \ldots, \psi_4$ in the $n$-dimensional spin representation of the Clifford algebra 
\begin{align}
   \inpr{\psi_1}{\gamma^K\psi_2}\inpr{\psi_3}{\gamma_K\psi_4} &= -\frac{1}{n}\sum_L C_{KL}\inpr{\psi_3}{\gamma^L\psi_2}\inpr{\psi_1}{\gamma_{L}\psi_4}\label{eq:fierzf},
\end{align}
where the constants $C_{LK} \equiv n_Lf_{LK}$, $f_{LK}\in \mathbb{N}$ are
defined via 
$
	  \gamma^K\gamma^L\gamma_K = f_{KL}\gamma^L$ (no sum over $L$).
Here we have denoted by $\inpr{.}{.}$ the inner product on the spinor representation.
\end{prop}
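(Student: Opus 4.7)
The plan is to exploit the completeness of the orthonormal Clifford basis $\{\gamma_L\}$ inside $\mathrm{End}(\mathbb{C}^n)$. The orthonormality statement $\tr(\gamma_K\gamma_L) = n\,n_K\delta_{KL}$ tells us precisely that $\{\gamma_L/(n\,n_L)\}$ is a dual basis to $\{\gamma_L\}$ with respect to the trace pairing, so that every $n\times n$ matrix $M$ admits the expansion
\begin{equation*}
M \;=\; \frac{1}{n}\sum_L \frac{1}{n_L}\,\tr(\gamma_L M)\,\gamma^L.
\end{equation*}
The strategy is to apply this expansion to the Grassmann-valued rank-one matrix built from two of the four spinors, and then to collapse the remaining $\gamma^K$'s using the defining diagonal action $\gamma^K\gamma^L\gamma_K = f_{KL}\gamma^L$. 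This last identity is in turn a direct consequence of the hypothesis that $\gamma_K\gamma_L$ is always a (positive) integer multiple of a single basis element: it forces $\gamma^K$ either to commute or to anticommute with each $\gamma^L$, so that conjugation by $\gamma^K$ preserves each one-dimensional subspace $\mathbb{C}\gamma^L$.

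Concretely, I would first rewrite the left-hand side in index notation as
\begin{equation*}
\inpr{\psi_1}{\gamma^K\psi_2}\inpr{\psi_3}{\gamma_K\psi_4}
\;=\; (\bar\psi_1)_a(\gamma^K)_{ab}(\psi_2)_b(\bar\psi_3)_c(\gamma_K)_{cd}(\psi_4)_d
\;=\; \bar\psi_1\,\gamma^K M\,\gamma_K\,\psi_4,
\end{equation*}
with the Grassmann-valued matrix $M$ defined by $M_{bc}:=(\psi_2)_b(\bar\psi_3)_c$. The completeness relation then gives $M = \tfrac{1}{n}\sum_L \tfrac{1}{n_L}\tr(\gamma_L M)\,\gamma^L$, where
\begin{equation*}
\tr(\gamma_L M) \;=\; (\gamma_L)_{cb}(\psi_2)_b(\bar\psi_3)_c \;=\; -(\bar\psi_3)_c(\gamma_L)_{cb}(\psi_2)_b \;=\; -\,\bar\psi_3\gamma_L\psi_2.
\end{equation*}
The sole Grassmann sign in the entire argument appears in this trace computation, and it is what produces the overall minus sign on the right-hand side of the Fierz identity.

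Substituting and applying $\gamma^K\gamma^L\gamma_K = f_{KL}\gamma^L$ then yields
\begin{equation*}
\bar\psi_1\,\gamma^K M\gamma_K\,\psi_4
\;=\; -\frac{1}{n}\sum_L \frac{f_{KL}}{n_L}\,\inpr{\psi_3}{\gamma_L\psi_2}\,\inpr{\psi_1}{\gamma^L\psi_4},
\end{equation*}
which is the asserted identity once one recognises the combination $f_{KL}/n_L = n_L f_{KL} = C_{KL}$ (using $n_L^2 = 1$) and uses $\gamma^L = g^{LL'}\gamma_{L'}$ to match the upper/lower index placement on the right. The main obstacle is really only the careful Grassmann bookkeeping in the trace step, together with a brief verification that the hypothesis $\gamma_K\gamma_L = c\,\gamma_M$ with $M$ strictly ordered is precisely what is needed to make $f_{KL}$ well-defined by $\gamma^K\gamma^L\gamma_K = f_{KL}\gamma^L$ (no sum on $L$); everything else is pure linear algebra in the orthonormal Clifford basis.
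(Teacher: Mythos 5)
Your argument is correct and essentially complete: it is the standard proof of the generalized Fierz identity via the completeness relation for the trace-orthonormal Clifford basis, applied to the Grassmann-valued rank-one matrix $M_{bc}=(\psi_2)_b(\bar\psi_3)_c$, with the single anticommutation sign in $\tr(\gamma_LM)=-\inpr{\psi_3}{\gamma_L\psi_2}$ producing the overall minus. Note that the paper itself gives no proof of Proposition \ref{prop:fierz} --- it defers to the appendix of \cite{BS10}, which runs this same completeness argument --- so there is no in-paper proof to contrast your route with. Two remarks are worth making. First, your derivation lands on the coefficient $f_{KL}/n_L=n_Lf_{KL}$, i.e.\ the sign $n_L=(-1)^{r(r-1)/2}$ is attached to the degree of the \emph{target} element $\gamma^L$ being summed over, whereas a literal reading of the statement's ``$C_{LK}\equiv n_Lf_{LK}$'' would put the sign on the other index and give $C_{KL}=n_Kf_{KL}$. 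Your convention is the correct one: it is the only one compatible with the numerical values the paper actually uses, e.g.\ $C_{13}=-2$ in the proof of Lemma \ref{lem:bb1-group1} (where $\sum_\mu\gamma^\mu\gamma^L\gamma_\mu=+2\,\gamma^L$ for $|L|=3$ but $n_L=-1$) and $C_{42}=-1$, $C_{43}=+1$ in the proof of Lemma \ref{lem:bb2-group1}; so your computation in effect fixes an index slip in the statement as printed. Second, the completeness relation you invoke presupposes that the $2^{\dim V}$ elements $\gamma_L$ span $\End(\com^n)$, which requires $2^{\dim V}=n^2$; trace-orthonormality by itself only gives linear independence. This holds in the case $\dim V=4$, $n=4$ used throughout the paper, but it deserves an explicit sentence. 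With those two points made explicit, the proof stands.
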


\bibliographystyle{plain}

\begin{thebibliography}{10}

\bibitem{Bhowmick2011}
J.~Bhowmick, F.~D'Andrea, B.~Das, and L.~D\k{a}browski.
\newblock Quantum gauge symmetries in noncommutative geometry.
\newblock {\em arXiv:1112.3622}, 2011.

\bibitem{BS10}
T.~{\noopsort{broek}}van~den Broek and W.D. van Suijlekom.
\newblock Supersymmetric {QCD} and noncommutative geometry.
\newblock {\em Comm. Math. Phys.}, 303(1):149--173, 2010.

\bibitem{Cha94}
A.H. Chamseddine.
\newblock Connection between space-time supersymmetry and noncommutative
  geometry.
\newblock {\em Phys. Lett. B}, B332:349--357, 1994.

\bibitem{CC96}
A.H. Chamseddine and A.~Connes.
\newblock Universal formula for noncommutative geometry actions: {U}nifications
  of gravity and the standard model.
\newblock {\em Phys. Rev. Lett.}, 77:4868--4871, 1996.

\bibitem{CC97}
A.H. Chamseddine and A.~Connes.
\newblock The spectral action principle.
\newblock {\em Comm. Math. Phys.}, 186:731--750, 1997.

\bibitem{CC08}
A.H. Chamseddine and A.~Connes.
\newblock Why the {S}tandard {M}odel.
\newblock {\em J. Geom. Phys.}, 58:38--47, 2008.

\bibitem{CC12}
A.H. Chamseddine and A.~Connes.
\newblock Resilience of the {S}pectral {S}tandard {M}odel.
\newblock {\em J. High Energy Phys.}, 1209:104, 2012.

\bibitem{CCM07}
A.H. Chamseddine, A.~Connes, and M.~Marcolli.
\newblock Gravity and the standard model with neutrino mixing.
\newblock {\em Adv. Theor. Math. Phys.}, 11:991--1089, 2007.

\bibitem{Colafrancesco2010}
S.~Colafrancesco.
\newblock Dark {M}atter in {M}odern {C}osmology.
\newblock {\em arXiv:1004.3869}, 2010.

\bibitem{C94}
A.~Connes.
\newblock {\em Noncommutative geometry}.
\newblock Academic Press, 1994.

\bibitem{C96}
A.~Connes.
\newblock Gravity coupled with matter and the foundation of noncommutative
  geometry.
\newblock {\em Commun. Math. Phys.}, 182:155--176, 1996.

\bibitem{C00}
A.~Connes.
\newblock Noncommutative geometry year 2000.
\newblock {\em math/0011193}, 2007.

\bibitem{CL89}
A.~Connes and J.~Lott.
\newblock Particle models and noncommutative geometry.
\newblock {\em Nuclear Phys. B Proc. Suppl.}, 18:29--47, 1991.

\bibitem{CM07}
A.~Connes and M.~Marcolli.
\newblock {\em Noncommutative Geometry, Quantum Fields and Motives}.
\newblock American Mathematical Society, 2007.

\bibitem{Dabrowski2010}
L.~D\k{a}browski and G.~Dossena.
\newblock Product of real spectral triples.
\newblock {\em Int. J. Geom. Methods Mod. Phys.}, 8(8):1833--1848, 2010.

\bibitem{DGR04}
M.~Drees, R.~Godbole, and P.~Roy.
\newblock {\em Theory and phenomenology of Sparticles}.
\newblock World Scientific Publishing Co., 2004.

\bibitem{DS12}
K.~{\noopsort{dungen}}van~den Dungen and W.D. van Suijlekom.
\newblock Particle physics from almost-commutative spacetimes.
\newblock {\em Rev. Math. Phys.}, 24:1230004, 2012.

\bibitem{Consonni2013}
S.M.~Consonni for~the {ATLAS}~Collaboration.
\newblock Higgs search at {ATLAS}.
\newblock {\em arXiv:1305.3315}, 2013.

\bibitem{Gil84}
P.B. Gilkey.
\newblock {\em Invariance theory, the heat equation and the {A}tiyah-{S}inger
  index theorem}, volume~11 of {\em Mathematics Lecture Series}.
\newblock Publish or Perish, Wilmington, DE, 1984.

\bibitem{GVF00}
J.M. Gracia-Bond\'ia, J.C. V\'arilly, and H.~Figueroa.
\newblock {\em Elements of Noncommutative Geometry}.
\newblock Birkh\"auser {A}dvanced {T}exts, 2000.

\bibitem{ISS03}
B.~Iochum, T.~Sch\"ucker, and C.~Stephan.
\newblock On a {C}lassification of {I}rreducible {A}lmost {C}ommutative
  {G}eometries.
\newblock {\em J. Math. Phys.}, 45:5003--5041, 2004.

\bibitem{KR97}
T.~Krajewski.
\newblock Classification of finite spectral triples.
\newblock {\em J. Geom. Phys.}, 28:1--30, 1998.

\bibitem{LMMS97}
F.~Lizzi, G.~Mangano, G.~Miele, and G.~Sparano.
\newblock Fermion {H}ilbert space and fermion doubling in the noncommutative
  geometry approach to gauge theories.
\newblock {\em Phys. Rev. D}, 55:6357--6366, 1997.

\bibitem{NW96}
P.~{\noopsort{nieuwenhuizen}}van Nieuwenhuizen and A.~Waldron.
\newblock On euclidean spinors and wick rotations.
\newblock {\em Phys. Lett. B}, 389:29--36, 1996.
\newblock arXiv:hep-th/9608174.

\bibitem{S12}
W.D. \noopsort{suijlekom}van Suijlekom.
\newblock Renormalizability conditions for almost-commutative geometries.
\newblock {\em arXiv:1204.4070}, 2012.

\bibitem{OS1}
K.~Osterwalder and R.~Schrader.
\newblock Axioms for {E}uclidean {G}reen's functions \rnum{1}.
\newblock {\em Comm. Math. Phys.}, 31:83--112, 1973.

\bibitem{OS2}
K.~Osterwalder and R.~Schrader.
\newblock Axioms for {E}uclidean {G}reen's functions \rnum{2}.
\newblock {\em Commun. Math. Phys.}, 42:281--305, 1975.

\bibitem{PS96}
M.~Paschke and A.~Sitarz.
\newblock Discrete spectral triples and their symmetries.
\newblock {\em J. Math. Phys.}, 39:6191, 1998.

\bibitem{Rie74}
M.A. Rieffel.
\newblock Morita equivalence for {$C\sp{\ast} $}-algebras and
  {$W\sp{\ast}$}-algebras.
\newblock {\em J. Pure Appl. Algebra}, 5:51--96, 1974.

\bibitem{SS74}
A.~Salam and J.~Strathdee.
\newblock Super-gauge transformations.
\newblock {\em Nuclear Phys. B Proc. Suppl.}, 76:477--201, 1974.

\bibitem{Vanhecke2007}
F.J. Vanhecke.
\newblock On the product of real spectral triples.
\newblock {\em Lett. Math. Phys.}, 50:157--162, 2007.

\bibitem{V06}
J.C. V\'arilly.
\newblock {\em An Introduction to Noncommutative Geometry}.
\newblock European Mathematical Society, 2006.

\bibitem{wessbagger1992}
J.~Wess and J.~Bagger.
\newblock {\em Supersymmetry and Supergravity}.
\newblock Princeton University Press, 1992.

\end{thebibliography}
\providecommand{\noopsort}[1]{}

\end{document}